\newcommand*{\+}{\ensuremath{\text{\rm\texttt{+}}}}
\newcommand*{\<}{\ensuremath{\text{\rm\texttt{-}}}}
\newcommand*{\jj}{{\text{\normalsize\texttt{\textpm}}}}
\newcommand{\R}{\mathbb{R}}
\newcommand{\N}{\mathbb{N}}
\newcommand{\Z}{\mathbb{Z}}
\theoremstyle{plain}
\newcommand*{\Map}[1]{\mathfrak{#1}} 
\newcommand{\emapo}{\emap^\circ}
\newcommand{\frontier}{\partial\emap}
\newcommand*{\algo}{\mathcal{A}}
\newcommand*{\map}{\Map{m}}  
\newcommand*{\tmap}{\Map{t}} 
\newcommand*{\emap}{\Map{e}} 
\newcommand*{\umap}{\Map{u}} 
\newcommand*{\bmap}{\Map{b}} 
\newcommand{\Probh}{\hat{\Prob}}
\newcommand*{\law}{\mathcal{L}}
\newcommand*{\steps}{\mathcal{S}}
\newcommand*{\step}{\mathtt{s}}
\newcommand*{\Step}{\mathtt{S}}
\newcommand*{\CC}{\mathtt{C}}
\newcommand*{\LL}{\mathtt{L}}		\newcommand*{\RR}{\mathtt{R}}
\newcommand*{\cp}{\CC^\+}			\newcommand*{\cm}{\CC^\<}
\newcommand*{\lp}[1][k]{\LL^\+_{#1}}	\newcommand*{\lm}[1][k]{\LL^\<_{#1}}
\newcommand*{\rp}[1][k]{\RR^\+_{#1}}	\newcommand*{\rn}[1][k]{\RR^\<_{#1}}
\newcommand*{\iroot}{\mathcal I}
\newcommand*{\bt}[1][]{(\tmap#1,\sigma#1)}
\newcommand*{\btsq}[1][]{[\tmap#1,\sigma#1]}
\newcommand*{\bts}{\mathcal{BT}}
\newcommand*{\zz}[2][p,q+1]{\,\frac{z_{#2}}{z_{#1}}}
\newcommand*{\zzz}[3][p,q+1]{\,\frac{z_{#2}\,z_{#3}}{z_{#1}}}
\newcommand*{\zzh}[2][p+1,q]{\,\frac{z_{#2}}{z_{#1}}}
\newcommand*{\zzzh}[3][p+1,q]{\,\frac{z_{#2}\,z_{#3}}{z_{#1}}}
\newcommand{\nseq}[2][0]{(#2_n)_{n\ge #1}}
\newcommand{\limsupp}{\limsup_{p,q\to\infty}}
\newcommand{\pqy}{_{p,q}}
\newcommand{\yy}{_{\infty}}
\newcommand{\refp}[2]{\hyperref[#2]{\ref*{#2}(#1)}}
\newcommand{\RRP}{Rational parametrization}
\newcommand{\RP}{rational parametrization}
\newcommand{\mc}{\mathcal}
\newcommand*{\disk}{\mathbb D}
\newcommand*{\cdisk}{\overline{\mathbb D}}
\newcommand{\Ddom }[1][\theta]{\boldsymbol \Delta\,\!_{\epsilon,#1}}
\newcommand{\cDdom}[1][\theta]{\overline{\boldsymbol \Delta}\,\!_{\epsilon,#1}}
\newcommand{\slit }[1][\epsilon]{\boldsymbol \Delta\,\!_{#1}}
\newcommand{\cslit}[1][\epsilon]{\overline{\boldsymbol \Delta}\,\!_{#1}}
\newcommand{\Hdom}[1][\epsilon]{\mc H\!\,_{#1}}
\newcommand{\cHdom}[1][\epsilon]{\overline{\mc H}\!\,_{#1}}
\newcommand{\code}{\texttt}
\newcommand{\adot}{\,\cdot\,}
\renewcommand{\Re}{\mathfrak{Re}}
\renewcommand{\Im}{\mathfrak{Im}}
\title{Ising model on random triangulations of the disk: phase transition}
\author{Linxiao Chen\footnote{Department of Mathematics, ETH Z\"urich, R\"amistr.\ 101, 8092 Z\"urich, Switzerland, linxiao.chen$0$@gmail.com}\ ,\
Joonas Turunen\footnote{ENSL, Univ Lyon, CNRS, Laboratoire de Physique, F-69342 Lyon, France, joonas.turunen@alumni.helsinki.fi}}
\date{}
\begin{document}

\maketitle
\begin{abstract}
In (\emph{Commun. Math. Phys.} 374(3):1577–1643, 2020), we have studied the Boltzmann random triangulation of the disk coupled to an Ising model on its faces with Dobrushin boundary condition at its critical temperature. In this paper, we investigate the phase transition of this model by extending our previous results to arbitrary temperature: We compute the partition function of the model at all temperatures, and derive several critical exponents associated with the infinite perimeter limit. We show that the model has a local limit at any temperature, whose properties depend drastically on the temperature. At high temperatures, the local limit is reminiscent of the uniform infinite half-planar triangulation (UIHPT) decorated with a subcritical percolation. At low temperatures, the local limit develops a bottleneck of finite width due to the energy cost of the main Ising interface between the two spin clusters imposed by the Dobrushin boundary condition. This change can be summarized by a novel order parameter with a nice geometric meaning.
In addition to the phase transition, we also generalize our construction of the local limit from the two-step asymptotic regime used in (\emph{Commun. Math. Phys.} 374(3):1577–1643, 2020)  to a more natural diagonal asymptotic regime. We obtain in this regime a scaling limit related to the length of the main Ising interface, which coincides with predictions from the continuum theory of \emph{quantum surfaces} (a.k.a.\ Liouville quantum gravity).
\end{abstract}

\newcommand*{\g}{\Map{g}}
\newcommand*{\B}{\Map{b}}

\section{Introduction}
The two-dimensional Ising model is one of the simplest statistical physics models to exhibit a phase transition. We refer to \cite{MW73} for a comprehensive introduction. The systematic study of the Ising model on random two-dimentional lattices dates back to the pioneer works of Boulatov and Kazakov \cite{Kaz86,BouKaz87}, where they discovered a third order phase transition in the free energy density of the model, and computed the associated critical exponents. In their work, the partition function of the model was computed in the thermodynamic limit using matrix integral methods applied to the so-called \emph{two-matrix model}, see \cite{LZ04} for a mathematical introduction. Since then, this approach has been pursued and further generalized to treat other statistical physics models on random lattices, see e.g.\ \cite{EynOra05,EynBon99}.

In this paper, we will follow a more combinatorial approach to the model originated from a series of works by Tutte (see \cite{Tut95} and the references therein) on the enumeration of various classes of embedded planar graphs known as \emph{planar maps}, which is essentially another name for the random lattices studied in physics. The approach of Tutte utilizes a type of recursive decomposition satisfied by these classes of planar maps to derive a functional equation that characterizes their generating function. This method was later generalized by Bernardi and Bousquet-M\'elou \cite{BBM11,BBM16} to treat bicolored planar maps with a weighting that is equivalent to the Ising model. Before that, Bousquet-M\'elou and Schaeffer already had studied the Ising model on planar maps using some general bijection between bipartite maps and blossoming trees \cite{BMS02}. Another work of Bouttier, Di Francesco and Guitter also studied Ising model on quadrangulations using bijections between Eulerian maps and mobiles \cite{BDG07}.

From a probabilistic point of view, the aforementioned recursive decomposition can be seen as the operation of removing one edge from an (Ising-decorated) random planar map with a boundary, and observing the resulting changes to the boundary condition. By iterating this operation, one obtains a random process, called the \emph{peeling process}, that explores the random map one face at a time. Ideas of such exploration processes have their roots in the physics literature \cite{Wat95}, and was revisited and popularized by Angel in \cite{Ang02}. The peeling process proves to be a valuable tool for understanding the geometry of random planar maps without Ising model, see \cite{CurPeccot} for a review of recent developments.

In our previous article \cite{CT20}, we extended some enumeration results of Bernardi and Bousquet-M\'elou \cite{BBM11} to study the Ising-decorated random triangulations with Dobrushin boundary condition \emph{at its critical temperature}. We used the peeling process to construct the local limit of the model, and to obtain several scaling limit results concerning the lengths of some Ising interfaces. In this paper, we extend similar results to the model \emph{at any temperature}, and show how the large scale geometry of Ising-decorated random triangulations changes qualitatively at the critical temperature. In particular, our results confirm the physical intuition that, at large scale, Ising-decorated random maps at non-critical temperatures behave like non-decorated random maps.

A similar model of Ising-decorated triangulations (more precisely, a model dual to ours) has been studied in a recent work of Albenque, M\'enard and Schaeffer \cite{AMS18}. They followed an approach reminiscent of Angel and Schramm in \cite{AS03} to show that the model has a local limit at any temperature, and obtained several properties of the limit such as one-endedness and recurrence for a range of temperatures. However, they studied the model without boundary, and hence did not encounter the geometric consequences of the phase transition in terms of the infinite Ising interface. In the recent preprint \cite{AM22}, the first two of the aforementioned authors proved several exact results on the perimeter and volume of the spin clusters, demonstrating the phase transition through several critical exponents and geometric behaviors of the cluster in different phases. The model with spins on the vertices can also be studied with a boundary, and the methods introduced in \cite{CT20} and this article were recently applied to that model in \cite{T20} by the second author of this work.

\medskip

We start by recalling some essential definitions from \cite{CT20}.

\begin{figure}[t!]
\centering
\includegraphics[scale=0.9]{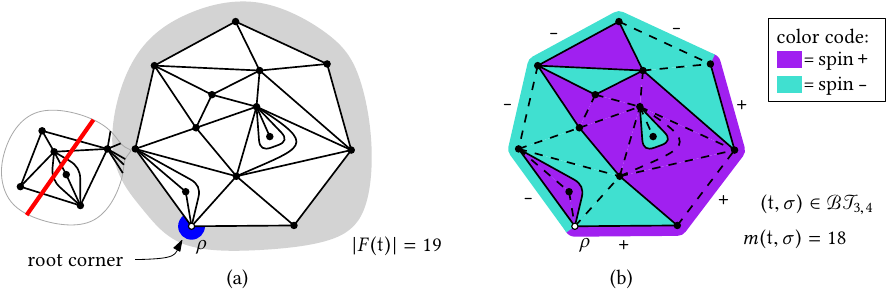}
\caption{
(a) A triangulation $\tmap$ of the 7-gon with 19 internal faces. The boundary will no longer be simple if one attaches to $\tmap$ the map inside the bubble to its left.
(b) an Ising-triangulation of the $(3,4)$-gon with 18 monochromatic edges (dashed lines).
}
\label{fig:def-map}
\end{figure}

\paragraph{Planar maps.}
Recall that a (finite) \emph{planar map} is a proper embedding of a finite connected graph into  the sphere $\mathbb{S}^2$, viewed up to orientation-preserving homeomorphisms of $\mathbb{S}^2$. Loops and multiple edges are allowed in the graph.
A \emph{rooted map} is a map equipped with a distinguished corner, called the \emph{root corner}.
\begin{center} \vspace{-1.ex}
	\emph{All maps in this paper are assumed to be planar and rooted.}
\end{center}   \vspace{-1.ex}
In a (rooted planar) map $\map$, the vertex incident to the root corner is called the \emph{root vertex} and denoted by $\rho$. The face incident to the root corner is called the \emph{external face}, and all other faces are \emph{internal faces}. We denote by $F(\map)$ the set of internal faces of $\map$.

A map is a \emph{triangulation of the $\ell$-gon} ($\ell\ge 1$) if its internal faces all have degree three, and the boundary of its external face is a simple closed path (i.e.\ it visits each vertex at most once) of length $\ell$. The number $\ell$ is called the \emph{perimeter} of the triangulation. Figure~\refp{a}{fig:def-map} gives an example of a triangulation of the $7$-gon.

\paragraph{Ising-triangulations with Dobrushin boundary conditions.}
We consider the Ising model with spins on the internal \emph{faces} of a triangulation of a polygon. A triangulation together with an Ising spin configuration on it is written as a pair $\bt$, where $\sigma \in \{\+,\<\}^{F(\tmap)}$. Observe that $\sigma$ can also be viewed as a coloring, and by combinatorial convention, we sometimes refer to it as such.
An edge $e$ of $\tmap$ is said to be \emph{monochromatic} if the spins on both sides of $e$ are the same. When $e$ is a boundary edge, this definition requires a boundary condition which specifies a spin outside each boundary edge. By an abuse of notation, we consider the information about the boundary condition to be contained in the coloring $\sigma$, and denote by $m\bt$ the number of monochromatic edges in $\bt$.

In this work we consider the \emph{Dobrushin boundary conditions} under which the spins \emph{outside} the boundary edges are given by a sequence of the form $\+^p \<^q$ ($p$ $\+$'s followed by $q$ $\<$'s, where $p,q\ge 0$ are integers and $p+q\ge 1$ is the perimeter of the triangulation) in the clockwise order from the root edge.
We call a pair $\bt$ with this boundary condition an \emph{Ising-triangulation of the $(p,q)$-gon}, or a \emph{bicolored triangulation of the $(p,q)$-gon}. Figure~\refp{b}{fig:def-map} gives an example in the case $p=3$ and $q=4$. We denote by $\bts_{p,q}$ the set of all Ising-triangulations of the $(p,q)$-gon.
For $\nu>0$, let
\begin{equation*}
	z_{p,q}(t,\nu) = \sum_{(\tmap,\sigma)\in \mathcal{BT}_{p,q}} \nu^{m(\tmap,\sigma)} t^{\abs{F(\tmap)}}
\end{equation*}
When $z_{p,q}(t,\nu)<\infty$, we can
define a probability distribution $\prob_{p,q}^{t,\nu}$ on $\mathcal{BT}_{p,q}$ by
\begin{equation*}
\prob_{p,q}^{t,\nu}(\tmap,\sigma) = \frac{ t^{\abs{F(\tmap)}} \nu^{m(\tmap,\sigma)} }{z_{p,q}(t,\nu)}
\end{equation*}
for all $(\tmap,\sigma)\in \mathcal{BT}_{p,q}$. A random variable of law $\prob_{p,q}^{t,\nu}$ will be called a \emph{Boltzmann Ising-triangulation of the $(p,q)$-gon}.
We collect the partition functions $( z_{p,q}(t,\nu) )_{p,q\ge 0}$ into the following generating series:
\begin{equation*}
Z_q(u,t,\nu) = \sum_{p=0}^\infty z_{p,q}(t,\nu)\, u^p
\qtq{and}
Z(u,v,t,\nu) = \sum_{p,q\ge 0} z_{p,q}(t,\nu)\, u^p v^q = \sum_{q=0}^\infty Z_q(u,t,\nu) v^q \,,
\end{equation*}
where by convention $z_{0,0} = 1$.

\paragraph{Partition functions and the phase diagram.}
The condition $z_{p,q}(t,\nu)<\infty$ does not depend on $(p,q)$:
For any pairs $(p,q),(p',q')\ne (0,0)$, one can construct an annulus of triangles which, when glued around \emph{any} bicolored triangulation of the $(p,q)$-gon, gives a bicolored triangulation of the $(p',q')$-gon. Thus $z_{p,q}(t,\nu)\le C\cdot z_{p',q'}(t,\nu)$, where $C$ is the weight of the annulus.
It has been shown in \cite[Section 12.2]{BBM11} that for all $\nu>1$, the series $t\mapsto z_{1,0}(t,\nu)$ converges at its radius of convergence $t_c(\nu)$. Then the above argument implies that $t_c(\nu)$ is the radius of convergence of $t\mapsto z_{p,q}(t,\nu)$ and we have $z_{p,q}(t_c(\nu),\nu)<\infty$, for all $(p,q)\ne (0,0)$ and $\nu>1$.
In this paper we always restrict ourselves to the case $\nu>1$. (This is called the \emph{ferromagnetic} case since in this case the weight $\nu^{m(\tmap,\sigma)}$ favors neighboring spins to have the same sign.)

We shall call $t_c(\nu)$ the critical line of the Boltzmann Ising-triangulation. It separates the inadmissible region $t>t_c(\nu)$, where the probabilistic model is not well-defined, from the subcritical region $t<t_c(\nu)$, where the probability for a Boltzmann Ising-triangulation to have size $n$ decays exponentially with $n$. (Here the size of an Ising-triangulation is defined as its number of internal faces.) It has also been shown in \cite{BBM11} that the function $t_c(\nu)$ is analytic everywhere on $(1,\infty)$ except at $\nu_c=1+2\sqrt 7$. This further divides the critical line into three phases: the high temperature phase $1<\nu<\nu_c$, the critical temperature $\nu=\nu_c$, and the low temperature phase $\nu>\nu_c$ (Figure~\ref{fig:tc_nu_plot}).

\begin{figure}[h!]
\centering
\includegraphics[scale=1.2]{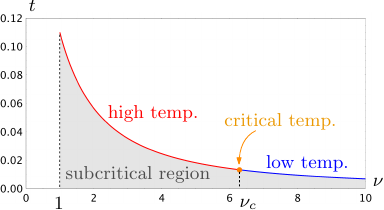}
\caption{Phase diagram of the Boltzmann Ising-triangulation for $\nu>1$. The critical line $t_c(\nu)$ is divided by $\nu_c=1+2\sqrt7$ into the high temperature, low temperature and critical temperature phases. Although hardly visible in the graph, the third derivative of $t_c(\nu)$ has a discontinuity at $\nu=\nu_c$.}
\label{fig:tc_nu_plot}
\end{figure}

In our previous paper \cite{CT20}, we studied the model at the critical point $(\nu,t)=(\nu_c,t_c(\nu_c))$. Results in \cite{CT20} include an explicit parametrization of $Z(u,v,t_c(\nu_c),\nu_c)$, the asymptotics of $z_{p,q}(t_c(\nu_c),\nu_c)$ when $q\to \infty$ and then $p\to \infty$, a scaling limit result closely related to the main interface length, and the local limit of the whole triangulation in that asymptotic regime. In this paper, we will extend this study to the critical line $t=t_c(\nu)$ in order to shed more light on the nature of the phase transition at $\nu=\nu_c$. For this reason we will write throughout this paper
\begin{equation*}
z_{p,q}(\nu)=z_{p,q}(t_c(\nu),\nu) \,, \qquad
Z_q(u,\nu)=Z_q(u,t_c(\nu),\nu)
\qtq{and}
Z(u,v,\nu)=Z(u,v,t_c(\nu),\nu) \,.
\end{equation*}

In \cite{CT20}, we have characterized $Z(u,v,t,\nu)$ as the solution of a functional equation, and solved it in the case of $(\nu,t)=(\nu_c,t_c(\nu))$. In this paper we solve the equation for general $(\nu,t)$ and give the solution in terms of a multivariate \RP:

\begin{theorem}[\RRP\ of $Z(u,v,t,\nu)$]\label{thm:Zparam}
For $\nu>1$, $Z(u,v,t,\nu)$ satisfies the parametric equation
\begin{equation}\label{eq:RP:Z General}
t^2      = \hat T(S,\nu), \qquad
t\cdot u = \hat U(H,S,\nu),\qquad
t\cdot v = \hat U(K,S,\nu) \qtq{and}
Z(u,v,t,\nu) = \hat Z(H,K,S,\nu) \,,
\end{equation}
where $\hat T$, $\hat U$ and $\hat Z$ are rational functions whose explicit expressions are given in Lemma~\ref{lem:RP Z0} and in \cite{CAS2}.
\end{theorem}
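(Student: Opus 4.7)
The proof will follow the strategy established for the critical case $\nu = \nu_c$ in \cite{CT20}, now carried out for arbitrary $\nu > 1$. The starting point is the Tutte-type functional equation satisfied by $Z(u, v; t, \nu)$, obtained in \cite{CT20} by peeling the root edge of an Ising-triangulation of the $(p,q)$-gon: the decomposition produces a polynomial identity linking $Z(u, v; t, \nu)$ with its boundary specializations $Z(u, 0; t, \nu)$ and $Z(0, v; t, \nu)$. This derivation makes no use of $(\nu, t)$ being critical, so the equation of \cite{CT20} applies verbatim for every $\nu > 1$ and every admissible $t$.

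The first step is to solve the equation with a single catalytic variable by setting $v = 0$. The resulting equation in $Z(u, 0; t, \nu)$ is treated by the classical kernel (``quadratic'') method of Bousquet-M\'elou and Jehanne: its underlying algebraic curve turns out to have genus zero and thus admits a rational uniformization. I would parametrize this curve by an auxiliary variable $S$ so that $t^2 = \hat{T}(S, \nu)$, and choose a second parameter $H$ so that $t u = \hat{U}(H; S, \nu)$; the series $Z(u, 0; t, \nu)$ is then a rational function of $H$ and $S$. This is the content of Lemma~\ref{lem:RP Z0}.

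The second step extends the uniformization to both boundary variables. The $+/-$ flip symmetry of the Ising weights exchanges $u \leftrightarrow v$ and $p \leftrightarrow q$; in particular, $Z(0, v; t, \nu)$ satisfies the same single-variable equation as $Z(u, 0; t, \nu)$. Consequently it is uniformized by the same rational functions, with $H$ replaced by a new parameter $K$ via $t v = \hat{U}(K; S, \nu)$. Substituting both boundary series back into the full two-catalytic-variable functional equation reduces the latter to a relation that can be solved explicitly for $Z(u, v; t, \nu)$ as a rational function $\hat{Z}(H, K; S, \nu)$. One finally checks that, expanded around $H = K = 0$ (i.e.\ $u = v = 0$), the rational expression yields the correct formal-power-series coefficients, so by uniqueness of the solution of the Tutte equation in $\mathbb{Z}[\nu][[u, v, t]]$ it must coincide with $Z(u, v; t, \nu)$.

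\textbf{Main obstacle.} The genuinely new work compared to \cite{CT20} is the algebra of the kernel method at generic $\nu > 1$: at the critical point the algebraic equation enjoys degeneracies that collapse many factors, whereas for general $\nu$ the intermediate expressions are heavy, which is why the explicit formulas for $\hat{T}$, $\hat{U}$ and $\hat{Z}$ are deferred to the computer algebra companion \cite{CAS2}. The subtle aspect is to choose a parametrization clean enough that the subsequent singularity analysis used to extract the critical exponents and the scaling limits across the phase transition remains tractable.
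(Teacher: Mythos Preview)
Your plan is broadly correct and follows the same architecture as the paper: obtain a rational parametrization of $Z_0(u;t,\nu)=Z(u,0;t,\nu)$ from the single-catalytic equation, then propagate to $Z(u,v;t,\nu)$. Two points of comparison are worth noting.

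First, the paper does not run the Bousquet-M\'elou--Jehanne kernel method from scratch. It starts from the polynomial identity $\mc E_0(Z_0(u),u;t,\nu,z_{1,0},z_{3,0})=0$ already derived in \cite{CT20}, together with the known rational parametrizations of the auxiliary series $z_{1,0}(t,\nu)$ and $z_{3,0}(t,\nu)$ from \cite{BBM11}. With those substituted in, Lemma~\ref{lem:RP Z0} is proved by \emph{guess-and-check}: one writes down the claimed parametrization \eqref{eq:RP:U General}--\eqref{eq:RP:Z0 General} and verifies that it cancels the equation; the Remark after the proof explains how the guess was actually produced (Maple's \texttt{algcurves} on integer specializations of $\nu$, followed by rational interpolation). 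Your approach would have to determine $z_{1,0}$ and $z_{3,0}$ simultaneously as part of the kernel-method system; this is valid in principle but heavier, and your outline should at least acknowledge that these extra unknowns are present in the equation.

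Second, the two-variable step is simpler than you describe. The companion identity $Z(u,v;t,\nu)=\mc E(Z_0(u),Z_0(v),u,v;t,\nu,z_{1,0},z_{3,0})$ from \cite{CT20} already expresses $Z$ \emph{rationally} in its arguments, so once $Z_0$ is parametrized there is nothing left to solve --- one simply substitutes the parametrizations of $Z_0$, $z_{1,0}$, $z_{3,0}$ into $\mc E$. No further appeal to the kernel method or to the $u\leftrightarrow v$ symmetry is needed at this stage (the symmetry is of course the reason the same $\hat U$ parametrizes both $tu$ and $tv$).
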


To specialize the above \RP\ of $Z(u,v,t,\nu)$ to the critical line $t=t_c(\nu)$, one needs to replace the parameter $S$ by its value $S_c(\nu)$ that parametrizes $t=t_c(\nu)$. It turns out that the function $S_c(\nu)$ itself has \RP s on $(1,\nu_c)$ and $(\nu_c,\infty)$, respectively. More precisely, $S_c(\nu)$ satisfies a parametric equation of the form
\begin{equation*}
\nu=\check \nu(R) \qtq{and} S_c(\nu)=\check S(R) \,,
\end{equation*}
where $\check \nu(R)$ and $\check S(R)$ are piecewise rational functions on the intervals $(R_1,R_c]$ and $[R_c,\infty)$, where the values $R_1,R_c,R_\infty$ correspond to $\nu=1,\nu_c,\infty$ in the sense that $\check \nu(R_1)=1$, $\check \nu(R_c)=\nu_c$ and $\check \nu(R_\infty)=\infty$. The expressions of $\check \nu(R)$, $\check S(R)$ and of $R_1,R_c,R_\infty$ are given in Section~\ref{sec:specialization of RP}. By making the substitution $\nu=\check \nu(R)$ and $S=\check S(R)$ in \eqref{eq:RP:Z General}, we obtain a piecewise \RP\ of $t_c(\nu)$ and $Z(u,v,\nu)$ of the form
\begin{equation*}
t_c(\nu)^2 = \check T(R), \qquad
t_c(\nu)\cdot u = \check U(H,R),\qquad
t_c(\nu)\cdot v = \check U(K,R),\qtq{and}
Z(u,v,\nu) = \check Z(H,K,R) .
\end{equation*}
See Section~\ref{sec:specialization of RP} for more details.

In \cite{CT20}, we computed the asymptotics of $z_{p,q}(t,\nu)$ when $(\nu,t)=(\nu_c,t_c(\nu_c))$ in the limit where $p\to \infty$ after $q\to \infty$. The following theorem extends this result to the whole critical line $t=t_c(\nu)$, and also to the limit where $p,q\to \infty$ at comparable speeds. 
These results are obtained by a close examination of the singular expansion of the multivariate generating function $Z(u,v,t_c(\nu),\nu)$ (in particular, by proving that $(u,v)\mapsto Z(u,v,t_c(\nu),\nu)$ is analytic in a product of two $\Delta$-domains), see Sections~\ref{sec:singularity structure}--\ref{sec:coeff asymp}. Similar methods have been applied to more complicated generating functions and made partly systematic in two recent works \cite{Chen2021,Chen2022} of the first author.

\begin{theorem}[Asymptotics of $z_{p,q}(\nu)$]\label{thm:asympt}
For any fixed $\nu>1$ and $0<\lambda_{\min}<\lambda_{\max}<\infty$, we have
\begin{align*}
u_c(\nu)^q \cdot z_{p,q}(\nu) & = \frac{a_p(\nu)}{\Gamma(-\alpha_0)} \cdot q^{-(\alpha_0+1)}  +  O\m({ q^{-(\alpha_0+1+\delta)} }
          &&\text{as } q\to\infty \text{ for each fixed }p\ge 0,   \\
u_c(\nu)^p \cdot~~\, a_p(\nu) & = \frac{b  (\nu)}{\Gamma(-\alpha_1)} \cdot p^{-(\alpha_1+1)}  +  O\m({ p^{-(\alpha_1+1+\delta)} }
          &&\text{as } p\to\infty,    \\
 u_c(\nu)^{p+q} \cdot z_{p,q}(\nu) & = \frac{b(\nu)\cdot c(q/p)}{\Gamma(-\alpha_0) \Gamma(-\alpha_1)} \cdot p^{-(\alpha_2+2)}  +  O\m({ p^{-(\alpha_2+2+\delta)} }
          &&\text{as } p,q\to\infty \text{ while }q/p \in [\lambda_{\min}, \lambda_{\max}],
\end{align*}
where the exponents $\alpha_i$, $\delta$ and the scaling function $c(\lambda)$ only depend on the phase of the model, and are given by
\begin{equation*}
\begin{tabular}{|c|c|c|c|c|}
\hline        & $\alpha_0$ & $\alpha_1$ & $\alpha_2$ & $\delta$ \\\hline
$\nu>\nu_c$        & $3/2$ & $3/2$      & $3$        & $1/2$ \\\hline
$\nu=\nu_c$        & $4/3$ & $1/3$      & $5/3$      & $1/3$ \\\hline
$\nu\in (1,\nu_c)$ & $3/2$ & $ -1$      & $1/2$      & $1/2$ \\\hline
\end{tabular}
\qquad
c(\lambda) = \begin{cases}
\lambda^{-5/2}   & \text{when }\nu>\nu_c \\
\frac43 \int_0^\infty (1+r)^{-7/3}(\lambda+r)^{-7/3} \dd r  & \text{when }\nu=\nu_c \\
(1+\lambda)^{-5/2} & \text{when }\nu\in (1,\nu_c)\,.
\end{cases}
\end{equation*}
On the other hand, $u_c(\nu)$, $a_p(\nu)$ (for $p\ge 0$) and $b(\nu)$ are analytic functions of $\nu$ on $(1,\nu_c)$ and $(\nu_c,\infty)$, respectively. And $u_c(\nu)$ is continuous at $\nu=\nu_c$. An explicit parametrization of $u_c(\nu)$ is given in Section~\ref{sec:dom of cvg}. Parametrizations of $b(\nu)$ and of the generating function $A(u,\nu) := \sum_{p} a_p(\nu) u^p$ are explained in Section~\ref{sec:local expansion} and given in \cite{CAS2}.
\end{theorem}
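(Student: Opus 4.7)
The plan is to apply singularity analysis of Flajolet--Odlyzko type to the rational parametrizations provided by Theorem~\ref{thm:Zparam} and its specialization to the critical line in Section~\ref{sec:specialization of RP}. The three asymptotics correspond to progressively finer analyses of the singular structure of $Z(u,v;\nu)$ at its critical polydisk.

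For the $q\to\infty$ limit with $p$ fixed I would study $G_p(v):=[u^p]Z(u,v;\nu)$ as a function of $v$. The parametric identities $t_c(\nu)\,v=\check U(K,R)$ and $Z(u,v;\nu)=\check Z(H,K,R)$ place the dominant singularity at $v=u_c(\nu)$, where $K$ reaches a critical value $K_c$ at which $\partial_K\check U$ vanishes to a phase-dependent order. A Puiseux expansion of the parametrization $K\mapsto(v,Z)$ near $K_c$ (the $H$-variable playing the role of a regular parameter tracking the $u^p$ coefficient) yields a local expansion
\begin{equation*}
G_p(v) \;=\; g_p(v) \;+\; a_p(\nu)\bigl(1-v/u_c(\nu)\bigr)^{\alpha_0} \;+\; O\!\bigl((1-v/u_c(\nu))^{\alpha_0+\delta}\bigr)
\end{equation*}
with $g_p$ analytic near $v=u_c(\nu)$, to which the transfer theorem applies and produces the first asymptotic. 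The value $\alpha_0=3/2$ in the non-critical phases arises from a simple square-root singularity (the pure-gravity exponent), while $\alpha_0=4/3$ at $\nu_c$ reflects an additional vanishing of $\partial_K^2\check U$ producing the Ising cube-root exponent; the subleading exponent $\delta\in\{1/2,1/3\}$ matches the natural Puiseux step in each case. The second asymptotic follows by the same strategy applied to $A(u;\nu)=\sum_p a_p(\nu)u^p$: this function is identified through the first step, inherits a rational parametric form in $(H,R)$, and singularity analysis at its critical value $H_c$ delivers $\alpha_1$ and $b(\nu)$.

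The main obstacle is the diagonal regime, where the iterated one-variable analyses fail and the two boundary arcs must be treated simultaneously. My approach is to express $z_{p,q}(\nu)$ as a double Cauchy integral of $Z(u,v;\nu)$ on a polycircle inside the critical polydisk, change variables to the parametric coordinates $(H,K)$ via Theorem~\ref{thm:Zparam}, and perform a bivariate steepest-descent estimate at the joint critical point $(H_c,K_c)$. At $\nu\neq\nu_c$ the local form of $\check Z$ factorizes at leading order into independent $H$- and $K$-contributions, so the estimate decomposes into two one-dimensional Laplace integrals and produces the stated scaling functions $c(\lambda)=\lambda^{-5/2}$ (low temperature) and $c(\lambda)=(1+\lambda)^{-5/2}$ (high temperature); the asymmetry between the two phases reflects the different manner in which the main Ising interface couples the two boundary arcs through the bulk. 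At $\nu=\nu_c$ the two singular directions couple through the common internal parameter $R$, the saddles coalesce, and a uniform stationary-phase analysis in a neighborhood of the coalescence point produces the Beta-type convolution
\begin{equation*}
c(\lambda) \;=\; \tfrac{4}{3}\int_0^\infty (1+r)^{-7/3}(\lambda+r)^{-7/3}\,\mathrm{d}r,
\end{equation*}
whose integrand has precisely the exponent $\alpha_0+1=7/3$ expected from coupling two critical $4/3$-singularities. The most delicate technical point is to maintain uniform control of the remainder $O(p^{-(\alpha_2+2+\delta)})$ along the full range $q/p\in[\lambda_{\min},\lambda_{\max}]$, which requires a Puiseux expansion of $(\check U,\check Z)$ with error bounds uniform in the direction of approach to the critical point.
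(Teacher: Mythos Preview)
Your outline for the two-step asymptotics is essentially the paper's approach: establish a local expansion of $Z(u_cx,u_cy)$ at $y=1$ via the parametrization, apply the transfer theorem, then repeat for $A(u_cx)$. The paper carries this out in Section~\ref{sec:proof two-step}, with the local expansions supplied by Proposition~\ref{prop:local expansion} and the $\Delta$-analyticity by Proposition~\ref{prop:singularity structure} and Corollary~\ref{cor:A Delta-analytic}.

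For the diagonal regime, however, your proposal has a genuine gap. You claim that at $\nu\ne\nu_c$ the local form of $\check Z$ ``factorizes at leading order into independent $H$- and $K$-contributions'', but this is only true for $\nu>\nu_c$: in the high-temperature phase the leading singular term is $Z\1{hom}(s,t)=\frac{s^{1/2}t^{1/2}}{s^{1/2}+t^{1/2}}$, which does \emph{not} factor, and this is precisely why $c(\lambda)=(1+\lambda)^{-5/2}$ rather than $\lambda^{-5/2}$. The source of this coupling is that for $\nu\le\nu_c$ the rational function $\check Z_R$ has a \emph{pole} at $(\check H_c,\check H_c)$ (Lemma~\ref{lem:poles of Z in H0*H0}), so a naive bivariate saddle-point analysis at that point cannot work --- the integrand blows up there. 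The paper handles this via the division Lemma~\ref{lem:singular division}, which decomposes $\mc Z(\mc h,\mc k)=Q(\mc h,\mc k)+J(\mc h\mc k)/\mc D(\mc h,\mc k)$ and isolates the singular part as a ratio whose denominator vanishes like $\mc h+\mc k$. This structural step is what produces the non-factoring homogeneous functions $Z\1{hom}$ and is missing from your sketch. Also, the coupling at $\nu_c$ is not ``through the common internal parameter $R$'' (which is fixed once $\nu$ is fixed) but through this same pole mechanism; the extra cancellations $J_1=J_2=0$ at $\nu_c$ are what shift the exponent from $1/2$ to $5/3$.

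Methodologically, the paper does not do steepest descent in the $(H,K)$ coordinates. It stays in the $(x,y)$ variables, writes $u_c^{p+q}z_{p,q}$ as a double Cauchy integral over $\partial\slit\times\partial\Ddom$, discards the circular arcs at exponential cost, and on the remaining rectilinear product $\Ve\times\Vet$ plugs in the expansion~\eqref{eq:asym Z diag}. The regular part $Z\1{reg}\02$ is engineered to be a sum of products $F(x)G(y)$ with one factor analytic at $1$ and the other locally integrable, so its contribution is exponentially small; the homogeneous term $Z\1{hom}(1-x,1-y)$ is handled by the scaling $s=p(1-x)$, $t=p(1-y)$ and an explicit Hankel-type computation of $\tilde c(\lambda)$; and the remainder is bounded directly using Corollary~\ref{cor:double angle bound} to control $\max(|1-x|,|1-y|)$ by $|(1-x)+(1-y)|$ on the contour. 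Your steepest-descent framing would need all of these ingredients anyway, and the pole at the critical point makes the ``saddle'' language misleading.
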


\begin{remark}
The exponents $\alpha_i$ and the scaling function $c(\lambda)$ satisfy a number of consistency relations.

First, one can exchange the roles of $p$ and $q$ in the last asymptotics of Theorem~\ref{thm:asympt}. Since we have $z_{p,q}=z_{q,p}$ for all $p,q$, this implies that $c(\lambda)\lambda^{\alpha_2+2}=c(\lambda^{-1})$ or, in a more symmetric form, $c(\lambda)\lambda^{(\alpha_2+2)/2}=c(\lambda^{-1}) \lambda^{-(\alpha_2+2)/2}$.

By replacing the factor $a_p(\nu)$ in the first asymptotics of Theorem~\ref{thm:asympt} with the dominant term in the second asymptotics, we obtain \emph{heuristically} that
\begin{equation*}
 u_c(\nu)^{p+q} \cdot z_{p,q}(\nu) \sim \frac{b(\nu)\cdot (q/p)^{-(\alpha_0+1)}}{\Gamma(-\alpha_0) \Gamma(-\alpha_1)} \cdot p^{-(\alpha_0+\alpha_1+2)}
          \qt{when } p,q\to\infty \text{ and }q\gg p \,.
\end{equation*}
This suggests that $\alpha_0+\alpha_1=\alpha_2$ and $c(\lambda)\sim \lambda^{-(\alpha_0+1)}$ when $\lambda \to \infty$. One can verify that both relations are indeed satisfied by $\alpha_i$ and $c(\lambda)$ in all three phases. Notice that thanks to the equation $c(\lambda)\lambda^{\alpha_2+2}=c(\lambda^{-1})$, the asymptotics $c(\lambda)\eqv\lambda \lambda^{-(\alpha_0+1)}$ is equivalent to $c(\lambda)\eqv[0]{\lambda}\lambda^{-(\alpha_1+1)}$.
%
\end{remark}

\paragraph{Infinite Ising-triangulations and local limits.}
Infinite bicolored triangulations are defined as the local limits of finite bicolored triangulations. Formally, the \emph{local distance} between two bicolored triangulations $\bt$ and $\bt[']$ is defined by
\begin{equation*}
d\1{loc}(\bt,\bt[']) = 2^{-R}\qtq{where}
	R = \sup\Set{r\geq 0}{ \btsq_r=\btsq[']_r }
\end{equation*}
and $\btsq_r$ denotes the ball of radius $r$ around the origin in $\bt$ which takes into account the colors of the faces. The set $\bts$ of (finite) bicolored triangulations of a polygon is a metric space under $d\1{loc}$. We denote its Cauchy completion by $\overline{\bts}$ and define the set of infinite bicolored triangulations as $\overline{\bts}\setminus\bts$.
We recall from graph theory that an infinite graph is \emph{$k$-ended} if the complement of any finite subgraph has at most $k$ infinite connected components \cite[14.2]{topograph}, and the same notion naturally extends to maps by considering their underlying graphs. 
We denote by $\bts_\infty^{(1)}$ the set of one-ended (infinite) bicolored triangulations with an external face of infinite degree. The elements of $\bts_\infty^{(1)}$ are called \emph{bicolored triangulations of the half plane}, since they have a proper embedding without accumulation points in the upper half plane such that the boundary coincides with the real axis.
Moreover, let $\bts_\infty^{(2)}$ be the set of two-ended  bicolored triangulations with an external face of infinite degree.

\begin{theorem}[Local limits of Ising-triangulations]~\label{thm:cv}\\
For every $\nu>1,$ there exist probability distributions $\prob_p^\nu$ and $\prob_\infty^\nu$, such that
\begin{equation}\label{eq:2-step cv}
\prob_{p,q}^\nu\ \cv[] q\ \prob_p^\nu\ \cv[] p\ \prob_\infty^\nu
\end{equation}
locally in distribution. Moreover, $\prob_p^\nu$ is supported on $\bts_\infty^{(1)}$ for all $\nu>1$, whereas $\prob_\infty^\nu$ is supported on $\bts_\infty^{(1)}$ when $1<\nu\leq\nu_c$ and on $\bts_\infty^{(2)}$ when $\nu>\nu_c$. In addition, for any $0<\lambda'\leq 1\leq\lambda<\infty$, when $\frac{q}{p}\in [\lambda',\lambda]$, we have
\begin{equation}\label{eq:diag cv}
\prob_{p,q}^\nu\ \cv[]{p,q}\ \prob_\infty^\nu
\end{equation}
locally in distribution.
\end{theorem}

This theorem generalizes our previous result \cite[Theorem~4]{CT20}, which contained only the convergence \eqref{eq:2-step cv} at $\nu=\nu_c$.
It also partially confirms a conjecture in \cite{CT20}, which states that $\prob_{p,q}^{\nu_c} \to \prob_\infty^{\nu_c}$ locally in distribution whenever $p,q\to \infty$.

\paragraph{Peeling process and perimeter processes.}
Recall that we consider bicolored triangulations $\bt$ with a Dobrushin boundary condition. We denote by $\rho$ the root vertex of $\bt$, and by $\rho^\dagger$ the other boundary vertex where the boundary condition changes sign.

An \emph{interface} in $\bt$ is a path on $\tmap$ formed by non-monochromatic edges. Due to the Dobrushin boundary condition, the vertices $\rho$ and $\rho^\dagger$ are always connected by an interface. However, because the spins are on the faces of the triangulation, this interface is in general not unique.
Similarly to \cite{CT20}, we will consider peeling processes that explore one such interface at a time. More precisely, when $\nu\ge\nu_c$, we will consider the peeling process that explores the \emph{left-most} interface $\iroot$ from $\rho$ to $\rho^\dagger$. (This is the same choice as in \cite{CT20}). When $1<\nu<\nu_c$, we will apply explorations along other interfaces, see Section~\ref{sec:hightemplimit} for details.
In all of the cases, the exploration reveals one triangle adjacent to the interface at each step, and swallows a finite number of other triangles if the revealed triangle separates the unexplored part into two pieces.

Formally, we define the \emph{peeling process} as an increasing sequence of \emph{explored maps} $\nseq \emap$. The precise definition of $\emap_n$ will be left to Section~\ref{sec:peeling}. The peeling process is also encoded by a sequence of \emph{peeling events} $\nseq[1] \Step$ taking values in a countable set of symbols, where $\Step_n$ indicates the position of the triangle revealed at time $n$ relative to the explored map $\emap_{n-1}$. Again, the detailed definition is left to Section~\ref{sec:peeling}.
The law of the sequence $\nseq[1] \Step$ can be written down fairly easily and one can perform explicit computations with it. We denote by $\Prob_{p,q}^\nu$ the law of the sequence $\nseq[1] \Step$ under $\prob_{p,q}^\nu$.

Let $(P_n,Q_n)$ be the boundary condition of the unexplored map at time $n$ and $(X_n,Y_n)$ its variation, that is, $X_n=P_n-P_0$ and $Y_n=Q_n-Q_0$. This definition makes sense when the initial condition $(P_0,Q_0)=(p,q)$ is finite. When $(p,q)$ is not finite, we need to define $(X_n,Y_n)$ differently: we will show that $(X_n,Y_n)$ is a deterministic function of the peeling events $(\Step_k)_{1\le k\le n}$, whose law has a well-defined limit when $p,q\to\infty$. This allows us to define the law of the process $\nseq{X_n,Y}$ under $\Prob^\nu\yy:=\lim_{p,q\to\infty}\Prob^\nu_{p,q}$.
We will see that $(X_n,Y_n)_{n\ge 0}$ is a random walk on $\integer^2$ under $\Prob^\nu\yy$. It was proven in \cite{CT20} for the corresponding expectations of the increments that
\begin{equation}\label{eq:mu}
\EE^\nu\yy(X_1)=\EE^\nu\yy(Y_1)=\mu:=\frac{1}{4\sqrt 7}>0\qquad \text{when}\quad \nu=\nu_c,
\end{equation}
which implies that almost surely, the interface hits the boundary of the half-plane a finite number of times, and then escapes towards infinity. When viewed as a function of the temperature $\nu$, the drift of the random walk $(X_n,Y_n)_{n\ge 0}$ actually defines an order parameter:

\begin{proposition}[Order parameter]\label{prop:orderparam intro}
Let $\mathcal{O}(\nu):=\EE_\infty^\nu((X_1+Y_1)\id_{|X_1|\vee|Y_1|<\infty})$. Then
\begin{equation*}
\mathcal{O}(\nu)=\begin{cases}
0,\qquad\text{if}\quad 1<\nu<\nu_c \\
f(\nu)\qquad\text{if}\quad \nu\geq\nu_c,
\end{cases}
\end{equation*}
where $f:[\nu_c,\infty)\to\R$ is a continuous, strictly increasing function such that $f(\nu_c)=2\mu>0$ and $\lim_{\nu\nearrow\infty}f(\nu)<\infty$ exists. Moreover, for $1<\nu<\nu_c$, we have the drift condition $\EE_\infty^\nu(X_1)=-\EE_\infty^\nu(Y_1)>0$.
\end{proposition}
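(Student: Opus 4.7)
My plan is to compute the one-step conditional drifts $\EE_\infty^\nu(X_1 \id_{|X_1|<\infty})$ and $\EE_\infty^\nu(Y_1 \id_{|Y_1|<\infty})$ directly from the law of $\Step_1$ under $\Prob_\infty^\nu$, reduce the resulting quantity to a piecewise rational function of the parameter $R$ via the parametrization $\nu = \check{\nu}(R)$, $S_c(\nu) = \check{S}(R)$ of Theorem~\ref{thm:Zparam}, and analyze this function phase by phase.

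First, I would write the law of $\Step_1$ under $\Prob_{p,q}^\nu$ as ratios of partition functions $z_{p',q'}(\nu)/z_{p,q}(\nu)$, with $(p',q')$ determined by the increments $(\Delta X, \Delta Y)$ of the corresponding event (as set up in Section~\ref{sec:peeling}). Passing to the limit $p,q\to\infty$ via Theorem~\ref{thm:asympt} yields an explicit distribution in terms of $u_c(\nu)$ and the coefficients $a_k(\nu)$: events that swallow $k$ finite edges on one side of the interface receive probability proportional to $a_k(\nu)\,u_c(\nu)^k$, while ``new vertex'' events carry a constant weight. The series $A(u_c(\nu);\nu) = \sum_p a_p(\nu)\,u_c(\nu)^p$ is divergent for $\nu \in (1,\nu_c)$ (since $\alpha_1 = -1$ gives $u_c^p a_p \sim b(\nu)$, not summable) and convergent for $\nu \geq \nu_c$. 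Consequently, in the two-ended regime $\nu > \nu_c$ there is an additional peeling event of positive probability that swallows an infinite portion of one side, contributing an $|X_1|\vee|Y_1| = \infty$ outcome; this event has probability zero for $\nu \leq \nu_c$. From here $\mathcal{O}(\nu)$ emerges as an explicit function of $u_c(\nu)$ and $A(u_c(\nu);\nu)$, hence a piecewise rational function of $R$ via the formulas of Sections~\ref{sec:dom of cvg} and~\ref{sec:local expansion}.

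In the high-temperature phase $\nu \in (1,\nu_c)$, Theorem~\ref{thm:cv} guarantees that the local limit is one-ended, so $|X_1|\vee|Y_1| < \infty$ almost surely and $\mathcal{O}(\nu) = \EE_\infty^\nu(X_1 + Y_1)$. A direct algebraic simplification in the variable $R$ shows that this sum vanishes, while the individual drifts $\EE_\infty^\nu(X_1) = -\EE_\infty^\nu(Y_1)$ remain nonzero and of opposite signs, reflecting geometrically the uniform motion of the Dobrushin corner around the boundary as peeling proceeds. At $\nu = \nu_c$ the computation reproduces \eqref{eq:mu}, yielding $\mathcal{O}(\nu_c) = 2\mu > 0$, and this value matches the right-hand limit $\lim_{\nu \searrow \nu_c} f(\nu)$. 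For $\nu > \nu_c$, the explicit formula defines $f$; its continuity on $[\nu_c,\infty)$ follows from that of $\check{\nu}$ and $\check{S}$, and $\lim_{\nu \to \infty} f(\nu)$ is obtained by substituting $R = R_\infty$.

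The principal obstacle will be strict monotonicity of $f$ on $[\nu_c,\infty)$, which I expect to reduce to positivity of $df/dR$ in the parametrization. Clearing denominators should yield a polynomial in $R$ whose positivity on $[R_c, R_\infty)$ then has to be verified; a favorable factorization (perhaps mirroring the factorizations encountered in~\cite{CT20} at the critical point) would settle the question. Failing a clean algebraic argument, a monotone coupling between $\Prob_\infty^\nu$ and $\Prob_\infty^{\nu'}$ for $\nu_c \leq \nu < \nu'$, exploiting the ferromagnetic structure of the weight $\nu^{m(\tmap,\sigma)}$ to compare the interface drifts, would give a more conceptual proof---though constructing such a coupling directly is delicate because the underlying measures on triangulations themselves vary with $\nu$.
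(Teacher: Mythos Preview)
Your overall approach---express $\mathcal{O}(\nu)$ explicitly via the rational parametrization in $R$ and verify each claimed property by direct computation---is exactly what the paper does: in Section~\ref{sec:orderparam} it writes
\[
\mathcal{O}(\nu)=(\nu+1)t_c(\nu)\left(\frac{Z_0(u_c(\nu))}{u_c(\nu)}-Z'_0(u_c(\nu))-u_c(\nu)Z'_1(u_c(\nu))\right)
\]
and then states that ``the proof is a computation by a computer algebra, presented in \cite{CAS2}.'' In particular the monotonicity of $f$ is handled purely by symbolic computation; no coupling argument is attempted.

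One point in your write-up is off and would trip you up if you tried to carry out the computation as described. The finite-jump events under $\Prob_\infty$ (the ones with $|X_1|\vee|Y_1|<\infty$) have probabilities built from the partition functions $z_{k+1,0}$, $z_{k,1}$, $z_{0,k+1}$, $z_{1,k}$ (see Table~\ref{tab:pinfty}), \emph{not} from the coefficients $a_k(\nu)$; the $a_k$ appear only in the bottleneck events $\RR^*_{\infty\pm k}$, which are precisely the ones excluded by the indicator $\id_{|X_1|\vee|Y_1|<\infty}$. Consequently $\mathcal{O}(\nu)$ is a function of $Z_0(u_c)$, $Z_1(u_c)$ and their $u$-derivatives, not of $A(u_c;\nu)$. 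Your heuristic about the divergence of $A(u_c;\nu)$ for $\nu<\nu_c$ is correct as a fact but is not the mechanism that makes the bottleneck probability vanish there; rather, those probabilities arise as the limits of $\Prob_p(\Step_1=\RR^*_{p\pm k})$, and Theorem~\ref{thm:asympt} shows these limits are nonzero only when $\alpha_0=\alpha_1$, i.e.\ when $\nu>\nu_c$. Once you replace $A$ by $Z_0,Z_1$ in your plan, the rest goes through and matches the paper's route.
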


Notice that there is an asymmetry between the two components of the drift of the random walk $(X_n,Y_n)_{n\ge 0}$ under $\EE_\infty^\nu$. This is a consequence of the following asymmetry in the definition of the perimeter process: In Section~\ref{sec:peeling}, we define a peeling process that explores the \emph{left-most} interface $\mathcal I$ from the vertex $\rho$. The perimeter process $(P_n,Q_n)_{n\ge 0}$ and its variation $(X_n,Y_n)_{n\ge 0}$ are defined relative to this peeling process. Therefore it is not surprising that the two components of $(X_n,Y_n)_{n\ge 0}$ have different drifts under $\EE_\infty^\nu$.

The function $\mathcal{O}$ defines an order parameter for two reasons:
First, its behavior fits formally the definition of an order parameter in physics, namely: the value of $\mathcal O(\nu)$ is zero on one side of the critical temperature, and positive on the other side. (A classical example of such an order parameter is the magnetization of the Ising model on regular lattices.)
More importantly, the positivity of $\mathcal O(\nu)$ really distinguishes the ordered phase $\nu\ge \nu_c$ from the disordered phase $\nu<\nu_c$ via the behavior of the interface $\mathcal I$ in the local limit. We will explain this in the next paragraph.

\paragraph{Interface geometry.}
Recall that for a finite bicolored triangulation $\bt$ with Dobrushin boundary condition, $\mathcal I$ is defined as the left-most interface from $\rho$ to $\rho^\dagger$ imposed by the boundary condition. In the limit $p,q\to \infty$, the interface $\mathcal I$ becomes a (possibly infinite) path on the infinite triangulation of distribution $\prob^\nu_\infty$. Many geometric properties of $\mathcal I$ --- especially its visits to the boundary of the triangulation --- are encoded by the random walk $(X_n,Y_n)_{n\ge 0}$ of law $\EE_\infty^\nu$. The next proposition summarizes some almost sure properties of the interface $\mathcal I$ which follow from Proposition~\ref{prop:orderparam intro}. The geometric pictures behind these properties are discussed after the proposition.


\begin{proposition}[Geometry of the interface $\mathcal{I}$]\label{prop:interface geom}
In the local limit $\prob_\infty^\nu$,
the left-most interface $\mathcal{I}$ has the following properties almost surely
\begin{itemize}
\item When $\nu\in (1,\nu_c)$\,: $\mathcal{I}$ is infinite and touches the boundary of the triangulation infinitely many times. 
\item When $\nu=\nu_c$\,: $\mathcal{I}$ is infinite, but touches the boundary of the triangulation only finitely many times.
\item When $\nu\in (\nu_c,\infty)$\,: $\mathcal{I}$ is finite.
\end{itemize}
\end{proposition}

When $\nu\in (1,\nu_c)$, due to the fact that $\EE_\infty^\nu(X_1)=-\EE_\infty^\nu(Y_1)>0$, the peeling process starting from the \< edge on the left of $\rho$ drifts to the left. This exploration also follows the left-most interface starting from $\rho$, which stays near the infinite \< boundary segment hitting it almost surely infinitely many times. Similarly, the right-most interface starting from $\rho$ and explored via a peeling exploration starting from the edge on the right of $\rho$ drifts to the right following the \+ boundary. Since $\EE^\nu_\infty(X_1)+\EE^\nu_\infty(Y_1)=0$, these two interfaces have the same geometry up to reflection. Using this property, we will construct a peeling algorithm under which the peeling process explores the half-plane in layers, with a starting point alternating between \< and \+ edges. The new peeling exploration obtained in this way reveals that the local limit constructed via this peeling process has a percolation-like interface geometry. On the contrary, if $\nu\in [\nu_c,\infty)$, the peeling process explores an interface which drifts towards the infinity $\rho^\dagger$ after hitting the boundary only finitely many times. The fact that this drift is increasing in $\nu$ means that the lower the temperature is, the less the interface hits the boundary and the faster the interface tends to the infinity. In fact, it is also shown that if $\nu\in (\nu_c,\infty)$, the peeling process approaches a neighborhood of $\rho^\dagger$ in a finite time almost surely.

One should compare the statement of Proposition~\ref{prop:interface geom} to the geometry of the percolation interface on the UIHPT (see \cite{Ang02,Ang05,AC13}). In that case, the interface hits the boundary infinitely many times almost surely.
As Proposition~\ref{prop:interface geom} suggests, in the high temperature
phase ($1<\nu<\nu_c$), the Ising model in the local limit looks like a subcritical face percolation, whereas in the low temperature phase ($\nu>\nu_c$), the local limit contains almost surely a bottleneck separating the \+ and \< regions. In the latter case, the local limit is not almost surely one-ended, contrary to the usual case of local limits of random planar maps. This property reflects that our model in the low temperature phase is really a quantum gravity version of the Ising model on 2D regular lattices in the ferromagnetic low temperature phase: the energy minimizing property forces the bottleneck due to the coupling of matter with gravity. Both the high and the low temperature cases are predicted in physics literature, though not extensively studied (see \cite{Kaz86,ADJ97}). More about the geometric interpretations is found in Section~\ref{sec:orderparam}.

Now we consider again the law of a finite Boltzmann Ising triangulation $\prob_{p,q}^{\nu}$ and study how the interface length scales together with the perimeter of the disk as $p,q\to\infty$ simultaneously. Let $T_m:=\inf\{n\ge 0: \min\{P_n,Q_n\}\leq m\},$ which can be seen as the first jump time of the interface to a neighborhood of the infinity. By its definition, $T_m$ is also the first hitting time of the stochastic process $\left(\min\{P_n,Q_n\}\right)_{n\ge 0}$ to $[0,m]$, which is a stopping time with respect to the filtration generated by $\nseq{P_n,Q}$ or $\nseq{X_n,Y}$. In the most interesting regime $\nu=\nu_c$, we find an explicit scaling limit of $T_m$ under diagonal rescaling of $p,q$:

\begin{theorem}[Scaling limit of $T_m$]\label{thm:scaling}
Let $\nu=\nu_c$ and consider the limit where $p,q\to \infty$ and $q/p\to\lambda$ for some $\lambda\in (0,\infty)$. For all $m\in\natural$ and all $t\ge 0$, the jump time $T_m$ has the following scaling limit:
\begin{equation}\label{eq:Tm scaling}
\Prob^{\nu_c}_{p,q}\m({T_m/p>t} \ \cv[]{p,q}\
\frac1{C(\lambda)} \int_{\mu t}^\infty
(1+s)^{-7/3}(\lambda+s)^{-7/3}ds
\end{equation}
where $C(\lambda) = \int_0^\infty (1+s)^{-7/3} (\lambda+s)^{-7/3} \dd s$.
In particular, when $\lambda=1$, we have
\begin{equation*}
\Prob^{\nu_c}_{p,q}\m({T_m/p>t} \ \cv[]{p,q}\
(1+\mu t)^{-11/3}.
\end{equation*}
\end{theorem}

An analogous result without the diagonal rescaling (via an intermediate local limit) was obtained in \cite[Proposition~11]{CT20}. As explained in \cite[Section 6]{CT20}, $T_m$ is, in some sense, an approximation of the interface length of a finite Boltzmann Ising-triangulation, though some technical difficulties remain to show that its scaling limit gives the scaling limit of the interface length. Hence, we state a conjecture:

\begin{conjecture}[Scaling limit of the interface length]
Let $\eta$ be the length of the left-most interface in $\bt$. Then
\begin{equation*}
\prob_{p,q}(\eta/p>t) \ \cv[]{p,q}\
\frac1{C(\lambda)} \int_{\mu t/E}^\infty(1+s)^{-7/3}(\lambda+s)^{-7/3}ds\qquad\text{while}\quad \frac{q}{p}\to\lambda,
\end{equation*}
where $E$ is the expected number of interface edges swallowed in a single peeling step.
\end{conjecture}

The idea behind the above conjecture is explained in \cite[Section 6]{CT20} in a similar setting. The main obstacle of the proof for the conjecture is that we lack information of $E$ with our current approach. One could find an asymptotic estimate for the volume of a finite Boltzmann Ising-triangulation, which gives an upper bound for the length of a piece of interface swallowed by a peeling step, but it turns out not to be sufficient. However, an analog of the conjecture could be proven for the model with spins on vertices, or with spins on faces and a general boundary. The former is conducted in the preprint \cite{T20}. The conjecture is also supported by a prediction derived from the \emph{Liouville Quantum Gravity}, seen as a continuum model of \emph{quantum surfaces} studied eg. in \cite{matingoftrees}, which also inspired us to find the correct constant in the scaling limit of Theorem~\ref{thm:scaling}. More discussion about this is given in Section~\ref{sec:onejumpscaling}.

To understand the phase transition at the critical point in greater detail, one should also consider the so-called near-critical regime. In our context, this means
that we let $\nu\to \nu_c$ simultaneously with the perimeters tending to infinity. Intuitively, one expects that if $\nu\to\nu_c$ fast enough compared to the growth of the perimeters, observables of the model will have the same limit as when $\nu=\nu_c$. On the contrary, if the convergence $\nu\to\nu_c$ is slow, the observables should have limits similar to those obtained at off-critical temperatures. An interesting question is to determine whether there is a critical window between the critical and the off-critical regimes, where the limits exhibits a qualitatively different behavior. These problems are considered in a work in progress.

\paragraph{Outline.}
The paper is composed of two parts, which can be read independently of each other.

The first part, which spans Sections~\ref{sec:RP of GF}--\ref{sec:coeff asymp}, deals with the enumeration of Ising-decorated triangulations. We start by deriving explicit \RP s of the generating function $Z(u,v,t,\nu)$ and its specialization $Z(u,v,\nu)\equiv Z(u,v,t_c(\nu),\nu)$ on the critical line (Section~\ref{sec:RP of GF}). Using these \RP s, we show that for each $\nu>1$, the bivariate generating function $Z(u,v,\nu)$ has a unique dominant singularity and an analytic continuation on the product of two $\Delta$-domains (Section~\ref{sec:singularity structure}). We then compute the asymptotic expansion of $Z(u,v,\nu)$ at its unique dominant singularity (Section~\ref{sec:local expansion}). Finally, we prove the coefficient asymptotics in Theorem~\ref{thm:asympt} using a generalization of the classical transfer theorem based on double Cauchy integrals (Section~\ref{sec:coeff asymp}).

The second part, which comprises Sections~\ref{sec:peeling perim}--\ref{sec:locallimit c diag} and Appendix~\ref{sec:bigjumplemma proof}, tackles the probabilistic analysis of the Ising-triangulations at any fixed temperature $\nu\in(1,\infty)$. It uses the combinatorial results of the first part as an input, and leads to the proofs of Theorems~\ref{thm:cv} and \ref{thm:scaling}. First, we introduce the different versions of the peeling process adapted to the three phases (high/low/critical temperature) and the two limit regimes examined in Theorem~\ref{thm:cv}. Then, we study the associated perimeter processes, whose drifts in the limit $p,q\to\infty$ define the order parameter introduced in Proposition~\ref{prop:orderparam intro} (Section~\ref{sec:peeling perim}). After that, we provide a general framework for constructing local limits, which we then use to prove the local convergence of Theorem~\ref{thm:cv} when $\nu\neq\nu_c$ (Section~\ref{sec:locallimits}). Finally, we prove Theorem~\ref{thm:scaling} and complete the proof of Theorem~\ref{thm:cv} by extending the above convergence result to the regime where $\nu=\nu_c$ and $p,q\to\infty$ simultaneously (Section~\ref{sec:locallimit c diag}). A central tool in the proofs in this last section is an adaptation of the one-jump lemma for the perimeter process in the diagonal regime, whose proof we present separately in Appendix~\ref{sec:bigjumplemma proof} as an adaptation of \cite[Appendix B]{CT20}.

\section{Rational parametrizations of the generating functions.}\label{sec:RP of GF}

The functional equations satisfied by the generating functions $Z_0(u,t,\nu)$ and $Z(u,v,t,\nu)$ were derived in our previous work \cite{CT20}. The result were written in the form of
\begin{equation}\label{eq:E0 and E}
\mc E_0 \mb({ Z_0(u) ,u, t,\nu, z_{1,0}, z_{3,0} } = 0
\qtq{and}
Z(u,v,t,\nu) = \mc E \mb({ Z_0(u), Z_0(v), u,v, t,\nu, z_{1,0}, z_{3,0} }
\end{equation}
where $\mc E_0$ and $\mc E$ are explicit rational functions with coefficients in $\rational$. Let us briefly summarize their derivation:

\begin{enumerate}
\item
We start by expressing the fact that the probabilities of all peeling steps sum to one. This gives two equations (called loop equations or Tutte's equations) with two catalytic variables for $Z(u,v,t,\nu)$. These equations are linear in $Z(u,v,t,\nu)$.
\item
By extracting the coefficients of $[v^0]$ and of $[v^1]$ from these two equations, we obtain four algebraic equations relating the variable $u$ to the series $Z_p(u,t,\nu)$ for $p=0,1,2,3$, whose coefficients are polynomials in $t$, $\nu$ and $z_{1,0}(t,\nu)$, $z_{3,0}(t,\nu)$. These equations are linear in the three variable $Z_1$, $Z_2$ and $Z_3$. After eliminating these variables, we obtain the first equation of \eqref{eq:E0 and E}.
This procedure is essentially equivalent to the method used in \cite[Chapter~8]{EynardBook} to solve Ising model on more general maps.

\item
Using the four algebraic equations found in Step 2, one can also express $Z_1(u,t,\nu)$ as a rational function of $Z_0(u,t,\nu)$, $u$, $t$, $\nu$ and $z_{1,0}(t,\nu)$, $z_{3,0}(t,\nu)$. Then, plug this relation into one of the two loop equations, and we obtain the second equation of \eqref{eq:E0 and E}.
\end{enumerate}

In this section, we first solve the equation for $Z_0(u,t,\nu)$ with the help of known \RP s of $z_{1,0}(t,\nu)$ and $z_{3,0}(t,\nu)$. Then, the solution is propagated to $Z(u,v,t,\nu)$ using its rational expression in $Z_0(u,t,\nu)$ and its coefficients. Finally, we specialize the parametrization of $Z(u,v,t,\nu)$ to the critical line $t=t_c(\nu)$ by replacing two parameters ($S,\nu$) with a single parameter $R$.

\subsection{\RRP\ of $Z_0(u,t,\nu)$}

\begin{lemma}\label{lem:RP Z0}
$Z_0(u,t,\nu)$ has the following \RP:
\begin{align}
t^2 &= \hat T(S,\nu) \hspace{5.5mm} :=
    \frac{ (S-\nu) (S+\nu-2) (4S^3-S^2-2S+\nu^2-2\nu) }{ 32 (1-\nu^2)^3 S^2 }
    \label{eq:RP:T General}\\
tu &= \hat U(H,S,\nu) \, :=  \ H\cdot
    \frac{2(4S^3-S^2-2S +\nu^2-2\nu) -4(S+1)S^2 H +4S^2 H^2 -S H^3}{16(1-\nu^2)^2 S}
    \label{eq:RP:U General} \\
Z_0(u,t,\nu) &= \hat Z_0(H,S,\nu) := \ \frac{\hat U(H,S,\nu)}{\hat T(S,\nu)} \cdot
    \frac{ (S-\nu)(S+\nu-2) + 2(S-\nu)S H - 2S^2 H^2 + S H^3 }{ 4(1-\nu^2)S H } \ .
    \label{eq:RP:Z0 General}
\end{align}
\end{lemma}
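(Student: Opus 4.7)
The plan is to solve the algebraic equation $\mc E_0\bigl(Z_0(u), u; t, \nu, z_{1,0}, z_{3,0}\bigr) = 0$ recalled in \eqref{eq:E0 and E} for $Z_0(u; t, \nu)$, using as input a known rational parametrization of the univariate "boundary" quantities $t$, $z_{1,0}$ and $z_{3,0}$.

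First I would import from \cite{BBM11} the rational parametrization, in a single auxiliary parameter $S$ (with $\nu$ held fixed), of the boundary partition functions $z_{1,0}(t,\nu)$ and $z_{3,0}(t,\nu)$ together with the corresponding values of $t^2$; this is exactly the data encoded in \eqref{eq:RP:T General}. Substituting these rational expressions into $\mc E_0$ removes the dependence on the univariate quantities and yields a polynomial relation $\tilde{\mc E}(Z_0, u; S, \nu) = 0$ whose coefficients are rational in $(S,\nu)$.

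Second, I would show that the plane curve $\{\tilde{\mc E}(Z_0, u; S, \nu) = 0\}$ has geometric genus zero and construct an explicit uniformizer $H$. The form of the parametrization \eqref{eq:RP:U General}--\eqref{eq:RP:Z0 General} is dictated by two requirements: the dominant singularity of $u \mapsto Z_0(u;t,\nu)$ must correspond to a double zero of $\partial_H \hat U$, reflecting the expected square-root-type singularity of $Z_0$ for generic $\nu$; and the low-order coefficients of $\hat Z_0$ (as a series in $u$ after inversion $u \leftrightarrow H$) must match $z_{0,0}=1$, $z_{1,0}$, $z_{2,0}$ and $z_{3,0}$. Together with the degree bounds inherited from $\mc E_0$, these constraints pin down $\hat U$ and $\hat Z_0$ uniquely.

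Third, the verification is a direct symbolic computation: one substitutes $t^2 = \hat T(S,\nu)$, $tu = \hat U(H;S,\nu)$ and $Z_0 = \hat Z_0(H;S,\nu)$ into $\mc E_0$, clears denominators, and checks that the resulting polynomial in $(H, S, \nu)$ vanishes identically. This is mechanical and can be carried out in the companion computer algebra file \cite{CAS2}. The main obstacle is not this verification but rather the initial guess of the uniformizer $H$ and of the exact rational functions $\hat U$ and $\hat Z_0$: since $\mc E_0$ is of degree greater than two in $Z_0$, no single kernel-method trick produces the parametrization automatically, and one must either perform a computer-aided search guided by the required singular behavior, or transport the guess from the parametrizations of related two-matrix models. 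Once the guess is in hand, the remainder of the proof is routine algebra.
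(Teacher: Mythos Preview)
Your proposal is correct and follows essentially the same guess-and-check approach as the paper: substitute the known rational parametrizations of $t^2$, $z_{1,0}$, $z_{3,0}$ (in the parameter $S$) into $\mc E_0$, then verify symbolically that the stated $\hat U$ and $\hat Z_0$ cancel the resulting equation. The paper relegates your ``second step'' (genus-zero check and construction of the uniformizer via interpolation over sample values of $\nu$) to a remark rather than the proof proper, and completes the argument by noting that as $H\to 0$ the parametrization defines an analytic function of $u$ near $0$, which together with the uniqueness of the power-series solution of \eqref{eq:E0 and E} selects the correct branch --- a point your coefficient-matching remark covers implicitly.
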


\begin{proof}
The following \RP s of $z_{1,0}(t,\nu)$ and $z_{3,0}(t,\nu)$ were obtained in \cite{CT20} by translating a related result from \cite{BBM11}: $t^2 = \hat T(S,\nu)$ and
\begin{align}
t^3 \cdot z_{1,0}(t,\nu) =&\ \hat z_{1,0}(S,\nu) :=
    \frac{ (\nu-S)^2 (S+\nu-2) }{ 64 (\nu^2-1)^4 S^2 }
           (3S^3 -\nu S^2 -\nu S +\nu^2 -2\nu) \,,
    \label{eq:RP:z1 General}\\
t^9 \cdot z_{3,0}(t,\nu) =&\ \hat z_{3,0}(S,\nu) :=
    \frac{ (\nu-S)^5 (S+\nu-2)^5}{2^{22} (\nu^2-1)^{12} S^8}
		\cdot \big(\, 160S^{10} -128S^9 -16(2\nu^2-4\nu+3)S^8
	\label{eq:RP:z3 General}
\\& +\ 32(2\nu^2-4\nu+3)S^7 - 7(16\nu^2-32\nu+27)S^6 - 2(32\nu^2-64\nu+57)S^5 \notag
\\& +\ (32\nu^4-128\nu^3+183\nu^2-110\nu+20)S^4 - 4(7\nu^2-14\nu-2)S^3 \notag
\\& +\ \nu(\nu-2)(9\nu^2-18\nu-20)S^2 + 14\nu^2(\nu-2)^2 S - 3 \nu^3 (\nu-2)^3 \notag
		\,\big) \,.
\end{align}
Substituting $u$ by $U/t$, and then $t$, $z_{1,0}$, $z_{3,0}$ by their respective parametrizations in the first equation of \eqref{eq:E0 and E}, we obtain an algebraic equation of the form $\hat{\mc E}_0(Z_0,U,S,\nu) = 0$.
It is straightforward to check that \eqref{eq:RP:U General}--\eqref{eq:RP:Z0 General} cancel the equation, that is, $\hat{\mc E}_0(\hat Z_0(H,S,\nu), \hat U(H,S,\nu), S,\nu)=0$ for all $H$, $S$ and $\nu$. See \cite{CAS2} for the explicit computation.

On the other hand, we know that \eqref{eq:E0 and E} uniquely determines the formal power series $Z_0(u)$, see \cite[Section~3.1]{CT20}. When $H\to 0$, Equations~\eqref{eq:RP:U General}--\eqref{eq:RP:Z0 General} clearly parametrize an analytic function $Z_0(u)$ near $u=0$. Therefore they are indeed a \RP\ of $Z_0(u,t,\nu)$.
\end{proof}

\begin{remark}
The proof of Lemma~\ref{lem:RP Z0} followed a guess-and-check approach. To actually derive the parametrization \eqref{eq:RP:U General}--\eqref{eq:RP:Z0 General}, we first check that the plane curve defined by $\hat{\mc E}_0(Z_0,U,S,\nu)=0$ has zero genus using the command \code{algcurves[genus]} of Maple, so it does have a \RP\ with coefficients in $\rational(S,\nu)$.
Theoretically, one should be able to produce one such parametrization using the Maple command \code{algcurves[parametrization]}. However, the execution takes too much time, presumably due to the presence of two indeterminates $(S,\nu)$ in the coefficient ring. Instead, we followed the steps below to find \eqref{eq:RP:U General}--\eqref{eq:RP:Z0 General}:

\begin{enumerate}[\arabic*.]
\item
Choose a finite set of values $\mc N \subset \rational \cap (1,\infty)$ for $\nu$. In practice we used the integers $\mc N=\{2,3,\ldots,10\}$.

\item
For each $\nu_* \in \mc N$, apply \code{algcurves[parametrization]} to the algebraic curve $\hat{\mc E}_0(Z_0,U,S,\nu_*)=0$. Let $\bar U_{\nu_*}(H,S)$ and $\bar Z_{0,\nu_*}(H,S)$ denote the rational functions over the ring $\rational(S)$ returned by the command.

If $\bar U_{\nu_*}(H,S) = \hat U(H,S,\nu_*)$ and $\bar Z_{0,\nu_*}(H,S) = \hat Z_0(H,S,\nu_*)$ for all $\nu_*\in \mc N$, where $\hat U$ and $\hat Z_0$ are two trivariate rational functions, then we can apply interpolation techniques to recover the expressions of $\hat U$ and $\hat Z_0$ for general values of $\nu$.
However, since the \RP\ of a (genus zero) algebraic equation is not unique, the functions $\mn({\, \bar U_{\nu_*},\,\bar Z_{0,\nu_*}\, }_{\nu_*\in \mc N}$ are in general \emph{not} the specializations of the same functions $(\hat U,\hat Z_0)$ at different values of $\nu_*$. In order to recover the specializations $\hat U(H,S,\nu_*)$ and $\hat Z_0(H,S,\nu_*)$ from them, we need to ``preprocess'' the pairs
$\mn({ \bar U_{\nu_*},\,\bar Z_{0,\nu_*} }$ as in the two following steps.

\item
Maple guarantees that $(\bar U_{\nu_*}, \bar Z_{0,\nu_*})$ is a \emph{proper} \RP\ of the curve $\hat{\mc E}_0(Z_0,U,S,\nu_*)=0$. We know that all proper \RP{}s of the same curve are related to each other by Möbius transformations \cite[Lemma~4.17]{SWP08}.
Therefore, there exists a family of Möbius transformations $\mc m_{S,\nu_*}$ indexed by the formal variable $S$ and the numerical values $\nu_*\in \mc N$, such that
\begin{equation}\label{eq:RP preprocessing}
\bar U_{\nu_*} \m({ \mc m_{S,\nu_*}(H),S } = \hat U(H,S,\nu_*)
\qtq{and}
\bar Z_{0,\nu_*} \m({ \mc m_{S,\nu_*}(H),S } = \hat Z_0(H,S,\nu_*)
\end{equation}
for some trivariate rational functions $\hat U$ and $\hat Z_0$. To find such a family of Möbius transformations, we make the following observations (see \cite{CAS2} for explicit verification):
\begin{enumerate}[(\roman*)]
\item
For all $\nu_* \in \mc N$, there exists a rational function $\bar H_{\nu_*}(S) \in \rational(S)$ such that $H=\bar H_{\nu_*}(S)$ is the unique pole of both $H\mapsto \bar U_{\nu_*}(H,S)$ and $H\mapsto \bar Z_{0,\nu_*}(H,S)$.

\item
The algebraic curve $\hat{\mc E}_0(Z_0,U,S,\nu)=0$ has a unique analytic branch at the point $(U,Z_0)=(0,1)$.
\\
And for all $\nu_* \in \mc N$, we have $\bar U(H,S,\nu_*) \to 0$ and $\bar Z_0(H,S,\nu_*) \to 1$ as $H\to \infty$.
\end{enumerate}
These two observations suggest that we choose Möbius transformations which map $\infty$ to $\bar H_{\nu_*}(S)$, and map $0$ to $\infty$. (See below for the consequences of this choice.) Such Möbius transformations are of the form
\begin{equation*}
m_{S,\nu_*}(H) = \bar H_{\nu_*}(S) - \Lambda_{\nu_*}(S)/H
\end{equation*}
where $\Lambda_{\nu_*}(S) \ne 0$ is an arbitrary scaling factor to be chosen later.

\item
Plugging the above Möbius transformation into \eqref{eq:RP preprocessing} gives our candidates for $\hat U(H,S,\nu_*)$ and $\hat Z_0(H,S,\nu_*)$. Our choice of $m_{S,\nu_*}$ ensures that these two functions are polynomial in $H$ (i.e.\ their only pole is at $\infty$) and that $\mn({ \hat U(0,S,\nu_*), \hat Z_0(0,S,\nu_*) }=(0,1)$. We compute in \cite{CAS2} the explicit expressions of $\hat U(H,S,\nu_*)$ and $\hat Z_0(H,S,\nu_*)$ and check that they are polynomials of degrees 4 and 6 respectively in the variable $H/\Lambda \equiv H/\Lambda_{\nu_*}(S)$.

Now we ask Maple to display $\hat U(H,S,\nu_*)$ as a polynomial in $H/\Lambda$, and look for common factors among its coefficients (which are elements of $\rational(S)$). With some trial-and-error, we find that the choice
\begin{equation*}
\Lambda_{\nu_*}(S) = -8S \cdot \frac{ [(H/\Lambda)^4] \hat U(H,S,\nu_*) }{ [(H/\Lambda)^3] \hat U(H,S,\nu_*) }
\end{equation*}
cancels all those common factors.
This choice is also equivalent to the condition that $\frac{ [H^3] \hat U(H,S,\nu_*) }{ [H^4] \hat U(H,S,\nu_*) } = -8S$. The prefactor $8$ is not chosen for simplification reasons. Rather, it is chosen so that $(\hat U,\hat Z_0)$, the \RP\ that we get after interpolation in $\nu$, will specialize to the \RP\ given in our previous article \cite{CT20} when $(\nu,t)=(\nu_c,t_c)$.

\item
The above choice of $\Lambda_{\nu_*}(S)$ gives us the expressions of $\hat U(H,S,\nu_*)$ and $\hat Z_0(H,S,\nu_*)$ for all $\nu_*\in \mc N$. Then, we apply the Maple routine \code{CurveFitting[RationalInterpolation]} to find a pair $(\hat U,\hat Z_0) \in \rational(H,S,\nu)^2$ that interpolates between these values of $\nu_*$. This gives the expressions \eqref{eq:RP:U General}--\eqref{eq:RP:Z0 General}.

One can run the above procedure with a larger set $\mc N$, and check that the result does not change.
\end{enumerate}
\end{remark}

\subsection{\RRP\ of $Z(u,v,t,\nu)$.}

We plug the parametrizations \eqref{eq:RP:T General}--\eqref{eq:RP:z3 General} into the second equation of \eqref{eq:E0 and E} to obtain a \RP\ of $Z(u,v,t,\nu)$ of the form
\begin{equation*}
t^2=\hat T(S,\nu) \qquad
tu = \hat U(H,S,\nu) \qquad
tv = \hat U(K,S,\nu) \qtq{and}
Z(u,v,t,\nu) = \hat Z(H,K,S,\nu) \,,
\end{equation*}
where the rational functions $\hat T$ and $\hat U$ are defined in Lemma~\ref{lem:RP Z0}, and the expression of $\hat Z$ is given in \cite{CAS2}.

\subsection{Specialization of $Z(u,v,t,\nu)$ to the critical line $t=t_c(\nu)$.}\label{sec:specialization of RP}

\paragraph{\RRP\ of the critical line.}
Recall that $t_c(\nu)$ is defined as the radius of convergence of the series $z_{1,0}(\adot,\nu)$. The series have nonnegative coefficients, and have a real rational parametrization of the form $t^2=\hat T(S,\nu)$ and $t^3\cdot z_{1,0} = \hat z_{1,0}(S,\nu)$ given by \eqref{eq:RP:T General} and \eqref{eq:RP:z1 General}. As explained in \cite[Appendix~B]{CT20}, the value $S=S_c(\nu)$ that parametrizes the point $t=t_c(\nu)$ is either a zero of $\partial_S \hat T(\adot,\nu)$ or a pole of $\hat z_{1,0}(\adot,\nu)$. More precise calculation (see \cite{CAS2}) using the method of \cite[Appendix~B]{CT20} shows that $S_c(\nu)$ is the largest zero of $\partial_S \hat T(\adot,\nu)$ below $S=\nu$ (which parametrizes $t=0$). The equation $\partial_S \hat T(S,\nu)=0$ factorizes, and $S_c(\nu)$ satisfies
\begin{align}
2 S^3 -3 S^2 - \nu^2 +2 \nu = 0 & \qt{if }\nu \in (1,\nu_c] \,,
\label{eq:def S_c high}    \\
3 S^2 -\nu^2 +2 \nu = 0         & \qt{if }\nu \in [\nu_c,\infty) \,,
\label{eq:def S_c low}
\end{align}
where $\nu_c = 1+2\sqrt7$.
It is not hard to check that $S_c(\nu)$ has the following piecewise \RP:
\begin{equation}\label{eq:RP:S_c}
\nu = \check \nu(R) = \begin{cases}
\frac{1}{2}(2-3R+R^3) & \\
\frac{27}{13+2R-2R^2} &
\end{cases}
\quad\tq{and}
S_c(\nu) = \check S(R) = \begin{cases}
\frac{1}{2}(R^2-1)         & \qt{for }R \in (R_1,R_c]    \\
\frac{3(2R-1)}{13+2R-2R^2} & \qt{for }R \in [R_c,R_\infty)
\end{cases}
\end{equation}
where $R_1=\sqrt3$, $R_c=\sqrt7$, $R_\infty=\frac{1+3\sqrt3}2$ correspond respectively to the coupling constants $\nu=1$, $\nu=\nu_c$ and $\nu=\infty$.
Plugging \eqref{eq:RP:S_c} into $\hat T(S,\nu)$ gives the following piecewise \RP\ of $t_c(\nu)$:
\begin{equation*}
t_c(\nu)^2 = \check T(R) := \begin{cases}
\displaystyle
\frac{3R^2-1}{2 R^3 (4-3R+R^3)^3}    & \text{for }R \in (R_1,R_c]
\vspace{1.5ex} \\
\displaystyle
\frac{ (1+R)^2 (13+2R-2R^2)^3 (19-10R-2R^2)
    }{ 128 (R-5) (4+R)^3 (7-R+R^2)^3
    }    & \text{for }R \in [R_c,R_\infty)
\end{cases}
\end{equation*}

\paragraph{\RRP\ of $Z(u,v,t,\nu)$ on the critical line.}
Define $\check U(H,R) = \hat U(H,\hat S(R),\hat \nu(R))$ and $\check Z(H,K,R) = \hat Z(H,K,\hat S(R),\hat \nu(R))$.
Then $Z(u,v,\nu) \equiv Z(u,v,t_c(\nu),\nu)$ has the piecewise \RP:
\begin{equation*}
t_c(\nu) \cdot u = \check U(H,R) \qquad  t_c(\nu) \cdot v  = \check U(K,R)
\qtq{and} Z(u,v,\nu) = \check Z(H,K,R) \ ,
\end{equation*}
where
\begin{equation}\label{eq:RP:U}
\check U(H,R) := \begin{cases}
\displaystyle
\frac{\mn({3-10 R^2+3 R^4}+\mn({1-R^4}H-2 \mn({1-R^2}H^2 -H^3
    }{ R^2 \m({3-R^2}^2 \m({4-3 R+R^3}^2
    } H
\hspace{23mm} \qt{for }R \in (R_1,R_c]
\vspace{1.5ex} \\
\displaystyle
-\frac{ (13+2R-2R^2)^2H
     }{ 256 (5-R)^2 (4+R)^2 (7-R+R^2)^2}
\bigg( 8(1+R)(5-R) \m({ 19-10R-2R^2-3(1-2R)H }
\vspace{0.5ex} \\
\displaystyle ~\hspace{15.5mm}
     +12(1-2R)\m({ 13+2R-2R^2 } H^2 + \m({ 13+2R-2R^2 }^2 H^3 \bigg)
 \qt{for }R \in [R_c,R_\infty)
\end{cases}
\end{equation}
whereas $\check Z(H,K,R)$, too long to be written down here, is given in \cite{CAS2}. Since we look for the asymptotics of $z_{p,q}(\nu)$ when $p,q\to\infty$ with fixed values of $\nu$, we will be interested in the singularity behavior of $\check U(H,R)$ and $\check Z(H,K,R)$ at fixed values of $R$. For this reason we introduce the shorthand notations
\begin{equation*}
\check U_R(H) := \check U(H,R)
\qtq{and}
\check Z_R(H,K) := \check Z(H,K,R) \,.
\end{equation*}

\subsection{Domain of convergence of $Z(u,v,\nu)$ and its parametrization.}\label{sec:dom of cvg}

\paragraph{Definition and parametrization of $u_c(\nu)$.}
For all $R\in (R_1,R_\infty)$, let $\check H_c(R)$ be the smallest positive zero of the derivative $\check U_R'$. Using the expression \eqref{eq:RP:U}, it is not hard to find that
\begin{equation}\label{eq:RP:H_c}
\check H_c(R) := \begin{cases}
\displaystyle
\frac{R^2-3}2
& \text{for }R \in (R_1,R_c]
\vspace{1.5ex} \\
\displaystyle
\frac{5+4 R-R^2- \sqrt{3(5-R) (1+R) \left(R^2-7\right)}}{13+2 R-2 R^2}
& \text{for }R \in [R_c,R_\infty)
\end{cases}
\end{equation}
For $\nu>1$, let $u_c(\nu)$ be the function parametrized by $\nu=\check \nu(R)$ and $t_c(\nu)\cdot u_c(\nu) = \check U_R(\check H_c(R))$, where $R\in (R_1,R_\infty)$.

\begin{lemma}\label{lem:dom of cvg}
For all $\nu>1$, the double power series $(u,v)\mapsto Z(u,v,\nu)$ is absolutely convergent \Iff\ $|u|\le u_c(\nu)$ and $|v|\le u_c(\nu)$.
\end{lemma}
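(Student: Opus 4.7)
The coefficients $z_{p,q}(\nu)$ are nonnegative, and the spin-flip involution (combined with a canonical re-rooting) gives the symmetry $z_{p,q}(\nu)=z_{q,p}(\nu)$. Absolute convergence of $Z(u,v;\nu)$ at $(u,v)$ is therefore equivalent to $\sum z_{p,q}(\nu)|u|^p|v|^q<\infty$, and it suffices to work on the first quadrant $u,v\ge 0$. The plan has two independent parts: first, show that the univariate series $Z_0(u;\nu)=Z(u,0;\nu)$ has radius of convergence exactly $u_c(\nu)$; second, show that $Z(u_c(\nu),u_c(\nu);\nu)<\infty$. The first part rules out the case $|u|>u_c(\nu)$ (and by the symmetry $z_{p,q}=z_{q,p}$ also $|v|>u_c(\nu)$), since $\sum_{p,q}z_{p,q}(\nu)|u|^p|v|^q\ge\sum_p z_{p,0}(\nu)|u|^p$ for any $v$. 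The second part, combined with the trivial domination $z_{p,q}|u|^p|v|^q\le z_{p,q}u_c(\nu)^{p+q}$ valid for $|u|,|v|\le u_c(\nu)$, gives absolute convergence on the whole polydisk.

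For the radius of $Z_0(\cdot;\nu)$, fix $R\in(R_1,R_\infty)$ with $\nu=\check\nu(R)$. Lemma~\ref{lem:RP Z0} specializes to the parametric equations $t_c(\nu)u=\check U_R(H)$, $Z_0(u;\nu)=\check Z_0(H;R)$, where $\check Z_0(H;R)$ is actually a polynomial in $H$ (the apparent pole at $H=0$ in the formula of Lemma~\ref{lem:RP Z0} is cancelled by the factor $\check U_R(H)\propto H$). From \eqref{eq:RP:U} one checks that $\check U_R'(0)>0$ on each of the two regimes; since $\check H_c(R)$ is the first positive zero of $\check U_R'$, the function $\check U_R$ is strictly increasing on $[0,\check H_c(R)]$ and attains $t_c(\nu)u_c(\nu)$ at the endpoint. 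Hence $H=\check U_R^{-1}(t_c(\nu)u)$ analytically extends $Z_0(\cdot;\nu)$ to $[0,u_c(\nu))$, giving the lower bound on the radius. A direct case check on \eqref{eq:RP:H_c} shows that $\check H_c(R)$ is a \emph{simple} critical point of $\check U_R$ and $\check Z_0(\check H_c(R);R)\ne 0$, so $Z_0(\cdot;\nu)$ has a square-root singularity at $u=u_c(\nu)$, which pins the radius at that value.

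For the finiteness of $Z(u_c(\nu),u_c(\nu);\nu)$, I would argue by monotone convergence along the diagonal. For each $s\in(0,u_c(\nu))$, let $H_s\in(0,\check H_c(R))$ be the unique preimage of $t_c(\nu)s$ under $\check U_R$; then $Z(s,s;\nu)=\check Z(H_s,H_s;R)$. Since the coefficients are nonnegative, $s\mapsto Z(s,s;\nu)=\sum_n\bigl(\sum_{p+q=n}z_{p,q}(\nu)\bigr)s^n$ is monotone increasing on $[0,u_c(\nu))$ and
\begin{equation*}
\sum_{p,q\ge 0}z_{p,q}(\nu)u_c(\nu)^{p+q}\;=\;\lim_{s\uparrow u_c(\nu)}Z(s,s;\nu)\;=\;\check Z\bigl(\check H_c(R),\check H_c(R);R\bigr),
\end{equation*}
provided the right-hand side is finite.

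The step I expect to be the main technical obstacle is exactly the finiteness check: one must verify, using the explicit rational expression of $\check Z(H,K;R)$ provided in \cite{CAS2} together with the formulas \eqref{eq:RP:H_c} for $\check H_c(R)$, that the denominator of $\check Z(H,K;R)$ does not vanish at $H=K=\check H_c(R)$ for any $R\in(R_1,R_\infty)$. This reduces to a direct computer-algebra verification, which has to be performed separately on the two subintervals $(R_1,R_c]$ and $[R_c,R_\infty)$ and matched at $R=R_c$ (where the critical value $\nu_c$ is already handled by \cite{CT20}). Once this check is completed, the two-sided characterization of absolute convergence follows as described above.
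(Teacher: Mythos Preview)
Your overall strategy coincides with the paper's: reduce to the two univariate statements (i) $Z(u,0;\nu)$ diverges for $|u|>u_c(\nu)$ and (ii) $Z(u_c(\nu),u_c(\nu);\nu)<\infty$, and handle each via the rational parametrization. The first part is essentially fine (minor quibbles: at $R=R_c$ the critical point $\check H_c(R)$ is in fact a \emph{double} zero of $\check U_R'$, so the singularity is cube-root rather than square-root; and the relevant nonvanishing condition for a genuine singularity is $\check Z_0'(\check H_c)\ne 0$, not $\check Z_0(\check H_c)\ne 0$). Neither of these affects the conclusion that the radius is at most $u_c(\nu)$, which is all that is needed.

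The second part, however, has a real gap. Your proposed verification is that the denominator of the bivariate rational function $\check Z(H,K;R)$ does not vanish at $H=K=\check H_c(R)$. This check \emph{fails} for all $R\in(R_1,R_c]$ (the high and critical temperature regime): the paper establishes later, in Lemma~\ref{lem:poles of Z in H0*H0}, that $(\check H_c(R),\check H_c(R))$ \emph{is} a pole of $\check Z_R$ for every such $R$, i.e.\ the denominator $D$ vanishes there. What saves the situation is that the numerator $N$ also vanishes at that point, and along the diagonal direction the zeros cancel: the \emph{univariate} rational function $H\mapsto \check Z_R(H,H)$, after reduction, has no pole at $\check H_c(R)$. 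This is precisely the check the paper performs (``the rational function $H\mapsto\check Z_R(H,H)$ has no pole on $[0,\check H_c(R)]$''), and it is strictly weaker than the check you propose. So your argument as written would break down over the whole interval $\nu\in(1,\nu_c]$, not just at $\nu_c$; the fix is to verify analyticity of the diagonal restriction rather than nonvanishing of the bivariate denominator.
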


\begin{proof}
First, we notice that the proof can be reduced to the problem of estimating the radii of convergence of two univariate power series: it suffices to show that the series $u\mapsto Z(u,0,\nu) \equiv Z(0,u,\nu)$ is divergent when $|u|>u_c(\nu)$, and the series $u\mapsto Z(u,u,\nu)$ is convergent at $u=u_c(\nu)$.
Indeed, since the double power series $Z(u,v,\nu)$ has nonnegative coefficients, the divergence condition implies that $Z(u,v,\nu)$ is divergent when $|u|>u_c(\nu)$ or $|v|>u_c(\nu)$, and the convergence condition implies that $Z(u,v,\nu)$ is absolutely convergent for all $|u|\le u_c(\nu)$ and $|v|\le u_c(\nu)$.

The univariate series $u\mapsto Z(u,0,\nu)$ has nonnegative coefficients and the following \RP:
\begin{equation*}
t_c(\nu) \cdot u = \check U_R(H) \qtq{and}
Z(u,0,\nu) = \check Z_R(H,0)   \,.
\end{equation*}
It is not hard to check that this \RP s are real and proper (see \cite[Appendix~B]{CT20} for the definitions and characterizations of these properties), and the parametrization $t_c(\nu)\cdot u = \check U_R(H)$ maps a small interval around $H=0$ increasingly to an interval around $u=0$. Hence the parametrization of the radius of convergence of $u\mapsto Z(u,0,\nu)$ can be determined in the framework of \cite[Proposition~21]{CT20}. More precisely, the radius of convergence $u_c^*(\nu)$ should satisfy $t_c(\nu) u_c^*(\nu) = \check U_R(\check H_c^*(R))$, where $\check H_c^*(R)$ is the smallest positive number that is either a zero of $\check U_R'$, or a pole of $H\mapsto \check Z_R(H,0)$. Comparing this to the definition of $\check H_c(R)$, we see that $\check H_c^*(R) \le \check H_c(R)$, and hence $u_c^*(\nu)\le u_c(\nu)$.\footnote{
Using its explicit expression, one can check that $H\mapsto \check Z_R(H,0)$ has no pole on $[0,\check H_c(R)]$. Hence $\check H_c^*(R) = \check H_c(R)$ and $u_c^*(\nu) = u_c(\nu)$. But this is not necessary for the proof.}
This shows that $u\mapsto Z(u,0,\nu)$ is divergent when $|u|>u_c(\nu)$.

We apply the same argument to the series $u\mapsto Z(u,u,\nu)$, which has the \RP
\begin{equation*}
t_c(\nu) \cdot u = \check U_R(H) \qtq{and}
Z(u,u,\nu) = \check Z_R(H,H)   \,.
\end{equation*}
Again, the \RP\ is real and proper. Using its explicit expression, one can check that the rational function $H\mapsto \check Z_R(H,H)$ has no pole on $[0,\check H_c(R)]$. With the same argument as for $u\mapsto Z(u,0,\nu)$, we conclude that $u_c(\nu)$ is the radius of convergence of $u\mapsto Z(u,u,\nu)$ and the series is convergent at $u=u_c(\nu)$ (because $Z(u_c(\nu),u_c(\nu),\nu) = \check Z_R(\check H_c(R),\check H_c(R))$ is finite).
This concludes the proof of the lemma.
The necessary explicit computations in the above proof can be found in \cite{CAS2}.
\end{proof}

\paragraph{Notations:} In the following, we will use the renormalized variables $(x,y)= \m({ \frac u{u_c(\nu)}, \frac v{u_c(\nu)} }$.
A parametrization of the function $(x,y)\mapsto Z(u_cx,u_cy,\nu)$ is given by $x=\check x(H,R)$, $y=\check x(K,R)$ and $\tilde Z(x,y,\nu) = \check Z(H,K,R)$, where $\check x(H,R) \equiv \check x_R(H) := \check U_R(H) /\, \check U_R(\check H_c(R))$ is still a rational function in $H$. In the low temperature regime $\check x(H,R)$ is no longer rational in $R$ due to the square root in \eqref{eq:RP:H_c}. However it remains continuous on $(R_1,R_\infty)$ and smooth away from $R_c$. These regularity properties will be more than sufficient for our purposes.

\paragraph{Definition of holomorphicity and conformal bijections:}
We say that a function is \emph{holomorphic} in a (not necessarily open) domain if it is holomorphic in the interior of the domain and continuous in the whole domain.
This definition is also valid for functions of several complex variables, in which case \emph{holomorphic} means that the function has a multivariate Taylor expansion that is locally convergent. A \emph{conformal bijection} is a bijection which is holomorphic and whose inverse is also holomorphic.

\paragraph{Definition of $\Hdom[0](R)$:}
By \cite[Proposition~21]{CT20}, for each $R\in (R_1,R_\infty)$, the mapping $\check x_R$ induces a conformal bijection
from a compact neighborhood of $H=0$ to the closed unit disk $\cdisk$. We denote by $\cHdom[0](R)$ this neighborhood and by $\Hdom[0](R)$ its interior.
It is not hard to see that $\Hdom[0](R)$ is the connected component of the preimage $\check x_R^{-1}(\disk)$ which contains the origin. This characterization of $\Hdom[0](R)$ will be used in the proof of Lemma~\ref{lem:unique dominant}. Notice that it implies in particular that $\Hdom[0](R)$ is symmetric \wrt\ the real axis.

\section{Dominant singularity structure of $Z(u,v,\nu)$}\label{sec:singularity structure}

In this section, we prove that the bivariate generating function $(x,y)\mapsto Z(u_c(\nu)x,u_c(\nu)y,\nu)$ has a unique dominant singularity at $(x,y)=(1,1)$, and is ``$\Delta$-analytic'' in a sense similar to the one defined in \cite{FS09} for univariate generating functions. Before starting, let us briefly describe the state of the art for the singularity analysis of algebraic generating functions of one or two variables.

For a generating function $F(z)=\sum_{n\ge 0} F_n z^n$ of one complex variable, a dominant singularity of $F$ is by definition a singularity with minimal modulus. Moreover, this minimal modulus is equal to the radius of convergence $\rho$ of the Taylor series $\sum_{n} F_n z^n$, so the dominant singularities of $F$ are simply those on the circle $\Set{z\in \complex}{|z|=\rho}$.
When $F$ is algebraic, it behaves locally near a singularity $z_*$ like $(z-z_*)^{r}$ with some $r\in \rational$. In particular, one can find a disk centered at $z_*$ such that (a branch of) $F$ is analytic in the disk with one ray from $z_*$ to $\infty$ removed.
Since algebraic functions have only finitely many singularities, it follows that any univariate algebraic function $F(z)$ with finite radius of convergence has an analytic continuation in a domain of the form $\bigcap_{i} \m({z_i\cdot \slit}$, where $z_i$ are the dominant singularities of $F$, and $\slit$ is the disk of radius $1+\epsilon>1$ centered at $0$, with the segment $[1,1+\epsilon]$ removed. This ensures that the classical transfer theorem (see \cite[Chapter VI.3]{FS09}) always applies to algebraic functions, and gives coefficient asymptotics of the form $F_n \sim \sum_i c_i\cdot z_i^{-n} \cdot n^{-r_i}$ with $c_i\in \complex$ and $r_i\in \rational$. In particular, when the dominant singularity is unique, the asymptotics has the simple form of $F_n \sim c\cdot z_*^{-n} \cdot n^r$.

When $F(x,y) = \sum_{m,n} F_{m,n} x^m y^n$ is an algebraic function of two complex variables, the situation is much more complicated. First, the singularities of $F(x,y)$ are in general no longer isolated points. Also, the definition of dominant singularities has to be generalized: instead of minimizing $|z|$ in the univariate case, one needs to minimize the product $|x|^\lambda |y|$, where $\lambda=\lim \frac mn$ is defined by the regime of $m,n\to \infty$ in which one looks for the asymptotic of $F_{m,n}$.
The general picture for the singularity analysis of bivariate algebraic functions is still far from being fully understood. The only systematic study we found in the literature concerns the case where $F(x,y)$ is rational or meromorphic. See \cite{ACSVhome} for references. (A non-rational case has also been studied in \cite{Greenwood18}. But it concerns functions of a special form, and does not cover the case we are interested in here.)
When $F(x,y)$ is rational (or of the form studied in \cite{Greenwood18}), the locus of singularities of $F$ is an algebraic sub-variety of $\complex^2$. In that case, sophisticated tools from algebraic geometry can be used to locate the dominant singularities, and to study $F(x,y)$ locally near the dominant singularities.

For the Ising-triangulations, the singularity locus of the generating function $(x,y)\mapsto Z(u_c x,u_c y,\nu)$ is much harder to describe, since it involves describing branch cuts of the function in $\complex^2$. Luckily, the structure of dominant singularities is very simple: regardless of the relative speed at which $p,q\to \infty$, the dominant singularity is always unique and at $(x,y)=(1,1)$. Moreover, the function has an analytic continuation ``beyond the dominant singularity'' in both the $x$ and $y$ coordinates, in the product of two $\Delta$-domains. Proposition~\ref{prop:singularity structure} gives the precise formulation of the above claim.

\paragraph{Notations.}
We denote by $\disk$ the open unit disk in $\complex$ and by $\mathrm{arg}(z) \in (-\pi,\pi]$ the argument of $z \in \complex$. For $\epsilon>0$ and $0\le \theta<\pi/2$, define the $\Delta$-domain
\begin{equation*}
\Ddom = \set{\, z\in (1+\epsilon)\cdot \disk\, }{\, z\ne 1 \text{ and }|\mathrm{arg}(z-1)|>\theta \, }.
\end{equation*}
When $\theta=0$, the above definition gives $\Ddom[0] = (1+\epsilon)\cdot \disk \setminus [1,1+\epsilon)$, which is a disk with a small cut along the real axis. We call this a slit disk, and use the abbreviated notation $\slit \equiv \Ddom[0]$.

We denote by $\partial \Ddom$ and $\cDdom$ be the boundary and the closure of $\Ddom$. When $\theta \in (0,\pi/2)$, these are taken \wrt\ the usual topology of $\complex$. When $\theta=0$ however, we view $\slit$ as a domain in the universal covering space of $\complex\setminus \{1\}$, and define $\partial \slit$ and $\cslit$ \wrt\ that topology. In this way the closed curve $\partial \slit$ will be a nice limit of $\partial \Ddom$ when $\theta \to 0^+$, as illustrated in Figure~\refp{a}{fig:Hdom}.

\begin{figure}
\centering
\includegraphics[scale=0.75]{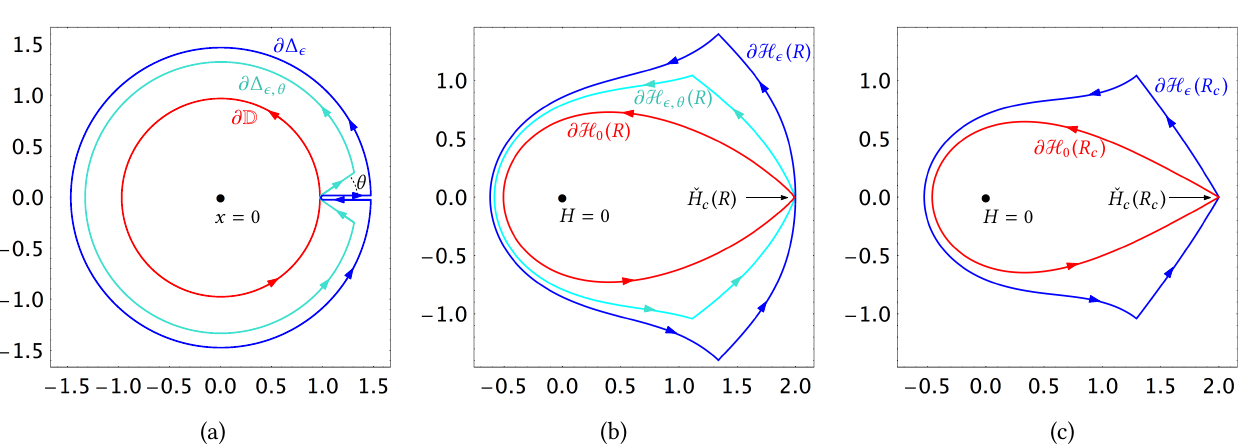}
\caption{(a) Boundaries of the unit disk $\disk$, the $\Delta$-domain $\Ddom$ and the slit disk $\slit$. For the sake of visibility, $\Ddom$ and $\slit$ are drawn for two different values of $\epsilon$.\\
(b) Boundaries of the domains $\Hdom[0](R)$, $\Hdom[\epsilon,\theta](R)$ and $\Hdom(R)$ defined by the parametrization $\check x_R$ at a non-critical temperature $R\ne R_c$. By definition, $\Hdom[0](R)$ (resp.\ $\Hdom[\epsilon,\theta](R)$ and $\Hdom(R)$) is the connected component of the preimage $\check x_R^{-1}(\disk)$ (resp.\ $\check x_R^{-1}(\Ddom)$ and $\check x_R^{-1}(\slit)$) containing the origin.\\
(c) Boundaries of the domains $\Hdom[0](R)$ and $\Hdom(R)$ defined by $\check x_R$ at the critical temperature $R=R_c$. Notice that at the point $\check H_c(R)$, the curve $\partial \Hdom(R)$ in (b) has a tangent, while the curve $\partial \Hdom(R_c)$ in (c) has two half-tangents at an angle $2\pi/3$.
}
\label{fig:Hdom}
\end{figure}

\begin{proposition}\label{prop:singularity structure}
For all $\nu>1$ and $\theta \in (0,\frac\pi2)$, there exists $\epsilon>0$ such that (an analytic continuation of) the function $(x,y)\mapsto Z(u_c(\nu) x, u_c(\nu) y, \nu)$ is holomorphic in $\cslit \times \cDdom$. Moreover, when $\nu\ge \nu_c$, we can take $\theta=0$, i.e., find $\epsilon>0$ such that the function is holomorphic in $\cslit \times \cslit$.
\end{proposition}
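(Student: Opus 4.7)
The plan is to transfer the statement through the rational parametrization of Section~\ref{sec:specialization of RP}. Because $Z(u_c x, u_c y;\nu) = \check Z_R(H,K)$ under $x = \check x_R(H)$ and $y = \check x_R(K)$, the proposition splits into two claims that I would treat separately: (a) for $\epsilon$ small enough, $\check x_R$ restricts to a conformal bijection from $\cHdom(R)$ onto $\cslit$ and from $\cHdom[\epsilon,\theta](R)$ onto $\cDdom$; (b) the rational function $\check Z_R(H,K)$ is holomorphic on $\cHdom(R) \times \cHdom[\epsilon,\theta](R)$, and also on $\cHdom(R) \times \cHdom(R)$ when $\nu \ge \nu_c$. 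Composing these, $(x,y) \mapsto \check Z_R(\check x_R^{-1}(x),\check x_R^{-1}(y))$ gives the desired analytic continuation of $Z(u_c x, u_c y; \nu)$ to $\cslit \times \cDdom$ (respectively $\cslit \times \cslit$).

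For (a), I would start from the fact that $\check H_c(R)$ is by definition the \emph{smallest} positive zero of $\check U_R'$ and hence of $\check x_R'$. Since $\partial \Hdom[0](R)$ is a Jordan curve on which $|\check x_R|\equiv 1$, this makes $\check H_c(R)$ the unique critical point of $\check x_R$ on $\partial\Hdom[0](R)$, with $\check x_R(\check H_c(R))=1$ the unique critical value on $\partial\disk$. Reading \eqref{eq:RP:U}--\eqref{eq:RP:H_c}, $\check x_R'$ has a simple zero at $\check H_c(R)$ when $R\ne R_c$ and a double zero at $R = R_c$, so $\check x_R(H) - 1 \asymp (H - \check H_c(R))^{m_R}$ locally with $m_R\in\{2,3\}$. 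Combined with the univalence of $\check x_R$ on $\Hdom[0](R)$ supplied by \cite[Proposition~21]{CT20}, this allows me to continuously deform $\partial\disk$ into $\partial\Ddom$ or $\partial\slit$ and lift the deformation by the implicit function theorem away from $\check H_c(R)$. Near $\check H_c(R)$, the $m_R$-th power behaviour unfolds the cut of $\slit$ (resp.\ the sector of $\Ddom$) into either two smooth arcs meeting at $\check H_c(R)$ when $m_R=2$, or into the $2\pi/3$ cusp when $m_R=3$, reproducing the boundaries in Figure~\ref{fig:Hdom}(b)--(c).

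For (b), Lemma~\ref{lem:dom of cvg} already guarantees that $\check Z_R(H,K)$ has no pole on $\Hdom[0](R)\times \Hdom[0](R)$, so it only remains to check that the polar variety of $\check Z_R$ (an algebraic curve in $\C^2$ cut out by the denominator of the expression in \cite{CAS2}) does not intrude on the enlarged closed product. Away from the diagonal corner $(\check H_c(R),\check H_c(R))$ this is a compactness argument: the closed $\Delta$-preimages shrink back to $\cHdom[0](R)$ uniformly as $\epsilon,\theta\to 0$, so for $\epsilon$ small enough no pole can enter through those parts. The real work is strictly local, at this corner where the two slit/sector geometries pinch together.

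The main obstacle I expect is exactly this local analysis at $(\check H_c(R),\check H_c(R))$, where the dichotomy between $\nu\ge\nu_c$ and $\nu<\nu_c$ must enter: in the low-temperature and critical phases, one would check that no polar component of $\check Z_R$ passes through the corner, so the slit geometry can be used in both variables; in the high-temperature phase a polar branch of $\check Z_R$ reaches $(\check H_c(R),\check H_c(R))$ tangentially to one of the boundary curves from outside $\Hdom[0](R)^2$, and one widens the second preimage domain by taking $\theta>0$ to create the necessary clearance. Since the denominator of $\check Z_R$ is a bulky bivariate polynomial with coefficients in $\Q(R)$, I would perform this verification by computer algebra (as in \cite{CAS2}), factoring the denominator and testing each irreducible component against the enlarged preimage product in each of the three phases of the model.
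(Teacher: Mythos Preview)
Your overall scaffolding is right and matches the paper: push everything through the parametrization, establish that $\check x_R$ extends to a conformal bijection $\cHdom(R)\to\cslit$, and then control the poles of $\check Z_R$ on the product domain. But two of your load-bearing claims are either unproved or wrong, and the real mechanism behind the $\nu\ge\nu_c$ vs.\ $\nu<\nu_c$ dichotomy is not the one you describe.

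First, in part (a) you assert that ``$\check H_c(R)$ is the unique critical point of $\check x_R$ on $\partial\Hdom[0](R)$'' from the fact that it is the smallest \emph{positive} zero of $\check x_R'$. That only rules out other real zeros between $0$ and $\check H_c(R)$; it says nothing about complex zeros landing on the Jordan curve $\partial\Hdom[0](R)$. The paper needs a full lemma for this (Lemma~\ref{lem:unique dominant}): in the low-temperature and critical ranges the other zeros of $\check x_R'$ are real and can be located directly, but in the high-temperature range one has to exclude a hypothetical complex critical point on $\partial\Hdom[0](R)$ by deriving an overdetermined algebraic system (three complex equations in $H$, $R$ and two real parameters) and checking with computer algebra that it has no solution with $R\in(R_1,R_c)$.

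Second, and more seriously, you have the pole picture backwards. Lemma~\ref{lem:dom of cvg} gives convergence (hence continuity) of $Z(u_cx,u_cy)$ on $\cdisk\times\cdisk$, but that does \emph{not} prevent $(\check H_c,\check H_c)$ from being a zero of the denominator of $\check Z_R$; it only forces the numerator to vanish there as well. The paper shows (Lemma~\ref{lem:poles of Z in H0*H0}) that $(\check H_c,\check H_c)$ \emph{is} a pole of $\check Z_R$ for all $R\in(R_1,R_c]$, i.e.\ at the critical temperature too, and is not a pole only for $R\in(R_c,R_\infty)$. So your proposed dichotomy ``no pole at the corner when $\nu\ge\nu_c$'' fails at $\nu=\nu_c$. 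The reason one can still take $\theta=0$ at $\nu=\nu_c$ is geometric, not algebraic: because $\check H_c(R_c)$ is a \emph{double} zero of $\check x_{R_c}'$, the boundary $\partial\Hdom(R_c)$ meets $\check H_c$ at an interior angle $2\pi/3$, so any $H\in\cHdom(R_c)$ near $\check H_c$ has $\arg(\check H_c-H)$ confined to a cone of half-angle $<\pi/2$. Hence for $(H,K)$ in $\cHdom(R_c)\times\cHdom(R_c)$ the vectors $\check H_c-H$ and $\check H_c-K$ cannot point in opposite directions, and one gets the bound $\max(|\check H_c-H|,|\check H_c-K|)\le M\,|(\check H_c-H)+(\check H_c-K)|$ (Corollary~\ref{cor:double angle bound}). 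Since the denominator $D$ of $\check Z_R$ has $\partial_HD(\check H_c,\check H_c)=\partial_KD(\check H_c,\check H_c)\ne0$, this bound makes $\check Z_R$ continuous at the corner despite the pole. At non-critical $R$ the opening angle of $\cHdom$ is $\pi$, the two cones can be opposite, the bound fails for $\theta=0$, and you must open the second factor to $\cHdom[\epsilon,\theta]$ with $\theta>0$. This angle argument, not a ``tangential polar branch from outside'', is the missing idea in your local analysis.
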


\begin{remark}
As mentioned at the end of the previous section,  by ``holomorphic in $\cslit \times \cslit$'', we mean that a function has complex partial derivatives in the interior $\slit \times \slit$ of $\cslit \times \cslit$, and is continuous in $\cslit \times \cslit$.
This will be later used to express the coefficients $z_{p,q}(\nu)$ as double Cauchy integrals on the contour $\partial \slit \times \partial \slit$, so that their asymptotics when $p,q\to\infty$ can be estimated easily.
For this purpose, it is not absolutely necessary to prove the continuity of $(x,y)\mapsto Z(u_c(\nu) x, u_c(\nu) y,\nu)$ on the boundary of $\cslit \times \cslit$ (in particular, at the point $(1,1)$). But not knowing this continuity would require one to approximate the contour $\partial \slit \times \partial \slit$ by a sequence of contours that lie inside $\slit \times \slit$, which complicates a bit the estimation of the double Cauchy integral.
\end{remark}

The rest of this section is devoted to the proof of Proposition~\ref{prop:singularity structure}. To this end, we will construct the desired analytic continuation of $(x,y)\mapsto Z(u_c(\nu) x, u_c(\nu) y,\nu)$ based on the heuristic formula $Z(u_c x, u_c y) = \check Z(\check x^{-1}(x), \check x^{-1}(y))$. The proof comes in two steps: First, we show that for each fixed $R$, the rational function $\check x_R$ defines a conformal bijection from a set $\cHdom(R)$ to $\cslit$ for some $\epsilon>0$. Then, we try to show that for all $\epsilon$ small enough, the rational function $\check Z_R(H,K)$ has no pole, hence is holomorphic, in $\cHdom(R) \times \cHdom(R)$.
It turns out that this is true only when $\nu \ge \nu_c$. When $\nu\in (1,\nu_c)$, one needs to reduce the domain $\cHdom(R) \times \cHdom(R)$ a bit, which corresponds to replacing one factor in the product $\cslit \times \cslit$ by a $\Delta$-domain $\cDdom$ with some opening angle $\theta>0$.

\subsection{The conformal bijection $\check x_R: \cHdom(R) \to \cslit$}

\begin{lemma}[Uniqueness and multiplicity of the critical point of $\check x_R$]
\label{lem:unique dominant}
For all $R\in (R_1,R_\infty)$, $\check H_c(R)$ is the unique zero of the rational function $\check x_R'$ in $\cHdom[0](R)$. It is a simple zero if $R\ne R_c$, and a double zero if $R=R_c$.
\end{lemma}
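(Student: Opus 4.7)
My plan is to establish the lemma in three stages: first, reduce the problem of locating zeros of $\check x_R'$ to the boundary $\partial\Hdom[0](R)$; second, show that $\check H_c(R)$ is such a boundary zero with the asserted multiplicity; and third, rule out all other candidate critical points.

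Stage one is immediate from Proposition~21 of~\cite{CT20}: since $\check x_R$ is a conformal bijection from $\cHdom[0](R)$ onto $\cdisk$, its restriction to the interior is a biholomorphism $\Hdom[0](R)\to\disk$, so $\check x_R'$ has no zero in $\Hdom[0](R)$. Any zero of $\check x_R'$ in $\cHdom[0](R)$ must therefore lie on $\partial\Hdom[0](R)$.

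For stage two, I exploit that $\check U_R$ has real coefficients with $\check U_R(0)=0$, so $\check x_R$ is real on the real axis with $\check x_R(0)=0$. By the definition of $\check H_c(R)$ as the smallest positive zero of $\check U_R'$, the function $\check x_R$ is strictly monotone on $[0,\check H_c(R)]$ and maps it onto $[0,1]$; hence $\check H_c(R)\in\check x_R^{-1}(\partial\disk)\cap\cHdom[0](R)=\partial\Hdom[0](R)$. For the multiplicity, differentiating the first branch of~\eqref{eq:RP:U} and factoring yields, for $R\in(R_1,R_c]$,
\begin{equation*}
\check x_R'(H)\;\propto\;(H-\check H_c(R))\cdot\bpar{H^2-R^2H+(3R^2-1)/2},
\end{equation*}
with a nonzero rational proportionality constant in $R$. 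Substituting $\check H_c(R)=(R^2-3)/2$ into the quadratic factor gives $-(R^2-7)(R^2+1)/4$, which vanishes iff $R=R_c=\sqrt 7$. Thus $\check H_c(R)$ is a simple zero of $\check x_R'$ for $R\in(R_1,R_c)$ and at least a double zero at $R=R_c$; a direct evaluation of $\check x_R'''(\check H_c(R_c))$ confirms the multiplicity there is exactly $2$. The low-temperature range $R\in[R_c,R_\infty)$ is treated analogously using the second branch of~\eqref{eq:RP:U}.

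The third and main stage is to show that the remaining zeros do not lie in $\cHdom[0](R)$. In the high-temperature case, they are the roots $H_\pm=(R^2\pm\sqrt{\Delta})/2$ of the quadratic factor, with $\Delta=R^4-6R^2+2$. I split on the sign of $\Delta$. When $\Delta<0$, $H_\pm$ are complex conjugate; reducing the numerator polynomial of $\check U_R$ modulo the quadratic yields $Q(H_\pm)=\Delta\,H_\pm+(3R^2-1)(R^2-1)/4$, and a direct computation then shows $|\check x_R(H_\pm)|>1$, placing $H_\pm$ outside $\check x_R^{-1}(\cdisk)\supset\cHdom[0](R)$. When $\Delta\geq 0$ the roots $H_\pm$ are real, and an elementary calculation gives $H_\pm>\check H_c(R)$ strictly (except for the coincidence $H_-=\check H_c(R_c)$ at $R=R_c$, already counted as the double zero). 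Combining this with the local picture at the simple boundary critical point $\check H_c(R)$—where the level set $\{|\check x_R|=1\}$ consists locally of two smooth arcs meeting at $\check H_c(R)$ with interior angle $\pi/2$ opening toward the origin, so that positive real points just beyond $\check H_c(R)$ lie in a distinct connected component of $\check x_R^{-1}(\disk)$—rules out $H_\pm\in\cHdom[0](R)$. The low-temperature range is handled identically using the second branch of~\eqref{eq:RP:U}.

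The main technical obstacle is this last step: rigorously arguing that real critical points strictly to the right of $\check H_c(R)$ belong to a component of $\check x_R^{-1}(\disk)$ disjoint from $\Hdom[0](R)$. This will rest on the Jordan-curve property of $\partial\Hdom[0](R)$, a consequence of Carath\'eodory's extension theorem applied to the conformal bijection $\check x_R:\cHdom[0](R)\to\cdisk$, together with the local sector analysis at the simple boundary critical point. I expect the argument to be straightforward but to require a careful bookkeeping of components, particularly in the low-temperature range where the explicit expressions from~\eqref{eq:RP:U} are more cumbersome.
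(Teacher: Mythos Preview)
Your approach diverges from the paper's chiefly in the high-temperature range $R\in(R_1,R_c)$. For the sub-case $\Delta\ge 0$ (real roots), your argument is essentially the one the paper uses at low temperature: since $\check x_R$ restricts to a homeomorphism from $\cHdom[0](R)\cap\mathbb R$ onto $[-1,1]$, that intersection is an interval with right endpoint $\check H_c(R)$, so any real $H_\pm>\check H_c(R)$ lies outside $\cHdom[0](R)$. This is cleaner than your ``local sector plus component'' phrasing and needs no Carath\'eodory; I would recommend stating it this way.

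The genuine concern is your $\Delta<0$ sub-case, where you assert that ``a direct computation shows $|\check x_R(H_\pm)|>1$''. The paper does \emph{not} argue this way. Instead it observes that if some other zero $H_*$ of $\check x_R'$ lay on $\partial\Hdom[0](R_*)$, then by varying $R$ the modulus $|\check x(\check H_*(R),R)|$ would have to be stationary at $R_*$ (otherwise a nearby $R$ would push the critical point into the interior, contradicting conformality). This yields an extra algebraic equation; the resulting over-determined system in $(H,R,r,s)$ is then certified to have no real solution by computer algebra. Crucially, the paper's remark after the proof states explicitly that \emph{without} this extra constraint the system $\{\chi(H;R)=0,\ |\check x_R(H)|=1\}$ \emph{does} admit a solution with $R\in(R_1,R_c)$: there is a root of the quadratic whose image under $\check x_R$ lands exactly on $\partial\disk$, yet the point is not on $\partial\Hdom[0](R)$ because $\check x_R$ is not globally injective. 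If that solution falls in your $\Delta<0$ sub-range, your strict inequality $|\check x_R(H_\pm)|>1$ is false at that $R$, and the argument collapses; even if the modulus merely touches $1$ there, you would still need to explain why $H_\pm$ lies on the ``wrong'' sheet of $\check x_R^{-1}(\partial\disk)$. Before committing to your route you must locate that false positive precisely and confirm it lies in the $\Delta\ge 0$ sub-range (where your topological argument handles it); otherwise you will need either the paper's $R$-derivative trick or some other device to separate sheets when $|\check x_R(H_\pm)|=1$.
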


\begin{proof}
By definition, $\check H_c(R)$ is a zero of $\check x_R'$. One can easily check that it is a simple zero if $R\in (R_1,R_\infty)\setminus \{R_c\}$, and a double zero if $R=R_c$. It remains to show its uniqueness in $\cHdom[0](R)$.

By the definition of $\Hdom[0](R)$, the restriction of $\check x_R$ to this set is a conformal bijection. Therefore the derivative $\check x_R'$ has no zero in $\Hdom[0](R)$. On the other hand, $\check x_R'$ is a polynomial of degree three for all $R\in (R_1,R_\infty)$, so it has three zeros (counted with multiplicity), one of which is $\check H_c(R)$. In the following we show that the two other zeros are not in the set $\partial \Hdom[0](R) \setminus \{ \check H_c(R) \}$, and this will complete the proof.

When $R\in [R_c,R_\infty)$, we check by explicit computation (see \cite{CAS2}) that all three zeros of $\check x_R'$ are on the positive real line. Since $\Hdom[0](R)$ is a topological disk containing $H=0$ and is symmetric with respect to the real axis, its boundary intersects the positive real line only once (at $\check H_c(R)$). Hence $\check x_R'$ has no zero on $\partial \Hdom[0](R) \setminus \{ \check H_c(R) \}$.

When $R\in (R_1,R_c)$, the zeros of $\check x'_R$ are not always real. In this case we resort to a proof by contradiction:
Let $\chi(H,R) = \frac{\partial_H \check x(H,R)}{H-\check H_c(R)}$. Assume that for some $R_*\in (R_1,R_c)$, the quadratic polynomial $H\mapsto \chi(H,R_*)$ has a zero $H_*$ in $\partial \Hdom[0](R_*) \setminus \{ \check H_c(R_*) \}$.
We will show that the pair $(H_*,R_*)$ satisfies the following system of algebraic equations
\begin{equation}\label{eq:no real sol}
\chi(H,R) = 0
\ ,\qquad
\check x(H,R) = \frac{s+i}{s-i}
\qtq{and}
\partial_R \check x - \frac{\partial_R \raisebox{3pt}{$\chi$}}{ \partial_H \raisebox{3pt}{$\chi$}} \cdot \partial_H \check x = ir\cdot \check x
\end{equation}
where $r,s \in \real$ are two auxiliary variables. Notice that this system contains 3 complex equations, but only 5 real variables ($\Re H, \Im H, R, s$ and $r$). So we expect it to have no solution. We can check that this is indeed the case: First, we eliminate $H$ to obtain two complex polynomial equations relating $R$, $r$ and $s$. Since these variables are all real, the real part and the imaginary part of each equation must both vanish. We check that the resulting system of four polynomial equations has no real solution using a general algorithm \cite{KRS16realroot} implemented in Maple as \code{RootFinding[HasRealRoot]}, see \cite{CAS2}. By contradiction, this proves that $\check H_c(R)$ is the unique zero of $\check x_R'$ in $\partial \Hdom[0](R)$ for all $R\in (R_1,R_c)$, and completes the proof of the lemma modulo a justification of the system \eqref{eq:no real sol}.

The first equation of \eqref{eq:no real sol} is true by the definition of $(H_*,R_*)$.
The second equation expresses the fact that $\check x(H_*,R_*) \in \partial \disk \setminus \{1\}$, which is the image of our assumption $H_* \in \partial \Hdom[0](R_*) \setminus \{\check H_c(R_*)\}$ under the mapping $\check x_{R_*}$. Indeed, since $s\mapsto \frac{s+i}{s-i}$ is a bijection from $\real$ to $\partial \disk \setminus \{1\}$, we have $\check x(H,R) \in \partial \disk \setminus \{1\}$ \Iff\ $\check x(H,R) = \frac{s+i}{s-i}$ for some $s\in \real$.
The last equation of \eqref{eq:no real sol} is a consequence of the following two facts:
\begin{enumerate}[(\roman*)]
\item \label{enum:1}
$\partial_H \chi(H_*,R_*)\ne 0$. Hence the equation $\chi(H,R)=0$ defines a smooth implicit function $H=\check H_*(R)$ in a neighborhood of $(H,R)=(H_*,R_*)$.
\item \label{enum:2}
The derivative $\od{}R \log \mn|{\check x(\check H_*(R),R)}$ vanishes at $R_*$.
\end{enumerate}
Indeed, by the implicit function theorem, \ref{enum:1} implies that $\od{}R \check  H_*(R_*) = - \frac{\partial_{R} \chi(H_*,R_*)}{\partial_{H} \chi(H_*,R_*)}$. On the other hand, we have
\begin{equation*}
\od{}R \log \abs{ \check x(\check H_*(R),R) } \,=\, \od{}R \Re\m({ \log \check x(\check H_*(R),R) } \,=\, \Re \m({ \frac{ \od{}R \check x(\check H_*(R),R) }{ \check x(\check H_*(R),R) } } \,.
\end{equation*}
Expanding $\od{}R \check x(\check H_*(R),R)$ using the chain rule, we see that the expression on the \rhs\ vanishes at $R=R_*$ \Iff\ the last equation of \eqref{eq:no real sol} holds for some $r\in \real$ and when $(H,R)=(H_*, R_*)$

One can verify \ref{enum:1} by an explicit computation: If $\partial_H \chi(H_*,R_*)=0$, then we can solve the pair of equations $\chi(H_*,R_*) = 0$ and $\partial_H \chi(H_*,R_*)=0$
(the first equation is quadratic in $H$, while the second one is linear), which has a unique solution that satisfies $R_*\in (R_1,R_c)$. But this solution gives a numerical value $|\check x(H_*,R_*)| \ne 1$ (see \cite{CAS2}), which contradicts the fact that $\check x(H_*,R_*) \in \partial \disk$. Thus we have $\partial_H \chi(H_*,R_*)\ne 0$.

The justification of \ref{enum:2} is a bit more technical. It is a consequence of the following observation: By definition, $\check H_*(R)$ is a critical point of $\check x_R$ for all $R$, thus it can never enter the open set $\Hdom[0](R)$. However, the point $H_*\equiv \check H_*(R_*)$ is on the boundary of $\Hdom[0](R_*)$. Intuitively, this implies that the movement of the point $\check H_*(R)$ must be in some sense \emph{stationary} \wrt\ the domain $\Hdom[0](R)$ when $R=R_*$.
To prove~\ref{enum:2}, we will show that this stationarity constraint translates to the stationarity of the function $R\mapsto |\check x(\check H_*(R),R)|$ at $R=R_*$.
For this, we will change our reference frame to the point $\check H_*(R)$. In other words, we will make a change of variable $H=\check H(h,R)$ such that $\check H(0,R)=\check H_*(R)$, and study the evolution of the domain $\Hdom[0](R)$ in the variable $h$ when $R$ varies around $R_*$.



To construct a change of variable that simplifies the expression of $\Hdom[0](R)$, let us consider the function $f(z,R)=\frac{\check x(\check H_*(R)+z,R)}{\check x(\check H_*(R),R)}-1$. Since $f(0,R)\equiv 0$ and $\partial_z f(0,R) = \frac{\partial_H \check x(\check H_*(R),R)}{\check x(\check H_*(R),R)} \equiv 0$, the function $\alpha(z,R)=z^{-2} f(z,R)$ is analytic in a neighborhood of $(0,R_*)$. Moreover, according to \ref{enum:1} we have $\partial_z^2 f(0,R_*) = \frac{\partial_H^2 \check x(\check H_*(R_*),R_*)}{\check x(\check H_*(R_*),R_*)}\ne 0$, hence $\alpha(0,R_*)\ne 0$. By the inverse function theorem, the mapping $(z,R) \mapsto  (\sqrt{\alpha(z,R)} z,R)$ has a local inverse $(h,R)\mapsto (\check z(h,R),R)$ that is jointly analytic in $(h,R)$ in a neighborhood of $(0,R_*)$. Let $\check H(h,R) = \check H_*(R) + \check z(h,R)$. One can check that the inverse function relation $\sqrt{\alpha(\check z(h,R),R)} \cdot \check z(h,R) = h$ implies
\begin{equation*}
\check x(\check H(h,R),R) = \check x(\check H_*(R),R) \cdot (1+h^2)
\end{equation*}
for all $(h,R)$ in a neighborhood of $(0,R_*)$.
In the variable $h$, the preimage of the unit disk $\disk$ by $\check x_R$ is simply the set $\Setn{h}{|1+h^2|<|\check x(\check H_*(R),R)|^{-1}}$. More precisely, we have
\begin{equation*}
\check x^{-1}_R(\disk) = \Set{\check H(h,R)}{|1+h^2| < |\check x(\check H_*(R),R)|^{-1} } \,.
\end{equation*}
in a neighborhood of $H=\check H_*(R)$.

When $R=R_*$, we have $|\check x(\check H_*(R_*),R_*)|^{-1} = 1$. In this case, $\Setn{h}{|1+h^2|<1} \equiv \Set{h}{|\mathfrak{Re}(h)|<|\mathfrak{Im}(h)|}$ is a two-sided cone, as in Figure~\refp{b}{fig:local-merger}. Recall that $\Hdom[0](R)$ is the connected component of $\check x^{-1}_R(\disk)$ containing $H=0$. Since the point $H_*=\check H_*(R_*)$ is on the boundary of the domain $\Hdom[0](R)$, at least one side of the two-sided cone must belong to $\Hdom[0](R)$.
Now assume that $\od{}R \mn|{\check x(\check H_*(R_*),R_*)} \ne 0$, then we have $|\check x(\check H_*(R),R)|<1$ either for $R>R_*$ or for $R<R_*$ in a neighborhood of $R_*$. But, as shown in Figure~\refp{c}{fig:local-merger}, in this case the preimage $\check x^{-1}_R(\disk)$ has only one connected component locally near $\check H_*(R)$. This connected component must belong to $\Hdom[0](R)$ because of the continuity of $R\mapsto \check x(H,R)$. It follows that $H=\check H_*(R)$ (which is $h=0$ in the variable $h$) belongs to $\Hdom[0](R)$. This contradicts the fact that the domain $\Hdom[0](R)$ contains no critical point of $\check x_R$.
Thus we must have $\od{}R \mn|{\check x(\check H_*(R),R)} = 0$, or equivalently $\od{}R \log \mn|{\check x(\check H_*(R),R)} = 0$, when $R=R_*$. This justifies the claim \ref{enum:2} and completes the proof of the lemma.
\end{proof}

\begin{figure}
\centering
\includegraphics[scale=0.9]{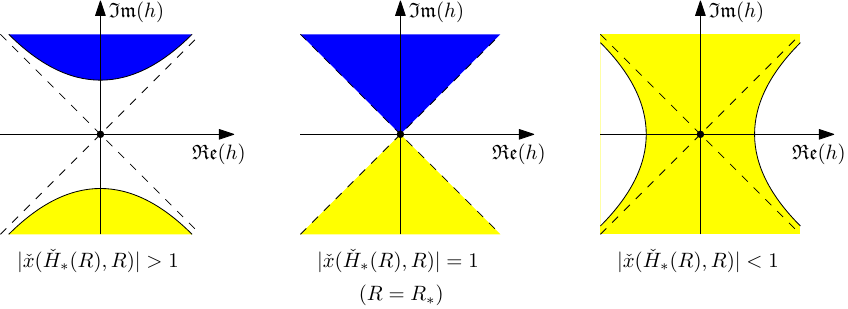}\\
\pbox{0.55}{(a)\hfill(b)\hfill(c)}~~~~~
\caption{Local behavior of the set $\check x_R^{-1}(\disk)$ in the coordinate $h$. The region corresponding to $\Hdom[0](R)$ is colored in yellow, and the region corresponding to $\check x^{-1}_R(\disk) \setminus \Hdom[0](R)$ is colored in blue (the upper region in the graphs (a) and (b)).}
\label{fig:local-merger}
\end{figure}

\begin{remark}
The second equation in \eqref{eq:no real sol} implies $\check x_R(H) \in \partial \disk$, but does not guarantee that $H\in \partial \Hdom[0](R)$, because the mapping $\check x_R$ is not injective on $\complex$.  In fact, if one removes the last equation from \eqref{eq:no real sol}, then the system does have a solution $(H,R,s)$ with $R \in (R_1,R_c)$ and $s \in \real$. This solution corresponds to a critical point of $\check x_R$ which is not on $\partial \Hdom[0](R)$, but is nevertheless mapped to $\partial \disk \setminus \{1\}$ by $\check x_R$.

The purpose of the last equation of \eqref{eq:no real sol} is precisely to avoid this kind of undesired solutions.
Without the last equation, the algebraic system \eqref{eq:no real sol} contains two complex equations with four real unknowns ($\Re(H)$, $\Im(H)$, $R$, $s$). So generically, we do expect it to have a finite number of solutions.
The last equation adds one \emph{complex} equation to the system while introducing only an extra \emph{real} variable. With it, we expect generically that \eqref{eq:no real sol} has no solution.
In general, if the mapping $\check x(H)$ depends on $m$ real parameters $(R_1,\ldots,R_m)$ instead of $R$, then provided that $\check x(H)$ has continuous derivatives \wrt\ each of the parameters, one can replace the last equation of \eqref{eq:no real sol} by $m$ complex equations with $m$ extra real variables. Then we would have a system of $m+2$ complex equations with $3+m+m=2m+3$ real variables, which generically would have no solution.

Our justification of last equation of \eqref{eq:no real sol} came in two steps. The first step \ref{enum:1} asserts that the critical point $H_*$ has multiplicity one. It is checked by an explicit computation and depends on the specific function $\check x_R$. On the contraty, the second step \ref{enum:2} derive the desired equation in \eqref{eq:no real sol} using a variational argument which is mostly independent of specific features of $\check x_R$. Currently, the argument in \ref{enum:2} still depends on the fact that $H_*$ has multiplicity one. In the upcoming paper \cite[Appendix A]{Chen2021}, the first author gives a generalization of this variational argument which applies to critical points of any multiplicity. That general argument would allow us to bypass the verification of \ref{enum:1} in the above proof.
\end{remark}

\paragraph{Definition of $\Hdom(R)$:}
For each $R\in (R_1,R_\infty)$, the above lemma and Proposition~21(iii) of \cite{CT20} imply that there exists $\epsilon>0$ for which $\check x_R$ defines a conformal bijection from a compact set $\cHdom(R) \supset \cHdom[0](R)$ to $\cslit$. For $\theta \in (0,\pi/2)$, let $\cHdom[\epsilon,\theta](R)$ be the preimage of the $\Delta$-domain $\cDdom \subset \cslit$ under this bijection.
We denote by $\partial \Hdom(R)$ and $\Hdom(R)$ the boundary and the interior of $\cHdom(R)$, and similarly for $\cHdom[\epsilon,\theta](R)$.

Notice that the notation $\cHdom(R)$ fits well with the previously defined $\cHdom[0](R)$, since the latter is in bijection with the closed unit disk $\cdisk$, which can be viewed as a special case of the domain $\cslit$ with $\epsilon=0$.

\paragraph{Geometric interpretation of Lemma~\ref{lem:unique dominant}.}
We know that analytic functions preserve angles at non-critical points.
More generally, if $f$ is an analytic function such that $H\in \complex$ is a critical point of multiplicity $n$ (that is, a zero of multiplicity $n$ of $f'$, with $n\ge 0$), then $f$ maps each angle $\theta$ incident to $H$ to an angle $(n+1)\theta$.
Since $\Hdom[0](R)$ is mapped bijectively by $\check x_R$ to the unit disk (whose boundary is smooth everywhere), the boundary of $\Hdom[0](R)$ forms an angle of $\pi/(n+1)$ at each $H\in \partial \Hdom[0](R)$ which is a critical point of multiplicity $n$ of $\check x_R$.
Therefore, Lemma~\ref{lem:unique dominant} tells us that the boundary of $\Hdom[0](R)$ is smooth everywhere except at $H=\check H_c(R)$, where it has two half-tangents forming an angle of $\pi/2$ if $R\ne R_c$, or an angle of $\pi/3$ if $R=R_c$.
This is illustrated by the red curves in Figure~\refp{b}{fig:Hdom}~and~\refp{c}{fig:Hdom}.

For the same reason, the boundary of $\Hdom[\epsilon,\theta](R)$ has also two half-tangents at $H=\check H_c(R)$. They form an angle of $\pi-\theta$ if $R\ne R_c$, and an angle of $\frac23(\pi-\theta)$ if $R=R_c$. (In particular, when $\theta=0$ and $R\ne R_c$, the angle is equal to $\pi$, i.e.\ the two half-tangents become a tangent.) This is illustrated by the blue and cyan curves in Figure~\refp{b}{fig:Hdom}~and~\refp{c}{fig:Hdom}.
From this we deduce the following corollary, which will be used to derive the local expansion of the bivariate function $\check Z(H,K,R)$ at $(H,K) = (\check H_c(R), \check H_c(R))$ at critical and high temperatures.

\begin{corollary}\label{cor:double angle bound}
For all $R\in (R_1,R_\infty)$ and $\theta \in (0,\frac\pi2)$, there exist a neighborhood $\mc N$ of $(\check H_c(R),\check H_c(R))$ and a constant $M_\theta<\infty$ such that
\begin{equation}\label{eq:double angle bound}
\max \m({ |\check H_c(R)-H| , |\check H_c(R)-K| } \le M_\theta \cdot
    \abs{ (\check H_c(R)-H) + (\check H_c(R)-K) }
\end{equation}
for all $(H,K) \in \mc N \cap \m({ \cHdom(R) \times \cHdom[\epsilon,\theta](R) }$.

When $R=R_c$, one can take $\theta=0$ so that \eqref{eq:double angle bound} holds for all
$(H,K) \in \mc N \cap \m({ \cHdom(R) \times \cHdom(R) }$.
\end{corollary}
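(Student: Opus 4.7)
The plan is to reduce \eqref{eq:double angle bound} to a statement about the tangent cones at $H_c := \check H_c(R)$ of the sets $\cHdom(R)$ and $\cHdom[\epsilon,\theta](R)$: writing $h := H-H_c$ and $k := K-H_c$, the inequality says precisely that $h$ and $k$ cannot nearly cancel. By Lemma~\ref{lem:unique dominant}, $H_c$ is a zero of multiplicity $n\in\{1,2\}$ of $\check x_R'$, with $n=1$ when $R\ne R_c$ and $n=2$ when $R=R_c$. A local inversion therefore gives
\[
\check x_R(H_c+h)-1 \;=\; c_R\, h^{n+1}\bigl(1+O(h)\bigr),
\qquad
h \;=\; \omega_R\,\bigl(\check x_R(H)-1\bigr)^{1/(n+1)}\bigl(1+o(1)\bigr),
\]
with nonzero constants $c_R,\omega_R$ and an appropriate branch of the $(n+1)$-th root on $\cslit$.

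I first identify the tangent cones. Viewed in the universal cover of $\complex\setminus\{1\}$, the slit disk $\cslit$ admits at its boundary point $1$ the full range of accessible directions $\arg(z-1)\in[-\pi,\pi]$, with the endpoints $\pm\pi$ coming from the two sides of the slit; $\cDdom$ restricts these to $[-\pi,-\theta]\cup[\theta,\pi]$. Pulling back through the $(n+1)$-th root shows that the tangent cones at $h=0$ of $\cHdom(R)-H_c$ and $\cHdom[\epsilon,\theta](R)-H_c$ are two nested closed sectors $C_2\subset C_1$ sharing a \emph{common} bisector $\alpha_R$ (because both come from the same local inverse of $\check x_R$ applied to the nested targets $\cDdom\subset\cslit$), with openings $\tfrac{2\pi}{n+1}$ and $\tfrac{2(\pi-\theta)}{n+1}$ respectively. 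More precisely, for every $\eta>0$ there is a neighborhood $\mc N_\eta$ of $(H_c,H_c)$ in which any $(H,K)\in\cHdom(R)\times\cHdom[\epsilon,\theta](R)$ admits decompositions $h=u_1+r_1$ and $k=u_2+r_2$ with $u_i\in C_i$, $|r_1|\le\eta|h|$ and $|r_2|\le\eta|k|$.

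Next I prove $C_1\cap(-C_2)=\{0\}$ under the hypotheses of the corollary. The ranges of $C_1$ and $-C_2$ are centered at $\alpha_R$ and $\alpha_R+\pi$, so disjointness of the closed sectors (apart from the origin) reduces to the strict inequality $\tfrac{\pi}{n+1}+\tfrac{\pi-\theta}{n+1}<\pi$, i.e.\ $\theta>(1-n)\pi$. This is precisely $\theta>0$ when $n=1$ and is automatic ($\theta\ge0$) when $n=2$, exactly matching the corollary's hypotheses. A compactness argument then yields $\phi_0=\phi_0(R,\theta)>0$ with $\angle(u,v)\ge\phi_0$ for all nonzero $u\in C_1$, $v\in-C_2$, whence by the law of cosines
\[
|u-v|^2\;\ge\;(1-\cos\phi_0)\,\max(|u|,|v|)^2.
\]

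To conclude, I combine the decomposition above with the triangle inequality:
\[
|h+k|\;\ge\;|u_1-(-u_2)|-|r_1|-|r_2|\;\ge\;\sqrt{1-\cos\phi_0}\,(1-\eta)\max(|h|,|k|)-2\eta\max(|h|,|k|).
\]
For $\eta$ small enough in terms of $\phi_0$, the right-hand side exceeds $\tfrac12\sqrt{1-\cos\phi_0}\,\max(|h|,|k|)$, giving \eqref{eq:double angle bound} with $M_\theta=2/\sqrt{1-\cos\phi_0}$. The main technical hurdle is the tangent cone identification: one must rigorously interpret $\partial\cslit$ in the universal cover so that both sides of the slit contribute the limiting arguments $\pm\pi$, and check that the common bisector of $C_1$ and $C_2$ is indeed shared. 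Once the cones are pinned down, the rest is elementary trigonometry together with a routine perturbation argument.
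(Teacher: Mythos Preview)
Your proof is correct and follows essentially the same approach as the paper: both arguments use Lemma~\ref{lem:unique dominant} to determine the opening angle of the tangent cone of $\cHdom(R)$ and $\cHdom[\epsilon,\theta](R)$ at $\check H_c(R)$, then reduce \eqref{eq:double angle bound} to the trigonometric fact that two vectors lying in sectors whose combined half-openings are strictly less than $\pi$ cannot nearly cancel. The numerical condition you derive, $\tfrac{\pi}{n+1}+\tfrac{\pi-\theta}{n+1}<\pi$, is exactly the inequality $\theta_1+\theta_2<\pi$ the paper arranges.

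The only notable difference is in the packaging of the perturbation step. You decompose $h=u_1+r_1$ with $u_1$ in the exact tangent cone and $|r_1|\le\eta|h|$, then bound $|u_1+u_2|$ via the law of cosines and absorb the errors. The paper instead enlarges the sector slightly (choosing $\theta_1$ strictly greater than the half-opening $\tfrac{\pi}{n+1}$, and similarly for $\theta_2$) so that the actual $\check H_c-H$ lies in the enlarged sector for $H$ close enough to $\check H_c$; then a single law-of-cosines computation gives $M_\theta=1/\sin(\theta_1+\theta_2)$ directly, with no error-absorption step. This is a bit cleaner and gives an explicit constant. Note also that your law-of-cosines inequality $|u-v|^2\ge(1-\cos\phi_0)\max(|u|,|v|)^2$ is not true in general (it fails for $\phi_0>\pi/2$); the sharp bound is $\sin^2\phi_0\cdot\max(|u|,|v|)^2$, which is what the paper obtains. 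In the present setting $\phi_0<\pi/2$ always holds, so your weaker bound is still valid, but you may want to tighten this.
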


\begin{proof}
For $R \in (R_1,R_\infty) \setminus \{R_c\}$, the boundary of $\cHdom[\epsilon,\theta](R)$ has two half-tangents at $H=\check H_c(R)$, both at an angle of $\frac{\pi-\theta}2$ with the negative real axis. When $\theta=0$, the two half-tangents becomes a tangent that is orthogonal to the real axis. For any $\theta\in (0,\frac\pi2)$, we can choose $\theta_1>\frac\pi2$ and $\theta_2>\frac{\pi-\theta}2$ such that $\theta_1+\theta_2<\pi$. Then there exists a neighborhood $N$ of $\check H_c(R)$ such that $\arg(\check H_c(R)-H) \in (-\theta_1,\theta_1)$ for all $H\in N \cap \cHdom(R)$, and $\arg(\check H_c(R)-K) \in (-\theta_2, \theta_2)$ for all $K\in N \cap \cHdom[\epsilon,\theta](R)$.
In polar coordinates, this means that $\check H_c(R)-H = r_1 e^{i \phi_1}$ and $\check H_c(R)-K = r_2 e^{i \phi_2}$ satisfy $|\phi_1|\le \theta_1$ and $|\phi_2|\le \theta_2$, so that $|\phi_1-\phi_2|\le \theta_1+\theta_2<\pi$. It follows that
\begin{align*}
\abs{ (\check H_c(R)-H) + (\check H_c(R)-K) }^2
&= \abs{r_1 e^{i\phi_1} + r_2 e^{i\phi_2}}^2
 = r_1^2 + r_2^2 + 2r_1r_2 \cos(\phi_1-\phi_2)
\\ & \ge
   r_1^2 + r_2^2 + 2r_1r_2 \cos(\theta_1+\theta_2)
 \\&
= \mb({ r_1-r_2 \cos(\theta_1+\theta_2) }^2 + \mb({ r_2 \sin(\theta_1+\theta_2) }^2 \,.
\end{align*}
This implies that $r_2 = |\check H_c(R)-K| \le \frac{1}{\sin(\theta_1+\theta_2)} \cdot \abs{ (\check H_c(R)-H) + (\check H_c(R)-K) }$, and by symmetry, the inequality \eqref{eq:double angle bound} with $M_\theta = \frac{1}{\sin(\theta_1+\theta_2)}$, for all $(H,K) \in (N\times N) \cap \m({ \cHdom(R) \times \cHdom[\epsilon,\theta](R) }$.

When $R=R_c$, the boundary of $\cHdom(R)$ has two half-tangents at $H=\check H_c(R)$ at an angle of $\frac\pi3$ with the negative real axis. In this case, we can take $\theta_1=\theta_2=\frac{5\pi}{12} >\frac\pi3$ so that $\theta_1+\theta_2<\pi$. Then, the same proof as in the $R\ne R_c$ case shows that there exists a neighborhood $N$ of $\check H_c(R)$ such that \eqref{eq:double angle bound} holds with $M_0 = \frac{1}{\sin(5\pi/6)}$ for all $(H,K)\in (N\times N) \cap \m({ \cHdom(R) \times \cHdom(R) }$.
\end{proof}

\subsection{Holomorphicity of $\check Z$ on $\cHdom(R) \times \cHdom(R)$.}

The previous subsection showed that for $\epsilon>0$ small enough, $\cslit \times \cslit$ is mapped analytically by the inverse function of $(H,K)\mapsto (\check x_R(H),\check x_R(K))$ to the domain $\cHdom(R) \times \cHdom(R)$. Ideally, we want to show that the other part of the rational parametrization $(H,K) \mapsto \check Z_R(H,K)$ does not have poles on $\cHdom(R) \times \cHdom(R)$. Then the formula $Z(u_c x,u_c y) = \check Z_R( (\check x_R)^{-1}(x), (\check x_R)^{-1}(y) )$ would imply that $(x,y)\mapsto Z(u_c x,u_c y)$ has an analytic continuation on $\cslit \times \cslit$.

By continuity, any neighborhood of the compact set $\cHdom[0](R) \times \cHdom[0](R)$ contains $\cHdom(R) \times \cHdom(R)$ for all $\epsilon$ small enough. On the other hand, the poles of $\check Z_R$ form a closed set. Hence to prove that the domain $\cHdom(R) \times \cHdom(R)$ does not contain any poles for $\epsilon$ small enough, it suffices to show that the compact set $\cHdom[0](R) \times \cHdom[0](R)$ does not contain any poles of $\check Z_R$.
It turns out that this is almost the case:

\begin{lemma}\label{lem:poles of Z in H0*H0}
For all $R\in (R_c,R_\infty)$, the rational function $\check Z_R$ has no pole in $\cHdom[0](R) \times \cHdom[0](R)$.\\
\phantom{\textbf{Lemma~$16$.} }
For all $R\in (R_1,R_c]$, $(\check H_c(R),\check H_c(R))$ is the only pole of $\check Z_R$ in $\cHdom[0](R) \times \cHdom[0](R)$.
\end{lemma}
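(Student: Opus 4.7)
The plan is to examine the denominator of the explicit rational function $\check Z_R(H,K)$ (available in \cite{CAS2}) and show that its zero locus avoids $\cHdom[0](R) \times \cHdom[0](R)$ entirely when $R \in (R_c,R_\infty)$, and meets it only at $(\check H_c(R), \check H_c(R))$ when $R \in (R_1, R_c]$. Since $\check Z_R$ is defined by piecewise rational expressions in the high- and low-temperature regimes, the two regimes will be treated separately.

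I would begin by factoring the denominator of $\check Z_R(H,K)$ in each regime. Thanks to the symmetry $\check Z_R(H,K) = \check Z_R(K,H)$ (which reflects the combinatorial identity $z_{p,q}=z_{q,p}$), the denominator is symmetric in $(H,K)$, so its irreducible factors are either symmetric bivariate polynomials or come as paired univariate ones of the form $D(H;R) D(K;R)$. For each univariate factor $D(H;R)$, the problem reduces to a one-variable question: show that the roots of $D(\adot;R)$ lie outside $\cHdom[0](R)$ (except possibly at $\check H_c(R)$ when $R \le R_c$). Following the strategy of the proof of Lemma~\ref{lem:unique dominant}, any root $H_*$ inside $\cHdom[0](R)$ must satisfy $\check x_R(H_*) = (s+i)/(s-i)$ for some $s\in \real$ together with $D(H_*;R)=0$, and the resulting polynomial system in the real unknowns $(\Re H_*, \Im H_*, R, s)$ can be checked to have no real solutions (other than the expected one at $\check H_c(R)$) using \code{RootFinding[HasRealRoot]} in Maple.

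For each symmetric bivariate factor $E(H,K;R)$, I would set up the analogous system by also requiring $\check x_R(K_*) = (s'+i)/(s'-i)$ for some $s'\in\real$ and $E(H_*,K_*;R)=0$. This yields a polynomial system in the real and imaginary parts of $(H_*,K_*)$ together with $R$, $s$, $s'$, which can again be fed to the real-root solver to confirm that no solutions exist beyond the expected point $(\check H_c(R), \check H_c(R))$ in the $R \le R_c$ case. To confirm that this exceptional point really is a pole and not a removable singularity, I would substitute it into the numerator and check that it does not vanish (or at any rate vanishes to lower order than the denominator) there.

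The principal difficulty is the component-selection issue: $\cHdom[0](R)$ is only \emph{one} connected component of $\check x_R^{-1}(\cdisk)$, and the constraint $\check x_R(H)\in \cdisk$ alone does not pin membership in this specific component. In the univariate case this can be handled using the symmetry of $\Hdom[0](R)$ under complex conjugation and the continuity of its defining roots in $R$, as in Lemma~\ref{lem:unique dominant}. In the bivariate case, the same component-selection argument must be applied to $H_*$ and $K_*$ simultaneously, which considerably increases the size of the symbolic computations but does not change their fundamental nature. As in the rest of this section, the final verifications are therefore deferred to the companion file \cite{CAS2}.
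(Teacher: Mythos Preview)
Your approach differs substantially from the paper's, and as written it has genuine gaps.

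The paper does not try to locate the zero set of the denominator $D_R$ directly inside $\cHdom[0](R)\times\cHdom[0](R)$. Instead it exploits an analytic fact you have not used: since $\check Z_R(H,K)=Z(u_c\,\check x_R(H),u_c\,\check x_R(K))$ and the double power series $Z(u_c x,u_c y)$ converges absolutely on $\cdisk\times\cdisk$ (Lemma~\ref{lem:dom of cvg}), the rational function $\check Z_R$ is \emph{continuous} on $\cHdom[0](R)\times\cHdom[0](R)$. From this the paper deduces (Lemma~\ref{lem:pole system}) that at any zero of $D_R$ in this domain, both $N_R$ and the Jacobian-type quantity $\partial_H N\cdot\partial_K D-\partial_K N\cdot\partial_H D$ must vanish as well. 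That turns one polynomial equation into three, and after adjoining an auxiliary nonvanishing factor to strip off the known solution at $(\check H_c,\check H_c)$, a single Gr\"obner basis computation shows the system has no solution with real $R$. No component-selection argument is needed at all: the continuity of $\check Z_R$ does that work for you.

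Your plan instead tests only whether zeros of $D_R$ meet the \emph{boundary} of $\check x_R^{-1}(\cdisk)$---that is what the parametrization $(s+i)/(s-i)$ encodes---and says nothing about the interior; you have not supplied the continuity-in-$R$ argument that would upgrade a boundary statement to an interior one. Even on the boundary the component issue you flag is not resolved: for a genuinely bivariate factor $E(H,K;R)$ your system consists of $E=0$ (two real equations) together with $|\check x_R(H)|=|\check x_R(K)|=1$ (two more) in the five real unknowns $\Re H,\Im H,\Re K,\Im K,R$, so one expects a one-parameter family of solutions, and the expected pole at $(\check H_c,\check H_c)$ already sits on it for $R\le R_c$. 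Without the extra equations that Lemma~\ref{lem:pole system} provides, there is no reason for \code{HasRealRoot} to return ``no'', and you have not explained how to adapt the derivative-in-$R$ trick of Lemma~\ref{lem:unique dominant} to two variables simultaneously. Finally, your check that $(\check H_c,\check H_c)$ is a genuine pole by looking at the numerator is misconceived: in the paper's sense a pole is a zero of $D_R$ with $N_R,D_R$ coprime, and in fact $N_R$ \emph{does} vanish at $(\check H_c,\check H_c)$ by the continuity argument---for bivariate polynomials this is perfectly compatible with coprimality.
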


\begin{proof}
By definition, a pole of $\check Z_R$ is a zero of the polynomial $D_R$ in the denominator of $\check Z_R(H,K) = \frac{N_R(H,K)}{D_R(H,K)}$, where $N_R$ and $D_R$ are coprime polynomials of $(H,K)$. With an appropriate choice of the constant term $D_R(0,0)$, we can take $N(H,K,R):= N_R(H,K)$ and $D(H,K,R) := D_R(H,K)$ to be polynomial in all three variables $(H,K,R)$. We check by explicit computation (see \cite{CAS2}) that $D(\check H_c(R),\check H_c(R),R) \ne 0$ for all $R\in (R_c,R_\infty)$, and $D(\check H_c(R),\check H_c(R),R) = 0$ for all $R\in (R_1,R_c]$.
Then it remains to show that $D$ does not vanish for any $(H,K) \in \cHdom[0](R) \times \cHdom[0](R) \setminus \{ (\check H_c(R),\check H_c(R)) \}$ and $R\in (R_1,R_\infty)$. For this we use the following lemma, whose proof will be given later:

\begin{lemma}\label{lem:pole system}
If the polynomial $D$ vanishes at a point $(H,K,R)$ such that $(H,K) \in \cHdom[0](R) \times \cHdom[0](R)$ and $R\in (R_1,R_\infty)$, then both $N$ and $\partial_H N \cdot \partial_K D - \partial_K N \cdot \partial_H D$ vanish at $(H,K,R)$.
\end{lemma}

This lemma tells us that the poles of $\check Z_R$ in the physical range of the parameters (that is, for $R\in (R_1,R_\infty)$ and $(H,K) \in \cHdom[0](R) \times \cHdom[0](R)$) satisfy the system of three polynomial equations
\begin{equation}\label{eq:pole system 3}
D \,=\, N \,=\,
\partial_H N \cdot \partial_K D - \partial_K N \cdot \partial_H D \,=\, 0
\end{equation}
instead of just $D=0$. However, it is not easy to verify whether \eqref{eq:pole system 3} has a solution $(H,K,R)$ satisfying $(H,K) \in \cHdom[0](R) \times \cHdom[0](R) \setminus \{ (\check H_c(R),\check H_c(R)) \}$, for two reasons: On the one hand, the solution set of \eqref{eq:pole system 3} contains at least one continuous component: $(H,K,R)=(\check H_c(R), \check H_c(R), R)$ is a solution of \eqref{eq:pole system 3} for all $R\in (R_1,R_c]$. On the other hand, it is not easy to distinguish between points in $\cHdom[0](R)$ from points in the preimage $\check x_R^{-1}(\cdisk)$ which are not in $\cHdom[0](R)$. To mitigate these issues, we construct an auxiliary equation that eliminates some solutions of the system which are known to be outside $\cHdom[0](R) \times \cHdom[0](R) \setminus \{ (\check H_c(R),\check H_c(R)) \}$.

Since $\check x_R$ is a conformal bijection from $\cHdom[0](R)$ to the unit disk, we know that $H=0$ is its unique (simple) zero in $\cHdom[0](R)$. Hence the polynomial $H\mapsto \check U_R(H)/H$ does not vanish on $\cHdom[0](R)$. (Recall that $\check x_R$ is defined as $\check U_R$ divided by a constant that only depends on $R$.) On the other hand, $\check H_c(R)$ is the unique zero of $\check x_R'$ in $\cHdom[0](R)$ by Lemma~\ref{lem:unique dominant}. Thus if $(H,K) \in \cHdom[0](R) \times \cHdom[0](R)$ is different from $(\check H_c(R),\check H_c(R))$, then either $\check U_R'(H)\ne 0$ or $\check U_R'(K)\ne 0$.
Let $\mc{NZ}(H,K,R)= \frac{\check U_R(H)}H \cdot \frac{\check U_R(K)}K \cdot \check U_R'(H)$. Then the above discussion shows that for $R\in (R_1,R_\infty)$ and $(H,K) \in \cHdom[0](R) \times \cHdom[0](R) \setminus \{ (\check H_c(R),\check H_c(R)) \}$, either $\mc{NZ}(H,K,R) \ne 0$, or $\mc{NZ}(K,H,R) \ne 0$.

It follows that if $(H,K)$ is a pole of $\check Z_R$ in $\cHdom[0](R) \times \cHdom[0](R) \setminus \{ (\check H_c(R),\check H_c(R)) \}$, then either $(H,K,R)$ or $(K,H,R)$ is a solution to the system of equations
\begin{equation}\label{eq:pole system 4}
D \,=\, N \,=\,
\partial_H N \cdot \partial_K D - \partial_K N \cdot \partial_H D \,=\, 0
\qtq{and}
X\cdot \mc{NZ} = 1
\end{equation}
where $X\in \complex$ is an auxiliary variable used to express the condition $\mc{NZ}\ne 0$ as an algebraic equation.
A Gr\"obner basis computation (see \cite{CAS2}) shows that this system has no solution with real value of $R$. By contradiction, $\check Z_R$ has no pole in $\cHdom[0](R) \times \cHdom[0](R) \setminus \{(\check H_c(R),\check H_c(R))\}$ for all $R\in (R_1,R_\infty)$. This completes the proof.
\end{proof}

\begin{proof}[Proof of Lemma~\ref{lem:pole system}]
In this proof we fix an $R\in (R_1,R_\infty)$ and drop it from the notations.
Since the double power series $(x,y)\mapsto Z(u_c x,u_c y)$ is absolutely convergent for all $x,y$ in the unit disk $\cdisk$, and $\check x$ is a homeomorphism from $\cHdom[0]$ to $\cdisk$, the rational function $\check Z(H,K) = Z(u_c\cdot \check x(H), u_c\cdot \check x(K))$ is continuous on the compact set $\cHdom[0] \times \cHdom[0]$.

Assume that $D$ vanishes at some $(H,K) \in \cHdom[0] \times \cHdom[0]$. The boundedness of $\check Z$ on $\cHdom[0] \times \cHdom[0]$ implies that $N$ also vanishes at $(H,K)$. If $\partial_H D(H,K) = \partial_K D(H,K) = 0$, then $\partial_H N \cdot \partial_K D - \partial_K N \cdot \partial_H D$ obviously vanishes at $(H,K)$. Otherwise, consider the limit of $\check Z(H+\varepsilon h,K+\varepsilon k)$ when $\varepsilon \to 0^+$, where $h,k \in \complex$. By L'H\^opital's rule, for all $(h,k)$ such that $h \cdot \partial_H D(H,K) + k \cdot \partial_K D(H,K) \ne 0$, we have
\begin{equation}\label{eq:alpha limit}
\lim_{\varepsilon \to 0^+} \check Z(H+ \varepsilon h,K+\varepsilon k) = \frac{
h \cdot \partial_H N(H,K) + k \cdot \partial_K N(H,K)}{
h \cdot \partial_H D(H,K) + k \cdot \partial_K D(H,K)} \,.
\end{equation}
By the continuity of $\check Z$ on $\cHdom[0] \times \cHdom[0]$, the above limit is independent of $(h,k)$ as long as the pair satisfies that $(H+\varepsilon h,K+\varepsilon k) \in \cHdom[0] \times \cHdom[0]$ for all $\epsilon>0$ small enough. From Figure~\ref{fig:Hdom} (or more rigorously the geometric interpretation of Lemma~\ref{lem:unique dominant}), we see that for all $H\in \cHdom[0]$, there exists $h_*\ne 0$ such that $H+\varepsilon h_* \in \cHdom[0]$ for all $\epsilon>0$ small enough. Similarly, there exists $k_*\ne 0$ such that $K+\varepsilon k_* \in \cHdom[0]$ for all $\epsilon>0$ small enough. By taking $(h,k)$ to be equal to $(h_*,0)$, $(0,k_*)$ and $(h_*,k_*)$ in \eqref{eq:alpha limit}, we obtain that
\begin{equation*}
\frac{h_* \partial_H N(H,K)}{h_* \partial_H D(H,K)} \,=\,
\frac{k_* \partial_K N(H,K)}{k_* \partial_K D(H,K)} \,=\,
\frac{h_* \partial_H N(H,K) + k_* \partial_K N(H,K)}{
      h_* \partial_H D(H,K) + k_* \partial_K D(H,K)} \,,
\end{equation*}
provided that the denominators of the three fractions are nonzero.
By assumption, $\partial_H D(H,K)$ and $\partial_K D(H,K)$ do not both vanish. It follows that at least two of three fractions have nonzero denominators. From the equality between these two fractions, we deduce that  $\partial_H N \cdot \partial_K D - \partial_K N \cdot \partial_H D=0$ at $(H,K)$.
\end{proof}

Now we use the continuity argument mentioned at the beginning of this subsection to deduce the holomorphicity of $\check Z_R$ on $\cHdom \times \cHdom$ (or $\cHdom \times \cHdom[\epsilon,\theta]$, see below) from Lemma~\ref{lem:poles of Z in H0*H0}. The low temperature case is easy, since $\check Z_R$ does not have any pole in $\cHdom[0] \times \cHdom[0]$ for all $R\in (R_c,R_\infty)$. When $R\in (R_1,R_c]$, one has to study the restriction of $\check Z_R$ on $\cHdom \times \cHdom$ more carefully near its the pole $(\check H_c(R), \check H_c(R))$. This is done with the help of Corollary~\ref{cor:double angle bound}.

\begin{lemma}\label{lem:holomorphicity Z}
For all $R\in [R_c,R_\infty)$, there exists $\epsilon>0$ such that $\check Z_R$ is holomorphic in $\cHdom(R) \times \cHdom(R)$.\\
For  $R\in (R_1,R_c)$ and $\theta \in (0,\pi/2)$, there exists $\epsilon>0$  such that $\check Z_R$ is holomorphic in $\cHdom(R) \times \cHdom[\epsilon,\theta](R)$.
\end{lemma}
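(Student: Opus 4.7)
The plan is to split the proof into two cases according to whether the corner $(\check H_c(R), \check H_c(R))$ is a pole of the rational function $\check Z_R = N/D$ (with $N, D$ coprime polynomials of $(H, K)$). In the easier case $R \in (R_c, R_\infty)$, Lemma~\ref{lem:poles of Z in H0*H0} gives that $\check Z_R$ has no poles in the compact set $\cHdom[0](R) \times \cHdom[0](R)$. The algebraic pole set $\{D = 0\}$ is closed in $\complex^2$, hence at positive distance from $\cHdom[0](R)^2$. Since $\check x_R$ is a conformal bijection from $\cHdom(R)$ onto $\cslit$, and $\cslit$ shrinks in Hausdorff metric to $\cdisk$ as $\epsilon \downarrow 0$, the compact set $\cHdom(R) \times \cHdom(R)$ converges to $\cHdom[0](R)^2$ in Hausdorff metric as well. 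I take $\epsilon > 0$ small enough that $\cHdom(R) \times \cHdom(R)$ is disjoint from $\{D = 0\}$; on this domain the rational function is analytic and continuous up to the boundary, hence holomorphic in the sense of the paper.

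For $R \in (R_1, R_c]$ the only pole of $\check Z_R$ in $\cHdom[0](R)^2$ is the corner $(\check H_c(R), \check H_c(R))$, which lies on the boundary of $\cHdom(R) \times \cHdom[\epsilon,\theta](R)$ (respectively $\cHdom(R) \times \cHdom(R)$ when $R = R_c$). Away from a small neighborhood $\mc N$ of this corner, the same compactness argument as above kills all poles for $\epsilon$ small. Inside $\mc N$, I first recall from the proof of Lemma~\ref{lem:pole system} that at any zero of $D$ in $\cHdom[0]^2$ the numerator $N$ also vanishes and the gradients are proportional: $\partial_H N \cdot \partial_K D = \partial_K N \cdot \partial_H D$. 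The symmetry $z_{p,q} = z_{q,p}$ passes through the parametrization to give $\check Z_R(H, K) = \check Z_R(K, H)$, and the coprimality of $N, D$ then forces both polynomials to be symmetric in $(H, K)$. Writing $h = H - \check H_c$ and $k = K - \check H_c$, the symmetric expansions read
\begin{equation*}
N = a\,(h + k) + O(h^2 + k^2)\,, \qquad D = b\,(h + k) + O(h^2 + k^2)\,,
\end{equation*}
with $a = \partial_H N(\check H_c, \check H_c)$, $b = \partial_H D(\check H_c, \check H_c)$, and $b \neq 0$ by a routine CAS check on the explicit form of $\check Z_R$ from \cite{CAS2}.

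The key step is then to invoke Corollary~\ref{cor:double angle bound}: on $\mc N \cap \bigl(\cHdom(R) \times \cHdom[\epsilon,\theta](R)\bigr)$, with $\theta = 0$ allowed precisely when $R = R_c$, one has $\max(|h|, |k|) \le M_\theta |h + k|$. This yields
\begin{equation*}
|D(\check H_c + h, \check H_c + k)| \;\ge\; \tfrac{|b|}{2 M_\theta}\, \max(|h|, |k|) \;>\; 0
\end{equation*}
for $(h, k) \neq 0$ sufficiently small in the domain, ruling out further poles in the interior; and dividing gives
\begin{equation*}
\check Z_R(\check H_c + h, \check H_c + k) \;=\; \frac{a + O(M_\theta^2\,|h + k|)}{b + O(M_\theta^2\,|h + k|)} \; \xrightarrow[(h, k) \to 0]{} \; \frac{a}{b}\,,
\end{equation*}
producing the required continuous extension to the corner and completing holomorphicity.

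The main obstacle is conceptual rather than computational: understanding precisely why the angular restriction $\theta > 0$ is forced when $R < R_c$. By symmetry the tangent to the pole curve $\{D = 0\}$ at $(\check H_c, \check H_c)$ points in direction $(1, -1)$, while for $R \in (R_1, R_c)$ the domain $\Hdom(R)$ has a smooth boundary at $\check H_c$ tangent to the real axis. Hence the pole curve grazes into $\Hdom(R) \times \Hdom(R)$ and the inequality of Corollary~\ref{cor:double angle bound} breaks down at $\theta = 0$; tilting one factor to a $\Delta$-domain of opening angle $\theta > 0$ is exactly what keeps the pole curve outside the interior. At $R = R_c$ the half-tangents of $\partial \Hdom(R_c)$ at $\check H_c$ already open at $2\pi/3$ off the real axis, so the angular bound is automatic with $\theta = 0$, matching the statement of the lemma. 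Apart from this geometric subtlety and the single CAS verification that $b \neq 0$, the argument reduces to a local linearization controlled by Corollary~\ref{cor:double angle bound}.
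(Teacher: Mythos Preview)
Your proof is correct and follows essentially the same route as the paper: compactness for $R\in(R_c,R_\infty)$, and for $R\in(R_1,R_c]$ a local linearization $N=a(h+k)+O(\max^2)$, $D=b(h+k)+O(\max^2)$ at the corner combined with the angular bound of Corollary~\ref{cor:double angle bound} to get continuity and exclude nearby poles. Your explicit derivation of the symmetry of $N$ and $D$ from $z_{p,q}=z_{q,p}$ and coprimality is a welcome clarification that the paper leaves implicit when it writes the linear part of both as a multiple of $(h+k)$.
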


\begin{proof}
As in the previous proof, we fix a value of $R \in (R_1,R_\infty)$ and drop it from the notation.

\paragraph{Low temperatures.}
When $R\in (R_c,R_\infty)$, Lemma~\ref{lem:poles of Z in H0*H0} tells us that $\check Z$ has no pole in $\cHdom[0] \times \cHdom[0]$. Since the set of poles of $\check Z$ is closed, and $\cHdom[0] \times \cHdom[0]$ is compact, there exists a neighborhood of $\cHdom[0] \times \cHdom[0]$ containing no pole of $\check Z$.
By continuity, this neighborhood contains $\cHdom \times \cHdom$ for $\epsilon>0$ small enough. It follows that there exists $\epsilon>0$ such that $\check Z$ is holomorphic in $\cHdom \times \cHdom$.

\paragraph{Critical temperature.}
When $R=R_c$, Lemma~\ref{lem:poles of Z in H0*H0} tells us that $(\check H_c,\check H_c)$ is the only pole of $\check Z$ in $\cHdom[0] \times \cHdom[0]$. 

First, let us show that $\check Z$, when restricted to $\cHdom \times \cHdom$, is continuous at $(\check H_c,\check H_c)$. Notice that this statement does not depend on $\epsilon$, since two domains $\cHdom \times \cHdom$ with different values of $\epsilon>0$ are identical when restricted to a small enough neighborhood of $(\check H_c,\check H_c)$.
We have seen in the proof of Lemma~\ref{lem:poles of Z in H0*H0} that the numerator $N$ and the denominator $D$ of $\check Z$ both vanish at $(\check H_c, \check H_c)$. Therefore their Taylor expansions give:
\begin{equation}\label{eq:Zl limit at (Hc,Hc)}
\check Z(\check H_c -h, \check H_c -k) = \frac{
  \partial_H N(\check H_c,\check H_c) \cdot (h+k) + O\m({ \max(|h|,|k|)^2 }  }{
  \partial_H D(\check H_c,\check H_c) \cdot (h+k) + O\m({ \max(|h|,|k|)^2 }  }
\qt{as }(h,k)\to (0,0).
\end{equation}
We check explicitly that $\partial_H D(\check H_c, \check H_c) \ne 0$, see \cite{CAS2}. On the other hand, thanks to Corollary~\ref{cor:double angle bound} (the critical case), we have $\max(|h|,|k|) = O\m({|h+k|}$ when $(h,k)\to (0,0)$ in such a way that $(\check H_c-h,\check H_c-k) \in \cHdom \times \cHdom$. Then it follows from \eqref{eq:Zl limit at (Hc,Hc)} that $\check Z(H,K) \to \partial_H N(\check H_c,\check H_c) / \partial_H D(\check H_c,\check H_c)$ when $(H,K)\to (\check H_c,\check H_c)$ in $\cHdom \times \cHdom$. That is, $\check Z$ restricted to $\cHdom \times \cHdom$ is continuous at $(\check H_c,\check H_c)$.

Next, let us show that for some fixed $\epsilon_0>0$, every point $(H,K)\in \cHdom[0] \times \cHdom[0]$ has a neighborhood $\mc V(H,K)$ such that $\check Z$ is holomorphic in $\mc V(H,K) \cap (\cHdom[\epsilon_0] \times \cHdom[\epsilon_0])$. (Recall that this means $\check Z$ is holomorphic in the interior, and continuous in the whole domain).
For $(H,K)=(\check H_c,\check H_c)$, the expansion of the denominator in \eqref{eq:Zl limit at (Hc,Hc)} shows that there exists $\epsilon_0>0$ and a neighborhood $\mc V(\check H_c,\check H_c)$ such that $(\check H_c,\check H_c)$ is the only pole of $\check Z$ in $\mc V(\check H_c,\check H_c) \cap (\cHdom[\epsilon_0] \times \cHdom[\epsilon_0])$. Moreover, the previous paragraph has showed that $\check Z$ is continuous at $(\check H_c,\check H_c)$ when restricted to $\cHdom \times \cHdom$. It follows that $\check Z$ is holomorphic in $\mc V(\check H_c,\check H_c) \cap (\cHdom[\epsilon_0] \times \cHdom[\epsilon_0])$.
For $(H,K) \in \cHdom[0] \times \cHdom[0] \setminus \{(\check H_c,\check H_c)\}$, since $(H,K)$ does not belong to the (closed) set of poles of $\check Z$, it has a neighborhood $\mc V(H,K)$ on which $\check Z$ is holomorphic.

By taking the union of all the neighborhoods $\mc V(H,K)$ constructed in the previous paragraph, we see that there is a neighborhood $\mc V$ of the compact set $\cHdom[0] \times \cHdom[0]$ such that $\check Z$ is holomorphic in $\mc V \cap (\cHdom[0] \times \cHdom[0])$. By continuity, $\mc V$ contains $\cHdom \times \cHdom$ for some $\epsilon>0$ small enough. Hence there exists $\epsilon>0$ such that $\check Z$ is holomorphic in $\cHdom \times \cHdom$.

\paragraph{High temperatures.}
When $R\in (R_1,R_c)$, Lemma~\ref{lem:poles of Z in H0*H0} tells us that $(\check H_c,\check H_c)$ is also a pole of $\check Z$. The rest of the proof goes exactly as in the critical case, except that the domain $\cHdom \times \cHdom$ has to be replaced by $\cHdom \times \cHdom[\epsilon,\theta]$ for an arbitrary $\theta \in (0,\pi/2)$ due to the difference between the critical and non-critical cases in Corollary~\ref{cor:double angle bound}.
\end{proof}

\begin{remark}\label{rem:extended holomorphicity Z}
In fact, the above proof shows the holomorphicity of $\check Z_R$ in a larger domain than the one stated in Lemma~\ref{lem:holomorphicity Z}. In particular, one can check that the following statement is true: \emph{for each compact subset $\mathcal K$ of $\Hdom$, there exists a neighborhood $\mathcal V$ of $\cHdom[0]$ such that $\check Z$ is holomorphic in $\mathcal{K\times V}$.}
This remark will be used to show that $x\mapsto A(u_c x)$ is analytic on $\slit$ in Corollary~\ref{cor:A Delta-analytic}.
\end{remark}

\begin{proof}[Proof of Proposition~\ref{prop:singularity structure}]
The proposition follows from Lemma~\ref{lem:holomorphicity Z} and the definition of $\cHdom(R)$:

At critical or low temperatures, the inverse mapping of $(H,K)\mapsto (\check x_R(H), \check x_R(K))$ is holomorphic from $\cslit \times \cslit$ to $\cHdom(R) \times \cHdom(R)$. For $\epsilon>0$ small enough, $(H,K)\mapsto \check Z_R(H,K)$ is holomorphic in $\cHdom(R) \times \cHdom(R)$. Hence their composition defines an analytic continuation of $(x,y)\mapsto \tilde Z(x,y,\nu)$ on $\cslit \times \cslit$.

At high temperatures, it suffices to replace $\cslit \times \cslit$ by $\cslit \times \cDdom$, and $\cHdom(R) \times \cHdom(R)$ by $\cHdom(R) \times \cHdom[\epsilon,\theta](R)$.
\end{proof}

\section{Asymptotic expansions of $Z(u,v,\nu)$ at its dominant singularity}\label{sec:local expansion}
In this section, we establish the asymptotic expansions (Proposition~\ref{prop:local expansion}) of the generating function $Z(u_c x,u_c y)$ at its dominant singularity $(x,y)=(1,1)$. For this we define the function $\mc Z(\mc h,\mc k,\nu)$ by the change of variable
\begin{equation*}
Z(u_c(\nu) x, u_c(\nu) y, \nu) = \mc Z(\mc h,\mc k, \nu)    \qtq{with}
\mc h = \m({1-x}^\delta \qtq{and}
\mc k = \m({1-y}^\delta
\end{equation*}
Recall that $\delta = 1/2$ when $\nu\ne \nu_c$ (non-critical case), and $\delta = 1/3$ when $\nu=\nu_c$ (critical case).

The proof relies on Lemma~\ref{lem:unique dominant} (location and multiplicity of the zeros of $\check x_R'$) and Lemma~\ref{lem:poles of Z in H0*H0} (location and multiplicity of the poles of $\check Z_R$) of the previous section, as well as the following property of the rational function $\check Z_R$: for all $H\ne 0$,
\begin{equation}\label{eq:vanishing derivatives}
\partial_K   \check Z_R    (H,\check H_c(R)  )=0
\qt{for all }R\in (R_1, R_\infty), \qtq{and}
\partial_K^2 \check Z_{R_c}(H,\check H_c(R_c))=0 \,.
\end{equation}
These identities can be easily checked using Maple (see \cite{CAS2}).

The purpose of the following lemma is to translate the above constraints (Lemma~\ref{lem:unique dominant}, Lemma~\ref{lem:poles of Z in H0*H0} and \eqref{eq:vanishing derivatives}) on the rational functions $\check x_R$ and $\check Z_R$ in terms of the structure of the local expansion of $\mc Z(\mc h,\mc k,\nu)$ near $(\mc h,\mc k)=(0,0)$.
These constraints imply that some ``leading coefficients'' in the local expansion must vanish, and we check that no other leading coefficients vanish.
In other words, if $(\check x_R,\check Z_R)$ was a pair of generic rational functions satisfying the above constraints, then the local expansions of $Z(\mc h,\mc k,\nu)$ will have exactly the same structure and leading nonzero coefficients as those specified in Lemma~\ref{lem:singular expansion}.
After establishing Lemma~\ref{lem:singular expansion} (and Lemma~\ref{lem:singular division} which is used in its proof), we will plug the change of variables $(\mc h,\mc k) = ((1-x)^\delta, (1-y)^\delta)$ into $\mc Z(\mc h,\mc k,\nu)$ to derive the asymptotic expansion of $Z(u,v,\nu)$ near $(u,v) = (u_c(\nu),u_c(\nu))$ in Proposition~\ref{prop:local expansion}.
Apart from expressing the results in different sets of variables, another key difference between Lemma~\ref{lem:singular expansion} and Proposition~\ref{prop:local expansion} is that the former gives an exact decomposition in terms of converging series, while the latter gives asymptotic expansions useful for the study of coefficient asymptotics.


From now on we hide the parameter $\nu$ and the corresponding parameter $R$ from the notations.

\begin{lemma}\label{lem:singular expansion}
For $\nu>\nu_c$, $\mc Z(\mc h,\mc k)$ is analytic at $(0,0)$. Its Taylor expansion $\mc Z(\mc h,\mc k) = \sum_{m,n\ge 0} \mc Z_{m,n} \mc h^m \mc k^n$ satisfies $\mc Z_{1,n} = \mc Z_{n,1} = 0$ for all $n \ge 0$ and $\mc Z_{3,3}>0$.

For $\nu \in (1,\nu_c]$, we have a decomposition of the form $\mc Z(\mc h,\mc k) = Q(\mc h,\mc k) + \frac{J(\mc h\mc k)}{\mc D(\mc h,\mc k)}$, where $Q(\mc h,\mc k)$, $J(r)$ and $\mc D(\mc h,\mc k)$ are  analytic at the origin. The denominator satisfies $\mc D(0,0) = 0$ and $\partial_{\mc h} \mc D(0,0) = \partial_{\mc k} \mc D(0,0) = 1$, whereas
\begin{equation*}
Q(\mc h,\mc k) = \sum_{m,n\ge 0} Q_{m,n} \mc h^m \mc k^n
\qtq{and}
J(r) = \sum_{l\ge 1} J_l r^l
\end{equation*}
satisfy: If $\nu\in (1,\nu_c)$, then $J_1>0$.\\
~\phantom{satisfy:} If $\nu=\nu_c$, then $Q_{1,n} = Q_{n,1} = Q_{2,n} = Q_{n,2} = 0$ for all $n \ge 0$, $J_1=J_2=0$ and $J_3>0$.

The three nonzero coefficients in the above statements can be computed by:
\begin{align}
\label{eq:Z_3,3}
\mc Z_{3,3} &= \frac{1}{\check x_2^3} \mh({ \check Z_{3,3} -2 \, \frac{\check x_3}{\check x_2}\, \check Z_{2,3} + \m({\frac{\check x_3}{\check x_2}}^2 \check Z_{2,2} }
&&\hspace{-1cm}\text{when }\nu>\nu_c ,
\\
\label{eq:J_1}
J_1 &= \frac1{\check x_2^{1/2}} \lim_{H\to \check H_c} \partial_H \check Z(H,\check H_c)
&&\hspace{-1cm}\text{when }\nu\in (1,\nu_c),
\\
\label{eq:J_3}
J_3 &= \frac{1}{\check x_3^{5/3}} \cdot \frac{4}{13}\lim_{H\to \check H_c} \frac{ \partial_H \partial_K \check Z(H,H) }{ (\check H_c - H)^3 }
&&\hspace{-1cm}\text{when }\nu=\nu_c,
\end{align}
where the numbers $\check x_n$ and $\check Z_{m,n}$ are the coefficients in the Taylor expansions $1-\check x(\check H_c - h) = \sum_{n\ge 2} \check x_n h^n$ and $\check Z(\check H_c-h,\check H_c-k) = \sum \limits_{m,n} \check Z_{m,n} h^m k^n$.\\
(The coefficients $\check x_n$ are not to be confused with the functions $\check x_R = \check x(\,\cdot\,,R)$ defined earlier. There should be no confusion because by the convention above this lemma, the parameter $R$ no longer appears in our notations.)
\end{lemma}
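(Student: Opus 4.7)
The plan is to handle the three temperature regimes separately, reducing each case to a local analysis at $(\check H_c, \check H_c)$ combined with the vanishing identities~\eqref{eq:vanishing derivatives}.

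In the low-temperature case $\nu > \nu_c$, Lemma~\ref{lem:poles of Z in H0*H0} asserts that $\check Z_R$ is analytic at $(\check H_c, \check H_c)$, and the square-root coordinate change $h(\mc h) = \check x_2^{-1/2}\mc h - \tfrac{\check x_3}{2\check x_2^2}\mc h^2 + O(\mc h^3)$ is locally analytic and invertible, so $\mc Z$ is analytic at the origin. The identity $\partial_K \check Z_R(H, \check H_c) = 0$ forces $\check Z_{m, 1} = 0$ for all $m \ge 0$ in the expansion of $\tilde Z(h, k) := \check Z_R(\check H_c - h, \check H_c - k)$, and by symmetry $\check Z_{1, n} = 0$; the chain rule then gives $\mc Z_{1, n} = \mc Z_{n, 1} = 0$, since the only way to generate an $\mc h$-linear contribution is through $\check Z_{1, n}$. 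Extracting the coefficient of $\mc h^3 \mc k^3$ by tracking $h(\mc h)^m k(\mc k)^n$ for $m, n \in \{2, 3\}$ yields~\eqref{eq:Z_3,3}, and $\mc Z_{3, 3} > 0$ is checked numerically via \cite{CAS2}.

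For $\nu \in (1, \nu_c]$, I would construct the decomposition by exploiting the structure of the pole. Writing $\check Z_R = N/D$ with coprime polynomials, the proof of Lemma~\ref{lem:poles of Z in H0*H0} shows that $(\check H_c, \check H_c)$ is a simple pole with $\partial_H D(\check H_c, \check H_c) \ne 0$, so its pole locus is locally a smooth analytic curve, tangent to $H + K = 2\check H_c$ by the $H \leftrightarrow K$ symmetry. The coordinate change $\mc h = (1 - \check x_R(H))^\delta$ is locally analytic and invertible (with $\delta = 1/2$ or $1/3$ matching the branching order of $\check x_R$), and pushes the pole locus to an analytic curve tangent to $\mc h + \mc k = 0$. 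I would then define $\mc D(\mc h, \mc k)$ as the unique symmetric analytic function vanishing on that curve with $\partial_{\mc h} \mc D(0, 0) = 1$, so that $G := \mc Z \cdot \mc D$ extends analytically across the pole. To exhibit $G|_{\mc D = 0}$ as a function of $\mc h \mc k$ alone, I use the symmetric parametrization $(\mc h, \mc k) = (r - s, r + s)$: the curve becomes $r = r_\ast(s^2)$ (even in $s$ by its symmetry under $s \to -s$), and $\mc h \mc k = r_\ast(s^2)^2 - s^2 = -s^2 + O(s^4)$ provides a local coordinate on the quotient of the curve by $s \to -s$. Since $G$ is symmetric, its restriction to the curve is even in $s$, hence descends to an analytic function $J(\mc h \mc k)$; the quotient $Q := (G - J(\mc h \mc k))/\mc D$ is then analytic, and $J(0) = G(0, 0) = 0$ since $N$ vanishes at the pole.

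The formulas and vanishing properties are then extracted by differentiating along $\mc k = 0$. For $\nu \in (1, \nu_c)$, $J(0) = 0$ gives $Q(\mc h, 0) = \mc Z(\mc h, 0)$; combining $\partial_{\mc k} \mc Z(\mc h, 0) = 0$ (from $\partial_K \check Z(H, \check H_c) = 0$) with the differentiated decomposition and letting $\mc h \to 0$ yields $J_1 = -Q_{0, 1} = -Q_{1, 0}$, and the chain rule expresses $Q_{1, 0}$ through $\lim_{H \to \check H_c} \partial_H \check Z(H, \check H_c)$, giving~\eqref{eq:J_1}. For $\nu = \nu_c$, the extra identity $\partial_K^2 \check Z(H, \check H_c) = 0$ imposes at leading order the relation $c_{20}\,d = c\,d_{20}$ among the low-order coefficients of $N, D$ (obtained by matching the coefficient of $h$ in $(\partial_k^2 \tilde N)\tilde D = \tilde N(\partial_k^2 \tilde D)$ at $k = 0$), which directly forces $J_1 = 0$ through the same formula. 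The relations $\partial_{\mc k}^j \mc Z(\mc h, 0) = 0$ for $j = 1, 2$ then cascade into $Q_{n, 1} = 0$ and exhibit $Q_{n, 2}$ as a multiple of $J_2$. Formula~\eqref{eq:J_3} would be obtained by expanding $\partial_{\mc h} \partial_{\mc k} \mc Z$ along the diagonal $\mc h = \mc k = t$: the dominant $t^3$ contribution comes entirely from $\partial_{\mc h} \partial_{\mc k}[J(\mc h \mc k)/\mc D(\mc h, \mc k)]$, combining five non-negligible terms whose coefficients sum to $\tfrac{13}{4} J_3$, and converting back via $t = \check x_3^{1/3}(\check H_c - H) + O(\cdot)$ and $\mc h'(H)^2 = \check x_3^{2/3} + O(\cdot)$ produces the Jacobian factor $\check x_3^{5/3}$.

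The hardest step is establishing $J_2 = 0$ at $\nu = \nu_c$, which, unlike $J_1 = 0$, does not follow from the leading-order consequences of~\eqref{eq:vanishing derivatives} alone but reflects a higher-order cancellation specific to the Ising model at criticality. I would resolve this by expanding $G|_{\text{curve}}$ to order $s^4$ via the parametrization above, expressing the coefficient of $s^4$ as an algebraic combination of the Taylor coefficients of $N$, $D$, and $\check x_{R_c}$, and verifying by computer algebra (in the spirit of the treatment of the \RP s in Section~\ref{sec:RP of GF}) that this combination vanishes at $R = R_c$. The positivity claims $\mc Z_{3, 3}, J_1, J_3 > 0$ are similarly established by explicit numerical evaluation in~\cite{CAS2}.
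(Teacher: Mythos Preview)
Your low- and high-temperature arguments match the paper's: the paper packages your curve/parametrization construction of $Q$ and $J$ as a separate division result (Lemma~\ref{lem:singular division}), but the content is identical, and the extraction of $\mc Z_{3,3}$, $J_1$, and $J_3$ proceeds the same way.

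The one genuine gap is your treatment of $J_2=0$ at $\nu=\nu_c$. You assert that this ``does not follow from the leading-order consequences of~\eqref{eq:vanishing derivatives} alone'' and propose a model-specific computer-algebra check on the explicit \RP. This is a misdiagnosis: $J_2=0$ follows structurally from the two identities in~\eqref{eq:vanishing derivatives} together with the symmetry of $Q$, and you already have both ingredients. From $\partial_{\mc k}\mc Z(\mc h,0)=0$ and $J_1=0$ you deduced $Q_{n,1}=0$; since your construction makes $Q$ symmetric, this gives $Q_{1,n}=0$, in particular $Q_{1,2}=0$. Now the relation $\partial_{\mc k}^2\mc Z(\mc h,0)=0$ with $J_1=0$ reads
\[
\partial_{\mc k}^2 Q(\mc h,0) \;+\; \frac{2J_2\,\mc h^2}{\mc D(\mc h,0)} \;=\; 0,
\]
and since $\mc h^2/\mc D(\mc h,0)=\mc h+O(\mc h^2)$, the coefficient of $\mc h$ on the left is $2Q_{1,2}+2J_2=2J_2$. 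Hence $J_2=0$, and then the displayed relation collapses to $\partial_{\mc k}^2 Q(\mc h,0)=0$, giving $Q_{n,2}=Q_{2,n}=0$. This is exactly the ``cascade'' you describe for $Q_{n,1}$; you simply stopped one step too early. The only places where computer algebra is genuinely needed are the simplicity of the pole and the strict positivity of $\mc Z_{3,3}$, $J_1$, $J_3$, which you already flag.
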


\begin{proof}
Recall that $Z$ has the parametrization $x=\check x(H)$, $y=\check x(K)$ and $Z(u_c x, u_c y) = \check Z(H,K)$. The function $h\mapsto \mc h = \mn({ 1-\check x(\check H_c-h) }^\delta$ is analytic and has positive derivative at $h=0$. (The exponent $\delta$ has been chosen for this to be true.)
Let $\psi$ be its inverse function. Then the definition of $\mc Z$ implies that
\begin{equation}\label{eq:Z-tilde = Z-hat(psi)}
\mc Z(\mc h,\mc k) = \check Z \m({\check H_c-\psi(\mc h), \check H_c-\psi(\mc k)} \,.
\end{equation}
The proof will be based on the above formula and uses the following ingredients: The form of the local expansions of $\mc Z$ will follow from whether $(\check H_c,\check H_c)$ is a pole of $\check Z(H,K)$ or not. The vanishing coefficients will be a consequence of the vanishing of $\partial_K \check Z(H,\check H_c)$ and of $\partial_K^2 \check Z(H,\check H_c)$ given in \eqref{eq:vanishing derivatives}. Finally, the non-vanishing of the coefficients $\mc Z_{3,3}$, $J_1$ and $J_3$ will be checked by explicit computation.

\paragraph{Low temperatures ($\nu>\nu_c$).}
By Lemma~\ref{lem:poles of Z in H0*H0}, $(\check H_c,\check H_c)$ is not a pole of $\check Z(H,K)$ when $\nu>\nu_c$. Thus \eqref{eq:Z-tilde = Z-hat(psi)} implies that $\mc Z$ is analytic at $(0,0)$.
By the definition of $\check x_2$ and $\check x_3$, we have $\mb({1-\check x(\check H_c - h)}^{1/2} = \check x_2^{1/2} h \mb({ 1 + \frac{\check x_3}{2 \check x_2} h + O(h^2)}$. Then the Lagrange inversion formula gives,
\begin{equation*}
\psi(\mc h) = \frac{1}{\check x_2^{1/2}} \mc h - \frac{\check x_3}{2 \check x_2^2} \mc h^2 + O(\mc h^3) \,.
\end{equation*}
In particular, $\psi(\mc h) \sim \mathtt{cst}\cdot \mc h$.
Hence \eqref{eq:Z-tilde = Z-hat(psi)} and the fact that $\partial_K \check Z(H,\check H_c)=0$ for all $H$ (Eq.~\eqref{eq:vanishing derivatives}) imply that $\partial_{\mc k} \mc Z(\mc h,0)=0$ for all $\mc h$ close to $0$, that is, $\mc Z_{1,n}=\mc Z_{n,1}=0$ for all $n\ge 0$.
On the other hand, we get the expression \eqref{eq:Z_3,3} of $\mc Z_{3,3}$ by composing the Taylor expansions of $\psi(\mc h)$ and of $\check Z(\check H_c-h,\check H_c-k)$, while taking into account that $\check Z_{1,n}=\check Z_{n,1} = 0$.

By plugging the expressions of $\check x(H)$ and $\check Z(H,K)$ into the relation \eqref{eq:Z_3,3}, one can compute the function $\mc Z_{3,3}(\check \nu(R))$, which gives a parametrization of $\mc Z_{3,3}(\nu)$. The explicit formula, too long to be written down here, is given in \cite{CAS2}. We check in \cite{CAS2} that it is strictly positive for all $R\in (R_c,R_\infty)$.

\paragraph{High temperatures ($1<\nu<\nu_c$).}
When $\nu\in (1,\nu_c)$, Lemma~\ref{lem:poles of Z in H0*H0} tells us that $(\check H_c,\check H_c)$ is a pole of $\check Z(H,K)$. Moreover, this pole is simple in the sense that the denominator $D$ of $\check Z$ satisfies that $D(\check H_c,\check H_c)=0$ and $\partial_H D(\check H_c,\check H_c) = \partial_K D(\check H_c,\check H_c) \ne 0$.
Then it follows from \eqref{eq:Z-tilde = Z-hat(psi)} that $\mc Z = \mc N/\mc D$ for some functions $\mc N(\mc h,\mc k)$ and $\mc D(\mc h,\mc k)$, both analytic at $(0,0)$, such that $\mc D(0,0)=0$ and $\partial_{\mc h} \mc D(0,0)=\partial_{\mc h} \mc D(0,0)=1$.
We will show in Lemma~\ref{lem:singular division} below that there is always a pair of functions $Q(\mc h,\mc k)$ and $J(r)$, both analytic at the origin, such that $\mc N(\mc h,\mc k) = Q(\mc h,\mc k)\cdot \mc D(\mc h,\mc k) + J(\mc h \mc k)$. This implies the decomposition $\mc Z(\mc h,\mc k) = Q(\mc h,\mc k) + \frac{J(\mc h \mc k)}{\mc D(\mc h,\mc k)}$. Notice that $J(0)=0$, because $\mc N(0,0) = \mc D(0,0) = 0$ by the continuity of $\check Z |_{\cHdom[0] \times \cHdom[0]}$ at $(\check H_c,\check H_c)$.

Taking the derivatives of the above decomposition of $\mc Z(\mc h,\mc k)$ at $\mc k=0$ gives
\begin{equation*}
\partial_{\mc h} \mc Z(\mc h,0) = \partial_{\mc h} Q(\mc h,0)  \qtq{and}
\partial_{\mc k} \mc Z(\mc h,0) = \partial_{\mc k} Q(\mc h,0) + \frac{J_1 \cdot \mc h}{\mc D(\mc h,0)} \,.
\end{equation*}
For the same reason as when $\nu>\nu_c$, we have $\partial_{\mc k} \mc Z(\mc h,0)=0$ for all $\mc h$ close to $0$. On the other hand, $\mc D(\mc h,0)\sim \mc h$ as $\mc h\to 0$ because $\partial_{\mc h} \mc D(0,0)=1$. Thus the limit $\mc h\to 0$ of the above derivatives gives
\begin{equation*}
\lim_{\mc h\to 0} \partial_{\mc h} \mc Z(\mc h,0) = \partial_{\mc h} Q(0,0)     \qtq{and}
\partial_{\mc k} Q(0,0) + J_1 = 0.
\end{equation*}
By symmetry, $\partial_{\mc h} Q(0,0) = \partial_{\mc k} Q(0,0)$, therefore $J_1 = - \lim_{\mc h\to 0} \partial_{\mc h} \mc Z(\mc h,0)$.
After expressing $\mc Z(\mc h,0)$ in terms of $\check Z(H,\check H_c)$ and $\psi(\mc h)$ using \eqref{eq:Z-tilde = Z-hat(psi)}, we obtain the formula \eqref{eq:J_1} for $J_1$.

We check by explicit computation in \cite{CAS2} that $J_1(\nu)$ has the parametrization
\begin{equation*}
J_1 \m({\check \nu(R)} =
\frac{
\sqrt{\m({1+R^2} \m({7-R^2}^3 \m({14R^2 -1 -R^4}^5}
}{ \sqrt2\,  (3R^2 -1) \m({29 +75R^2 -17R^4 +R^6}^2   }
\end{equation*}
which is strictly positive for all $R\in (R_1,R_c)$.

\paragraph{Critical temperature ($\nu=\nu_c$).}
When $\nu=\nu_c$, the point $(\check H_c,\check H_c)$ is still a pole of $\check Z(H,K)$ by Lemma~\ref{lem:poles of Z in H0*H0}, and one can check that it is simple in the sense that $\partial_H D(\check H_c,\check H_c) = \partial_K D(\check H_c,\check H_c) \neq 0$. Therefore, the decomposition $\mc Z(\mc h,\mc k) = Q(\mc h,\mc k) + \frac{J(\mc h \mc k)}{\mc D(\mc h,\mc k)}$ remains valid. Contrary to the non-critical case, now we have $\check x_2=0$ and $\delta=1/3$, thus $\psi(\mc h) \sim \check x_3^{-1/3} \mc h$. Together with the fact that $\partial_K \check Z(H,H_c)= \partial_K^2       \check Z(H,H_c)=0$ for all $H$ (Eq.~\eqref{eq:vanishing derivatives}), this implies
$\partial_{\mc k} \mc Z(\mc h,0) = \partial_{\mc k}^2 \mc Z(\mc h,0) = 0$ for all $\mc h$ close to $0$. Plugging in the decomposition $\mc Z(\mc h,\mc k) = Q(\mc h,\mc k) + \frac{J(\mc h \mc k)}{\mc D(\mc h,\mc k)}$, we obtain
\begin{equation*}
\partial_{\mc k} Q(\mc h,0) + \frac{J_1 h}{\mc D(\mc h,0)} = 0
\qtq{and}
\partial_{\mc k}^2 Q(\mc h,0) + \frac{J_2 \mc h^2}{\mc D(\mc h,0)} - J_1 \mc h \cdot \frac{\partial_{\mc k} \mc D(\mc h,0)}{\mc D(\mc h,0)^2} = 0 \,.
\end{equation*}
Since $\partial_{\mc h} \mc D(0,0)=1$ and $\mc D(\mc h,0) \sim \mc h$ as $\mc h\to 0$, the last term in the second equation diverges like $J_1 \mc h^{-1}$ when $\mc h\to 0$, whereas the other two terms are bounded. This implies that $J_1 = 0$.
Plugging $J_1=0$ back into the two equations, we get
\begin{equation*}
\partial_{\mc h} Q(\mc h,0) = 0    \qtq{and}
\partial_{\mc k}^2 Q(\mc h,0) + \frac{J_2 \mc h^2}{\mc D(\mc h,0)} = 0 \,.
\end{equation*}
The first equation translates to $Q_{1,n}=Q_{n,1}=0$ for all $n \ge 0$. Then, $Q_{1,2}=0$ tells us that in the second equation $\partial_{\mc k}^2 Q(\mc h,0) = Q_{0,2} + O(\mc h^2)$, whereas $\frac{J_2 \mc h^2}{\mc D(\mc h,0)} \sim J_2 \mc h$ when $\mc h\to 0$. Therefore we must have $J_2=0$, which in turn implies $\partial_{\mc k}^2 Q(\mc h,0) =0$, that is, $Q_{2,n}=Q_{n,2}=0$ for all $n\ge 0$.

To obtain the formula for $J_3$, we calculate from the decomposition $\mc Z(\mc h,\mc k) = Q(\mc h,\mc k) + \frac{J(\mc h \mc k)}{\mc D(\mc h,\mc k)}$ that
\begin{align}
\partial_{\mc h} \partial_{\mc k} \mc Z(\mc h,\mc h) =
\partial_{\mc h} \partial_{\mc k} Q(\mc h,\mc h)
&+\frac{1}{\mc D(\mc h,\mc h)^2}
  \mh({
    \m({
      \mc D(\mc h,\mc h) - 2\mc h \partial_{\mc h} \mc D(\mc h,\mc h)
    } J'(h^2)
    - \partial_{\mc h} \partial_{\mc k} \mc D(\mc h,\mc h) J(h^2)
  }     \notag
\\&+\frac{1}{\mc D(\mc h,\mc h)}
  \mh({
    J''(h^2)\cdot h^2 +
    2 \mB({
        \frac{
          \partial_{\mc h} \mc D(\mc h,\mc h) }{ \mc D(\mc h,\mc h)
        }
      }^2 J(h^2)
}    \label{eq:dh dk Z-tilde}
\end{align}
When $\mc h\to 0$, we have $\partial_{\mc h} \partial_{\mc k} Q(\mc h,\mc h) = O(h^4)$ because $Q_{1,n} = Q_{n,1} = Q_{2,n} = Q_{n,2} = 0$. Moreover, using
\begin{align*}
\mc D(\mc h,\mc h)
& \sim 2\mc h &
\mc D(\mc h,\mc h) - 2\mc h \partial_{\mc h} \mc D(\mc h,\mc h)
& = O(\mc h^2) &
\partial_{\mc h} \partial_{\mc k} \mc D(\mc h,\mc h)
& = O(1) &
\frac{ \partial_{\mc h}\mc D(\mc h,\mc h) }{ \mc D(\mc h,\mc h) }
& \sim \frac{1}{2\mc h} \\
&\text{and}&
J''(\mc h^2)  & \sim  6J_3 \cdot \mc h^2 &
J' (\mc h^2)  & \sim  2J_3 \cdot \mc h^4 &
J  (\mc h^2)  & \sim   J_3 \cdot \mc h^6
\end{align*}
we see that the first line of \eqref{eq:dh dk Z-tilde} is a $O(\mc h^4)$, whereas the second line is $\frac{13}{4} J_3 \mc h^3 + O(\mc h^4)$.
Therefore we have $J_3 = \frac{4}{13} \lim_{\mc h \to 0} \mc h^{-3} \partial_{\mc h} \partial_{\mc k} \mc Z(\mc h,\mc h)$. Finally, we obtain the expression \eqref{eq:J_3} of $J_3$ using the relation $\mc Z(\mc h,\mc h) = \check Z( \check H_c-\psi(\mc h), \check H_c-\psi(\mc h))$ and the fact that $\psi(\mc h) \sim \check x_3^{-1/3} \mc h$ when $\nu=\nu_c$.

Numerical computation gives $J_3 = \frac{27}{20} \m({\frac32}^{2/3} > 0$.
\end{proof}

\begin{lemma}[Division by a symmetric Taylor series with no constant term]
\label{lem:singular division}
Let $\mc N(\mc h,\mc k)$ and $\mc D(\mc h,\mc k)$ be two symmetric holomorphic functions defined in a neighborhood of $(0,0)$.
Assume that $(0,0)$ is a simple zero of $\mc D$, that is, $\mc D(0,0) = 0$ and $\partial_{\mc h} \mc D(0,0) = \partial_{\mc k} \mc D(0,0) \ne 0$. Then there is a unique pair of holomorphic functions $Q(\mc h,\mc k)$ and $J(r)$ in neighborhoods of $(0,0)$ and $0$ respectively, such that $Q$ is symmetric and
\begin{equation}\label{eq:singular division}
\mc N(\mc h,\mc k) = Q(\mc h,\mc k)\cdot \mc D(\mc h,\mc k) + J(\mc h\mc k) \,.
\end{equation}
\end{lemma}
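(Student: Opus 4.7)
The plan is to exploit the $S_2$-symmetry by passing to the coordinates $s=\mc h+\mc k$ and $r=\mc h\mc k$. A classical fact (which follows from writing a symmetric formal power series in terms of power sums via Newton's identities, combined with the Riemann removable singularity theorem across the discriminant locus $\{s^2-4r=0\}$) is that every germ of symmetric holomorphic function at $(\mc h,\mc k)=(0,0)$ is the pullback of a unique germ of holomorphic function of $(s,r)$ at $(0,0)$. Writing $\mc N(\mc h,\mc k)=\tilde N(s,r)$, $\mc D(\mc h,\mc k)=\tilde D(s,r)$, and seeking $Q$ in the form $Q(\mc h,\mc k)=\tilde Q(s,r)$, the identity \eqref{eq:singular division} becomes the one-parameter Weierstrass-type division problem
\[
\tilde N(s,r) \;=\; \tilde Q(s,r)\,\tilde D(s,r) \;+\; J(r),
\]
to be solved for $(\tilde Q,J)$ in $s$ with $r$ as parameter.

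By the chain rule, $\partial_s\tilde D(0,0)=\partial_{\mc h}\mc D(0,0)\ne 0$, so the holomorphic implicit function theorem produces a unique germ $\phi$ with $\phi(0)=0$ and $\tilde D(\phi(r),r)\equiv 0$. Substituting $s=\phi(r)$ into the target identity forces
\[
J(r)\;:=\;\tilde N(\phi(r),r),
\]
which is holomorphic near $r=0$. Since $s\mapsto\tilde D(s,r)$ has a simple zero at $s=\phi(r)$, we may factor $\tilde D(s,r)=(s-\phi(r))\,u(s,r)$ with $u$ holomorphic and $u(0,0)\ne 0$; and since
\[
\tilde N(s,r)-J(r)\;=\;\int_{\phi(r)}^{s}\partial_\sigma\tilde N(\sigma,r)\,d\sigma
\]
vanishes at $s=\phi(r)$, it factors as $(s-\phi(r))\,v(s,r)$ for some holomorphic $v$. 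Setting $\tilde Q(s,r):=v(s,r)/u(s,r)$ gives a holomorphic $\tilde Q$ near the origin, and hence the desired symmetric holomorphic $Q(\mc h,\mc k):=\tilde Q(\mc h+\mc k,\mc h\mc k)$.

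For uniqueness, if $(Q_1,J_1)$ and $(Q_2,J_2)$ both satisfy \eqref{eq:singular division}, then $(Q_1-Q_2)\cdot\mc D(\mc h,\mc k)=(J_2-J_1)(\mc h\mc k)$; passing to $(s,r)$ coordinates and restricting to the curve $s=\phi(r)$ annihilates the left-hand side, whence $J_1=J_2$, and then $Q_1=Q_2$ because $\mc D\not\equiv 0$ in any neighborhood of the origin. The only non-elementary input is the representation of symmetric holomorphic germs as holomorphic germs in the elementary symmetric polynomials $(s,r)$; I expect this to be the main technical point, though it is a standard consequence of invariant theory, provable either by Cauchy-majorant estimates on the explicit expansion in power sums or by a direct extension argument across the diagonal $\{\mc h=\mc k\}$.
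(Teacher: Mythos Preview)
Your argument is correct and takes a genuinely different route from the paper's proof. The paper proceeds in two stages: first it constructs $Q$ and $J$ as \emph{formal} power series, order by order, using the fundamental theorem of symmetric polynomials at each homogeneous degree; then it proves convergence separately, obtaining $J$ by parametrizing the zero locus of $\mc D$ in the original $(\mc h,\mc k)$ coordinates (via an implicit function $\tilde{\mc k}(\mc h)$ and an auxiliary change of variable so that $\mc h\,\tilde{\mc k}(\mc h)=s^2$), and obtaining the convergence of $Q$ by a Cauchy integral estimate on its coefficients as a Laurent series in $\complex(\!(\mc h)\!)[\![\mc k]\!]$.

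By contrast, you pass immediately to the elementary symmetric coordinates $(s,r)=(\mc h+\mc k,\mc h\mc k)$, which reduces the statement to a one-variable Weierstrass division with holomorphic parameter $r$; existence, uniqueness, and holomorphicity of $(\tilde Q,J)$ then all follow at once from the implicit function theorem and the standard factor theorem. This is more conceptual and avoids the paper's separate convergence arguments. The price is that you must invoke the fact that a symmetric holomorphic germ in $(\mc h,\mc k)$ is the pullback of a holomorphic germ in $(s,r)$ --- as you note, this is the only non-elementary input, though your sketch via Riemann extension across the discriminant locus is valid. The paper's approach, while longer, is more self-contained on this point and also makes explicit (see their remark after the lemma) that the formal-power-series decomposition alone already suffices for the application in Proposition~\ref{prop:local expansion}.
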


\begin{remark}
When $\mc D(0,0)=0$, the ratio $\frac{\mc N(\mc h,\mc k)}{\mc D(\mc h,\mc k)}$  between two Taylor series $\mc N(\mc h,\mc k)$ and $\mc D(\mc h,\mc k)$ does not in general have a Taylor expansion at $(0,0)$.
The above lemma gives a way to decompose the ratio into the sum of a Taylor series $Q(\mc h,\mc k)$ and a singular part $\frac{J(\mc h\mc k)}{\mc D(\mc h,\mc k)}$ whose numerator is determined by an univariate function. The lemma deals with the case where $\mc N(\mc h,\mc k)$ and $\mc D(\mc h,\mc k)$ are symmetric, and the zero of $\mc D(\mc h,\mc k)$ at $(0,0)$ is simple. The following remarks discuss how the lemma would change if one modifies its conditions.
\begin{enumerate}
\item
In \eqref{eq:singular division}, instead of requiring $Q(\mc h,\mc k)$ to be symmetric, we can require the remainder term to not depend on $\mc k$. Then the decomposition would become $\mc N(\mc h,\mc k) = Q(\mc h,\mc k) \cdot \mc D(\mc h,\mc k) + J(\mc h^2)$. Notice that the remainder term does not have any odd power of $\mc h$, which is a constraint due to the symmetry of $\mc N$ and $\mc D$.

Without the assumption that $\mc N$ and $\mc D$ are symmetric, we would have a decomposition $\mc N(\mc h,\mc k) = Q(\mc h,\mc k) \cdot \mc D(\mc h,\mc k) + J(\mc h)$ where the remainder is a general Taylor series $J(\mc h)$. The proof of Lemma~\ref{lem:singular division} can be adapted easily to treat the non-symmetric case.

\item
If $(0,0)$ is a zero of order $n>1$ of $\mc D$ (that is, all the partial derivatives of $\mc D$ up to order $n-1$ vanishes at $(0,0)$, while at least one partial derivative of order $n$ is nonzero), then one can prove a division formula similar to \eqref{eq:singular division}, but with a different remainder term. For example, when $n=2$, the remainder term can be written as $J_1(\mc h\mc k)\cdot (\mc h+\mc k) + J_2(\mc h\mc k)$ if $\partial_{\mc h}^2 \mc D(0,0) \ne 0$, or as $J_3(s+t)$ if $\partial_{\mc h}^2 \mc D(0,0)=0$ but $\partial_{\mc h} \partial_{\mc k} \mc D(0,0) \ne 0$.

\item
As we will see in the proof below, the decomposition \eqref{eq:singular division} can be made in the sense of formal power series without using the convergence of the Taylor series of $\mc N$ and $\mc D$. (In fact this is the easiest way to construct $Q(\mc h,\mc k)$ and $J(r)$.) The decomposition \eqref{eq:singular division} will be used in the proof of Proposition~\ref{prop:local expansion} to establish asymptotics expansions of $\mc Z(\mc h,\mc k) = \frac{\mc N(\mc h,\mc k)}{\mc D(\mc h,\mc k)}$ when $(\mc h,\mc k)\to (0,0)$. For this purpose, it is not necessary to know that the series $Q(\mc h,\mc k)$ and $J(r)$ are convergent. Everything can be done by viewing \eqref{eq:singular division} as an asymptotic expansion with a remainder term $O\m({ \max(\abs{\mc h},\abs{\mc k})^n }$ for an arbitrary $n$. However, we find that presenting $Q(\mc h,\mc k)$ and $J(r)$ as analytic functions is conceptually simpler. For this reason, we will still prove that the series $Q(\mc h,\mc k)$ and $J(r)$ are convergent even if it is not absolutely necessary for the rest of this paper.

\end{enumerate}
\end{remark}

\begin{proof}
The proof comes in two steps: first we construct order by order two series $Q(\mc h,\mc k)$ and $J(r)$ which satisfy \eqref{eq:singular division} in the sense of formal power series, and then we show that these series do converge in a neighborhood of the origin.

We approach the construction of $Q(\mc h,\mc k)$ and $J(r)$ as formal power series as follows: Assume first that $Q(\mc h,\mc k)$ and $J(r)$ are given together with the assumptions of the theorem. In that case, for all $n\ge 0$, let $\mc D_n(s,t) = [\lambda^n] \mc D(\lambda s,\lambda t)$, and similarly define $\mc N_n(s,t)$ and $Q_n(s,t)$. By construction, $\mc D_n$, $\mc N_n$ and $Q_n$ are homogeneous polynomials of degree $n$. The assumptions of the lemma ensure that $\mc D_n$ and $\mc N_n$ are symmetric, $\mc D_0=0$, and $\mc D_1(s,t) = d_{1,0}(s+t)$ where $d_{1,0} := \partial_h \mc D(0,0) \ne 0$. On the other hand, let $J_l = [r^l] J(r)$. Then \eqref{eq:singular division} is equivalent to
\begin{equation}\label{eq:singular division rec}
\mc N_n \,=\, \mc D_1 Q_{n-1} \,+\, \m({ \mc D_2 Q_{n-2} +\cdots+ \mc D_n Q_0 } \,+\, J_l \cdot (st)^l \cdot \id_{n=2l \text{ is even}}
\end{equation}
for all $n \ge 0$.
Let us show that this recursion relation indeed uniquely determines $Q_n$ and $J_l$, such that $Q_n(s,t)$ is a homogeneous polynomial of degree $n$ and $J_l\in \complex$. When $n=0$, \eqref{eq:singular division rec} gives $J_0 = \mc N_0 \in \complex$.
When $n\ge 1$, we assume as induction hypothesis that $Q_m(s,t)$ is a symmetric homogeneous polynomial of degree $m$ for all $m<n$. Then
\eqref{eq:singular division rec} can be written as
\begin{equation*}
\tilde{\mc N}_n = d_{1,0}(s+t) \cdot Q_{n-1} + J_l \cdot (st)^l \cdot \id_{n=2l \text{ is even}} \,,
\end{equation*}
where $\tilde{\mc N}_n := \mc N_n - \m({ \mc D_2 Q_{n-2} +\cdots+ \mc D_n Q_0 }$ is a symmetric homogeneous polynomial of degree $n$.
By the fundamental theorem of symmetric polynomials, a bivariate symmetric polynomial can be written uniquely as a polynomial of the elementary symmetric polynomials $s+t$ and $st$. Isolating the terms of degree zero in $s+t$, we deduce that there is a unique pair $Q_{n-1}(s,t)$ and $\tilde J_n(r)$ such that $Q_{n-1}(s,t)$ is symmetric, and
\begin{equation*}
\tilde{\mc N}_n(s,t) = d_{1,0}(s+t) \cdot Q_{n-1}(s,t) + \tilde J_n(st)\,.
\end{equation*}
Moreover, since $\tilde{\mc N}_n(s,t)$ is homogeneous of degree $n$, the polynomials $Q_{n-1}(s,t)$ and $\tilde J_n(st)$ must be homogeneous of degree $n-1$ and $n$ respectively. When $n$ is odd, this implies $\tilde J_n(st)=0$, and when $n=2l$ is even, we must have $\tilde J_n(st) = J_l \cdot (st)^l$ for some $J_l\in \complex$.
By induction, this completes the construction of $Q_n(s,t)$ and $J_l \in \complex$, such that the series defined by $Q(\lambda s,\lambda t) = \sum_n Q_n(s,t) \lambda^n$ and $J(r) = \sum_l J_l r^l$ satisfy \eqref{eq:singular division} in the sense of formal power series.

Now let us show that the series $J(r)$ has a strictly positive radius of convergence.
Since $\mc D(0,0) = 0$ and $\partial_{\mc h} \mc D(0,0) = \partial_{\mc k} \mc D(0,0) \ne 0$, by the implicit function theorem, the equation $\mc D( \mc{h,\tilde k(h)} )=0$ defines locally a holomorphic function $\mc{\tilde k}$ such that $\mc{\tilde k}(0)=0$ and $\mc{\tilde k}'(0)=-1$.
In particular, $\mc{h\cdot \tilde k(h)}$ has a Taylor expansion with leading term $-\mc h^2$, so the inverse function theorem ensures that there exists a holomorphic function $\varphi$ such that $s^2 = \varphi(s) \cdot \tilde{\mc k}(\varphi(s))$ near $s=0$.
Taking $\mc h=\varphi(s)$ and $\mc k=\tilde{\mc k}(\varphi(s))$ in \eqref{eq:singular division} gives that
\begin{equation*}
\mc N \m({ \varphi(s), \tilde{\mc k}(\varphi(s)) } = J(s^2)
\end{equation*}
in the sense of formal power series.
Since $\mc N$, $\tilde{\mc k}$ and $\varphi$ are all locally holomorphic, the series on both sides have a strictly positive radius of convergence.

It remains to prove that $Q(\mc h,\mc k)$ also converges in a neighborhood of the origin. Even though $\mc D(0,0)=0$, the Taylor series of $\mc D(\mc h,\mc k)$ still has a multiplicative inverse in the space of formal Laurent series $\complex(\!(x)\!)[\![y]\!]$. Therefore we can rearrange Equation \eqref{eq:singular division} to obtain in that space
\begin{equation*}
Q(\mc h,\mc k) = \frac{\mc N(\mc h,\mc k) - J(\mc h\mc k)}{\mc D(\mc h,\mc k)}.
\end{equation*}
The \rhs, which will be denoted by $f(\mc h,\mc k)$ below, is a holomorphic function in a neighborhood of $(0,0)$ outside the zero set of $\mc D(\mc h,\mc k)$.
As seen in the previous paragraph, this zero set is locally the graph of the function $\mc{\tilde k(h) \sim -h}$ when $\mc h\to 0$.
It follows that there exists $\delta>0$ such that $f$ is holomorphic in a neighborhood of $(\cdisk_{3 \delta} \setminus \disk_{2\delta}) \times \cdisk_\delta$, where $\cdisk_{3 \delta} \setminus \disk_{2\delta}$ is the closed annulus of outer and inner radii $3\delta$ and $2\delta$ centered at the origin. The usual Cauchy integral formula for the coefficient of Laurent series gives
\begin{equation*}
Q_{m,n} = \m({ \frac{1}{2\pi i} }^2 \oint_{\partial \disk_\delta} \frac{\dd \mc k}{\mc k^{n+1}} \m({
\oint_{\partial \disk_{3\delta}} \frac{\dd \mc h}{\mc h^{m+1}} f(\mc h,\mc k) -
\oint_{\partial \disk_{2\delta}} \frac{\dd \mc h}{\mc h^{m+1}} f(\mc h,\mc k) }.
\end{equation*}
However, by construction, the Laurent series $\sum_{m \in \integer} Q_{m,n} \mc h^m$ does not contain any negative power of $\mc h$. This implies that the integral over $\partial \disk_{2 \delta}$ in the above formula has zero contribution.
Therefore we have
\begin{equation*}
\abs{Q_{m,n}}
=   \abs{ \m({ \frac{1}{2\pi} }^2 \oiint_{\partial \disk_{3\delta} \times \partial \disk_\delta} \frac{\dd \mc h\, \dd \mc k}{\mc h^{m+1} \mc k^{n+1}} f(\mc h,\mc k) }
\le (3 \delta)^{-m} \delta^{-n} \! \cdot \!\! \sup_{\partial \disk_{3\delta} \times \partial \disk_\delta} \!\! |f| \,.
\end{equation*}
It follows that the series $Q(\mc h,\mc k) = \sum Q_{m,n} \mc h^m \mc k^n$ converges in a neighborhood of $(0,0)$.
\end{proof}

\begin{proposition}[Asymptotic expansions of $Z(u,v)$]\label{prop:local expansion}
Let $\epsilon=\epsilon(\nu,\theta)>0$ be a value for which the holomorphicity result of Proposition~\ref{prop:singularity structure} and the  bound in Corollary~\ref{cor:double angle bound} hold. Then for $(x,y)$ varying in $\cslit \times \cslit$ (when $\nu\ge \nu_c$) or $\cslit \times \cDdom$ (when $1<\nu<\nu_c$), we have
\begin{align}
\label{eq:asym Z 1}
Z(u_c x,u_c y) &= Z\1{reg}\01(x,y) + A(u_c x)\cdot (1-y)^{\alpha_0} + O\mb({ (1-y)^{\alpha_0 + \delta} } \qt{for }x\ne 1\text{ and}\hspace{-2ex}    &&\text{as } y\to 1 \,,
\\
\label{eq:asym Z 2}
A(u_c x) &= A\1{reg}(x) + b\cdot (1-x)^{\alpha_1} + O\mb({ (1-x)^{\alpha_1 + \delta} }  &&\text{as } x\to 1 \,,
\\
\label{eq:asym Z diag}
Z(u_c x,u_c y) &= Z\1{reg}\02(x,y) + b\cdot Z\1{hom}(1-x,1-y) + O\m({ \max( \abs{1-x}, \abs{1-y})^{\alpha_2+\delta} }
&&\text{as } (x,y)\to (1,1) \,,
\end{align}
where $b=b(\nu)$ is a number determined by the nonzero constants $\mc Z_{3,3}$, $J_1$ and $J_3$ in Lemma~\ref{lem:singular expansion}, and $Z\1{hom}(s,t)$ is a homogeneous function of order $\alpha_2$ (i.e.\ $Z\1{hom}(\lambda s,\lambda t) = \lambda^{\alpha_2} Z\1{hom}(s,t)$ for all $\lambda>0$) that only depends on the phase of the model. Explicitly:
\begin{equation*}
b(\nu) = \begin{cases}
\mc Z_{3,3}(\nu) & \text{when } \nu > \nu_c     \\
J_1(\nu)         & \text{when } \nu\in(1,\nu_c) \\
-J_3(\nu_c)      & \text{when } \nu = \nu_c
\end{cases}
\qtq{and}
Z\1{hom}(s,t) = \begin{cases}
      s^{3/2} t^{3/2}                   & \text{when } \nu > \nu_c     \\
\frac{s^{1/2} t^{1/2}}{s^{1/2}+t^{1/2}} & \text{when } \nu\in(1,\nu_c) \\
\frac{-st            }{s^{1/3}+t^{1/3}} & \text{when } \nu = \nu_c \,.
\end{cases}
\end{equation*}
%
On the other hand, $Z\1{reg}\01(x,y) = Z(u_cx,u_c) - \partial_v Z(u_cx,u_c) \cdot u_c\cdot (1-y)$ is an affine function of $y$ satisfying
\begin{equation}\label{eq:local integrability}
Z(u_cx,u_c) = O(1) \qtq{and}
\partial_v Z(u_cx,u_c) = O((1-x)^{-1/2}) \qt{when } x\to 1\,,
\end{equation}
whereas $A\1{reg}(x)$ is an affine function of $x$, and $Z\1{reg}\02(x,y) = Z\1{reg}\01(x,y) + Z\1{reg}\01(y,x) - P(x,y)$ for some polynomial $P(x,y)$ that is affine in both $x$ and $y$. The functions $A\1{reg}(x)$ and $P(x,y)$ will be given in the proof of the proposition.
\end{proposition}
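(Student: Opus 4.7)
The plan is to derive all three expansions from the local decomposition of $\mc Z(\mc h,\mc k)$ provided by Lemma~\ref{lem:singular expansion}, via the change of variable $\mc h=(1-x)^\delta$, $\mc k=(1-y)^\delta$. The algebraic input common to all three phases is that the parametric identity $\mc Z(\mc h,\mc k)=\check Z(\check H_c-\psi(\mc h),\check H_c-\psi(\mc k))$ combined with \eqref{eq:vanishing derivatives} yields $\partial_{\mc k}\mc Z(\mc h,0)\equiv 0$, with the additional identity $\partial_{\mc k}^2 \mc Z(\mc h,0)\equiv 0$ at $\nu=\nu_c$. These vanishings suppress the $\mc k^1$ (and at criticality $\mc k^2$) terms in any $\mc k$-expansion of $\mc Z$, producing the correct value of $\alpha_0$. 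I will treat the three phases separately, since the singular structure of $\mc Z$ differs.

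For \eqref{eq:asym Z 1}, I fix $x\in\cslit\setminus\{1\}$ and expand $\mc k\mapsto \mc Z(\mc h,\mc k)$ at $\mc k=0$. Analyticity is immediate at low temperature; at critical and high temperature it follows from the decomposition $\mc Z=Q+J(\mc h\mc k)/\mc D(\mc h,\mc k)$, since $\mc D(\mc h,0)\sim\mc h$ is nonzero in a punctured neighborhood of $\mc h=0$. The vanishing of the low-order $\mc k$-derivatives ensures that the only integer $\mc k$-powers of $\mc Z$ below $\mc k^{\alpha_0/\delta}$ are $\mc k^0$ and $\mc k^{1/\delta}$, which correspond to $(1-y)^0$ and $(1-y)$ and assemble into the affine $Z_{\text{reg}}^{(1)}(x,y)=Z(u_c x,u_c)-u_c\partial_v Z(u_c x,u_c)(1-y)$; the first non-integer power $\mc k^{\alpha_0/\delta}=(1-y)^{\alpha_0}$ has coefficient $A(u_c x)$ by definition. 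The local integrability claim \eqref{eq:local integrability} is then obtained from $\partial_v Z(u_c x,u_c)=-\partial_{\mc k}^2\mc Z(\mc h,0)/(2u_c)$ at non-critical temperature: at high temperature the decomposition yields $\partial_{\mc k}^2\mc Z(\mc h,0)\sim -2J_1/\mc h$, producing the $(1-x)^{-1/2}$ divergence, while at low and critical temperatures this derivative is $O(1)$. For \eqref{eq:asym Z 2}, I re-expand $A(u_c x)$ as $\mc h\to 0$: at low temperature $A(u_c x)=\sum_m \mc Z_{m,3}\mc h^m$ and $\mc Z_{1,n}=0$ immediately give $A_{\text{reg}}(x)=\mc Z_{0,3}+\mc Z_{2,3}(1-x)$ and $b=\mc Z_{3,3}$; at critical and high temperatures, a geometric-series expansion of $1/\mc D$ around $\mc D(\mc h,0)\sim\mc h$ yields $[\mc k^4](J/\mc D)\sim -J_3\mc h$ (hence $b=-J_3$, $\alpha_1=1/3$) and $[\mc k^3](J/\mc D)\sim J_1/\mc h^2$ (hence $b=J_1$, $\alpha_1=-1$) respectively.

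The diagonal expansion \eqref{eq:asym Z diag} is obtained by expanding $\mc Z(\mc h,\mc k)$ jointly as $(\mc h,\mc k)\to(0,0)$. At low temperature the Taylor series suffices: the $\mc h^{2i}\mc k^{2j}$ monomials form the regular part, the pure diagonal singular monomial is $\mc Z_{3,3}\mc h^3\mc k^3=b\cdot Z_{\text{hom}}(1-x,1-y)$, and the mixed singular monomials are absorbed into the identification $Z_{\text{reg}}^{(2)}(x,y)=Z_{\text{reg}}^{(1)}(x,y)+Z_{\text{reg}}^{(1)}(y,x)-P(x,y)$, where $P$ is the affine-in-each-variable polynomial that prevents double-counting of the regular corner. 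At critical and high temperatures, the non-polynomial part $J(\mc h\mc k)/\mc D(\mc h,\mc k)$ has leading behavior $J_l(\mc h\mc k)^l/(\mc h+\mc k)+\text{(subleading)}$ with $l=3$ or $l=1$, which in $(1-x,1-y)$ coordinates is exactly $b\cdot Z_{\text{hom}}(1-x,1-y)$. The main technical obstacle is the uniform remainder estimate $O(\max(|1-x|,|1-y|)^{\alpha_2+\delta})$, because the subleading corrections involve rational expressions of $(\mc h,\mc k)$ whose moduli depend on ratios of the form $\mc h^a\mc k^b/(\mc h+\mc k)^c$. This is precisely where Corollary~\ref{cor:double angle bound} intervenes: the inequality $\max(|\mc h|,|\mc k|)\le M_\theta|\mc h+\mc k|$ (with $\theta=0$ at critical and $\theta>0$ at high temperature) bounds all such ratios uniformly on $\cslit\times\cslit$ or $\cslit\times\cDdom$ respectively, while at low temperature no such bound is needed because the diagonal singular term is a pure monomial in $(\mc h,\mc k)$. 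With this control in place, the remainder of the proof reduces to case-by-case bookkeeping of the local expansions.
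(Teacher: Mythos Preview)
Your proposal is correct and follows essentially the same approach as the paper: expand $\mc Z(\mc h,\mc k)$ using the structural information from Lemma~\ref{lem:singular expansion}, exploit the vanishing derivatives \eqref{eq:vanishing derivatives} to kill the low-order $\mc k$-terms, and invoke Corollary~\ref{cor:double angle bound} to control the $1/(\mc h+\mc k)$-type remainders uniformly at critical and high temperature. One small imprecision: at low temperature the ``regular'' part $Z\1{reg}\02$ is not the collection of $\mc h^{2i}\mc k^{2j}$ monomials but rather the full one-variable expansions $\sum_m \mc Z_{m,0}\mc h^m$, $\sum_m \mc Z_{m,2}\mc h^m\mc k^2$ and their symmetric counterparts (which include half-integer powers of $1-x$); your subsequent identification $Z\1{reg}\02=Z\1{reg}\01(x,y)+Z\1{reg}\01(y,x)-P(x,y)$ is the correct one.
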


\begin{remark}\label{rem:local expansion}
For a fixed $x$, \eqref{eq:asym Z 1} is an univariate asymptotic expansion in the variable $y$. It has the form
\begin{center}
(analytic function of $y$ near $y=1$) + \texttt{constant} $\cdot\, (1-y)^{\alpha_0}$ + $o((1-y)^{\alpha_0})$,
\end{center}
which makes it a suitable input to the classical transfer theorem of analytic combinatorics. More precisely, when we extract the coefficient of $[y^q]$ from \eqref{eq:asym Z 1} using contour integrals on $\partial \slit$, the contribution of the first term will be exponentially small in $q$, whereas the contributions of the second and the third terms will be of order $q^{-(\alpha_0+1)}$ and $o(q^{-(\alpha_0+1)})$, respectively. Similar remarks can be made for \eqref{eq:asym Z 2} \wrt\ the variable $x$.

The asymptotic expansion \eqref{eq:asym Z diag} has a form that generalizes \eqref{eq:asym Z 1} and \eqref{eq:asym Z 2} in the bivariate case.
Instead of being analytic \wrt\ $x$ or $y$, the first term $Z\1{reg}\02(x,y)$ is a linear combination of terms of the form $F(x) G(y)$ or $G(x) F(y)$, where $F(x)$ is analytic in a neighborhood of $x=1$, and $G(x)$ is locally integrable on the contour $\partial \slit$ near $x=1$. (The local integrability is a consequence of \eqref{eq:local integrability}.) As we will see in Section~\ref{sec:proof diagonal}, a term of this form will have an exponentially small contribution to the coefficient of $[x^py^q]$ in the diagonal limit where $p,q \to\infty$ and that $q/p$ is bounded away from $0$ and $\infty$. On the other hand, the homogeneous function $Z\1{hom}(1-x,1-y)$ is a generalization of the power functions $(1-y)^{\alpha_0}$ and $(1-x)^{\alpha_1}$ of the univariate case.
Indeed, the only homogeneous functions of order $\alpha$ of one variable $s$ are constant multiples of $s^\alpha$. We will see in Section~\ref{sec:proof diagonal} that the term $Z\1{hom}(1-x,1-y)$ gives the dominant contribution of order $p^{-(\alpha_2+2)}$ to the coefficient of $[x^py^q]$ in the diagonal limit.
\end{remark}

\begin{proof}
First consider the non-critical temperatures $\nu\ne \nu_c$. In this case we have $\delta=1/2$, and the definition of $\mc Z(\mc h,\mc k)$ reads $Z(u_c x,u_c y) = \mc Z \mn({(1-x)^{1/2}, (1-y)^{1/2}}$.
As seen in the proof of Lemma~\ref{lem:singular expansion}, for any $\mc h\ne 0$ close to zero, the function $\mc k\mapsto \mc Z(\mc h,\mc k)$ is analytic at $\mc k=0$ and satisfies $\partial_{\mc k} \mc Z(\mc h,0)=0$. Hence it has a Taylor expansion of the form
\begin{equation*}
\mc Z(\mc h,\mc k) = \mc Z(\mc h,0)
+ \frac12 \partial_{\mc k}^2 \mc Z(\mc h, 0) \cdot \mc k^2
+ \frac16 \partial_{\mc k}^3 \mc Z(\mc h, 0) \cdot \mc k^3 + O(\mc k^4) \,.
\end{equation*}
Plugging $\mc h = (1-x)^{1/2}$ and $\mc k=(1-y)^{1/2}$ into the above formula gives the expansion \eqref{eq:asym Z 1} with $\alpha_0=3/2$, $Z\1{reg}\01(x,y) = \mc Z((1-x)^{1/2},0) + \frac12 \partial_{\mc k}^2 \mc Z((1-x)^{1/2},0)\cdot (1-y)$, and
\begin{equation*}
A(u_cx) = \frac16 \partial_{\mc k}^3 \mc Z((1-x)^{1/2},0) \,.
\end{equation*}
We can identify the coefficients in the affine function $y\mapsto Z\01\1{reg}(x,y)$ as $\mc Z((1-x)^{1/2},0)= Z(u_c x,u_c)$ and $\frac12 \partial_{\mc k}^2 \mc Z((1-x)^{1/2},0) = -u_c \cdot \partial_v Z(u_cx,u_c)$. The first term is continuous at $x=1$, thus of order $O(1)$ when $x\to 1$. For the second asymptotics of \eqref{eq:local integrability}, it suffices to show that $\partial_{\mc k}^2 \mc Z(\mc h,0) = O(\mc h^{-1})$.

\paragraph{Low temperatures ($\nu>\nu_c$).}
In this case, $\mc Z(\mc h,\mc k) = \sum_{m,n} \mc Z_{m,n}\, \mc h^m \mc k^n$ with $\mc Z_{1,n}=\mc Z_{n,1}=0$. Hence
\begin{equation*}
A(u_c x) = \sum_{m\ne 1} \mc Z_{m,3}(1-x)^{m/2} = \mc Z_{0,3} + \mc Z_{2,3}\cdot (1-x) + \mc Z_{3,3} \cdot (1-x)^{3/2} + O\m({ (1-x)^2 } \,,
\end{equation*}
which gives the expansion \eqref{eq:asym Z 2} with $\alpha_1=3/2$, $A\1{reg}(x) = \mc Z_{0,3} + \mc Z_{2,3}\cdot (1-x)$ and $b=\mc Z_{3,3}>0$.
Moreover, since $\mc Z$ is analytic at $(0,0)$, we have obviously $\partial_{\mc k}^2 \mc Z(\mc h,0) = O(1)$, which is also an $O(\mc h^{-1})$.

On the other hand, by regrouping terms in the expansion $\mc Z(\mc h,\mc k) = \sum_{m,n} \mc Z_{m,n}\, \mc h^m \mc k^n$, one can write
\begin{align*}
\mc Z(\mc h,\mc k) = &
  \sum_{m\ge 0} \mc Z_{m,0}\, \mc h^m
+ \sum_{n\ge 0} \mc Z_{0,n}\, \mc k^n
- \mc Z_{0,0}
+ \sum_{m\ge 2} \mc Z_{m,2}\, \mc h^m \cdot \mc k^2
+ \sum_{n\ge 2} \mc Z_{2,n}\, \mc k^n \cdot \mc h^2
- \mc Z_{2,2}\, \mc h^2 \mc k^2  \\&
+ \mc Z_{3,3}\, \mc h^3 \mc k^3
+ O\m({ \max(|\mc h|,|\mc k|)^7 } \,.
\end{align*}
After plugging in $\mc h = (1-x)^{1/2}$ and $\mc k=(1-y)^{1/2}$, we can identify the first line on the \rhs\ as $Z\1{reg}\02(x,y) = Z\1{reg}\01(x,y) + Z\1{reg}\01(y,x) - P(x,y)$ with $P(x,y) = \mc Z_{0,0} + \mc Z_{2,0} \cdot (1-x) + \mc Z_{0,2} \cdot (1-y) + \mc Z_{2,2} \cdot (1-x)(1-y)$. The term $\mc Z_{3,3} \, \mc h^3 \, \mc k^3$ becomes $b\cdot (1-x)^{3/2}(1-y)^{3/2}$. Thus we obtain the expansion \eqref{eq:asym Z diag} with $\alpha_2=3$ and $Z\1{hom}(s,t) = s^{3/2}t^{3/2}$.

\paragraph{High temperatures ($1<\nu<\nu_c$).}
In this case, we have $\mc Z(\mc h,\mc k) = Q(\mc h,\mc k) + \frac{J(\mc h\mc k)}{\mc D(\mc h,\mc k)}$. Straightforward computation gives that
\begin{align*}
\partial_{\mc k}^2 \mc Z(\mc h,0) &= \partial_{\mc k}^2 Q(\mc h,0)
+ \frac{2 J_2 \mc h^2}{\mc D(\mc h,0)}
- 2J_1 \mc h \cdot \frac{\partial_{\mc k} \mc D(\mc h,0)}{\mc D(\mc h,0)^2}
\\
\partial_{\mc k}^3 \mc Z(\mc h,0) &= \partial_{\mc k}^3 Q(\mc h,0)
+ \frac{6 J_3 \mc h^3}{\mc D(\mc h,0)}
- 6J_2 \mc h^2 \cdot \frac{\partial_{\mc k} \mc D(\mc h,0)}{\mc D(\mc h,0)^2}
- 3J_1 \mc h \cdot \frac{\partial_{\mc k}^2 \mc D(\mc h,0)}{\mc D(\mc h,0)^2}
+ 6J_1 \mc h \cdot \frac{( \partial_{\mc k} \mc D(\mc h,0) )^2}{\mc D(\mc h,0)^3} \,.
\end{align*}
Using the fact that $Q(\mc h,\mc k)$ is analytic at $(0,0)$, and $\mc D(\mc h,0)\sim \mc h$, $\partial_{\mc k} \mc D(0,0) = 1$ and $\partial_{\mc k}^2 \mc D(\mc h,0) = O(1)$ when $\mc h\to 0$, we see that $\partial_{\mc k}^2 \mc Z(\mc h,0) = O(\mc h^{-1})$, whereas all terms in the expansion of $\partial_{\mc k}^3 \mc Z(\mc h,0)$ are of order $O(\mc h^{-1})$, except the last term, which is asymptotically equivalent to $6J_1 \mc h^{-2}$.
It follows that
\begin{equation*}
A(u_c x) = \frac16 \partial_{\mc k}^3 \mc Z((1-x)^{1/2},0)
= J_1\cdot (1-x)^{-1} + O\m({ (1-x)^{-1/2} }\,,
\end{equation*}
which gives the expansion \eqref{eq:asym Z 2} with $\alpha_1=-1$, $A\1{reg}(x) = 0$ and $b=J_1>0$.

On the other hand, Corollary~\ref{cor:double angle bound} and the relations $\check H_c-H\sim \mathtt{cst} \cdot \mc h$ and $\check H_c-K\sim \mathtt{cst} \cdot \mc k$ imply that $\max(|\mc h|,|\mc k|)$ is bounded by a constant times $|\mc h+\mc k|$ when $(x,y)\to (1,1)$ in $\cslit \times \cDdom$. It follows that
\begin{equation}\label{eq:1/D estimate}
\frac{1}{\mc h+\mc k} = O\m({ \max(|\mc h|,|\mc k|)^{-1} }
\qtq{and}
\frac{1}{\mc D(\mc h,\mc k)} =
\frac{1}{\mc h+\mc k + O\m({(\mc h+\mc k)^2} }
= \frac{1}{\mc h+\mc k} + O(1) \,.
\end{equation}
From these we deduce that $\frac{J(\mc h\mc k)}{\mc D(\mc h,\mc k)} = \frac{J_1 \mc h \mc k}{\mc h + \mc k} + O\m({ \max(|\mc h|,|\mc k|)^2 }$.
Thus we can regroup terms in the decomposition $\mc Z(\mc h,\mc k) = Q(\mc h,\mc k) + \frac{J(\mc h\mc k)}{\mc D(\mc h,\mc k)}$ to get
\begin{equation*}
\mc Z(\mc h,\mc k) =
  \sum_{m\ge 0} Q_{m,0} \mc h^m
+ \sum_{n\ge 0} Q_{0,n} \mc k^n - Q_{0,0}
+ \frac{J_1\mc h \mc k}{\mc h + \mc k}
+ O\m({ \max(|\mc h|,|\mc k|)^2 }
\end{equation*}
After plugging in $\mc h = (1-x)^{1/2}$ and $\mc k=(1-y)^{1/2}$, we can identify the first three terms on the \rhs\ as $Z\1{reg}\02(x,y) = Z\1{reg}\01(x,y) + Z\1{reg}\01(y,x) - Q_{0,0}$ up to a term of order $O\m({ \max(|1-x|,|1-y|) }$. The term $\frac{J_1\mc h \mc k}{\mc h + \mc k}$ becomes $b\cdot \frac{(1-x)^{1/2} (1-y)^{1/2}}{(1-x)^{1/2} + (1-y)^{1/2}}$. Thus we obtain \eqref{eq:asym Z diag} with $\alpha_2=1/2$ and $Z\1{hom}(s,t) = \frac{s^{1/2} t^{1/2}}{s^{1/2} + t^{1/2}}$.

\paragraph{Critical temperature ($\nu=\nu_c$).}
At the critical temperature, $\delta=1/3$ and the definition of $\mc Z(\mc h,\mc k)$ reads $Z(u_c x,u_c y) = \mc Z((1-x)^{1/3},(1-y)^{1/3})$. In this case, $\mc k \mapsto \mc Z(\mc h,\mc k)$ has a Taylor expansion of the form
\begin{equation*}
\mc Z(\mc h,\mc k) = \mc Z(\mc h,0)
+ \frac16 \partial_{\mc k}^3 \mc Z(\mc h, 0) \cdot \mc k^3
+ \frac1{24} \partial_{\mc k}^4 \mc Z(\mc h, 0) \cdot \mc k^4 + O(\mc k^5) \,,
\end{equation*}
because $\partial_{\mc k} \mc Z(\mc h,0)=\partial_{\mc k}^2 \mc Z(\mc h,0)=0$.
Plugging $\mc h = (1-x)^{1/3}$ and $\mc k=(1-y)^{1/3}$ into the above formula gives \eqref{eq:asym Z 1} with $\alpha_0=4/3$, $Z\1{reg}\01(x,y) = \mc Z((1-x)^{1/3},0) + \frac16 \partial_{\mc k}^3 \mc Z((1-x)^{1/3},0)\cdot (1-y)$, and
\begin{equation*}
A(u_c x) = \frac1{24} \partial_{\mc k}^4 \mc Z((1-x)^{1/3},0) \,.
\end{equation*}
As in the non-critical case, we identify $\mc Z((1-x)^{1/3},0)= Z(u_c x,u_c)$ and $\frac16 \partial_{\mc k}^3 \mc Z((1-x)^{1/3},0) = -u_c \cdot \partial_v Z(u_cx,u_c)$. The first term is still continuous at $x=1$, thus of order $O(1)$ when $x\to 1$. Let us show that $\partial_{\mc k}^3 \mc Z(\mc h,0)$ is analytic at $\mc h=0$ so that the second term is also continuous.

From the expansion $\mc Z(\mc h,\mc k) = Q(\mc h,\mc k) + \frac{J(\mc h\mc k)}{\mc D(\mc h,\mc k)}$ with $J_1=J_2=0$ and $Q_{1,n}=Q_{2,n}=0$ for all $n$, we obtain
\begin{align*}
\partial_{\mc k}^3 \mc Z(\mc h,0) &= \partial_{\mc k}^3 Q(\mc h,0)
+ \frac{6 J_3 \mc h^3}{\mc D(\mc h,0)}
\\
\frac{1}{24} \partial_{\mc k}^4 \mc Z(\mc h,0) &= Q_{0,4}
+ \sum_{m\ge 3} Q_{m,4} \mc h^m
+ \frac{J_4 \mc h^4}{\mc D(\mc h,0)}
- J_3 \mc h^3 \cdot \frac{\partial_{\mc k} \mc D(\mc h,0)}{\mc D(\mc h,0)^2} \,.
\end{align*}
Recall that $\mc D(\mc h,0)\sim \mc h$ and $\partial_{\mc k} \mc D(0,0)=1$. Then it is not hard to see that $\partial_{\mc k}^3 \mc Z(\mc h,0)$ is analytic at $\mc h=0$.
On the other hand, the second and the third terms in the expansion of $\frac{1}{24} \partial_{\mc k}^4 \mc Z(\mc h,0)$ are $O(\mc h^3)$, whereas the last term is equivalent to $J_3 \mc h$. It follows that
\begin{equation*}
A(u_c x) = \frac1{24} \partial_{\mc k}^4 \mc Z((1-x)^{1/3},0)
= Q_{0,4} - J_3 \cdot (1-x)^{1/3} + O\m({ (1-x)^{2/3} } \,,
\end{equation*}
which gives the expansion \eqref{eq:asym Z 2} with $\alpha_1=1/3$, $A\1{reg}(x) = Q_{0,4}$ and $b=-J_3<0$.

As in the high temperature case, we still have the estimate \eqref{eq:1/D estimate} when $(\mc h,\mc k)\to (0,0)$ such that the corresponding $(x,y)$ varies in $\cslit \times \cslit$.
Moreover, at the critical temperature we have $J_1=J_2=0$ and $Q_{1,n} = Q_{n,1} = Q_{2,n} = Q_{n,2} = 0$ for all $n$. Therefore $\frac{J(\mc h\mc k)}{\mc D(\mc h,\mc k)} = \frac{J_3 (\mc h \mc k)^3}{\mc h + \mc k} + O\m({ \max(|\mc h|,|\mc k|)^6 }$, and we can regroup terms in the decomposition $\mc Z(\mc h,\mc k) = Q(\mc h,\mc k) + \frac{J(\mc h\mc k)}{\mc D(\mc h,\mc k)}$ to get
\begin{align*}
\mc Z(\mc h,\mc k) =&
  \sum_{m\ge 0} Q_{m,0} \mc h^m
+ \sum_{n\ge 0} Q_{0,n} \mc k^n
- Q_{0,0}
+ \sum_{m\ge 3} Q_{m,3} \mc h^m \cdot \mc k^3
+ \sum_{n\ge 3} Q_{3,n} \mc k^n \cdot \mc h^3 \\&
+ \frac{J_3(\mc h \mc k)^3}{\mc h + \mc k}
+ O\m({ \max(|\mc h|,|\mc k|)^6 }
\end{align*}
After plugging in $\mc h = (1-x)^{1/3}$ and $\mc k=(1-y)^{1/3}$, we can identify the terms on the first line of the \rhs\ as $Z\1{reg}\02(x,y) = Z\1{reg}\01(x,y) + Z\1{reg}\01(y,x) - P(x,y)$ up to a term of order $O\m({ \max(|1-x|,|1-y|)^2 }$, where $P(x,y) = Q_{0,0} + Q_{3,0}\cdot (1-x) + Q_{0,3}\cdot (1-y)$. The term $\frac{J_3(\mc h \mc k)^3}{\mc h + \mc k}$ becomes $-b\cdot \frac{(1-x)(1-y)}{(1-x)^{1/3} + (1-y)^{1/3}}$. Thus we obtain \eqref{eq:asym Z diag} with $\alpha_2=5/3$ and $Z\1{hom}(s,t) = \frac{-st}{s^{1/3} + t^{1/3}}$.
\end{proof}

\begin{corollary}\label{cor:A Delta-analytic}
The function $x\mapsto A(u_c x)$ has an analytic continuation on $\slit$.
\end{corollary}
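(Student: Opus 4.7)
The plan is to take the explicit representation for $A(u_c x)$ derived in the proof of Proposition~\ref{prop:local expansion}, namely
\begin{equation*}
A(u_c x) \;=\; \tfrac{1}{n!}\,\partial_{\mc k}^n \mc Z\bigl((1-x)^{\delta},\,0\bigr) ,
\qquad (n,\delta)=(3,1/2)\ \text{if}\ \nu\ne\nu_c,\quad (n,\delta)=(4,1/3)\ \text{if}\ \nu=\nu_c,
\end{equation*}
and show that its right-hand side makes sense and is holomorphic in $x$ on all of $\slit$, not merely near $x=1$ where the formula was originally derived. Using the relation $\mc Z(\mc h,\mc k)=\check Z\bigl(\check H_c-\psi(\mc h),\,\check H_c-\psi(\mc k)\bigr)$ from the proof of Lemma~\ref{lem:singular expansion} (where $\psi$ is analytic at $0$ and $H=\check H_c-\psi((1-x)^{\delta})=\check x_R^{-1}(x)$), Fa\`a di Bruno's formula rewrites $\partial_{\mc k}^n\mc Z(\mc h,0)$ as a polynomial in the partial derivatives $\partial_K^i\check Z(H,\check H_c)$ for $1\le i\le n$, with constant coefficients depending only on the numbers $\psi^{(j)}(0)$.

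Since $\check x_R:\Hdom(R)\to\slit$ is a conformal bijection, the task reduces to showing that each function $H\mapsto \partial_K^i\check Z(H,\check H_c)$ is holomorphic on $\Hdom(R)$. The key step is to invoke Remark~\ref{rem:extended holomorphicity Z}: for any compact $\mathcal K\subset \Hdom(R)$ there exists a neighborhood $\mathcal V$ of $\cHdom[0](R)$ on which $\check Z$ is jointly holomorphic. Because $\check H_c\in \partial\Hdom[0](R)\subset\cHdom[0](R)$, the set $\mathcal V$ contains an open neighborhood of $\check H_c$, so all partial $K$-derivatives of $\check Z$ exist at $K=\check H_c$ and depend holomorphically on $H$ in the interior of $\mathcal K$; exhausting $\Hdom(R)$ by such compacta yields the desired holomorphicity on all of $\Hdom(R)$. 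Composing with the holomorphic inverse $\check x_R^{-1}$ and plugging back into the Fa\`a di Bruno expression produces a holomorphic function $\tilde A$ on $\slit$.

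It remains to verify that $\tilde A$ genuinely continues the original $x\mapsto A(u_c x)=\sum_p a_p(\nu)u_c(\nu)^p x^p$. For this I would observe that both functions are holomorphic on $\disk\subset\slit$ and agree on a small neighborhood of $1$ inside $\disk$ (by the very derivation of the formula in Proposition~\ref{prop:local expansion}); the identity theorem on the connected domain $\disk$ then forces agreement everywhere, so $\tilde A$ is the desired analytic continuation. The main subtlety lies in the previous step, the invocation of Remark~\ref{rem:extended holomorphicity Z}: one must check carefully that the neighborhood $\mathcal V$ really is an \emph{open} neighborhood of the boundary point $\check H_c$, so that $\partial_K^i\check Z(\cdot,\check H_c)$ is a well-defined analytic object. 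This is precisely what allows us to bypass the singular behavior of $\mc Z(\mc h,\mc k)$ at the origin in the high-temperature regime, where a naive use of the decomposition $\mc Z=Q+J/\mc D$ from Lemma~\ref{lem:singular expansion} would introduce spurious denominators involving $\mc D(\mc h,0)$.
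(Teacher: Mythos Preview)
Your proposal is correct and follows essentially the same route as the paper. Both arguments start from the identity $A(u_c x)=\tfrac{1}{m!}\partial_{\mc k}^m\mc Z((1-x)^\delta,0)$, translate it through the parametrization to a statement about $\check Z$ near $K=\check H_c$, and invoke Remark~\ref{rem:extended holomorphicity Z} to obtain analyticity at every $H\in\Hdom(R)$; the conformal bijection $\check x_R^{-1}$ then pulls this back to $\slit$. The only cosmetic difference is that you pass through Fa\`a di Bruno to reduce to the functions $H\mapsto\partial_K^i\check Z(H,\check H_c)$, whereas the paper argues directly that $\mc Z$ is jointly analytic at each point $(\mc h_*,0)$ and then differentiates; your extra identity-theorem step making the agreement with the power series explicit is a welcome clarification that the paper leaves implicit.
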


\begin{proof}
We have seen in the proof of Proposition~\ref{prop:local expansion} that $A(u_c x) = \frac1{m!} \partial_{\mc k}^m \mc Z((1-x)^\delta,0)$, where $m=\frac1\delta+1$ is equal to $3$ when $\nu\ne \nu_c$, and equal to $4$ when $\nu=\nu_c$. The change of variable $\mc h = (1-x)^\delta$ defines a conformal bijection from $x\in \slit$ to some simply connected domain $\mc U_\epsilon$ whose boundary contains the point $\mc h=0$.

In the proof of Lemma~\ref{lem:singular expansion}, we have shown that the mapping $h\mapsto \mc h = \mn({1-\check x(\check H_c-h)}^\delta$ has an analytic inverse $\psi(\mc h)$ in a neighborhood of $\mc h=0$ such that $\mc Z(\mc h,\mc k) = \check Z(\check H_c -\psi(\mc h),\check H_c -\psi(\mc k))$.
Let $\Psi(\mc h)=\check H_c-\psi(\mc h)$, then $\Psi$ is a local analytic inverse of the mapping $H\mapsto \m({1-\check x(H)}^\delta$, and
\begin{equation}\label{eq:Z-tilde = Z-hat(psi) bis}
\mc Z(\mc h,\mc k) = \check Z(\Psi(\mc h),\Psi(\mc k)) \,.
\end{equation}
By Lemma~\ref{lem:unique dominant}, $\check x$ defines a conformal bijection from $\Hdom$ to $\slit$. On the other hand, $x\mapsto (1-x)^\delta$ is a conformal bijection from $\slit$ to $\mc U_\epsilon$. It follows that $\Psi$ can be extended to a conformal bijection from $\mc U_\epsilon$ to $\Hdom$.

Now fix some $x_*\in \slit$ and the corresponding $\mc h_* = (1-x_*)^\delta \in \mc U_\epsilon$ and $H_* =\Psi(\mc h_*) \in \Hdom$.
Let $\mc K \subset \Hdom$ be a compact neighborhood of $H_*$. According to Remark~\ref{rem:extended holomorphicity Z}, there exists an open set $\mc V$ containing $\cHdom[0]$ such that $\check Z$ is holomorphic in $\mc K\times \mc V$. As $\check H_c\in \mc V$, this implies in particular that $\check Z$ is analytic at $(H_*,\check H_c)$.
Since $\Psi(\mc h_*)=H_*$ and $\Psi(0)=\check H_c$, and we have seen that $\Psi$ is analytic at both $\mc h_*$ and $0$, the relation \eqref{eq:Z-tilde = Z-hat(psi) bis} implies that $\mc Z$ is analytic at $(\mc h_*,0)$. It follows that $A(u_c x) = \frac{1}{m!} \partial_{\mc k}^m \mc Z((1-x)^\delta,0)$ is analytic at $x=x_*$.
\end{proof}

\begin{corollary}
A parametrization of $x\mapsto A(u_c x)$ is given by $x=\check x(H)$ and
\begin{equation*}
\check A(H) = \begin{cases}
\frac{1}{\check x_2^{3/2}} \mB({ \check Z_3(H) - \frac{\check x_3}{\check x_2} \check Z_2(H) } & \text{when }\nu \ne \nu_c \\
 \frac{1}{\check x_3^{4/3}} \mB({ \check Z_4(H) - \frac{\check x_4}{\check x_3} \check Z_3(H) } & \text{when }\nu = \nu_c
\end{cases}
\end{equation*}
where $\check x_n$ are defined as in Lemma~\ref{lem:singular expansion}, and $\check Z_n(H)$ are defined by the Taylor expansion $\check Z(H,\check H_c-k) = \sum_n \check Z_n(H) k^n$.
\end{corollary}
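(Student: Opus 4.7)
The strategy is to combine the formula for $A(u_c x)$ in the proof of Proposition~\ref{prop:local expansion} with the identity \eqref{eq:Z-tilde = Z-hat(psi) bis} from the proof of Corollary~\ref{cor:A Delta-analytic}, and then perform a direct Lagrange-inversion calculation. Recall that $A(u_c x)=\frac{1}{m!}\partial_{\mc k}^m \mc Z((1-x)^\delta,0)$ where $m=3$ in the non-critical case and $m=4$ at the critical temperature, and that $\mc Z(\mc h,\mc k)=\check Z(\Psi(\mc h),\Psi(\mc k))$ with $\Psi$ the local inverse of $H\mapsto (1-\check x(H))^\delta$ sending $0$ to $\check H_c$. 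Under the parametrization $x=\check x(H)$, we have $\mc h=(1-\check x(H))^\delta$ and thus $\Psi(\mc h)=H$, so $\check A(H)$ is simply the coefficient $[\mc k^m]\mc Z(\mc h,\mc k)$ expressed in the variable $H$ via the change of coordinates.

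The computation proceeds as follows. Setting $g(\mc k):=\check H_c-\Psi(\mc k)$, we expand $\check Z(H,K)=\sum_n \check Z_n(H)(\check H_c-K)^n$ so that $\mc Z(\mc h,\mc k)=\sum_n\check Z_n(H)\, g(\mc k)^n$, and therefore
\begin{equation*}
\check A(H)\,=\,[\mc k^m]\,\mc Z(\mc h,\mc k)\,=\,\sum_{n\ge 0}\check Z_n(H)\cdot[\mc k^m]\,g(\mc k)^n.
\end{equation*}
By the vanishing identities \eqref{eq:vanishing derivatives}, $\check Z_1(H)=-\partial_K\check Z(H,\check H_c)\equiv 0$ for all $\nu>1$, and additionally $\check Z_2(H)=\tfrac12\partial_K^2\check Z(H,\check H_c)\equiv 0$ when $\nu=\nu_c$. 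This eliminates all terms except $n\in\{2,3\}$ (non-critical) or $n\in\{3,4\}$ (critical). Inverting $\mc h=\check x_2^{1/2}(\check H_c-H)+\check x_3(\check H_c-H)^2+\cdots$ (resp.\ $\mc h=\check x_3^{1/3}(\check H_c-H)+\tfrac{\check x_4}{3\check x_3^{2/3}}(\check H_c-H)^2+\cdots$ at $\nu_c$) by Lagrange inversion yields $g(\mc k)=\check x_2^{-1/2}\mc k-\tfrac{\check x_3}{2\check x_2^{2}}\mc k^2+O(\mc k^3)$ (resp.\ $g(\mc k)=\check x_3^{-1/3}\mc k-\tfrac{\check x_4}{3\check x_3^{5/3}}\mc k^2+O(\mc k^3)$). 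Multiplying out gives the required coefficients $[\mc k^3]g^3=\check x_2^{-3/2}$ and $[\mc k^3]g^2=-\check x_3/\check x_2^{5/2}$ (resp.\ $[\mc k^4]g^4=\check x_3^{-4/3}$ and $[\mc k^4]g^3=-\check x_4/\check x_3^{7/3}$), and assembling the two terms reproduces the stated formula for $\check A(H)$.

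Finally, to upgrade this from a local identity near $\check H_c$ to a statement valid on all of $\Hdom$, observe that the right-hand side of the displayed formula is a rational function of $H$, while by Corollary~\ref{cor:A Delta-analytic} the function $x\mapsto A(u_c x)$ is holomorphic on $\slit$, hence $H\mapsto A(u_c\check x(H))$ is holomorphic on $\Hdom$. The two sides agree as convergent power series in $\check H_c-H$ by the previous paragraph, so by analytic continuation they agree throughout $\Hdom$. The only technical point to watch is the bookkeeping of the Lagrange-inversion coefficients and the sign conventions in the definitions of $\check x_n$ and $\check Z_n$; there is no conceptual obstacle beyond this routine computation.
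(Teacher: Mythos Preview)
Your proof is correct and follows essentially the same route as the paper: express $\check A(H)$ as the coefficient $[\mc k^m]$ of $\check Z(H,\check H_c-\psi(\mc k))$, compute the first two Taylor coefficients of $\psi$ by Lagrange inversion, use the vanishing of $\check Z_1(H)$ (and of $\check Z_2(H)$ at $\nu_c$) to reduce to two terms, and read off the result. One small slip: in the non-critical case the expansion of $\mc h$ in powers of $h=\check H_c-H$ should have second coefficient $\tfrac{\check x_3}{2\check x_2^{1/2}}$, not $\check x_3$; fortunately your stated inverse $g(\mc k)=\check x_2^{-1/2}\mc k-\tfrac{\check x_3}{2\check x_2^{2}}\mc k^2+O(\mc k^3)$ is the correct one, so the error does not propagate.
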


\begin{proof}
We have seen in the previous proof that $A(u_c x) = \frac{1}{m!} \partial_{\mc k} \mc Z((1-x)^\delta,0)$ with $m=3$ if $\nu\ne \nu_c$ and $m=4$ if $\nu=\nu_c$. Moreover, $\mc Z$ satisfies $\mc Z(\mc h,\mc k) = \check Z(\check H_c -\psi(\mc h), \check H_c -\psi(\mc k))$, where $\psi(h)$ is the local inverse of $h\mapsto (1-\check x(\check H_c-h))^\delta$. It follows that
\begin{equation}\label{eq:A as Taylor coefficient}
\check A(H) \equiv A(u_c \cdot \check x(H)) = \frac{1}{m!} \partial_{\mc k}^m \check Z\m({ H,\check H_c - \psi(\mc k)}\big|_{\mc k=0 \ .}
\end{equation}
Using the definition of the coefficients $\check x_n$ and the Lagrange inversion formula, it is not hard to obtain that
\begin{equation*}
\psi(\mc k) = \begin{cases}
\frac{1}{\check x_2^{1/2}} \mc k - \frac{\check x_3}{2 \check x_2^2} \mc k^2 + O(\mc k^3)       & \text{when }\nu \ne \nu_c \\
\frac{1}{\check x_3^{1/3}} \mc k - \frac{\check x_4}{3 \check x_3^{5/3}} \mc k^2 + O(\mc k^3) & \text{when }\nu = \nu_c \,.
\end{cases}
\end{equation*}
Now plug $k=\psi(\mc k)$ into $\check Z(H,\check H_c-k) = \sum_n \check Z_n(H) k^n$, and compute the Taylor expansion in $\mc k$ while taking into account the fact that $\check Z_1(H)=0$ for all $\nu$ and $\check Z_2(H)=0$ when $\nu=\nu_c$ (see Equation~\eqref{eq:vanishing derivatives}). According to \eqref{eq:A as Taylor coefficient}, $\check A(H)$ is given by the coefficient of $\mc k^m$ in this Taylor expansion. Explicit expansion gives the expressions in the statement of the corollary.
\end{proof}

\section{Coefficient asymptotics of $Z(u,v,\nu)$ --- proof of Theorem~\ref{thm:asympt}}\label{sec:coeff asymp}

Theorem~\ref{thm:asympt} gives the asymptotics of $z_{p,q}$ when $p,q\to \infty$ in two regimes: either $p \to\infty$ after $q \to\infty$, or $p \to\infty$ and $q \to\infty$ simultaneously while $q/p$ stays in some compact interval $[\lambda_{\min},\lambda_{\max}] \subset (0,\infty)$. We will call the first case \emph{two-step asymptotics}, and the second case \emph{diagonal asymptotics}. Let us prove the two cases separately.

\subsection{Two-step asymptotics}\label{sec:proof two-step}

At the critical temperature $\nu=\nu_c$, the two-step asymptotics of $z_{p,q}$ has already been established in \cite{CT20}. The basic idea is to apply the classical transfer theorem \cite[Corollary VI.1]{FS09} to the function $y \mapsto Z(u_c x,u_c y)$ to get the asymptotics of $z_{p,q}$ when $q\to\infty$, and then to the function $x\mapsto A(u_c x)$ to get the asymptotics of $a_p$ when $p\to\infty$.
Proposition~\ref{prop:singularity structure}~and~\ref{prop:local expansion} provide all the necessary input for extending the same schema of proof to non-critical temperatures.

\begin{proof}[Proof of Theorem~\ref{thm:asympt} --- two-step asymptotics]
According to Proposition~\ref{prop:singularity structure}, for any fixed $x\in \slit$, the function $y\mapsto Z(u_c x,u_c y)$ is holomorphic in the $\Delta$-domain $\Ddom$. And \eqref{eq:asym Z 1} of Proposition~\ref{prop:local expansion} states that, as $y \to 1$ in $\Ddom$, the dominant singular term in the asymptotic expansion of $y\mapsto Z(u_c x,u_c y)$ is $A(u_c x) \cdot (1-y)^{\alpha_0}$.
It follows from the transfer theorem that
\begin{equation}\label{eq:asym Z_q}
u_c^q \cdot Z_q(u_c x) \eqv q \frac{A(u_c x)}{\Gamma(-\alpha_0)} \cdot q^{-(\alpha_0+1)}
\end{equation}
(Recall that $Z_q(u)$ is the coefficient of $v^q$ in the generating function $Z(u,v)$.) The above asymptotics is valid for all $x\in \cslit \setminus \{1\}$. It does not always hold at $x=1$ because $A(u_c)=\infty$ in the high temperature regime.
However, if we replace $x$ by $\frac{u_0}{u_c}x$ for some arbitrary $u_0\in (0,u_c)$, then the asymptotics is valid for all $x\in \cslit$. Then, by dividing the asymptotics by the special case of itself at $x=1$, we obtain the convergence
\begin{equation*}
\frac{Z_q(u_0 x)}{Z_q(u_0)} \cv[]q \frac{A(u_0 x)}{A(u_0)}
\end{equation*}
for all $x\in \cslit$. For each $q$, the \lhs\ is the generating function of a nonnegative sequence $\m({ \frac{u_0^p \cdot z_{p,q}}{Z_q(u_0)} }_{p\ge 0}$ which always sums up to $1$ (that is, a probability distribution on $\natural$). According to a general continuity theorem \cite[Theorem~IX.1]{FS09}, this implies the convergence of the sequence term by term:
\begin{equation*}
\frac{u_0^p \cdot z_{p,q}}{ Z_q(u_0) } \cv[]q \frac{u_0^p \cdot a_p}{A(u_0)}
\end{equation*}
for all $p\ge 0$. On the other hand, \eqref{eq:asym Z_q} implies that $u_c^q\cdot Z_q(u_0) \eqv q \frac{A(u_0)}{\Gamma(-\alpha_0)} \cdot q^{-(\alpha_0+1)}$. Multiplying this equivalence with the above convergence gives the asymptotics of $z_{p,q}$ when $q\to\infty$ in Theorem~\ref{thm:asympt}.

The asymptotics of $a_p$ in Theorem~\ref{thm:asympt} is a direct consequence of the transfer theorem, given the asymptotic expansion \eqref{eq:asym Z 2} of $x\mapsto A(u_c x)$ in Proposition~\ref{prop:local expansion} and its $\Delta$-analyticity in Corollary~\ref{cor:A Delta-analytic}.
\end{proof}

\subsection{Diagonal asymptotics}\label{sec:proof diagonal}
\newcommand{\Vet}{V_{\epsilon,\theta}}
\newcommand{\Ve}{V_\epsilon}

In the diagonal limit, we have not found a general transfer theorem in the literature that allows one to deduce asymptotics of the coefficients $z_{p,q}$ from asymptotics of the generating function $Z(u_cx,u_cy)$.
However, it turns out that with the ingredients given in Proposition~\ref{prop:singularity structure}~and~\ref{prop:local expansion}, we can generalize the proof of the classical transfer theorem in \cite{FS09} to the diagonal limit in the case of the generating function $Z(u_cx,u_cy)$. Let us first describe (a simplified version of) the proof in \cite{FS09}, before generalizing it to prove the diagonal asymptotics in Theorem~\ref{thm:asympt}:

Given a generating function $F(x) = \sum_n F_n x^n$ with a unique dominant singularity at $x=1$ and an analytic continuation up to the boundary of a $\Delta$-domain $\cDdom$, one first expresses the coefficients of $F(x)$ as contour integrals on the boundary of $\Ddom$
\begin{equation*}
F_n = \frac{1}{2\pi i} \oint_{\partial \Ddom} \frac{F(x)}{x^{n+1}} \dd x \,.
\end{equation*}
Next, one shows that the integral on the circular part of $\Ddom$ is exponentially small in $n$ and therefore
\begin{equation*}
F_n = \frac{1}{2\pi i} \int_{\Vet} \frac{F(x)}{x^{n+1}} \dd x + O \m({ (1+\epsilon)^{-n} }\,,
\end{equation*}
where $\Vet = \partial \Ddom \setminus (1+\epsilon)\cdot \partial \disk$ is the rectilinear part of the contour $\partial \Ddom$ (see Figure~\refp{b}{fig:V-path}). Then one plugs the asymptotic expansion of $F(x)$ when $x\to 1$ into the integral. One shows that any term that is analytic at $x=1$ in the expansion will have an exponentially small contribution, and terms of the order $(1-x)^\alpha$ and $O\m({ (1-x)^\alpha }$ have contributions of the order $n^{-(\alpha+1)}$ and $O\m({ n^{-(\alpha+1)} }$, respectively.

\begin{proof}[Proof of Theorem~\ref{thm:asympt} --- diagonal asymptotics]
By Proposition~\ref{prop:singularity structure}, the function $(x,y)\mapsto Z(u_cx,u_cy)$ is holomorphic in $\cslit \times \cDdom$, and hence we can express the coefficient $[x^py^q] Z(u_cx,u_cy)$ as a double contour integral and deform the contours of integral to the boundary of that domain. This gives
\begin{equation*}
u_c^{p+q} \cdot z_{p,q} =
\m({\frac{1}{2\pi i}}^2 \oiint_{\partial \slit \times \partial \Ddom} \frac{Z(u_cx,u_cy)}{x^{p+1}y^{q+1}} \dd x\dd y \,.
\end{equation*}

First, let us show that the contour integral can be restricted to a neighborhood of the dominant singularity $(x,y)=(1,1)$ with an exponentially small error.
Let $\Ve = \partial \slit \setminus (1+\epsilon)\cdot \partial \disk$ be the rectilinear portion of the contour $\partial \slit$.
It consists of two oriented line segments living in the Riemann sphere with a branch cut along $(1,\infty)$.
Similarly, define $\Vet = \partial \Ddom \setminus (1+\epsilon)\cdot \partial \disk$. The two paths $\Ve$ and $\Vet$ are depicted in Figure~\refp{a--b}{fig:V-path}.
For all $(x,y)\in \partial \slit \times \partial \Ddom$, we have $|x|\ge 1$ and $|y|\ge 1$. Moreover, if $(x,y)\notin \Ve \times \Vet$, then either $|x|=1+\epsilon$ or $|y|=1+\epsilon$. Since $Z(u_cx,u_cy)$ is continuous on $\partial \slit \times \partial \Ddom$, it follows that
\begin{align*}
&\ \abs{
\m({\frac{1}{2\pi i}}^2
\iint_{(\partial \slit \times \partial \Ddom) \setminus (\Ve \times \Vet)}
\frac{Z(u_cx,u_cy)}{x^{p+1}y^{q+1}} \dd x\dd y
}
\\ \le &\
\m({\frac{1}{2\pi}}^2
\sup_{(x,y)\in \partial \slit \times \partial \Ddom}
\hspace{-8mm} \abs{Z(u_cx,u_cy)} \hspace{4mm}
\cdot (1+\epsilon)^{-\min(p,q)}
\ = \ O\m({ (1+\epsilon)^{-\lambda_{\min} p} }
\end{align*}
where we assume \wlg\ $\lambda_{\min}\le 1$, so that $\min(p,q)\ge  \lambda_{\min} p$ whenever $q/p \in [\lambda_{\min},\lambda_{\max}]$.
Thus we can forget about the integral outside $\Ve \times \Vet$ with an exponentially small error in the diagonal limit.

Using the expansion \eqref{eq:asym Z diag} in Proposition~\ref{prop:local expansion}, we can decompose the integral on $\Ve \times \Vet$ as
\begin{equation*}
\m({\frac{1}{2\pi i}}^2
\iint_{\Ve \times \Vet}
\frac{Z(u_cx,u_cy)}{x^{p+1}y^{q+1}} \dd x\dd y
\ = \ I\1{reg} + b\cdot I\1{hom} + I\1{rem}
\end{equation*}
where $I\1{reg}$, $I\1{hom}$ and $I\1{rem}$ are defined by replacing $Z(u_cx,u_cy)$ in the integral on the \lhs\ by $Z\02\1{reg}(x,y)$, $Z\1{hom}(1-x,1-y)$ and $O(\max(|1-x|,|1-y|)^{\alpha_2+\delta})$ respectively.

As mentioned in Remark~\ref{rem:local expansion}, $Z\02\1{reg}(x,y)$ is a linear combination of terms of the form $F(x)G(y)$ or $G(x)F(y)$, where $F$ is analytic in a neighborhood of $1$, and $G$ is integrable on $\Ve$ and $\Vet$ for $\epsilon$ small enough.
Consider the component of $I\1{reg}$ corresponding to one such term: the integral factorizes as
\begin{equation}\label{eq:factorized iint}
\iint_{\Ve \times \Vet}
\frac{F(x)G(y)}{x^{p+1}y^{q+1}} \dd x\dd y
\ =\
\m({ \int_{\Ve } \frac{F(x)}{x^{p+1}} \dd x } \cdot
\m({ \int_{\Vet} \frac{G(y)}{y^{q+1}} \dd y }\,.
\end{equation}
Since $F$ is analytic in a neighborhood of $1$, we can deform the contour of integration $\Ve$ in the first factor away from $x=1$, so that it stays away from a disk of radius $r>1$ centered at the origin. It follows that the integral is bounded as an $O(r^{-p})$.
On the other hand, the second integral is bounded by a constant $\int_{\Vet}|G(y)|\dd |y|<\infty$ thanks to the integrability of $G$ on $\Vet$.
Hence the \lhs\ of \eqref{eq:factorized iint} is also an $O(r^{-p})$. Since $I\1{reg}$ is a linear combination of terms of this form, we conclude that there exists $r_*>1$ such that $I\1{reg} = O(r_*^{-p})$ when $p,q\to \infty$ and $\frac qp\in [\lambda_{\min},\lambda_{\max}]$.

\begin{figure}
\centering
\includegraphics[scale=1]{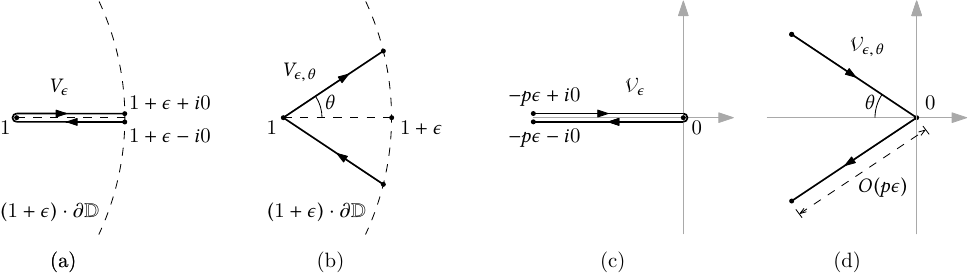}
\caption{The paths of integration $\Ve$, $\Vet$ and $\mc \Ve$, $\mc \Vet$.}
\label{fig:V-path}
\end{figure}

Next, let us prove that $I\1{rem} = O(p^{-\alpha_2+2+\delta})$ in the same limit. Consider the change of variables $s=p(1-x)$ and $t=p(1-y)$, and denote by $\mc \Ve$ and $\mc \Vet$ respectively the images of $\Ve$ and $\Vet$ under this change of variable, as in Figure~\refp{c--d}{fig:V-path}. (Notice that these paths now depend on $p$.)
Then $I\1{rem}$ can be written as
\begin{equation*}
I\1{rem} = \m({ \frac{1}{2\pi i} }^2 \iint_{\mc \Ve \times \mc \Vet} \frac{ O\m({ \max(p^{-1}|s|,p^{-1}|t|)^{\alpha_2+\delta} } }{ (1-p^{-1}s)^{p+1} (1-p^{-1}t)^{q+1} }  \frac{\dd s\dd t}{p^2} \,.
\end{equation*}
On the one hand, there exists a constant $C_1$ such that $\abs{ O\m({ \max(p^{-1}|s|,p^{-1}|t|)^{\alpha_2+\delta} } } \le C_1 \cdot p^{-(\alpha_2+\delta)} \cdot \m({ |s| + |t|}^{\alpha_2+\delta}$ for all $(s,t)\in \mc \Ve \times \mc \Vet$. On the other hand, since $|s|\le p\epsilon$ and $|t\cos \theta|\le p\epsilon$ for all $(s,t)\in \mc \Ve \times \mc \Vet$, and $q\ge \lambda_{\min} p$, it is not hard to see that
\begin{equation*}
\abs{ \m({1-\frac sp}^{p+1} }
= \m({ 1+\frac{|s|}{p} }^{p+1}
\ge e^{C_2 |s|}
\qtq{and}
\abs{ \m({1-\frac tp}^{q+1} }
\ge \m({ 1+\frac{|t|\cos \theta}{p} }^{q+1} \ge e^{C_2 |t|}
\end{equation*}
for some constant $C_2>0$ that only depends on $\epsilon$, $\theta$ and $\lambda_{\min}$. It follows that
\begin{equation*}
\abs{I\1{rem}} \le \frac{C_1}{4 \pi^2} \cdot p^{-(\alpha_2+2+\delta)} \iint_{\mc \Ve \times \mc \Vet} \m({|s|+|t|}^{\alpha_2+\delta} e^{-C_2(|s|+|t|)} \dd|s|\,\dd|t| \,.
\end{equation*}
The integral on the \rhs\ is bounded by the constant $4\int_0^\infty \dd r_1 \int_0^\infty \dd r_2 \cdot e^{-C_2(r_1+r_2)}\cdot \m({r_1+r_2}^{\alpha_2+\delta} <\infty$. Hence $I\1{rem} = O\m({ p^{-(\alpha_2+2+\delta)} }$.

To estimate the term $I\1{hom}$, we make the same change of variables as for $I\1{rem}$. Since $Z\1{hom}$ is homogeneous of order $\alpha_2$, we have
\begin{equation*}
I\1{hom} = \m({ \frac{1}{2\pi i} }^2 \iint_{\mc \Ve \times \mc \Vet} \frac{ p^{-\alpha_2} Z\1{hom}(s,t) }{ (1-p^{-1}s)^{p+1} (1-p^{-1}t)^{q+1} }  \frac{\dd s\dd t}{p^2} \,
\end{equation*}
Using again the fact that $|s|\le p\epsilon$ and $|t|\cos \theta \le p\epsilon$ for $(s,t)\in \mc \Ve \times \mc \Vet$,
we can expand in the denominator in the integral as $\m({1-p^{-1}s}^{-(p+1)} \m({1-p^{-1}t}^{-(q+1)} \!= \exp \m({s+\frac qpt}  \cdot \m({1+O \m({ \max(|p^{-1}s^2|,|p^{-1}t^2| }}$. More precisely, the big-O means that there exists a constant $C_3$ depending only on $\epsilon$, $\theta$ and $\lambda_{\min}$ such that for all $(s,t)\in \mc \Ve \times \mc \Vet$
\begin{equation*}
\abs{ \m({1-\frac sp}^{-(p+1)} \m({1-\frac tp}^{-(q+1)} - e^{s+\frac qpt} } \le C_3 \frac{(|s|+|t|)^2}p \cdot \abs{e^{s+\frac qpt}} \,.
\end{equation*}
Moreover, there exists $C_4>0$ such that $|e^{s+\frac qpt}| \le e^{-C_4(|s|+|t|)}$ for all $(s,t)\in \mc \Ve \times \mc \Vet$.
It follows that
\begin{align*}
&\ \abs{
    I\1{hom} - \m({ \frac{1}{2\pi i} }^2 \iint_{\mc \Ve \times \mc \Vet}
    p^{-\alpha_2}Z\1{hom} (s,t) \cdot e^{s+\frac qpt} \cdot \frac{\dd s\dd t}{p^2}
}
\\ \le &\
\frac{1}{4\pi^2} \iint_{\mc \Ve \times \mc \Vet} p^{-\alpha_2} |Z\1{hom}(s,t)|\cdot C_3 \frac{(|s|+|t|)^2}p e^{-C_4(|s|+|t|)} \cdot \frac{\dd |s|\, \dd |t|}{p^2}
\\ \le &\
C_5 \cdot p^{-(\alpha_2+3)} \iint_{\mc \Ve \times \mc \Vet} (|s|+|t|)^{\alpha_2+2} e^{-C_4(|s|+|t|)} \dd |s|\, \dd |t| \,.
\end{align*}
where for the last line we used the bound $\abs{Z\1{hom}(s,t)}\le \mathtt{cst}\cdot (|s|+|t|)^{\alpha_2}$, which is a consequence of the fact that $Z\1{hom}(s,t)$ is homogeneous of order $\alpha_2$ and continuous on $\Ve \times \Vet$.
The integral on the last line is bounded by $4 \int_0^\infty \dd r_1 \int_0^\infty \dd r_2 \cdot e^{-C_4(r_1+r_2)}\cdot \m({r_1+r_2}^{\alpha_2+2} <\infty$. Thus we have
\begin{equation*}
I\1{hom} = p^{-(\alpha_2+2)} \cdot \m({ \frac{1}{2\pi i} }^2
    \iint_{\mc \Ve \times \mc \Vet} Z\1{hom} (s,t) e^{s+\frac qpt} \dd s\dd t
    + O\m({ p^{-(\alpha_2+3)} } \,.
\end{equation*}
Let $\mc V_\infty$ and $\mc V_{\infty,\theta}$ be obtained by extending the line segments in $\mc \Ve$ and $\mc \Vet$ to rays joining the origin to the infinity. Thanks to the exponentially decaying factor $e^{s+\frac qpt}$ in the above integral, one can replace the domain $\mc \Ve \times \mc \Vet$ of the integral by $\mc V_\infty \times \mc V_{\infty,\theta}$ while committing an error that is exponentially small in $p$ (recall that $\mc \Ve$ and $\mc \Vet$ depend on $p$). Therefore $I\1{hom} = \tilde c(q/p)\cdot p^{-(\alpha_2+2)} + O \m({ p^{-(\alpha_2+3)} }$, where
\begin{equation*}
\tilde c(\lambda) := \m({ \frac{1}{2\pi i} }^2
\iint_{\mc V_\infty \times \mc V_{\infty,\theta}} Z\1{hom} (s,t) e^{s+\lambda t} \dd s\dd t \,.
\end{equation*}
With the previous estimates for $I\1{reg}$ and $I\1{rem}$, we get
\begin{equation*}
u_c^{p+q} z_{p,q} = b\cdot \tilde c(q/p)\cdot p^{-(\alpha_2+2)} + O \m({ p^{-(\alpha_2+2+\delta)} } \,.
\end{equation*}
where the big-O estimate is uniform for all $p,q\to \infty$ such that $q/p\in [\lambda_{\min},\lambda_{\max}]$.

We finish the proof by computing $\tilde c(\lambda)$ or, in the notation of Theorem~\ref{thm:asympt}, $c(\lambda) = \Gamma(-\alpha_0)\Gamma(-\alpha_1) \cdot \tilde c(\lambda)$. Notice that the value of $\tilde c(\lambda)$ does not depend on the angle $\theta$ appearing in the contour of integration $\mc V_{\infty,\theta}$.

\paragraph{Low temperatures.}
When $\nu > \nu_c$, we have $Z\1{hom}(s,t)=s^{3/2}t^{3/2}$ and Proposition~\ref{prop:singularity structure}~and~\ref{prop:local expansion} allow us take $\theta=0$. Then the double integral defining $\tilde c(\lambda)$ factorizes as
\begin{equation*}
\tilde c(\lambda) =
      \m({ \frac{1}{2\pi i} \int_{\mc V_\infty} s^{3/2} e^s \dd s }
\cdot \m({ \frac{1}{2\pi i} \int_{\mc V_\infty} t^{3/2} e^{\lambda t} \dd t } \,.
\end{equation*}
After the change of variable $t'=\lambda t$ in the second factor, the formula simplifies to $\tilde c(\lambda) \!=\!
\m({ \frac{1}{2\pi i} \int_{\mc V_\infty}\! s^{3/2} e^s \dd s }^2 \lambda^{-5/2}$.
Since the contour $\mc V_\infty$ lives in the Riemann sphere with a branch cut along $(-\infty,0)$, the function $s^{3/2}$ should be understood as its principal branch \wrt\ this branch cut. Therefore
\begin{align*}
\frac{1}{2\pi i} \int_{\mc V_\infty} s^{3/2} e^s \dd s &\ =
\frac{1}{2\pi i} \int_0^\infty \m({ (-r+i0)^{3/2} - (-r-i0)^{3/2} } e^{-r} \dd r
\\ &\ =
\frac{1}{2\pi i} \int_0^\infty \m({ -ir^{3/2} - ir^{3/2} } e^{-r} \dd r
\ =\ -\frac{\Gamma(5/2)}{\pi} \,.
\end{align*}
Recall that in the low temperature regime, $\alpha_0=\alpha_1=3/2$, and by Euler's reflection formula, $\Gamma(5/2)\Gamma(-3/2)=\pi$. It follows that $c(\lambda) = \Gamma(-3/2)^2\cdot \tilde c(\lambda) = \lambda^{-5/2}$.

\paragraph{High temperatures.}
When $\nu\in (1,\nu_c)$, we have $Z\1{hom}(s,t) = \frac{s^{1/2} t^{1/2}}{s^{1/2} + t^{1/2}}$ and thus
\begin{equation*}
\tilde c(\lambda) =
\m({ \frac{1}{2\pi i} }^2 \int_{\mc V_{\infty,\theta}} \m({
    \int_{\mc V_\infty} \frac{s^{1/2}}{s^{1/2} + t^{1/2}} e^s \dd s
} t^{1/2} e^{\lambda t} \dd t \,.
\end{equation*}
The inner integral can be expanded in a similar way as in the low temperature case
\begin{align*}
\int_{\mc V_\infty} \frac{s^{1/2}}{s^{1/2} + t^{1/2}} e^s \dd s
& =
\int_0^\infty \m({ \frac{(-r+i0)^{1/2}}{(-r+i0)^{1/2} + t^{1/2}}
                  -\frac{(-r-i0)^{1/2}}{(-r-i0)^{1/2} + t^{1/2}} } e^{-r} \dd r
\\ & =
\int_0^\infty \m({ \frac{ i r^{1/2}}{ i r^{1/2} + t^{1/2}}
                  -\frac{-i r^{1/2}}{-i r^{1/2} + t^{1/2}} } e^{-r} \dd r
= \int_0^\infty \frac{2i r^{1/2} t^{1/2}}{r+t} e^{-r} \dd r \,.
\end{align*}
Plugging the \rhs\ into the expression of $\tilde c(\lambda)$ and changing the order of the integrals on $r$ and on $t$ yield
\begin{equation*}
\tilde c(\lambda) =
\frac1\pi \int_0^\infty \m({
    \frac{1}{2\pi i} \int_{\mc V_{\infty,\theta}} \frac{t \cdot e^{\lambda t}}{r+t} \dd t
} r^{1/2} e^{-r} \dd r \,.
\end{equation*}
The function $t\mapsto \frac{t\cdot e^{\lambda t}}{r+t}$ is meromorphic on $\complex$ and has a unique (simple) pole at $t=-r$, with a residue of $-r \cdot e^{-\lambda r}$. By closing the contour $\mc V_{\infty,\theta}$ far from the origin in the direction of the negative real axis, we see that the integral on $t$ is given by $-1$ times the residue. Therefore
\begin{equation*}
\tilde c(\lambda) = \frac{1}{\pi} \int_0^\infty r^{3/2} e^{-(1+\lambda)r} \dd r = \frac{\Gamma(5/2)}{\pi} (1+\lambda)^{-5/2} \,.
\end{equation*}
In the high temperature regime, we have $\alpha_0=3/2$ and $\alpha_1=-1$. Thus  $c(\lambda) = \Gamma(-3/2)\Gamma(1) \cdot \tilde c(\lambda) = (1+\lambda)^{-5/2}$.

\paragraph{Critical temperature.}
When $\nu=\nu_c$, we have $Z\1{hom}(s,t) = \frac{-st}{s^{1/3} + t^{1/3}}$ and one can take $\theta = 0$. In the low and high temperature regimes, we have used the relation $\int_{\mc V_\infty} f(x)\dd x = \int_0^\infty \m({f(-r+i0)-f(-r-i0)} \dd r$ to expand integrals on $\mc V_\infty$. By applying this relation to the integral on $s$ and the integral on $t$ simultaneously, we get
\begin{align*}
\tilde c(\lambda) &= \m({ \frac{1}{2\pi i} }^2 \iint_{\mc V_\infty \times \mc V_\infty} \frac{-st}{s^{1/3} + t^{1/3}} e^{s+\lambda t} \dd s \dd t
\\&=
\m({ \frac{1}{2\pi i} }^2 \iint_{(0,\infty)^2}
\m({ \sum_{(\sigma_1,\sigma_2)\in \{-1,+1\}^2}
     \frac{\sigma_1\sigma_2}{(-r_1+\sigma_1 \cdot i0)^{1/3} + (-r_2+\sigma_2 \cdot i0)^{1/3}}}
\cdot (-r_1r_2) \cdot e^{-(r_1+\lambda r_2)} \dd r_1 \dd r_2 \,.
\end{align*}
The principal branch of the function $s^{1/3}$ prescribes that $(-r\pm i0)^{1/3} = r^{1/3} e^{\pm i\frac\pi3}$.
One can check by direct computation that
\begin{equation*}
\sum_{(\sigma_1,\sigma_2)\in \{-1,+1\}^2}
     \frac{\sigma_1\sigma_2}{(-r_1+\sigma_1 \cdot i0)^{1/3}
                           + (-r_2+\sigma_2 \cdot i0)^{1/3}}
= \frac{-3 r_1^{1/3} r_2^{1/3}}{r_1+r_2} \,.
\end{equation*}
Therefore
\begin{equation*}
\tilde c(\lambda) = \m({ \frac{1}{2\pi i} }^2 \cdot 3 \iint_{(0,\infty)^2}
\frac{r_1^{4/3} r_2^{4/3} e^{-(r_1+\lambda r_2)}}{r_1+r_2} \dd r_1 \dd r_2 \,.
\end{equation*}
One can ``factorize'' this double integral using the relation $\frac{1}{r_1+r_2} = \int_0^\infty e^{-r_1 r} e^{-r_2 r} \dd r$\,:
\begin{align*}
\tilde c(\lambda) &= -\frac{3}{4\pi^2} \int_0^\infty
\m({ \int_0^\infty r_1^{4/3} e^{-(1+r)r_1} \dd r_1 }
\m({ \int_0^\infty r_2^{4/3} e^{-(\lambda+r)r_2} \dd r_2 }
\dd r
\\ &= -\frac{3}{4\pi^2} \int_0^\infty
      \m({ \Gamma(7/3) \cdot (1+r)^{-7/3} }
      \m({ \Gamma(7/3) \cdot (\lambda+r)^{-7/3} } \dd r
\\& = -\m({ \frac{\sqrt 3}{2\pi} \Gamma(7/3) }^2
      \int_0^\infty (1+r)^{-7/3} (\lambda+r)^{-7/3} \dd r \,.
\end{align*}
When $\nu=\nu_c$, we have $\alpha_0=4/3$ and $\alpha_1=1/3$. And by Euler's reflection formula, $\Gamma(7/3)\Gamma(-4/3) = \frac{\pi}{\sin(7\pi/3)} = \frac{2\pi}{\sqrt 3}$. It follows that
\begin{equation*}
c(\lambda) = \Gamma(-4/3) \Gamma(-1/3) \cdot \tilde c(\lambda)
= \frac{4}{3} \int_0^\infty (1+r)^{-7/3} (\lambda+r)^{-7/3} \dd r \,.
\qedhere
\end{equation*}
\end{proof}

\section{Peeling processes and perimeter processes}\label{sec:peeling perim}

We recall first the essentials of the \emph{peeling process} for Ising-triangulations with spins on faces, introduced in \cite{CT20}. The peeling process is the central object both in the construction of the local limits and in the proofs of the local convergences. It can be viewed as a deterministic exploration of a fixed map, driven by a \emph{peeling algorithm} $\mathcal{A}$. The basic definition of the process is identical to that of \cite{CT20}, with the exception that in this work, the peeling algorithm $\mathcal{A}$ is defined in a slightly different way. In particular, the algorithm chooses an edge on the explored boundary instead of a vertex, and after revealing the internal face incident to that edge, decides where to continue the peeling. Here, the peeling process can be seen as a decorated map version of the (filled-in) \emph{simple peeling} of undecorated maps \cite{BC21}, where the peeling algorithm first chooses a boundary edge, reveals a face adjacent to it, and finally decides the new unexplored part of the map (in the case when the unexplored part is disconnected by the revealed face). While the algorithm used in \cite{CT20} still works in the low temperature regime, we will need different algorithms in the high temperature regime, as explained in Section~\ref{sec:hightemplimit}. We will also note that, unlike in \cite{CT20}, the different peeling algorithms result different laws of the peeling process.

Throughout this work we assume the following: if an Ising-triangulation has a bicolored boundary, the algorithm $\mathcal{A}$ chooses an edge at the junction of the $\+$ and $\<$ boundary segments on the boundary of the explored map. This edge may either have spin $\+$ or $\<$. It is easy to see that deletion of the chosen edge and exposure of the adjacent face preserves the Dobrushin boundary condition of the map, while another type of a peeling algorithm may complicate the boundary condition. Thus, we call such an algorithm $\mathcal{A}$ \emph{Dobrushin-stable}. We make the following convention: if the algorithm always chooses a \< edge to peel, we denote it by $\algo_\<$; otherwise if it always chooses a \+ edge, we denote it by $\algo_\+$; otherwise, the algorithm is ``mixed'', choosing either type of the edges, and denoted by $\algo_m$.

The choice of the peeling algorithm in each of the temperature regime stems from the different expected interface geometries in the respective regimes. At $\nu=\nu_c$, we already saw in \cite{CT20} that the peeling algorithm $\algo_\<$ is particularly well-suited, due to the fact that we take the limit $q\to\infty$ first, after which there is an infinite \< boundary. For $\nu>\nu_c$, we can still make the same choice. However, for $\nu\in (1,\nu_c)$, we will notice that whether we choose the peeling to explore the left-most or the right-most interface from the root $\rho$, the interface will stay close to the boundary of the half-plane. Thus, in order to explore the local limit by roughly distance layers, we need to combine two different explorations. This leads us to choose a mixed algorithm $\algo_m$. In the first limit $q\to\infty$, however, the simplest choice which works is $\algo_\+$.

\newcommand*{\hl}{\\\cline{1-6}}
\tabulinesep=1.2mm
\newcolumntype{L}{>{\!$\displaystyle}l<{$\!}}
\newcolumntype{S}{>{\!$\displaystyle}l<{$\!\!}}
\newcolumntype{R}{>{\!$\displaystyle}r<{$\!}}
\newcolumntype{C}{>{\!$\displaystyle}c<{$\!}}

\newcommand*{\zp}[1]{t\zz{#1}}
\newcommand*{\zzp}[2]{t\zzz{#1}{#2}}

\newcommand*{\zph}[1]{t\zzh{#1}}
\newcommand*{\zzph}[2]{t\zzzh{#1}{#2}}

\newcommand*{\zq}[1]{t\zz[p,q+1]{#1}}
\newcommand*{\zqh}[1]{t\zz[p+1,q]{#1}}
\newcommand*{\zqo}[1]{t\zz[0,q+1]{#1}}

\newcommand*{\zzq}[2]{t\zzz[p,q+1]{#1}{#2}}
\newcommand*{\zzqh}[2]{t\zzz[p+1,q]{#1}{#2}}
\newcommand*{\zzqo}[2]{t\zzz[0,q+1]{#1}{#2}}

\newcommand*{\ap}[1]{\,\frac{a_{#1}}{a_{p}}\,}
\newcommand*{\aph}[1]{\,\frac{a_{#1}}{a_{p+1}}\,}
\newcommand*{\apo}[1]{\,\frac{a_{#1}}{a_0}}

\newcommand{\pq}[1][P_n,Q_n]{\raisebox{-2pt}{$\!_{#1}\!$}}

\begin{table}[b!]
\begin{center}
\begin{tabu}{|L|L| |L|}
\hline \text{Local convergence}	& \nu\in(1,\nu_c)   & \nu\in[\nu_c,\infty)
\\\hline \prob_{p,q}^\nu\cv{q}\prob_p^\nu	& \algo_\+			& \algo_\<
\\\hline \prob_p^\nu\cv{p}\prob_\infty^\nu	& \algo_m		& \algo_\<
\\\hline \prob_{p,q}^\nu\cv{p,q}\prob_\infty^\nu	& \algo_m^\dagger	&  \algo_\<^\dagger
\\\hline
\end{tabu}
\caption{A summary of the choices of the peeling algorithm in each phase for proving the local convergences. The peeling algorithms $\algo_\+$ and $\algo_\<$ are defined in Section~\ref{sec:peeling}, while $\algo_m$ is defined in Section~\ref{sec:hightemplimit}. The notation $\algo^\dagger$ refers to the variant of the peeling algorithm $\algo$ which targets the vertex $\rho^\dagger$. See the end of Section~\ref{sec:peeling} for definition.}\label{tab:peelingalgo}%
\end{center}\vspace{-0.7em}
\end{table}

\begin{figure}
\centering
\includegraphics[scale=1]{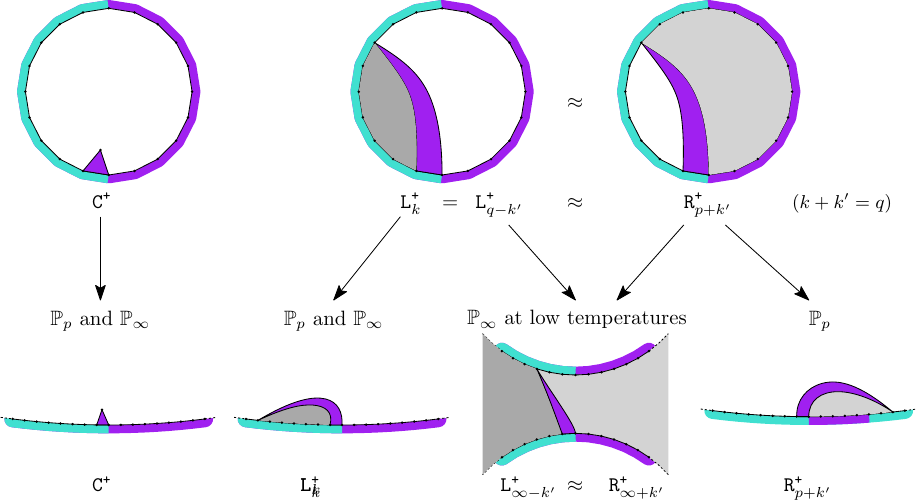}
\caption{Illustration of the peeling events. Only peeling events revealing a \+ face are included.\\ 
Top: peeling events in a finite triangulation with Dobrushin boundary condition. 
The $\approx$ sign indicates different peeling events which only differ by the choice of the unexplored component.\\
Bottom: peeling events in a typical infinite triangulation sampled from the laws $\prob_p$ or $\prob_\infty$. Each arrow indicates that the lower picture can be obtained as a local limit of the upper picture either when $q\to \infty$ or when both $p,q\to \infty$.}\label{fig:Tutte-eq}
\end{figure}

When we take the local limits $q\to\infty$ and $p\to\infty$ one-by-one, we always peel from a boundary with  more \< edges. This is to ensure the peeling process is compatible with the $q=\infty$ case. In the limit $(p,q)\to\infty$, it is more natural to peel from the boundary which contains the vertex $\rho^\dagger$ opposite to the root and at the junction of the \+ and \< boundaries, which can be seen as a point in the infinity. For this purpose, we introduce the \emph{target} $\rho^\dagger$ for the peeling. See the following subsection for a more precise definition. A summary of the peeling algorithms and the existence of the target is presented in Table~\ref{tab:peelingalgo}.

In the following subsection, we define the versions of the peeling process used in this work in the finite setting. After that, we generalize those for infinite Ising triangulations of the half-plane, and study the properties of the associated perimeter processes.

\subsection{Peeling of finite triangulations}\label{sec:peeling}

\paragraph{Peeling along the left-most interface: peeling algorithm $\algo_\<$.} Assume that an Ising-triangulation $\bt$ has at least one boundary edge with spin \<. In this case, the peeling algorithm $\algo_\<$ chooses the edge $e$ with spin \< immediately on the left to the origin. We remove $e$ and reveal the internal face $f$ adjacent to it. If $f$ does not exist, then $\tmap$ is the edge map and $\bt$ has a weight 1 or $\nu$. If $f$ exists, let $*\in\{\+,\<\}$ be the spin on $f$ and $v$ be the vertex at the corner of $f$ not adjacent to $e$. Then the possible positions of $v$ are:
\begin{description}[noitemsep]
\item[Event $\CC^*$:]	$v$ is not on the boundary of $\tmap$;
\item[Event $\RR^*_k$:] $v$ is at a distance $k$ to the right of $e$ on the boundary of $\tmap$; ($0\le k\le p$);
\item[Event $\LL^*_k$:] $v$ is at a distance $k$ to the left of $e$ on the boundary of $\tmap$. ($0\le k< q$).
\end{description} If $p,q<\infty$, we also make the identification $\RR^*_{p\mp k}=\LL^*_{q\pm k}$, which is useful in the sequel. We define $\tilde\steps := \{\cp,\cm\}\cup\{\lp,\lm,\rp,\rn:\ k\ge 0\}$ as the set of peeling events. See Figure \ref{fig:Tutte-eq} for graphical illustration of the peeling events.

The peeling process along the left-most interface $\iroot$ is constructed by iterating the face-revealing operation described above, yielding an increasing sequence $\nseq \emap$ of \emph{explored maps}. In order to iterate the peeling, we apply a rule that chooses one of the two unexplored regions, when the peeling step of type $\RR^*_k$ or $\LL^*_k$ separates the unexplored map into two pieces. Here, we assume that the boundary contains no target vertex which determines the unexplored part (this case is treated separately later). In the case of peeling along the left-most interface, the algorithm $\algo_\<$ chooses the unexplored region with greater number of \< boundary edges (and in case of a tie, the region on the right is chosen). This in particular guarantees that when $q=\infty$ and $p<\infty$, we will choose the unbounded region as the next unexplored map. See Figure \ref{fig:def-peeling} for illustration.

\begin{figure}[t!]
\centering
\includegraphics[scale=1.25]{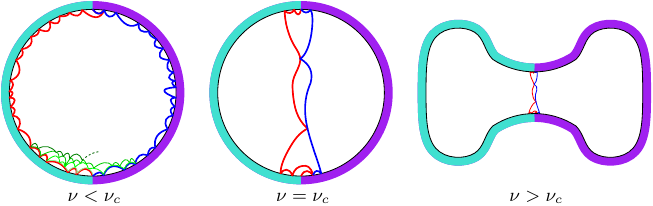}
\caption{Illustration of the interfaces explored by the different versions of the peeling process.\\
Red: left-most interface (explored by $\algo_\<$). Blue: right-most interface (explored by $\algo_\+$). Shades of green: interface explored by $\algo_m$.
}\label{fig:def-peeling}
\end{figure}

We can use the sequence of explored maps $\nseq\emap$ as the definition of the peeling process as follows. At each time $n$, the explored map $\emap_n$ consists of a subset of faces of $\bt$ containing at least the external face and separated from its complementary set by a simple closed path. We view $\emap_n$ as a bicolored triangulation of a polygon with a special uncolored, not necessarily triangular, internal face called the \emph{hole}. It inherits its root and its boundary condition from $\bt$. The complement of $\emap_n$ is called the \emph{unexplored map at time $n$} and denoted by $\umap_n$. It is a bicolored triangulation of a polygon.
Notice that $\umap_n$ may be the edge map, in which case $\emap_n$ is simply $\bt$ in which an edge is replaced by an uncolored digon. This may, however, only happen at the last step of the peeling process.

\begin{table}
\centering
\begin{tabu}{|L|S|C|  |L|S|C| L}
\cline{1-6}
\step &\Prob_{p,q+1}(\Step_1 = \step) &(X_1,Y_1) &
\step &\Prob_{p,q+1}(\Step_1 = \step) &(X_1,Y_1)
\hl \cp	& 		\zp{p+2,q}				&(2,-1)
&	\cm	& \nu \zp{p,q+2} 				&(0,1)
\hl \lp	& 		\zzp{p+1,q-k}{1,k}		&(1,-k-1)
&	\lm	& \nu 	\zzp{p,q-k+1}{0,k+1}		&(0,-k)	&	(0\le k \le \frac{q}{2})
\hl	\rp	& 		\zzp{k+1,0}{p-k+1,q}		&(-k+1,-1)
&	\rn & \nu 	\zzp{k,1}{p-k,q+1}		&(-k,0)	&	(0\le k\le p)
\hl \rp[p+k]	& 		\zzp{p+1,k}{1,q-k}	&(-p+1,-k-1)
&	\rn[p+k]	& \nu 	\zzp{p,k+1}{0,q-k+1}	&(-p,-k)
&(0<k<\frac{q}2)
\hl
\end{tabu}
\caption{Law of the first peeling event $\Step_1$ under $\Prob_{p,q+1}$ and the corresponding $(X_1,Y_1)$, where the peeling is without target. We use the shorthand notations $t=t_c(\nu)$ and $z_{p,q}=z_{p,q}(t,\nu)$.
}\label{tab:prob(p,q)}
\end{table}

We apply this rule recursively starting from $\umap_0=\bt$. At each step, the construction depends on the boundary condition of $\umap_n$:
\begin{enumerate}
\item	If $\umap_n$ has a bichromatic Dobrushin boundary, let $\rho_n$ be the boundary junction vertex of $\umap_n$ with a \< on its left and a \+ on its right ($\rho_0=\rho$). Then $\umap_{n+1}$ is obtained by revealing the internal face of $\umap_n$ adjacent to the boundary edge on the left of $\rho_n$ and, if necessary, the algorithm chooses one of the two unexplored regions according to the \mbox{rule described above}.
\item	If $\umap_n$ has a monochromatic boundary condition of spin \<, then the peeling algorithm $\algo_\<$ chooses the boundary edge with the vertex $\rho_n$ as an endpoint according to some deterministic function of the explored map $\emap_n$, which we specify later in Sections~\ref{sec:lowtemplimit}~and~\ref{sec:hightemplimit}. We then construct $\umap_{n+1}$ from $\umap_n$ and $\rho_n$ in the same way as in the previous case.
\item	If $\umap_n$ has a monochromatic boundary condition of spin \+ or has no internal face, then we set $\emap_{n+1}=\bt$ and terminate the peeling process at time $n+1$.
\end{enumerate}

We denote the law of this peeling process by $\Prob_{p,q}^\nu\equiv\Prob_{p,q}$, where on the right we have dropped the dependence of $\nu$ in order to ease the notation and continue to do so in the sequel (with the exception of Section \ref{sec:orderparam}, where the index $\nu$ will be included for clarity). Let $(P_n,Q_n)$ be the boundary condition of $\umap_n$, and $(X_n,Y_n)=(P_n-P_0,Q_n-Q_0)$. Also, let $\Step_n\in\tilde\steps$ denote the peeling event that occurred when constructing $\umap_n$ from $\umap_{n-1}$. Then the \emph{peeling process} following the left-most interface can also be defined as the random process $\nseq{\Step}$ on $\tilde\steps$, with the law $\Prob_{p,q}$. We view the above quantities as random variables defined on the sample space $\Omega=\bts=\bigcup_{p,q} \bts_{p,q}$. In the sequel, one should understand that any of the sequences $\nseq{\emap}$, $\nseq{\umap}$ and $\nseq{\Step}$ can be viewed as the peeling process, since together with the boundary condition, they contain the same essential information. Table~\ref{tab:prob(p,q)} collects the distribution of the first peeling step $\Step_1$ and the associated perimeter change in the peeling process driven by $\algo_\<$.

\begin{table}[b!]
\begin{center}
\begin{tabu}{|L|S|C|  |L|S|C| L}
\cline{1-6}
\step &\Probh_{p+1,q}(\Step_1=\step) &(X_1,Y_1) &
\step &\Probh_{p+1,q}(\Step_1=\step) & (X_1,Y_1)
\hl \cp	& \nu \zph{p+2,q}				& (1,0)
&	\cm	&		\zph{p,q+2} 				& (-1,2)
\hl \lp	& \nu \zzph{p+1,q-k}{1,k}		& (0,-k)
&	\lm	&		\zzph{p,q-k+1}{0,k+1}		& (-1,-k+1) &(0\le k\le \frac{q}{2})
\hl	\rp	& \nu \zzph{k+1,0}{p-k+1,q}		& (-k,0)
&	\rn &		\zzph{k,1}{p-k,q+1}		&	(-k-1,1)						&(0\le k\le p)
\hl \rp[p+k]	& \nu \zzph{p+1,k}{1,q-k}	& (-p,-k)
&	\rn[p+k]	&		\zzph{p,k+1}{0,q-k+1}	& (-p-1,-k+1)	&(0<k< \frac{q}{2})
\hl
\end{tabu}

\vspace{3ex}

\begin{tabu}{|L|L| |L|L|}
\hline \step	& \Probh_{0,q+1}(\Step_1=\step)& \step  & \Probh_{0,q+1}(\Step_1=\step)
\\\hline \cp	& \zqo{2,q}		&\cm		& \nu \zqo{0,q+2}
\\\hline \lp	& \zzqo{1,q-k}{1,k}	&\lm		& \nu \zzqo{0,q-k+1}{0,k+1}
\\\hline \rp	& \zzqo{1,k}{1,q-k}	&\rn		& \nu \zzqo{0,k+1}{0,q-k+1}
\\\hline
\end{tabu}
\caption{Law of the first peeling event $\Step_1$ under $\Probh_{p+1,q}$ and the corresponding $(X_1,Y_1)$, where the peeling is without target. Due to the possibility that there is no $\+$ edge on the boundary, we also present the step probabilities under the law $\Probh_{0,q+1}$. The notational conventions coincide with Table~\ref{tab:prob(p,q)}.}\label{tab:probh(p,q)}%
\end{center}\vspace{-0.7em}
\end{table}

\paragraph{Peeling along the right-most interface: peeling algorithm $\algo_\+$.} The peeling process along the right-most interface is similar to the previous one, except that the algorithm $\algo_\+$ chooses the \+ edge adjacent to $\rho_n$ if possible. Again, in case there are more than one holes, the algorithm fills in the one with less \< edges by an independent Boltzmann Ising-triangulation, and if the hole has a monochromatic \< boundary, the peeling continues on that according to some deterministic function. A small subtlety here is that the distribution of this peeling process differs from the previous one, such that the step distribution involves a spin-flip due to the deleted boundary edge of different spin. In particular, we also need to take into account that the peeling algorithm chooses a \< edge if the unexplored part has a monochromatic boundary. We denote the distribution of this peeling by $\Probh_{p,q}^\nu\equiv\Probh_{p,q}$. For the explicit probabilities of the first peeling step, see Table~\ref{tab:probh(p,q)}.

\paragraph{Peeling with the target $\rho^\dagger$.} Let $\algo$ be any Dobrushin-stable peeling algorithm (in the sense of the previous paragraphs). Considering the local limits when $p,q\to\infty$ simultaneously, it is convenient to define a \emph{peeling process with a target}, where the target is the vertex $\rho^\dagger$ at the junction of the \< and \+ boundaries opposite to $\rho$. The definition of this peeling process is as in the previous paragraphs, except when the peeling step separates the unexplored map into two pieces: in this case, the unexplored part corresponds to the one containing $\rho^\dagger$, and the other one is filled. If $\rho^\dagger$ is contained in both of the separated regions, the one with more \< edges is chosen for the unexplored part. We denote by $\algo^\dagger$ this targeted variant of the peeling algorithm $\algo$.

\subsection{Peeling of infinite triangulations}\label{seq:infinitepeeling} Obtaining the limits of the peeling process for a general temperature $\nu$ is just a straightforward generalization of the analysis in our previous work \cite{CT20}. Indeed, the asymptotics of Theorem~\ref{thm:asympt} give the limit according to the recipe given in \cite{CT20}. The first limit $q\to\infty$ yields exactly the same form for the peeling process, where the step probabilities only depend on $\nu$. Following the notation of \cite{CT20}, let  $\Prob^\nu_p(\Step_1=\step):=\lim_{q\to\infty}\Prob^\nu_{p,q}(\Step_1=\step)$ and $\Prob^\nu_\infty(\Step_1=\step):=\lim_{p\to\infty}\Prob^\nu_p(\Step_1=\step)$. We again make the shorthand conventions $\Prob^\nu_p\equiv\Prob_p$ and $\Prob^\nu_\infty\equiv\Prob_\infty$ which we continue to use in the sequel except in Section \ref{sec:orderparam}. The quantities after the first limit $q\to\infty$ are collected in Table~\ref{tab:prob(p)}.

\begin{table}[h]
\centering
\begin{tabu}[t]{|L|L|L| |L|L|L| L}
\cline{1-6}
\step	& \Prob_p(\Step_1=\step)				& (X_1,Y_1)	&
\step	& \Prob_p(\Step_1=\step)				& (X_1,Y_1)	&
\hl	\cp & 		t \ap{p+2} u					& (2,-1)	&
	\cm & \frac{\nu t}{u}					& (0,1)	&
\hl	\lp & 		t \ap{p+1} z_{1,k} u^{k+1}	& (1,-k-1)	&
	\lm & \nu 	t z_{0,k+1} u^k				& (0,-k)	& (k\ge 0)
\hl	\rp & 		t z_{k+1,0}\ap{p-k+1} u		& (-k+1,-1)	&
	\rn & \nu 	t z_{k,1} \ap{p-k} 			& (-k,0)	& (0\le k\le \frac{p}{2})
\hl \rp[p-k] & t z_{p-k+1,0} \ap{k+1} u     & (-p+k+1,-1) &
	\rn[p-k] & \nu t z_{p-k,1} \ap{k} & (-p+k,0) & (0\le k <\frac{p}{2})
\hl	\rp[p+k] & 		t z_{p+1,k} \ap1 u^{k+1}	& (-p+1,-k-1)	&
	\rn[p+k] & \nu	t z_{p,k+1} \ap0 u^k		& (-p,-k)	& (k>0)
\hl
\end{tabu}
\caption{Law of the first peeling event $\Step_1$ under $\Prob_{p}$ and the corresponding $(X_1,Y_1)$, where the peeling is without target. We use the shorthand notations $t=t_c(\nu)$, $u=u_c(\nu)$, $z_{p,k}=z_{p,k}(t,\nu)$ and $a_p=a_p(\nu)$. Note the cutoff $p/2$ in the finite boundary segment, which is used for the convergence $p\to\infty$ in the $\nu>\nu_c$ regime (see Table~\ref{tab:pinfty}).}\label{tab:prob(p)}
\end{table}

Taking the second limit $p\to\infty$ yields a similar peeling process for all $1<\nu\leq\nu_c$, but for $\nu>\nu_c$ the asymptotics of Theorem~\ref{thm:asympt} yield additional non-trivial peeling events. Indeed, since the perimeter exponents $\alpha_0$ and $\alpha_1$ of $z_{p,k}$ and $a_p$ coincide in that case, the probabilities $\Prob_p(\Step_1=\rp[p\pm k])$ and $\Prob_p(\Step_1=\rn[p\pm k])$ have non-trivial limits when $p\to\infty$. For that reason, when $p=\infty$ or $q=\infty$, we identify $\rho^\dagger$ with $\infty$ and introduce the following additional peeling step events:
\begin{description}[noitemsep]
\item[Event $\RR^*_{\infty-k}$:] $v$ is at a distance $k$ to the right of $\infty$ on the boundary of $\tmap$, viewed from the origin ($0\le k< \infty$);
\item[Event $\LL^*_{\infty-k}$:] $v$ is at a distance $k$ to the left of $\infty$ on the boundary of $\tmap$, viewed from the origin ($0\le k< \infty$).
\end{description}

Let $\steps=\tilde\steps\cup\{\RR^*_{\infty-k}, \LL^*_{\infty-k}:\ *\in\{\+,\<\},\ k\ge 0\}$. Observe that the set $\steps$ in \cite{CT20} corresponds to the set $\tilde\steps$ here. We make the identification $\RR^*_{\infty\mp k}=\LL^*_{\infty\pm k}$, as well as the convention $\Prob_{p,q}(\Step_1=\RR^*_{\infty\pm k})=\Prob_p(\Step_1=\RR^*_{\infty\pm k})=0$. Thus, the peeling process can always be defined on $\steps$.

We define $\Prob_\infty(\Step_1=\rp[\infty\pm k]):=\lim_{p\to\infty}\Prob_p(\Step_1=\rp[p\pm k])$ and $\Prob_\infty(\Step_1=\rn[\infty\pm k]):=\lim_{p\to\infty}\Prob_p(\Step_1=\rn[p\pm k])$. The events $\rp[\infty\pm k]$ and $\rn[\infty\pm k]$ can be viewed as jumps of the peeling process to the vicinity of $\infty$. This property of infinite jumps results a positive probability of bottlenecks in the local limit when $\nu>\nu_c$. See Section~\ref{sec:lowtemplimit} for a more precise analysis of the local limit structure in the low temperature regime. The peeling step probabilities for $p,q=\infty$ are collected in Table~\ref{tab:pinfty}.

\begin{table}[h]
\centering
\begin{tabu}[t]{|L|L|L||L|L|L| L}
\cline{1-6}
\step  & \Prob_\infty(\Step_1=\step)& (X_1,Y_1)	&
\step  & \Prob_\infty(\Step_1=\step)& (X_1,Y_1)	&
\hl	\cp	& \frac{t}{u}		& (2,-1)		&
	\cm	& \frac{\nu t}{u}		& (0,1) 	&
\hl	\lp	&       t u^k z_{1,k}	& (1,-k-1)		&
	\lm	& \nu t u^k z_{0,k+1}	& (0,-k)	&	(k\ge 0)
\hl	\rp	&       t u^k z_{k+1,0}	& (-k+1,-1)		&
	\rn	& \nu t u^k z_{k,1}	& (-k,0)	&	(k\ge 0)
\hl \rp[\infty-k] & \frac{t a_0}{b}a_{k+1}u^k\id_{\nu>\nu_c} &(-\infty,-1) &
	\rn[\infty-k] & \frac{\nu t a_1}{b}a_ku^k\id_{\nu>\nu_c} & (-\infty,0) & (k\ge 0)
\hl \rp[\infty+k] & \frac{t a_1}{b}a_ku^k\id_{\nu>\nu_c} & (-\infty, -k-1) &
	\rn[\infty+k] & \frac{\nu t a_0}{b}a_{k+1}u^k\id_{\nu>\nu_c} & (-\infty, -k) & (k>0)
\hl
\end{tabu}
\caption{Law of the first peeling event $\Step_1$ under $\Prob_\infty$ and the corresponding $(X_1,Y_1)$, where the peeling is without target. We have the same shorthand notation as in the previous tables as well as $b=b(\nu)$.}\label{tab:pinfty}
\end{table}

\begin{table}
\begin{center}
\begin{tabu}{|L|L|L| |L|L|L| L}
\cline{1-6}
\step	& \Probh_{p+1}(\Step_1=\step)				& (X_1,Y_1)	&
\step	& \Probh_{p+1}(\Step_1=\step)				& (X_1,Y_1) &
\phantom{k\le }
\hl	\cp & \nu t \aph{p+2}						&		(1,0)
&	\cm & 		t \aph{p} \frac1{u^2}			& (-1,2) 	&
\hl	\lp & \nu t z_{1,k} u^k				&		(0,-k)
&	\lm &		t z_{0,k+1} \aph{p} u^{k-1}	& (-1,-k+1)	& (k\ge 0)
\hl	\rp & \nu t z_{k+1,0}\aph{p-k+1}			& (-k,0)
&	\rn & 		t z_{k,1} \aph{p-k} \frac1{u}	&		(-k-1,1)				& (0\le k\le p)
\hl	\rp[p+k] & \nu t z_{p+1,k} \aph1 u^{k}	&		(-p,-k)
&	\rn[p+k] &		t z_{p,k+1} \aph0 u^{k-1}	& (-p-1,-k+1) & (k>0)
\hl
\end{tabu}
\vspace{3ex}

\begin{tabu}{|L|L|L||L|L|L| L}
\hline \step	& \Probh_\infty(\Step_1=\step)& (X_1,Y_1) &\step  & \Probh_\infty(\Step_1=\step) & (X_1,Y_1)
\\\hline \cp	& \frac{\nu t}{u} & (1,0)	&\cm		& \frac{t}{u} & (-1,2)
\\\hline \lp	& \nu t u^k z_{1,k} & (0,-k)	&\lm		& t u^k z_{0,k+1} & (-1,-k+1)
\\\hline \rp	& \nu t u^k z_{k+1,0} & (-k,0)	&\rn		& t u^k z_{k,1} & (-k-1,1)
\\\hline
\end{tabu}\phantom{$k\le$}\raisebox{3.2em}{}~~~
\caption{Laws of $\Step_1$ under $\Probh_{p+1}$ ($p\ge 0$) and $\Probh_\infty$, respectively, obtained by taking two successive limits in Table~\ref{tab:probh(p,q)}. The peeling is without target. Since we only need this distribution in the high temperature regime $\nu\in(1,\nu_c)$, the bottleneck events are omitted.
}\label{tab:probh(p)}
\end{center}\vspace{-1em}
\end{table}

The proof that $\Prob_p$ defines a probability distribution on $\steps$ goes similarly as in \cite[Lemma~6]{CT20}, as well as that $\Prob_\infty$ is a probability distribution on $\steps$ for $1<\nu<\nu_c$. For $\nu>\nu_c$, the total probability from Table~\ref{tab:pinfty} sums to
\begin{equation*}
t(\nu+1)\left(\frac{Z_0(u)}{u}+Z_1(u)+\frac{\frac{a_0}{u_c}+a_1}{b}\left(A(u)-a_0\right)\right),
\end{equation*} which is shown to be equal to one either by a coefficient extraction argument similar to the one of \cite[Lemma~6]{CT20}, or by a computer algebra calculation.

It follows that $\Prob_p$ and $\Prob_\infty$, respectively, can be extended to the distribution of the peeling process $\nseq{\Step}$, and we have the convergence $\Prob_{p,q}\cv[]q\Prob_p\cv[]p\Prob_\infty$ in distribution, where $\Prob_p$ and $\Prob_\infty$ satisfy the spatial Markov property (see \cite[Proposition~2, Corollary 7]{CT20}). By symmetric arguments, we recall the same properties for the laws $\Probh_p$ and $\Probh_\infty$, which are obtained as the distributional limits of $\Probh_{p,q}$. The explicit laws of the first peeling step are collected in Table~\ref{tab:probh(p)}. The expectations corresponding to $\Prob$ and $\Probh$ are called $\EE$ and $\hat{\EE}$, respectively.

By the diagonal asymptotics part of Theorem~\ref{thm:asympt}, it is also easy to see that convergences $\Prob_{p,q}\cv{p,q}\Prob_\infty$ and $\Probh_{p,q}\cv{p,q}\Probh_\infty$ hold for every appropriate $\nu$. More precisely, since the coefficient function $\lambda\mapsto c(\lambda)$ is continuous on every interval bounded away from zero for every fixed $\nu\in(1,\infty)$, we conclude that $\lim_{p,q\to\infty}\frac{c\left(\frac{q-m}{p-k}\right)}{c\left(\frac{q}{p}\right)}=1$ for any fixed $k,m\in\Z$ when $q/p\in [\lambda_{\min},\lambda_{\max}]$, and the convergence of the one-step peeling transition probabilities follows. The rest is a repetition of the proof of the convergence $\Prob_p\cv{p}\Prob_\infty$.

\subsection{Order parameters and connections to pure gravity}\label{sec:orderparam}

Using the information in Table~\ref{tab:pinfty}, it is not hard to express the order parameter $\mathcal O(\nu)$ defined in Proposition~\ref{prop:orderparam intro}. We obtain the following formula:
\begin{equation*}
\mathcal{O}(\nu):=\EE_\infty^\nu((X_1+Y_1)\id_{|X_1|\vee |Y_1|<\infty})=(\nu+1)t_c(\nu)\left(\frac{Z_0(u_c(\nu))}{u_c(\nu)}-Z'_0(u_c(\nu))-u_c(\nu)Z'_1(u_c(\nu))\right).
\end{equation*}
Above, the cases $|X_1|=\infty$ and $|Y_1|=\infty$ may appear if $\nu>\nu_c$, and the latter only if we consider the peeling with target $\rho^\dagger$. 
We have discussed in the introduction that $\mathcal O$ is an order parameter for the phase transition around $\nu=\nu_c$
Its properties are collected in Proposition~\ref{prop:orderparam intro}. See also Figure~\ref{fig:orderparam} for the graph of $\mathcal O$.



\begin{figure}
\centering
\includegraphics[scale=1.1]{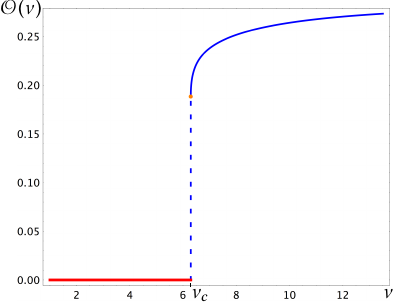}
\caption{The graph of the order parameter $\mathcal{O}$.}
\label{fig:orderparam}
\end{figure}

The proof of Proposition~\ref{prop:orderparam intro} is a computation by a computer algebra, presented in \cite{CAS2}. Note that $\mathcal{O}$ is discontinuous at $\nu=\nu_c$, and that $\mathcal{O}(\nu)=\EE^\nu_\infty(X_1+Y_1)$ for $1<\nu\leq\nu_c$.
Moreover, the above drift condition in this regime shows that the peeling process started from the $\<$ edge next to the origin drifts to the left, swallowing the $\<$ boundary piece by piece. By symmetry, we also obtain $\hat\EE^\nu_\infty(Y_1)=\EE^\nu_\infty(X_1)$ and $\hat\EE^\nu_\infty(X_1)=\EE^\nu_\infty(Y_1)$, which in turn yield that the peeling process following the right-most interface drifts to the right. In Section~\ref{sec:hightemplimit}, we will use these properties to modify the peeling algorithm so that the peeling process will explore a neighborhood of the origin in a metric sense, which will be enough to construct the local limit in the high temperature regime.

\begin{remark}\label{rem:orderparam2}
There is another, and perhaps more natural, order parameter \begin{equation*}
\tilde{\mathcal{O}}(\nu):=\Prob^\nu_\infty(|X_1|\vee|Y_1|=\infty)=\begin{cases}
0         & \text{if}\quad 1<\nu\le\nu_c \\
(\nu+1) t_c(\nu) \m({ \frac{\frac{a_0(\nu)}{u_c(\nu)}+a_1(\nu)}{b(\nu)}
                      \m({ A(u,\nu)-a_0(\nu) } }
& \text{if}\quad\nu>\nu_c \phantom{\le \nu_c}.
\end{cases}
\end{equation*}
It is easy to see that $\tilde{\mathcal{O}}$ is continuous at $\nu=\nu_c$. This order parameter is the probability of the occurrence of a finite bottleneck in a single peeling step in the (to-be-constructed) local limit $\prob_\infty^\nu$. It can also be shown to be increasing and have the limit $\frac{\sqrt{3}}{12}$ as $\nu\to\infty$. However, since the order parameter $\mathcal{O}$ encapsulates all what we need in the proofs of the local convergences, $\tilde{\mathcal{O}}$ is not studied further in this work. Its only non-zero occurrence is related to Lemma~\ref{lem:hit 0} in Section~\ref{sec:lowtemplimit}.
\end{remark}

\begin{remark}
In the physics literature, the order parameter for the two-dimensional Ising model is traditionally the magnetization of the Ising field. We do not know the connection of $\mathcal{O}$ or $\tilde{\mathcal{O}}$ to the magnetization. Unlike the magnetization for the Ising model on a regular lattice, $\mathcal{O}$ is discontinuous at the critical temperature. Moreover, it does not tell us about the global geometry of the spin clusters, rather it serves as a ``measure'' of the interface behaviour in the local limit. An interesting curiosity is that we can show the free energy density per boundary edge has a second order discontinuity, even though it is known that the free energy density per face has a third order discontinuity, telling that the phase transition should be of third order. More precisely, by the work of Boulatov and Kazakov \cite{BouKaz87} or an explicit computation \cite{CAS2}, we have
\begin{equation*}
-\lim_{n\to\infty}\frac{1}{n}\log([t^n]z_{p,q}(t,\nu))=F(\nu)
\end{equation*} where $n$ is the number of interior faces and $F$ has a third order discontinuity at $\nu=\nu_c$. However, we find
\begin{equation*}
-\lim_{q\to\infty}\frac{1}{q}\log(z_{p,q}(\nu))=-\lim_{p,q\to\infty}\frac{1}{q}\log(z_{p,q}(\nu))=\log(u_c(\nu)),
\end{equation*}
which can be shown to have a second order discontinuity at $\nu=\nu_c$.
\end{remark}

\paragraph{Pure gravity-like behavior and some literature remarks.} It has been conjectured by physicists that the Ising model outside the critical temperature falls within the pure gravity universal class (see \cite{ADJ97}). In particular, in the seminal work of Kazakov \cite{Kaz86}, the fact is justified by computing the zero-temperature and the infinite-temperature limits of the free energy, which both coincide with the ones derived from the one-matrix model. The analysis of our peeling process, and the geometry in the further sections, will give a different perspective to this phenomenon.

First, we note that $\lim_{\nu\searrow 1}\EE^\nu_\infty(Y_1)=-\lim_{\nu\searrow 1}\EE^\nu_\infty(X_1)=-\frac{1}{2}$. From \cite[Section 3.2]{AC13}, we check that this coincides with the drift of the perimeter process of an exploration which follows the right-most interface of a finite percolation cluster on the UIHPT decorated with a face percolation configuration with parameter $p=1/2$. This is natural due to the symmetry of the $\+$ and $\<$ spins. We stress that, since percolation on the triangular lattice is \emph{not} self-dual, this falls in the subcritical regime of percolation. Observe also that the geometry of large Boltzmann Ising-triangulations in the high temperature regime essentially should not depend on the exact value of $\EE^\nu_\infty(X_1)=-\EE^\nu_\infty(Y_1)$, as long as it is strictly positive and the perimeter exponents $\alpha_0+1$ and $\alpha_2+2$ of the asymptotics of Theorem~\ref{thm:asympt} are equal to $5/2$. Therefore, the geometry of the Ising-decorated random triangulation of the half-plane in the high temperature regime is similar to the one of the UIHPT decorated with subcritical face percolation. To our knowledge, this phenomenon has never been explicitly written, though intuitively well understood.

In the low temperature regime, we have $\lim_{\nu\to\infty}\mathcal{O}(\nu)=\frac{1}{2\sqrt{3}}$. This, in turn, coincides with the expectation of the number of edges swallowed (both to the right or to the left) after a peeling step of the non-decorated UIHPT of type I. At the level of the peeling process, we find that
\begin{equation*}
\begin{aligned}
      &\ \lim_{\nu\to\infty}\Prob^\nu_\infty(\Step_1=\cp)
\!\!&=&\ \lim_{\nu\to\infty} \sum_{k=0}^\infty \Prob^\nu_\infty(\Step_1=\lp)
\!\!&=&\ \lim_{\nu\to\infty} \sum_{k=0}^\infty \Prob^\nu_\infty(\Step_1=\rp)
\\
 =&\ \lim_{\nu\to\infty} \sum_{k=0}^\infty \Prob^\nu_\infty(\Step_1=\rp[\infty-k])
\!\!&=&\ \lim_{\nu\to\infty} \sum_{k=1}^\infty
                             \Prob^\nu_\infty(\Step_1=\lp[\infty-k])
\!\!&=&\ \lim_{\nu\to\infty} \sum_{k=1}^\infty
                             \Prob^\nu_\infty(\Step_1=\rn[\infty-k])
\ =\ 0
\end{aligned}
\end{equation*}
and
\begin{align*}
\lim_{\nu\to\infty}\Prob^\nu_\infty(\Step_1=\cm) &= \frac{1}{\sqrt{3}} &
\lim_{\nu\to\infty}\sum_{k=0}^\infty\Prob^\nu_\infty(\Step_1=\lm) \ \ \ \,
&= \frac{1}{2}-\frac{1}{2\sqrt{3}} \\
\lim_{\nu\to\infty}\sum_{k=0}^\infty\Prob^\nu_\infty(\Step_1=\rn)
&= \frac{1}{2}-\frac{\sqrt{3}}{4} &
\lim_{\nu\to\infty}\sum_{k=0}^\infty\Prob^\nu_\infty(\Step_1=\lm[\infty-k])
&= \frac{\sqrt{3}}{12}\,.
\end{align*}
Since these quantities sum to one, we conclude that either the bottlenecks survive in the zero temperature limit, or the limit does not define a probability distribution. The former follows if we can change the limit and the summation above, and that indeed can be done by the following simple argument: We notice that $\Prob^\nu_\infty(\Step_1=\lm[k])\sim\frac{M(\nu)}{k^{5/2}}$ as $k\to\infty$, where
\begin{equation*}
M(\nu)=\lim_{k\to\infty}k^{5/2}\left(\frac{\nu t_c(\nu)}{u_c(\nu)}\frac{a_0(\nu)}{\Gamma(-3/2)}(k+1)^{-5/2}\right)=\frac{\nu t_c(\nu)}{u_c(\nu)}\frac{a_0(\nu)}{\Gamma(-3/2)}.
\end{equation*}
An explicit computation shows that $\lim_{\nu\to\infty}M(\nu)\in (0,\infty)$, so $M(\nu)$ is bounded. Moreover, we can show that $\Prob^\nu_\infty(\Step_1=\lm[\infty-k])$ has exactly the same asymptotics as $k\to\infty$. By this asymptotic formula, one can then find a summable majorant for the above series for large enough $\nu$, and therefore the exchange of the limit and the sum follows from the dominated convergence theorem.

Hence, we find a zero temperature limit of the peeling process which shares the behaviour of the peeling process in the low temperature regime. In that case, the peeling process constructs an infinite triangulation, which consists of two infinite triangulations with the geometry of the UIHPT that are glued together by just one vertex, which can be viewed as a pinch point in the vicinity of both the origin and the infinity. The construction of this local limit is the same as in the upcoming Section~\ref{sec:lowtemplimit}.

The existence of the finite bottlenecks for $\nu>\nu_c$ is well predicted in the physics literature. More precisely: When $\nu=\infty$, the spins are totally aligned. Therefore, for $\nu>\nu_c$, it is predicted that the energy of a spin configuration is proportional to the length of the boundary separating different spin clusters, and hence the minimal energy configurations should be those with minimal spin interface lengths. In our setting, we consider an annealed model where we sample the triangular lattice together with the spin configuration. Hence, a bottleneck in the surface is formed. This is explained eg. in \cite{ADJ97} and \cite{AJL00}. To our knowledge, this is the first time when the existence of the bottlenecks on Ising-decorated random triangulations is shown rigorously.

\section{Local limits and geometry at $\nu\neq\nu_c$}\label{sec:locallimits}

In this section, we extend our analysis of the local limit at $\nu=\nu_c$, considered in \cite{CT20}, to the off-critical regimes $1<\nu<\nu_c$ and $\nu>\nu_c$. In \cite[Section 5]{CT20}, the idea was to provide a constructive proof of the local convergence in the following sense: The local limits were constructed by iterating the peeling process, and after noticing that the peeling explores any ball around the root with respect to the graph distance in a finite time, the local convergence followed from the convergence of the peeling process. In this work, we notice that similar proof strategy for the local convergence extends to every $\nu>1$, under certain amendments. For this reason, we reformulate the strategy used in \cite{CT20} for proving $\prob_{p,q}^{\nu_c}\ \cv[] q\ \prob_p^{\nu_c}$ as an algorithm with fairly general assumptions on the convergence of a peeling process. This strategy applies almost readily to all of the local limits at $\nu\in(\nu_c,\infty)$ with the same choice of a peeling algorithm as for $\nu=\nu_c$ by the fact that the interface hits a neighborhood of the infinity in a finite time. On the contrary, if $\nu\in (1,\nu_c)$, the peeling process under the aforementioned peeling algorithm will stay close to the boundary of the half-plane infinitely. Therefore, we need more refined arguments in the high temperature phase starting from Proposition~\ref{prop:orderparam intro}. In particular, we construct a mixed peeling algorithm, under which the peeling process explores a neighborhood of the origin layer by layer in the local limit $\prob\yy^{\nu}$. The choices of the peeling algorithm are summarised in Table~\ref{tab:peelingalgo} in the preceding section.

\subsection{Preliminaries: local distance and convergence}

For a map $\map$ and $r\ge 0$, we denote by $[\map]_r$ the \emph{ball of radius $r$} in $\map$, defined as the subgraph of $\map$ consisting of all the \emph{internal} faces which are adjacent to at least one vertex within the graph distance $r-1$ from the origin. By convention, the ball of radius $0$ is just the root vertex. The ball $[\map]_r$ inherits the planar embedding and the root corner of $\map$. Thus $[\map]_r$ is also a map. By extension, if $\sigma$ is a coloring of \emph{some faces} and \emph{some edges} of $\map$, we define the ball of radius $r$ in $(\map,\sigma)$, denoted $[\map,\sigma]_r$, as the map $[\map]_r$ together with the restriction of $\sigma$ to the faces and the edges in $[\map]_r$. In particular, we have $[[\map,\sigma]_{r'}]_r = [\map,\sigma]_r$ for all $r\le r'$.
Also, if an edge $e$ is in the ball of radius $r$ in a bicolored triangulation of a polygon $\bt$, then one can tell whether $e$ is a boundary edge by looking at $\btsq_r$, since only boundary edges are colored.

\newcommand{\CM}{\mathcal{C\hspace{-1pt}M}}
The \emph{local distance} for colored maps is defined in a similar way as for uncolored maps: for colored maps $(\map,\sigma)$ and $(\map',\sigma')$, let
\begin{equation*}
d\1{loc}((\map,\sigma),(\map',\sigma')) = 2^{-R}\qtq{where}
	R = \sup\Set{r\geq 0}{ [\map,\sigma]_r=[\map',\sigma']_r }\,.
\end{equation*}
The set $\CM$ of all (finite) colored maps is a metric space under $d\1{loc}$. Let $\overline \CM$ be its Cauchy completion. As was the case with the uncolored maps (see e.g.\ \cite{CurPeccot}), the space $(\overline \CM, d\1{loc})$ is Polish (i.e.\ complete and separable). The elements of $\overline \CM \setminus \CM$ are called \emph{infinite colored maps}. By the construction of the Cauchy completion, each element of $\CM$ can be identified as an increasing sequence of balls $(\bmap_r)_{r\ge 0}$ such that $[\bmap_{r'}]_r = \bmap_r$ for all $r\le r'$. Thus defining an infinite colored map amounts to defining such a sequence.
Moreover, if $(\prob\0n)_{n\ge 0}$ and $\prob\0\infty$ are probability measures on $\overline \CM$, then $\prob\0n$ converges weakly to $\prob\0\infty$ for $d\1{loc}$ if and only if
\begin{equation*}
\prob \0n([\map,\sigma]_r=\bmap) \ \cv[]n\ \prob \0\infty([\map,\sigma]_r=\bmap)
\end{equation*}
for all $r\ge 0$ and all balls $\bmap$ of radius $r$.

When restricted to the bicolored triangulations of the polygon $\bts$, the above definitions construct the corresponding set $\overline \bts \setminus \bts$ of infinite maps. Recall that $\bts_\infty^{(1)}$ is the set of \emph{infinite bicolored triangulation of the half plane}, that is, elements of $\overline \bts \setminus \bts$ which are one-ended and have an external face of infinite degree. Recall also the set $\bts_\infty^{(2)}$, consisting of \emph{two-ended} bicolored triangulations with an infinite boundary.

\subsection{A general algorithm for constructing local limits}\label{sec:generalalgo}

In this subsection, we provide an algorithm for constructing local limits and proving the local convergence for a generic setup of Boltzmann Ising-triangulations of the disk. The algorithm is already used in our previous work \cite{CT20} in the proof of the local convergence $\prob_{p.q}\cv[]q\prob_p$.

\newcommand{\probl}{\mathds{P}_l}
\newcommand{\problinf}{\mathds{P}_\infty}
\newcommand{\Probl}{\Prob\0l}
\paragraph{Assumptions.} Suppose we are given a family of probability measures $\{\probl:l=1,2,\dots\}$ supported on $\overline{\bts}$, where the index $l$ is either the full perimeter or the length of a finite boundary segment of a bicolored, possibly infinite, Boltzmann Ising-triangulation with Dobrushin boundary conditions. For example in the latter case, we may have $l=q$ if $\probl=\prob_{p,q}^\nu$ and we consider the convergence $q\to\infty$. The point is that $\{\probl:l=1,2,\dots\}$ is assumed to be a one-parameter family. Recall that the peeling process of a fixed triangulation can be viewed as a deterministic sequence $\nseq{\emap_n,\umap}$ of explored and unexplored maps, respectively, driven by a peeling algorithm $\mathcal{A}$. By convention, $\law\0l\nseq{\emap}$ denotes the law of the sequence of the explored maps under $\probl$. \newcommand{\pqn}[1][n]{\tilde p,\tilde q,#1}
\newcommand{\upqn}[1][n]{\umap_{\pqn[#1]}^*}

Let $(\upqn)_{\pqn \ge 0}$ be a family of independent random variables which are also independent of $\nseq \Step$, such that $\upqn$ is a Boltzmann Ising-triangulation of the $(\tilde p,\tilde q)$-gon, where possibly $\tilde p=\infty$ or $\tilde q=\infty$. Consider $\mathbb{Z}$ with its nearest-neighbor graph structure and canonical embedding in $\mathbb{C}$, viewed as an infinite planar map rooted at the corner at $0$ in the lower half plane. Then, the upper half plane is the unexplored map $\law\0l\umap_0$, and $\law\0l\emap_0$ is defined as the deterministic map $\Z$ in which the following holds, depending whether the boundary of length $l$ is monochromatic or not: the monochromatic boundary of length $l$ is contained in $[0,l]$ (if it has spin \+) or in $[-l,0]$ (if it has spin \<), or the bichromatic boundary of length $l=l_1+l_2$ is contained in $[-l_1,l_2]$. Assume that under $\probl$, one can recover the distribution of $\emap_n$ as a deterministic function of $\emap_{n-1}$, $\Step_n$ and $(\upqn)_{\tilde p,\tilde q \ge 0}$. We define $\law\0l \nseq \emap$ by iterating that deterministic function on $\law\0l \emap_0$, $\law\0l \nseq \Step$ and $(\upqn)_{\pqn \ge 0}$. Let $\filtr_n$ be the $\sigma$-algebra generated by $\emap_n$. Then the above construction defines a probability measure on $\filtr_\infty = \sigma(\cup_n \filtr_n)$, which we denote by $\Probl$. Moreover, assume $\Probl\cv[]l\Prob\0\infty$ in distribution with respect to the discrete topology. That is, there exists a distribution $\Prob\0\infty$ such that for any element $\omega$ in the (countable) state space of the sequences $\nseq \Step$ and $(\upqn)_{\pqn \ge 0}$ up to time $n_0<\infty$, we have $\Prob\0l(\omega) \cv[]l \Prob\0\infty(\omega)$. 

For the peeling algorithm $\mathcal{A}$, we make two assumptions. First, we assume that the algorithm is \emph{Dobrushin-stable}, in the sense that $\mathcal{A}$ always chooses a boundary edge at the junction of the \< and \+ boundary segments. This choice guarantees that the boundary condition always remains Dobrushin or monochromatic. Second, we assume that $\mathcal{A}$ is \emph{local}, by which we mean the following: If the boundary is bichromatic, $\mathcal{A}$ chooses the boundary edge according to the previous rule such that it is connected to the root $\rho$ via an explored region by the peeling excluding the boundary. On the other hand, if the boundary is monochromatic, $\mathcal{A}$ chooses an edge whose endpoints have a minimal graph distance to the origin, according to some deterministic rule if there are several such choices.

\paragraph{Convergence of the peeling process.}

Since $(\upqn)_{\pqn \ge 0}$ has a fixed distribution and is independent of $\nseq \Step$, it follows that $\law\0l\nseq \Step$ and $(\upqn)_{\pqn \ge 0}$ converge jointly in distribution when $l\to\infty$ with respect to the discrete topology.
However, because $\law\0l \emap_0$ takes a different value for each $l$, the initial condition $\law\0l \emap_0$ cannot converge in the above sense. This is not a problem, since for any positive integer $K$, the restriction of $\law\0l \emap_0$ to a finite interval $[-K,K]$ stabilizes at the value that is equal to the restriction of $\law\0\infty \emap_0$ on $[-K,K]$.
Therefore, let us consider the truncated map $\emapo_n$, obtained by removing from $\emap_n$ all the boundary edges adjacent to the hole. Then the number of the remaining boundary edges is finite and only depends on $(\Step_k)_{k\le n}$. It follows that for each $n$ fixed, $\emapo_n$ is a deterministic function of $(\Step_k)_{k\le n}$, $(\upqn[k])_{\tilde p,\tilde q\ge 0; k\le n}$ and $\emap_0$ restricted to some finite interval $[-K,K]$ where $K$ is determined by $(\Step_1,\dots, \Step_n)$. As the arguments of this function converge jointly in distribution with respect to the discrete topology (under which every function is continuous), the continuous mapping theorem implies that
\begin{equation}\label{eq:peeling cvg}
\Prob\0l(\emapo_n=\bmap) \cv[]l \Prob\0\infty(\emapo_n=\bmap)
\end{equation}
for every bicolored map $\bmap$ and for every integer $n\ge 0$.  We can extend this convergence for finite stopping times according to the following proposition, which is proven for \cite[Lemma~12]{CT20}, \emph{mutatis mutandis}.

\begin{prop}[Convergence of the peeling process]\label{lem:stopped peeling}
Let $\filtr^\circ_n$ be the $\sigma$-algebra generated by $\emapo_n$.
If $\theta$ is an $\nseq{\filtr^\circ}$-stopping time that is finite $\Prob\0\infty$-almost surely, then for every bicolored map $\bmap$,
\begin{equation}\label{eq:stopped peeling cvg}
\Prob\0l(\emapo_\theta = \bmap) \cv[]l \Prob\0\infty(\emapo_\theta = \bmap) \,.
\end{equation}
\end{prop}

\paragraph{Construction of $\problinf$.}

Recall that the explored map $\emap_n$ contains an uncolored face with a simple boundary called its hole. The unexplored map $\umap_n$ fills in the hole to give $\bt$. We denote by $\frontier_n$, called the \emph{frontier} at time $n$, the path of edges around the hole in $\emap_n$. For all $r\ge 0$, let $\theta_r = \inf\Set{n\ge 0}{ d_{\emap_n}(\rho,\frontier_n)\ge r}$, where $d_{\emap_n}(\rho,\frontier_n)$ is the minimal graph distance in $\emap_n$ between $\rho$ and vertices on $\frontier_n$. It is clear that this minimum is always attained on the truncated map $\emapo_n$, therefore $d_{\emap_n}(\rho,\frontier_n)$ is $\filtr^\circ_n = \sigma(\emapo_n)$-measurable and $\theta_r$ is an $\nseq{\filtr^\circ}$-stopping time. Expressed in words, $\theta_r$ is the first time $n$ such that all vertices around the hole of $\emap_n$ are at a distance at least $r$ from $\rho$. Since $\bt$ is obtained from $\emap_n$ by filling in the hole, it follows that
\begin{equation*}
 \btsq_r\ =\ [\emapo_{\theta_r}]_r
\end{equation*}
for all $r\ge 0$. In particular, the peeling process $\nseq\emap$ eventually explores the entire triangulation $\bt$ if and only if $\theta_r<\infty$ for all $r\ge 0$. A sufficient condition for this is provided by the following lemma.

\begin{lemma}\label{lem:cover r-ball}
If the frontier $\frontier_n$ becomes monochromatic in a finite number of peeling steps $\Prob\0\infty$-almost surely, then $\theta_r$ is almost surely finite for all $r\ge 0$.
\end{lemma}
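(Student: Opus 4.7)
The plan is to combine the locality of the algorithm $\algo$ on monochromatic frontiers with the almost-sure local finiteness of $\bt$, and to deduce $\theta_r<\infty$ via a pigeonhole argument on the finite ball of faces around $\rho$.

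By the hypothesis, the random time $\tau:=\inf\{n\ge 0 : \frontier_n \text{ is monochromatic}\}$ is an $\nseq{\filtr^\circ}$-stopping time which is $\Prob\0\infty$-almost surely finite. Invoking the spatial Markov property under $\Prob\0\infty$ to restart the peeling at $\tau$, one reduces to the case where $\frontier_0$ is already monochromatic. If the monochromatic spin is the one that triggers termination (step~3 of the peeling construction), then $\emap_{n+1}=\bt$ at some finite time and $\theta_r$ is trivially finite. Otherwise the locality clause of $\algo$ applies, and at each peeling step the peeled edge has both endpoints at the minimum graph distance $d_n := d_{\emap_n}(\rho,\frontier_n)$ from $\rho$; the revealed triangle then carries a vertex at distance exactly $d_n$. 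Since $\bt$ is locally finite $\Prob\0\infty$-a.s., the face set of $[\bt]_r$ --- namely all internal faces adjacent to some vertex at distance $\le r-1$ from $\rho$ --- is a finite random set $F_r$. The key observation is that whenever $d_n<r$, the face revealed at step $n$ belongs to $F_r$; since each peeling step reveals a face not previously in $\emap_{n-1}$ and $|F_r|<\infty$, the number of peeling steps with $d_n<r$ is bounded by $|F_r|$. Therefore $d_n\ge r$ for all $n$ large enough, i.e.\ $\theta_r<\infty$ $\Prob\0\infty$-almost surely.

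The main obstacle is justifying that monochromaticity of the frontier is preserved --- in a suitable sense --- after $\tau$. Revealing a triangle of spin opposite to the external monochromatic color creates a transient bichromatic junction, during which the Dobrushin-stable rule supersedes the locality clause and could in principle let the peeling drift away from the root. My plan is to show that each such bichromatic sub-phase is spatially confined: either it terminates in $\Prob\0\infty$-a.s.\ finitely many additional steps restoring a monochromatic frontier --- by a recursive application of the hypothesis --- and contributes only finitely many extra revelations of faces in a bounded enlargement of $F_r$, or it can be absorbed into such an enlargement while preserving the pigeonhole bound. This bookkeeping is the delicate technical point; once it is in place, the finiteness of $F_r$ closes the argument.
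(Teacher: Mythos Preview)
Your proposal has two genuine gaps. First, you invoke $|F_r|<\infty$ via local finiteness of $\bt$, but at this point only the law $\Prob\0\infty$ of the peeling sequence $(\emap_n)_{n\ge 0}$ is available; the probability $\problinf$ on triangulations---and in particular the finiteness of the $r$-ball---is established \emph{after} this lemma, precisely by means of $\theta_r<\infty$. A vertex on the frontier could a priori acquire unboundedly many incident triangles as $n\to\infty$, so $|F_r|<\infty$ is not an available input; it is essentially the conclusion. Second, your handling of bichromatic sub-phases does not close: each such excursion is only a.s.\ finite, not uniformly bounded, and by the hypothesis plus the Markov property there are infinitely many of them. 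There is no reason the faces revealed during these excursions lie in any fixed finite enlargement of $F_r$, so the pigeonhole on faces cannot be completed.

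The paper sidesteps both issues by replacing your deterministic face count with a probabilistic vertex argument and an induction on $r$. From $\theta_r<\infty$ one deduces that the set $V_r$ of vertices at distance $r$ is contained in $\emap_{\theta_r}$ and hence finite. The hypothesis plus spatial Markov gives infinitely many monochromatic times; at each such time $n>\theta_r$ on $\{\theta_{r+1}=\infty\}$, locality forces the peeled edge to touch $V_r$. Some $v\in V_r$ is then peeled at infinitely often, but each such peel swallows $v$ with probability bounded below---a contradiction. The essential differences are that finiteness of $V_r$ comes from the inductive hypothesis rather than being assumed, that restricting attention to monochromatic times makes the bichromatic phases irrelevant, and that the closing step is probabilistic (Borel--Cantelli type) rather than a counting argument.
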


\begin{proof}
 We have $\theta_0=0$. Assume that $\theta_r<\infty$ almost surely for some $r\ge 0$. Then the set $V_r$ of vertices at a graph distance $r$ from the origin in $\tmap$ is $\Prob\0\infty$-almost surely finite. Since by assumption the frontier $\frontier_n$ becomes monochromatic in a finite time $\Prob\0\infty$-almost surely, the spatial Markov property yields that $\frontier_n$ is monochromatic infinitely often.

On the event $\{\theta_{r+1}=\infty\}$ and at the times $n>\theta_r$ such that $\frontier_n$ is monochromatic, the peeling algorithm $\algo$ chooses to peel an edge with an endpoint in $V_r$ by the locality assumption of $\algo$. Since $V_r$ is finite, there exists a $v\in V_r$ at which such peeling steps occur infinitely many times. But each time the vertex $v$ is swallowed with a non-zero probability, as a consequence the transition probabilities of the one-step peeling. Therefore $v$ can remain forever on the frontier only with zero probability. This implies that $\Prob\0\infty(\theta_{r+1}<\infty)=1$. By induction, $\theta_r$ is finite $\Prob\0\infty$-almost surely for all $r\ge 0$.
\end{proof}

We define the infinite Boltzmann Ising-triangulation of law $\problinf$ by the laws of its finite balls $\law\0\infty \btsq_r := \lim\limits_{n \to\infty} \law\0\infty{} [\emap_n]_r$. The external face of $\law\0\infty \bt$ obviously has infinite degree. Moreover, every finite subgraph of $\law\0\infty \bt$ is covered by $\emap_n$ almost surely for some $n<\infty$. If the peeling process only fills finite holes by the family $(\upqn)_{\pqn \ge 0}$, it follows that the complement of a finite subgraph only has one infinite component. That is, $\problinf$ is one-ended, which together with the infinite boundary tells that the local limit is an infinite bicolored triangulation of the half-plane. If the limiting map, however, includes infinite holes to fill in with the peeling, the map has several infinite connected components with positive probability. In the following section, we see a concrete example of that case.

\subsection{The local limit at low temperatures $(\nu>\nu_c)$} \label{sec:lowtemplimit}
Throughout this subsection, fix $\nu\in (\nu_c,\infty)$. For simplicity, let us first consider the case where $q\to\infty$ and $p\to\infty$ separately. In Section~\ref{sec:orderparam}, the order parameter $\mathcal{O}$ told us that for $\nu>\nu_c$, the peeling process has a tendency to drift to infinity. Moreover, from Table~\ref{tab:pinfty} we already read that $X_1=-\infty$ with a positive probability. Thus, we have $\EE_\infty(X_1)=-\infty$. These properties intuitively mean that the left-most interface drifts to infinity much faster than in the critical temperature, in fact even in a finite time almost surely. Thus, the construction of the local limit and the proof of the local convergence follows by choosing $\algo=\algo_{\<}$ and after we verify the assumptions of the algorithm in the previous section. The geometric view is similar to that in the critical temperature \cite{CT20}, with the exception that in this case the interface in a realization of the local limit is contained in a ribbon which is finite. Therefore, the local limit is not one-ended, unlike in $\nu=\nu_c$, and contains a bottleneck between the origin and infinity.

In order to be more precise, let us consider the $\Prob_p$ -stopping time
\begin{equation*}
T_m\ =\ \inf\Set{n\ge 0}{P_n\le m},
\end{equation*}
where $m\ge 0$ is a cutoff. In particular, $T_0$ is the first time that the boundary of the unexplored map becomes monochromatic. Observe that for $p>2m$, we can write $T_m\ =\ \inf\Set{n\ge 0}{\Step_n\in\Set{\rp[p+k+1], \rn[p+k]}{k\geq-m}}$. This extends to $p=\infty$ in a natural way, and thus $T_m$ is also a well-defined stopping time under $\Prob_\infty$.

Following the notation of \cite{CT20}, denote by $\law_{p,q}X$ (resp.\ $\law_p X$ and $\law_\infty X$) a random variable which has the same law as the random variable $X$ under $\Prob_{p,q}$ (resp.\ under $\Prob_p$ and $\Prob_\infty$).
We start by giving an upper bound for the tail distribution of $T_0$, which implies in particular that the process $ \law_p\nseq P$ hits zero almost surely in finite time. In other words, the peeling process swallows the \+ boundary almost surely, exactly as for $\nu=\nu_c$. What makes the low-temperature regime different is that this property actually holds also for $\Prob_\infty$, since by the infinite jumps of the peeling process we may have $T_m<\infty$. Moreover, we can easily find the explicit distribution of $T_m$ under $\Prob_\infty$.

\begin{lemma}[Law of $T_m$, $\nu>\nu_c$]\label{lem:hit 0}
Let $\nu\in(\nu_c,\infty)$.
\begin{enumerate}
\item There exists $\gamma>0$ such that $\Prob_p(T_0> n)\le e^{-\gamma n}$ for all $p\ge 1$. In particular, $T_0$ is finite $\Prob_p$-almost surely.

\item  Under $\Prob_\infty$, the stopping time $T_m$ has geometric distribution with parameter
\begin{equation*}
r_m:=\Prob_\infty(P_1\leq m)=\Prob_\infty(T_m=1)
\end{equation*} supported on $\{1,2,\dots\}$. That is,
\begin{equation*}
\Prob_\infty(T_m=n)=(1-r_m)^{n-1}r_m
\end{equation*} for $n=1,2,\dots$. In particular, $T_m$ is finite $\Prob_\infty$-almost surely for all $m\geq 0$.
\end{enumerate}
\end{lemma}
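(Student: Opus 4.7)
The plan is to use the spatial Markov property together with the explicit one-step laws in Tables~\ref{tab:prob(p)} and~\ref{tab:pinfty}, treating the two parts separately. For Part~1, I note that under $\Prob_p$ the process $(P_n)$ is Markov on $\natural$ (since $Q_n = \infty$ throughout), whose transition from state $p'$ is the law of $P_1$ under $\Prob_{p'}$. The strategy is to establish the uniform lower bound
\[
c \;:=\; \inf_{p' \ge 1}\, \Prob_{p'}(P_1 = 0) \;>\; 0 .
\]
For large $p'$ this follows from the convergence $\Prob_{p'} \to \Prob_\infty$ together with the strict positivity of $\Prob_\infty(P_1 = 0)$: the bottleneck event $\rn[\infty - 0]$ has weight $\nu t a_0 a_1/b$ (every factor strictly positive when $\nu > \nu_c$) and sends $P$ to $0$. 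For the finitely many remaining small $p' \ge 1$, positivity is read directly from Table~\ref{tab:prob(p)} (e.g.\ the event $\rn[p']$ has positive weight and sends $P_1$ to $0$). The Markov property then gives $\Prob_p(T_0 > n+1 \mid T_0 > n) \le 1 - c$ on $\{T_0 > n\}$, and iterating yields $\Prob_p(T_0 > n) \le (1-c)^n$, the desired exponential bound with $\gamma := -\log(1 - c) > 0$.

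For Part~2, observe first that $Y_1$ is finite in every row of Table~\ref{tab:pinfty}, so $Q_n = \infty$ for all $n$ under $\Prob_\infty$ and $T_m$ reduces to $\inf\{n : P_n \le m\}$. Conditioning on the first step and applying the spatial Markov property, $G(s) := \EE_\infty[s^{T_m}]$ satisfies the functional equation
\[
G(s) \;=\; s\, r_m \;+\; s\bigl(1 - \tilde{\mathcal{O}}(\nu)\bigr)\, G(s) \;+\; s \sum_{l > m} \Prob_\infty(P_1 = l)\, G_l(s),
\]
where $G_l(s) := \EE_l[s^{T_m}]$. The three contributions correspond respectively to a bottleneck reaching $\{P_1 \le m\}$ at step~$1$ (total probability $r_m$), no bottleneck (which preserves the state $(\infty,\infty)$ and restarts the process in law), and a bottleneck producing an intermediate finite $P_1 = l > m$ (which transfers the dynamics to $\Prob_l$). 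Comparing with the target $G(s) = \frac{s\, r_m}{1 - s(1-r_m)}$ reduces the geometric claim to the identity
\[
\sum_{l > m} \Prob_\infty(P_1 = l)\, G_l(s) \;=\; \bigl(\tilde{\mathcal{O}}(\nu) - r_m\bigr) \cdot \frac{s\, r_m}{1 - s(1 - r_m)} .
\]

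The principal obstacle will be establishing this last identity, which encodes a nontrivial algebraic relation between the bottleneck distribution of $P_1$ under $\Prob_\infty$ and the hitting-time generating functions $G_l$ for $l > m$. I would attempt it by extracting a recursion from the explicit bottleneck weights in Table~\ref{tab:pinfty} in conjunction with the Tutte-type identities linking $A(u;\nu)$, $b(\nu)$ and the half-plane partition functions $a_p(\nu)$; a more conceptual route would be to construct a coupling between the processes under $\Prob_\infty$ and $\Prob_l$ for $l > m$ that realizes the required memorylessness directly. Once the identity is in hand, the geometric form of $T_m$ and its almost-sure finiteness under $\Prob_\infty$ follow at once.
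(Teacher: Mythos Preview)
Your Part~1 is correct and is exactly the paper's argument: one exhibits a uniform lower bound $\Prob_{p'}(T_0=1)\ge r>0$ using the single event $\rn[p']$ (whose probability converges to the strictly positive $\Prob_\infty(\Step_1=\rn[\infty])$ when $\nu>\nu_c$), and then iterates via the Markov property.

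For Part~2 you have put your finger on a genuine issue. The paper's proof is the two-line induction
\[
\Prob_\infty(T_m>n+1)\;=\;\EE_\infty\bigl[\Prob_\infty(T_m\ne 1)\,\id_{\{T_m>n\}}\bigr]\;=\;(1-r_m)^{n+1},
\]
which tacitly uses that on $\{T_m>n\}$ the conditional law of the next step is again $\Prob_\infty$. That is only true if $(P_n,Q_n)=(\infty,\infty)$ on $\{T_m>n\}$; as you observe, a bottleneck step $\rp[\infty-k]$ or $\rn[\infty-k]$ with $k$ large enough leaves $P_1$ equal to a finite value strictly larger than $m$, after which the spatial Markov property hands the process over to $\Prob_{P_1}$. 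Since the paper itself records $\lim_{m\to\infty}r_m=\tilde{\mathcal O}(\nu)$ (so $r_m<\tilde{\mathcal O}(\nu)$ for every finite $m$), such steps do occur with positive probability.

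The identity you isolate as the ``principal obstacle'' is therefore not an artifact of your method: it is exactly the missing ingredient, and in fact it has no reason to hold, because $\Prob_k(P_1\le m)$ genuinely depends on $k$. So under the literal reading $T_m=\inf\{n:P_n\le m\}$, the law of $T_m$ is not exactly geometric with parameter $r_m$, and neither your route nor the paper's can close that gap. What survives, and what is actually used downstream (via Lemma~\ref{lem:cover r-ball} and the construction in Section~\ref{sec:lowtemplimit}), is the almost-sure finiteness of $T_0$ under $\Prob_\infty$: the \emph{first} bottleneck time is geometric with parameter $\tilde{\mathcal O}(\nu)>0$ because until then the state remains $(\infty,\infty)$ and the steps are i.i.d.; once it occurs, $P$ is finite and Part~1 applied under $\Prob_{P_1}$ finishes the job.
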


\begin{proof}
Since $\nu>\nu_c$, we have $\Prob_p(\Step_1=\rn[p])\longrightarrow\Prob_\infty(\Step_1=\rn[\infty]):=\tilde{r}\in(0,1)$, which yields
$
\Prob_p(T_0= 1)  \ \ge\  \Prob_p(\Step_1=\rn[p])  \ \ge\  r
$
for all $p\ge 1$, for some constant $r\in (0,1)$. It follows by the Markov property and induction that for all $n\ge 0$,
\begin{equation*}
\Prob_p(T_0>n+1)	\ =  \ \EE_p \mb[{ \Prob_{P_n}(T_0\ne 1) \idd{T_0>n}  }
				\ \le\ (1-r)^{n+1}\,,
\end{equation*}from which the first claim follows.

For the second claim, the data of Table~\ref{tab:pinfty} for $\nu>\nu_c$ shows that
\begin{align*}
\Prob_\infty(T_m=1) &=\ \Prob_\infty(P_1\leq m)
\\ &=\ \sum_{k=0}^\infty \m({ \Prob_\infty(\Step_1=\rp[\infty+k])
                     + \Prob_\infty(\Step_1=\rn[\infty+k]) }
  \ +\ \sum_{k=1}^{m-1} \Prob_\infty(\Step_1=\rp[\infty-k])
  \ +\ \sum_{k=1}^m     \Prob_\infty(\Step_1=\rn[\infty-k])
\\ &=\ t\m({ \frac{a_0}{bu}\m({ \nu(A(u)-a_0)+\sum_{k=2}^ma_ku^k }
+\frac{a_1}{b}\m({ A(u)+\nu\sum_{k=1}^ma_ku^k}
} \ =:\ r_m.
\end{align*}
By the spatial Markov property and induction, \begin{equation*}
\Prob_\infty(T_m>n+1)	\ =  \ \EE_\infty \mb[{ \Prob_\infty(T_m\ne 1) \idd{T_m>n}  }
				= (1-r_m)^{n+1}\,
\end{equation*}for all $n\geq 0$, which shows that $T_m$ has geometric distribution with parameter $r_m$.
\end{proof}

\begin{remark}
Observe that by the above proof, $\lim_{m\to\infty}r_m=\tilde{\mathcal{O}}(\nu)$, which was introduced as an order parameter in Remark~\ref{rem:orderparam2}.
\end{remark}

The above lemma entails that $T_m$ can directly, without further conditioning, be regarded as a time of a large jump of the perimeter process. In other words, unlike in \cite{CT20}, a suitably chosen peeling process will explore any finite neighborhood of the origin in a finite time, and thus no gluing argument of locally converging maps is needed. In particular, the general algorithm of Section~\ref{sec:generalalgo} applies. If one wanted to study the local limit via gluing, one could note that conditional on $n<T_m$, the process $P_n$ has a positive drift, a behaviour reflected by the order parameter $\mathcal{O}$.

It is easy to see that the above lemma also holds if we define more generally $T_m:=\inf\{n\ge 0: \min\{P_n,Q_n\}\leq m\}$ and consider the convergence of the peeling process with the target $\rho^\dagger$ under the limit $p,q\to\infty$ while $q/p\in [\lambda',\lambda]$. We omit the details of this here. The stopping time $T_m$ is extensively studied in Section~\ref{sec:onejumpscaling} for $\nu=\nu_c$, and the computation techniques for $\nu>\nu_c$ are similar. The biggest difference compared to the $p=\infty$ case is the fact that in the $p,q<\infty$ case, the triangle revealed at the peeling step realizing $T_m$ must hit the boundary at a distance less than $m+1$ from $\rho^\dagger$. The perimeter variations $(X_1,Y_1)$ will also have a different law, and in particular both $X_1$ and $Y_1$ may have infinite jumps (though not simultaneously).

Recall that in our context of the peeling along the left-most interface, the peeling algorithm $\algo_\<$ is used to choose an edge adjacent to $\rho_n$ on the boundary of the unexplored map $\umap_n$ according to some deterministic function \emph{when its boundary $\frontier_n$ is monochromatic of spin \<} (see Section~\ref{sec:peeling}). Under $\Prob_p$, we can ensure $\theta_r<\infty$ almost surely for all $r\ge 0$ with the following choice of the peeling algorithm $\algo=\algo_\<$: Let $\rho_n$ be the vertex on the frontier realizing the minimal distance $d_{\emap_n}(\rho,\frontier_n)$ from the origin. Then $\algo_\<$ chooses the edge on the left of $\rho_n$.
This algorithm is obviously local. Since $T_0<\infty$ almost surely by Lemma~\ref{lem:hit 0}, Lemma~\ref{lem:cover r-ball} gives $\theta_r<\infty$ almost surely in $\Prob_p$. Moreover, everything in this paragraph clearly also holds after replacing $\Prob_p$ by $\Prob_\infty$.

\begin{proof}[Proof of the convergence $\prob_{p,q}^\nu \protect{\cv q} \prob_p^\nu \protect{\cv p} \prob_\infty^\nu$ for $\nu>\nu_c$]
The $(\filtr^\circ_n)$-stopping time $\theta_r$ is almost surely finite under $\Prob_p$ and $\Prob_\infty$, and $\btsq_r = [\emapo_{\theta_r}]_r$ is a measurable function of $\emapo_{\theta_r}$. Thus, the assumptions of the general algorithm for local convergence hold with the choice $\probl=\prob_{p,q}^\nu$ with $l=q$ in the first limit, and after $\prob_p^\nu$ is defined, also with $\probl=\prob_p^\nu$ with $l=p$ in the second limit. In the first limit, the family $(\upqn)_{\pqn \ge 0}$ consists of independent finite Boltzmann Ising-triangulations, which fill in the finite holes formed in the peeling process (exactly as in $\nu=\nu_c$, see \cite{CT20}). Assuming $\prob_p^\nu$ is defined for all $p\geq 0$, then the family $(\upqn)_{\pqn \ge 0}$ also contains the elements $\umap^*_{\infty,\tilde{q},n}$ with law $\prob_{\tilde{q}}^\nu$, which fill in the hole with infinite \+ boundary after a bottleneck is formed. Putting things together in this order, it follows from Proposition~\ref{lem:stopped peeling} that $\prob_{p,q}^\nu(\btsq_r = \bmap) \cv[]q \prob_p^\nu(\btsq_r = \bmap)\cv[]p\prob_\infty^\nu(\btsq_r = \bmap)$ for all $r\ge 0$ and every ball $\bmap$. This implies the local convergence $\prob_{p,q}^\nu \cv[]q \prob_p^\nu \cv[]p \prob_\infty^\nu$.
\end{proof}
\begin{proof}[Proof of the convergence $\prob_{p,q}^\nu \protect{\cv {p,q}} \prob_\infty^\nu$ while $0<\lambda'\leq\frac{q}{p}\leq\lambda$ for $\nu>\nu_c$]
The assumptions of the general algorithm for local convergence hold with the choice $\probl=\prob_{p,q}^\nu$ with $l=p+q$, where $l\to\infty$ such that $q\in [\lambda' p,\lambda p]$. Since the peeling process with the target $\rho^\dagger$ has the same limit in distribution as the untargeted one, the local limit is indeed $\prob_\infty^\nu$.
\end{proof}
The above constructed local limit $\prob_p^\nu$ is one-ended, since the peeling process only fills in finite holes. By Lemma~\ref{lem:hit 0}, the untargeted peeling process of the local limit $\prob_\infty^\nu$ swallows the infinite \+ boundary $\Prob_\infty$-almost surely in a finite time, resulting a finite bottleneck, after which the peeling process continues to peel the infinite triangulation with infinite \< boundary and finite \+ boundary. Since the latter one is one-ended, it follows that the local limit $\prob_\infty^\nu$ consists of two independent triangulations of laws $\prob^\nu_{\tilde{p}}$ and $\prob^\nu_{\tilde{q}}$, for some $\tilde{p}\ge 0$ and $\tilde{q}\ge 0$, the second one modulo a spin flip, glued together along a finite bottleneck. That is, the local limit $\prob_\infty^\nu$ is two-ended.

\subsection{The local limit at high temperatures $(1<\nu<\nu_c)$}\label{sec:hightemplimit}

Throughout this subsection, fix $\nu\in (1,\nu_c)$. We start by considering first the convergence $\prob_{p,q}^\nu\to\prob_p^\nu$. For that purpose, we choose the peeling algorithm $\algo_\+$ defined in Section~\ref{sec:peeling}. The reason for this choice is explained by Remark~\ref{rmk:pdrift} and Lemma~\ref{lem:hit0h} below. Again, the only thing to show is that $\Probh_p(T_0<\infty)=1$ for every finite $p\geq 0$. However, due to the fact that $\Probh_p(T_m=1)\sim c_m\cdot p^{-5/2}$, we need a different strategy as in \cite{CT20} to prove that result. At this point, recall the drift of the perimeter processes: $\EE_\infty(X_1)=-\EE_\infty(Y_1)>0$ from Proposition~\ref{prop:orderparam intro}. This drift is used to estimate the drift of the perimeter process for a large $p<\infty$.

\begin{lemma}\label{lem:integralconvergence}
Let $\nu\in (1,\nu_c)$. Then,
\begin{equation*}
\lim_{p\to\infty}\EE_p(X_1)=\EE_\infty(X_1)\qquad\text{and}\quad\lim_{p\to\infty}\EE_p(Y_1)=\EE_\infty(Y_1).
\end{equation*}
Likewise,\begin{equation*}
\lim_{p\to\infty}\hat{\EE}_p(X_1)=\hat{\EE}_\infty(X_1)\qquad\text{and}\quad\lim_{p\to\infty}\hat{\EE}_p(Y_1)=\hat{\EE}_\infty(Y_1).
\end{equation*}
\end{lemma}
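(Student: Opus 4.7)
The plan is to prove the four convergence statements by the same argument; I focus on $\EE_p(X_1) \to \EE_\infty(X_1)$, as the proofs for $\EE_p(Y_1)$, $\hat\EE_p(X_1)$, $\hat\EE_p(Y_1)$ are completely symmetric (exchanging the roles of $p$ and $q$ and the roles of the $\+$/$\<$ boundaries). The strategy is to split the countable-sum expression of $\EE_p(X_1)$ from Table~\ref{tab:prob(p)} into a \emph{bulk} part indexed by a fixed countable set and a \emph{tail} part that only exists for finite $p$, then apply dominated convergence to the bulk while showing the tail vanishes.

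The crucial input, available only in the high-temperature regime, is the following uniform ratio bound: since $\alpha_1 = -1$ in this regime, Theorem~\ref{thm:asympt} gives $u_c(\nu)^p a_p(\nu) \to b(\nu) \in (0,\infty)$. Combined with $a_p > 0$ for all $p \ge 0$, this yields constants $0<C_1 \le C_2 <\infty$ with $C_1 u_c^{-p} \le a_p \le C_2 u_c^{-p}$, hence
\begin{equation*}
\frac{a_j}{a_p} \le \frac{C_2}{C_1}\, u_c^{p-j} \qquad \text{for all } 0 \le j \le p+1.
\end{equation*}
This single estimate controls the $p$-dependence of every step probability in Table~\ref{tab:prob(p)}.

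For the \textbf{bulk part} consisting of the events $\cp, \cm, \lp, \lm, \rp, \rn$ indexed by $k \ge 0$, I would use the above ratio bound to produce a summable majorant for $|X_1(\step)| \Prob_p(\Step_1 = \step)$ that is independent of $p$. For example,
\begin{equation*}
|X_1(\rp)| \cdot \Prob_p(\Step_1 = \rp) \le C (k+1)\, t\, z_{k+1,0}\, u_c^k = O(k^{-3/2}),
\end{equation*}
using the tail asymptotic $u_c^k z_{k,0} = O(k^{-5/2})$ from Theorem~\ref{thm:asympt}; analogous bounds hold for $\rn$, $\lp$, $\lm$. Dominated convergence (with counting measure on $k$) combined with the pointwise convergence $\Prob_p(\Step_1 = \step) \to \Prob_\infty(\Step_1 = \step)$ already established in Section~\ref{seq:infinitepeeling} then gives convergence of the bulk part to $\EE_\infty(X_1)$.

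For the \textbf{tail part} consisting of the events $\rp[p\pm k], \rn[p\pm k]$, which have no $p=\infty$ counterpart at high temperature, I would show that the total contribution to $\EE_p(X_1)$ vanishes. The $\rp[p-k]$ and $\rn[p-k]$ contributions, after substituting $j = p-k$, become $C \sum_{j > p/2}(j-1) z_{j+1,0} u_c^j$ and its analogue, which are tails of convergent series and vanish. The $\rp[p+k]$ and $\rn[p+k]$ contributions are bounded, via the ratio bound, by $C p \sum_{k \ge 0} z_{p+1,k}\, u_c^{p+k}$; the diagonal asymptotic of Theorem~\ref{thm:asympt} with $c(\lambda) = (1+\lambda)^{-5/2}$ gives $u_c^{p+k} z_{p+1,k} = O(p^{-5/2}(1+k/p)^{-5/2})$, so the sum is $O(p^{-3/2})$ and the contribution is $O(p^{-1/2}) \to 0$.

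The main obstacle is that Theorem~\ref{thm:asympt} provides only three pointwise asymptotic regimes (one-variable $q \to \infty$ with $p$ fixed, one-variable $p \to \infty$, and diagonal $p,q \to \infty$ with $q/p$ in a compact subset of $(0,\infty)$), not a single uniform bound on $u_c^{p+k} z_{p+1,k}$ valid for all $k \ge 0$. To close this gap I would partition the range of $k$ into $[0, p^{1-\varepsilon}]$, $[p^{1-\varepsilon}, p^{1+\varepsilon}]$, and $[p^{1+\varepsilon}, \infty)$ and apply the relevant asymptotic in each range. An alternative route that avoids this splitting exploits the convergence of $\sum_{p,q} u_c^{p+q} z_{p,q}$ (Lemma~\ref{lem:dom of cvg}), so that $\sum_{k \ge 0} u_c^{p+k} z_{p+1, k}$ is the $p$-th term of a convergent series, which together with the two-step tail asymptotic is enough to recover the $O(p^{-3/2})$ rate and hence vanishing of the tail contribution.
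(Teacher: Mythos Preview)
Your approach is correct and is organized differently from the paper's. The paper decomposes $\EE_p(X_1)$ according to the \emph{value} of $X_1$ into three ranges $\{X_1\ge -m\}$, $\{X_1\in(-p+m,-m)\}$, $\{X_1\le -p+m\}$, takes $p\to\infty$ in each, and then sends the auxiliary cutoff $m\to\infty$. You instead decompose by \emph{event type} into a bulk (events surviving in the $p=\infty$ table) and a tail, and treat the bulk in one shot by dominated convergence. The device that makes this work is your uniform ratio bound $a_j/a_p\le C\,u_c^{p-j}$, which is available precisely because $\alpha_1=-1$ at high temperature so that $a_p u_c^p\to b>0$; this yields a $p$-independent summable majorant and removes the need for the paper's double limit in $m$. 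Both arguments ultimately reduce the problem to the same residual estimate, namely that the total contribution of the $\rp[p+k],\rn[p+k]$ events is $o(1)$, which amounts to $p\sum_{k\ge 0}z_{p+1,k}u_c^{p+k}\to 0$. Your three-range splitting of $k$ is the right way to extract the needed $O(p^{-3/2})$ rate from Theorem~\ref{thm:asympt}; incidentally, the paper's asserted rate $\Prob_p(P_1\le m)\sim c_m p^{-5/2}$ for this piece appears to be $p^{-3/2}$ rather than $p^{-5/2}$, but either suffices for $p\cdot\Prob_p(P_1\le m)\to 0$, and your stated $O(p^{-1/2})$ for the contribution is the correct order.

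One caveat: your ``alternative route'' via the convergence of $\sum_{p,q}u_c^{p+q}z_{p,q}$ only yields $\sum_k z_{p+1,k}u_c^{p+k}=o(1)$, not the $o(p^{-1})$ you need after multiplying by $p$. Being a term of a convergent series does not by itself give a rate. You would still have to feed in the two-step or diagonal asymptotics to upgrade this to $O(p^{-3/2})$, at which point you are essentially back to the three-range partition. So keep the partition as your primary argument and drop the alternative.
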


\begin{proof}
For $p>m>1$, we make the decomposition
\begin{equation*}
\EE_p(X_1)=\EE_p\left(X_1\id_{\{X_1\geq -m\}}\right)+\EE_p\left(X_1\id_{\{X_1\leq -p+m\}}\right)+\EE_p\left(X_1\id_{\{X_1\in (-p+m,-m)\}}\right).
\end{equation*} By the convergence of the peeling process,
\begin{equation*}
\EE_p\left(X_1\id_{\{X_1\geq -m\}}\right)\cv[]p\EE_\infty\left(X_1\id_{\{X_1\geq -m\}}\right)\cv[]m\EE_\infty(X_1).
\end{equation*}
 For the second term, $\Prob_p(X_1\leq -p+m)=\Prob_p(P_1\leq m)\sim c_m\cdot p^{-5/2}$ as $p\to\infty$ for some constant $c_m>0$, which shows that
\begin{equation*}
\EE_p\left(X_1\id_{\{X_1\leq -p+m\}}\right)=-p\Prob_p(X_1\leq -p+m)+\sum_{k=0}^m k\Prob_p(X_1=k-p)\cv[]p 0.
\end{equation*}
Finally, the third term can be explicitly written using the data of Table~\ref{tab:prob(p)} as
\begin{equation*}
\EE_p\left(X_1\id_{\{X_1\in (-p+m,-m)\}}\right)=-\sum_{k=m+1}^{p-m-1}k\Prob_p(X_1=-k)=-\sum_{k=m+1}^{p-m-1}k\left(tz_{k+2,0}\frac{a_{p-k}}{a_p}u+\nu t z_{k,1}\frac{a_{p-k}}{a_p}\right).
\end{equation*}
By the asymptotics $z_{k+2,0}\frac{a_{p-k}}{a_p}\underset{p\to\infty}\sim z_{k+2,0}u_c^k\underset{k\to\infty}\sim\text{ cst}\cdot k^{-5/2}$ and a similar one for $z_{k,1}\frac{a_{p-k}}{a_p}$, the sum on the right hand side can be approximated by a remainder of a convergent series, and therefore taking the limits $p,m\to\infty$ yields the claim.

The case $\lim_{p\to\infty}\EE_p(Y_1)=\EE_\infty(Y_1)$ is similar, except easier, since it only requires one cutoff at $Y_1=-m$. Indeed, the same asymptotics hold for $Y_1$. The cases $\lim_{p\to\infty}\hat{\EE}_p(X_1)=\hat{\EE}_\infty(X_1)$ and $\lim_{p\to\infty}\hat{\EE}_p(Y_1)=\hat{\EE}_\infty(Y_1)$ follow by symmetry.
\end{proof}

\begin{remark}\label{rmk:pdrift}
By Proposition~\ref{prop:orderparam intro} and symmetry, we have then
\begin{equation*}
\lim_{p\to\infty}\EE_p(X_1)>0\qquad\text{and}\quad\lim_{p\to\infty}\hat{\EE}_p(X_1)<0,
\end{equation*}
and likewise\begin{equation*}
\lim_{p\to\infty}\EE_p(Y_1)<0\qquad\text{and}\quad\lim_{p\to\infty}\hat{\EE}_p(Y_1)>0.
\end{equation*} This property is the main implication of Lemma~\ref{lem:integralconvergence}, which we keep on using in this section.
\end{remark}

\begin{remark}
In \cite{CT20}, we used the same decomposition to show that $\EE_p(X_1)\cv[]p-\frac{1}{3}\EE_\infty(X_1)<0$ at $\nu=\nu_c$. This blow-up of the probability mass was due to the fact that $\Prob_p(X_1\leq -p+m)\sim c_m\cdot p^{-1}$ at $\nu=\nu_c$. Currently, we do not have an interpretation of this symmetry breaking.

\end{remark}

Under mild conditions, a Markov chain on the positive integers with an asymptotically negative drift is expected to be recurrent. The next lemma verifies this in our case.

\begin{lemma}\label{lem:hit0h}
If $\nu\in(1,\nu_c)$, then $T_0$ is finite $\Probh_p$-almost surely.
\end{lemma}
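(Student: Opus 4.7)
The plan is to obtain a negative drift for the chain $(P_n)$ when the boundary length is large, and then apply a standard Foster--Lyapunov supermartingale argument. First I would combine Proposition~\ref{prop:orderparam} (which gives $\EE_\infty(Y_1) = -\EE_\infty(X_1) < 0$ for $\nu \in (1,\nu_c)$) with the spin-flip symmetry $\hat\EE_\infty(X_1) = \EE_\infty(Y_1)$ noted in Section~\ref{sec:orderparam} to conclude that $\hat\EE_\infty(X_1) < 0$. Lemma~\ref{lem:integralconvergence} then provides $p_0 \ge 1$ and $\delta > 0$ such that $\hat\EE_{p'}(X_1) \le -\delta$ for every $p' \ge p_0$.

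Second, I would introduce the hitting time $\tau := \inf\{n \ge 0 : P_n \le p_0\}$ and show it is $\Probh_p$-a.s.\ finite. By the spatial Markov property of the peeling, on $\{\tau > n\}$ we have $\hat\EE_p[P_{n+1} - P_n \mid \filtr_n] = \hat\EE_{P_n}(X_1) \le -\delta$, hence the stopped process $M_n := P_{n \wedge \tau} + \delta(n \wedge \tau)$ is a non-negative supermartingale under $\Probh_p$. Taking expectations yields $\delta\, \hat\EE_p(n \wedge \tau) \le p$ for every $n$, and the monotone convergence theorem gives $\hat\EE_p(\tau) \le p/\delta < \infty$.

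Third, I would argue that, once the chain has entered $\{0, 1, \ldots, p_0\}$, it hits $0$ in finite time. Inspection of Table~\ref{tab:probh(p)} shows that from any state $p' \in \{1, \ldots, p_0\}$, the peeling event $\Step_1 = \rn[p'-1]$ has positive probability and forces $P_1 = 0$, so $r := \min_{1 \le p' \le p_0} \Probh_{p'}(T_0 = 1) > 0$. Since on $\{T_0 = \infty\}$ the chain must leave and re-enter $\{0, \ldots, p_0\}$ infinitely often (each excursion ends in finite expected time by iterating Step~2 via the Markov property), a conditional Borel--Cantelli argument combined with the uniform lower bound $r$ yields $T_0 < \infty$ almost surely under $\hat\Prob_p$.

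The main point to watch is that the negative jumps of $X_1$ can be as large as $-p$ in a single step, so a priori one might worry about the validity of the supermartingale step. However, this step only requires a bound on the \emph{expected} one-step increment, which Lemma~\ref{lem:integralconvergence} supplies directly, with no moment or tail control on the jumps needed. Thus the argument reduces to a clean chaining of the drift estimate, the supermartingale bound, and a Markov-chain recurrence conclusion.
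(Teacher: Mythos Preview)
Your proposal is correct and follows essentially the same approach as the paper: both use Lemma~\ref{lem:integralconvergence} to obtain a negative drift $\hat\EE_{p'}(X_1)\le -\delta$ for large $p'$, and then a Foster--Lyapunov type recurrence argument. The paper invokes irreducibility of $(P_n)$ and cites a general criterion from \cite{FK04} to conclude positive recurrence of the finite set $\{0,\dots,p_*\}$, whereas you unfold the supermartingale bound explicitly and replace the irreducibility step by the direct observation that $\Probh_{p'}(\Step_1=\rn[p'-1])>0$ gives a one-step jump to $0$; this makes your argument self-contained at the cost of a few more lines.
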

\begin{proof}
Since $\nseq{P}$ is an irreducible Markov chain on the positive integers, it is enough to show that $\Probh_{p'}(T_{p'}<\infty)=1$ for some $p'>0$. Namely, this means that the chain will return to the finite set $\{0,\dots p'\}$ infinitely many times, and thus there exists a recurrent state.

Observe that by Lemma~\ref{lem:integralconvergence}, there exists an index $p_*>0$ such that $\hat{\EE}_{p'}(X_1)\leq-a$ for some $a>0$ if $p'>p_*$. On the other hand, $\hat{\EE}_{p'}(X_1)\leq\max_{0\leq i\leq p_*}\hat{\EE}_{i}(|X_1|)<\infty$ for $p'\leq p_*$. Then, it follows that the set $\{0,\dots,p_*\}$ is actually positive recurrent; see \cite[Theorem~1]{FK04} for a more general statement via Lyapunov functions, in which the Lyapunov function is chosen to be the identical mapping.
\end{proof}

Now, the proof of the local convergence $\prob_{p,q}^\nu\cv{q}\prob_p^\nu$ goes along the same lines as in the case $\nu\geq \nu_c$. Let us proceed to the proof of the convergence $\prob_p^\nu\cv{p}\prob_\infty^\nu$. For this, we cannot just choose the peeling algorithm $\algo_\+$ (or $\algo_\<$, respectively) since the peeling exploration under that algorithm drifts to the right (resp. to the left) in the limit by Lemma~\ref{lem:integralconvergence}. These drifts, however, allow us to construct a mixed peeling algorithm $\algo=\algo_m$ as follows.

\paragraph{Peeling algorithm $\algo_m$.} Recall that for $\nu\in(1,\nu_c)$, we have the drift conditions $\EE_\infty(X_1)>0$ and $\EE_\infty(Y_1)<0$ (together with the symmetric conditions $\hat{\EE}_\infty(X_1)<0$ and $\hat{\EE}_\infty(Y_1)>0$). These conditions allows us to construct the following sequence of stopping times:


Set $X_0=Y_0=0$ and $\tau_0^r=0.$ \begin{itemize}
\item Start peeling with $\algo_\<$ until the time $\tau_1^l:=\inf\{n>0:Y_n<-1\}$, which is almost surely finite under $\Prob_\infty$ due to the drift condition.
\item Proceed peeling with $\algo_\+$ until the time $\tau_1^r:=\inf\{n>\tau_1^l:X_n<-X_{\tau_1^l}-1+\min_{0\leq m\leq\tau_1^l}X_m\}$, which is a.s. finite under $\hat{\Prob}_\infty$, conditional on $\tau_1^l$.
\end{itemize} Repeat inductively for $k\geq 1$:
\begin{itemize}
\item At time $\tau_{k-1}^r$, run peeling with $\algo_\<$ until $\tau_k^l:=\inf\{n>\tau_{k-1}^r:Y_n<-Y_{\tau_{k-1}^r}-1+\min_{\tau_{k-1}^l\leq m\leq\tau_{k-1}^r}Y_m\}$.
\item At time $\tau_k^l$, run peeling with $\algo_\+$ until $\tau_k^r:=\inf\{n>\tau_k^l:X_n<-X_{\tau_k^l}-1+\min_{\tau_{k-1}^r\leq m\leq\tau_k^l}X_m\}$.
\end{itemize}

Obviously, the above constructed $\algo_m$ is a local and a Dobrushin-stable peeling algorithm. We denote the law of this peeling process by $\tilde\Prob_p$ ($p\in\N\cup\infty$). Note that the above stopping times may be infinite if $p<\infty$. Let
\begin{equation*}
\tilde\theta_R:=\inf\{n>\tau_R^l:X_n<-X_{\tau_R^l}-1+\min_{\tau_{R-1}^r\leq m\leq\tau_R^l}X_m\}.
\end{equation*}
The stopping time $\tilde\theta_R$ may be infinite for $p<\infty$, but the drift condition assures that $\tilde\Prob_\infty(\tilde\theta_R<\infty)=1$. It follows that under $\tilde\Prob_\infty$, the peeling process with algorithm $\algo_m$ explores the half-plane by distance layers, in the sense that the finite stopping time $\tilde\theta_R$ is an upper bound for the covering time $\theta_R$ of the ball of radius $R$. Hence, choosing $\algo=\algo_m$ in the general construction of the local limit and $\theta=\theta_R$ in Proposition~\ref{lem:stopped peeling} will give the construction of $\prob_\infty^\nu$ and yield the local convergence $\prob_p^\nu\cv{p}\prob_\infty^\nu$. To be a bit more precise, we still need to verify that $\tilde\Prob_p\to\tilde\Prob_\infty$ weakly. This is shown in the following lemma.


\begin{lemma}\label{lem:mixedpeelingconv}
Let $\nu\in(1,\nu_c)$. Then $\tilde\Prob_p\to\tilde\Prob_\infty$ as $p\to\infty$.
\end{lemma}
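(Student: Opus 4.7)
Since the peeling sequence $\nseq[1]{\Step}$ takes values in the countable discrete space $\steps^{\N^*}$, a portmanteau argument reduces the claimed weak convergence to pointwise convergence on the generating $\pi$-system of finite cylinders: it will suffice to prove that for every $n\ge 1$ and every word $(s_1,\dots,s_n)\in\steps^n$,
\begin{equation*}
\tilde\Prob_p\bpar{(\Step_1,\dots,\Step_n)=(s_1,\dots,s_n)} \ \cv[]p\ \tilde\Prob_\infty\bpar{(\Step_1,\dots,\Step_n)=(s_1,\dots,s_n)}.
\end{equation*}

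The main observation is that the mixed algorithm $\algo_m$ is deterministic given the past peeling events. More precisely, once a word $(s_1,\dots,s_n)$ is fixed, the partial sums $(X_k,Y_k)_{0\le k\le n-1}$ are prescribed constants, and hence so are the truncated stopping times $\tau_j^l\wedge n$ and $\tau_j^r\wedge n$. In particular, for each $k\le n$ the algorithm in force at step $k$ (either $\algo_\<$ or $\algo_\+$) is a deterministic function of the prefix $(s_1,\dots,s_{k-1})$. Consequently, the peeling probability factors as
\begin{equation*}
\tilde\Prob_p\bpar{(\Step_1,\dots,\Step_n)=(s_1,\dots,s_n)}\ =\ \prod_{k=1}^n \pi^{(k)}_{p+X_{k-1}}(s_k),
\end{equation*}
where $\pi^{(k)}_{\adot}$ is either the one-step transition kernel $\Prob_{\adot}(\Step_1=\adot)$ of Table~\ref{tab:prob(p)} (when $\algo_\<$ is active) or the kernel $\Probh_{\adot}(\Step_1=\adot)$ of Table~\ref{tab:probh(p)} (when $\algo_\+$ is active). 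The same identity with $p=\infty$ holds under $\tilde\Prob_\infty$ using the limiting kernels.

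The plan is then to pass to the limit factor by factor. For each fixed $k$, the constant $X_{k-1}$ is finite, so $p+X_{k-1}\to\infty$ as $p\to\infty$; in particular, $p+X_{k-1}\ge 1$ for all large $p$, which guarantees that a $\+$ edge is available whenever $\algo_\+$ is invoked and so the expression above is well defined. The distributional convergences $\Prob_p\to\Prob_\infty$ and $\Probh_p\to\Probh_\infty$ proved in Section~\ref{seq:infinitepeeling} give $\pi^{(k)}_{p+X_{k-1}}(s_k)\to\pi^{(k)}_\infty(s_k)$ for every $k$, and taking the finite product yields the desired pointwise limit.

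The only real difficulty lies in the bookkeeping required to verify that the alternation schedule is genuinely a function of the history alone; once this is settled the probabilistic content reduces to the already-established one-step convergences plus a finite-product continuity argument. Note in particular that no tightness argument or control of the event $\{T_0\le n\}$ is needed here, because we only compare the two laws on fixed finite cylinders.
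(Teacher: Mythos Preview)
Your proof is correct and follows essentially the same approach as the paper: both decompose the cylinder probability into a product governed by the spatial Markov property and then invoke the one-step convergences $\Prob_{p'}\to\Prob_\infty$ and $\Probh_{p'}\to\Probh_\infty$. The paper groups the factors into blocks delimited by the stopping times $\tau_j^{l},\tau_j^{r}$ (conditioning on their values), whereas you factor step by step after observing that the active algorithm at time $k$ is determined by $(s_1,\dots,s_{k-1})$; this is a minor presentational difference with no substantive divergence.
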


\begin{proof}
From the construction of $\tilde\Prob_p$ and by the spatial Markov property, for all $n\ge 1$ and all $\step_1,\cdots,\step_n\in \steps$, as well as for all $k\in [1,n]$ and $1\le m_1^l\le m_1^r\le\dots\le m_k^l\le m_k^r\le n$, we have
\begin{align*}
&\tilde\Prob_p(\Step_1=\step_1,\cdots, \Step_n=\step_n, \tau_1^l=m_1^l, \tau_1^r=m_1^r,\dots,\tau_k^l=m_k^l, \tau_k^r=m_k^r)\\ &=\Prob_p(\Step_1=\step_1,\dots,\Step_{m_1^l}=\step_{m_1^l},\tau_1^l=m_1^l)\hat\Prob_{p+x_{m_1^l}}(\Step_1=\step_{m_1^l+1},\dots,\Step_{m_1^r-m_1^l}=\step_{m_1^r},\tau_1^r=m_1^r) \\ &\cdots\Prob_{p+x_{m_{k-1}^r}}(\Step_1=\step_{m_{k-1}^r+1},\dots,\Step_{m_k^l-m_{k-1}^r}=\step_{m_k^l},\tau_k^l=m_k^l)\hat\Prob_{p+x_{m_k^l}}(\Step_1=\step_{m_k^l+1},\dots,\Step_{m_k^r-m_k^l}=\step_{m_k^r},\tau_k^r=m_k^r)\\
&\cdot\Prob_{p+x_{m_k^r}}(\Step_1=\step_{m_k^r+1},\dots,\Step_{n-m_k^r}=\step_n),
\end{align*} where the peeling events $(\step_i)_{1\le i\le n}$ completely determine the perimeter variations $(x_i)_{1\le i\le n}$. By the convergences $\Prob_p\to\Prob_\infty$ and $\hat{\Prob}_p\to\hat{\Prob}_\infty$, and by another application of the spatial Markov property, the right hand side tends to the limit $\tilde\Prob_\infty(\Step_1=\step_1,\cdots, \Step_n=\step_n, \tau_1^l=m_1^l, \tau_1^r=m_1^r,\dots,\tau_k^l=m_k^l, \tau_k^r=m_k^r)$. The claim follows.

\end{proof}

\begin{proof}[Proof of the convergence $\prob_p^\nu \protect{\cv p} \prob_\infty^\nu$ for $1<\nu<\nu_c$]
We write
\begin{equation*}
\prob_p^\nu(\btsq_R=\bmap)=\prob_p^\nu([\emapo_{\tilde\theta_R}]_R=\bmap, \tilde\theta_R<N)+\prob_p^\nu(\btsq_R=\bmap, \tilde\theta_R\geq N),
\end{equation*}
where the last term satisfies
\begin{equation*}
\prob_p^\nu(\btsq_R=\bmap, \tilde\theta_R\geq N)\leq\tilde{\Prob}_p(\tilde\theta_R\ge N)\cv[]p\tilde{\Prob}_\infty(\tilde\theta_R\ge N)\cv[]N 0
\end{equation*}
by Lemma~\ref{lem:mixedpeelingconv} and the drift condition. Thus, letting first $p\to\infty$ and then $N\to\infty$ yields the claim.
\end{proof}

\begin{proof}[Proof of the convergence $\prob_{p,q}^\nu \protect{\cv {p,q}} \prob_\infty^\nu$ while $0<\lambda'\leq\frac{q}{p}\leq\lambda$ for $1<\nu<\nu_c$]
It is not hard to see that the counterparts of the above lemmas also hold, \emph{mutatis mutandis}, in the diagonal regime. The essential matter is that the peeling processes under $\prob_{p,q}^\nu$ converge towards the peeling processes under $\prob_\infty^\nu$, due to the diagonal asymptotics. Again, we take into account $\rho^\dagger$ as a target.
\end{proof}

\section{The local limit at $\nu=\nu_c$ in the diagonal regime}\label{sec:locallimit c diag}

Throughout this section, we assume that $\nu=\nu_c$ and $\frac{q}{p}\in[\lambda',\lambda]$ for some $0<\lambda'\leq 1\leq\lambda<\infty$ as $p,q\to\infty$, and study the local limit of $\prob_{p,q}$ in this setting. We stress that this diagonal regime is slightly less general than in Theorem~\ref{thm:asympt}, since we require that it always contains the main diagonal $p=q$. The reason is purely technical and becomes evident in the proof of Lemma~\ref{lem:cst barrier} in Appendix \ref{sec:bigjumplemma proof}, where we need to control a ratio of random perimeters. We find, unsurprisingly, the same local limit $\prob_\infty=\prob_\infty^{\nu_c}$ as discovered in \cite{CT20}. Moreover, we find the scaling limit of the random time at which the peeling process jumps to a neighborhood of $\rho^\dagger$. The starting point of our analysis is the diagonal asymptotics (Theorem~\ref{thm:asympt}) \begin{equation*}\label{eq:diagasympt}
z_{p,q}(\nu)\sim
\frac{b\cdot c(q/p)}{\Gamma\left(-\frac{4}{3}\right)\Gamma\left(-\frac{1}{3}\right)}u_c^{-(p+q)}p^{-11/3}\qquad (\nu=\nu_c) .
\end{equation*}It is then easy to see that the peeling step probabilities converge to the same limits as in \cite{CT20} in the respective diagonal regime. However, it is natural to make the following modification for the peeling process.

\begin{table}[b!]
\centering
\begin{tabu}{|L|S|C|  |L|S|C| L}
\cline{1-6}
\step &\Prob_{p,q+1}(\Step_1 = \step) &(X_1,Y_1) &
\step &\Prob_{p,q+1}(\Step_1 = \step) &(X_1,Y_1)
\hl \cp	& 		\zp{p+2,q}				&(2,-1)
&	\cm	& \nu \zp{p,q+2} 				&(0,1)
\hl \lp	& 		\zzp{p+1,q-k}{1,k}		&(1,-k-1)
&	\lm	& \nu 	\zzp{p,q-k+1}{0,k+1}		&(0,-k)	&	(0\le k \le\theta q)
\hl	\rp	& 		\zzp{k+1,0}{p-k+1,q}		&(-k+1,-1)
&	\rn & \nu 	\zzp{k,1}{p-k,q+1}		&(-k,0)	&	(0\le k\le \theta p)
\hl \lp[q-k]	& 		\zzp{p+1,k}{1,q-k}	&(1,-q+k-1)
&	\lm[q-k]	& \nu 	\zzp{p,k+1}{0,q-k+1}	&(0,-q+k)
&(0<k<\theta q)
\hl \rp[p-k]	& 		\zzp{p-k+1,0}{k+1,q}	&(-p+k+1,-1)
&	\rn[p-k]	& \nu 	\zzp{p-k,1}{k,q+1}	&(-p+k,0)
&(0\le k<\theta p)
\hl
\end{tabu}
\caption{Law of the first peeling event $\Step_1$ under $\Prob_{p,q+1}$ and the corresponding $(X_1,Y_1)$ under the peeling process of the left-most interface with the target $\rho^\dagger$. In the table, $\theta\in (0,1)$ is an arbitrary cutoff, which roughly measures whether the perimeter process has only small jumps or not. Observe that the last two rows of the table are redundant with the second and the third row, respectively, in order to emphasize the cutoff for taking the limit. Taking the limit $(p,q)\to\infty$ gives the data of Table~\ref{tab:pinfty}.
}\label{tab:prob(p,q)diag}
\end{table}

We choose the peeling process with the target $\rho^\dagger$, driven by the peeling algorithm $\algo_{\<}^\dagger$ and described in Section~\ref{sec:peeling}: If the peeling step $\step_n$ splits the triangulation into two pieces, we choose the unexplored part $\umap_n$ to be the one containing $\rho^\dagger$. If $\rho^\dagger$ is included in both, we choose the one in the right. This gives rise to a different perimeter variation process $(X_n,Y_n)$, whose law is described in Table~\ref{tab:prob(p,q)diag}.

Accordingly, we define for $m\geq 0$
\begin{equation*}
T_m:=\inf\{n\ge 0: \min\{P_n,Q_n\}\leq m\}.
\end{equation*}
In other words, $T_m$ is just the first time at which either the \+ or the \< boundary length of the unexplored map is at most $m$. Using the peeling steps, we also see that $T_m=\inf\{n\ge 0: \Step_n\in\{\rp[p-k+1],\rn[p-k], \lp[q-k-1], \lm[q-k]: 0\leq k\leq m\}\}$. The analysis of the hitting time $T_m$ yield the main new results of this section. The first one is a technical lemma which generalizes the so-called one-jump lemma of \cite{CT20} to the diagonal setting. Its proof follows the recipe given in \cite[Appendix C]{CT20}, although due to taking the limit along a diagonal, additional technicalities arise. The second result is Theorem~\ref{thm:scaling}, whose proof mimics the proof of \cite[Proposition~11]{CT20}. A key novelty of the two aforementioned proofs in our current work is controlling the ratio $Q_n/P_n$ of the perimeter during the course of the peeling exploration described above. Finally, we detail the proof of the local convergence in the diagonal regime, which follows the idea presented in \cite[Sections 5.4-5.5]{CT20}, with important modifications resulting from the fact that there is no presence of an infinite boundary before taking the limit. However, it turns out that applying the one-jump lemma~\ref{lem:one jump diag} works almost exactly like applying the corresponding lemma in \cite{CT20}.

\subsection{The one-jump phenomenon of the perimeter process}\label{sec:onejumpscaling} 

\newcommand{\tauxy}{\tau^\epsilon_x}
\newcommand{\barrier}[1][x]{#1 f_\epsilon}

Next, we investigate an analog of the \emph{large jump} phenomenon discovered in \cite{CT20}. For that, fix $\epsilon>0$ and let \begin{equation*}
\barrier[](n) = \mb({ (n+2)(\log(n+2))^{1+\epsilon} }^{3/4}.
\end{equation*}Define the stopping time

\begin{equation*}
\tauxy = \inf\Set{n\ge 0}{\abs{X_n-\mu n} \vee \abs{Y_n-\mu n} > \barrier(n) }\,.
\end{equation*}
where $x>0$.

\begin{lemma}[One jump to zero]\label{lem:one jump diag}
For all $\epsilon>0$ and $0<\lambda'\leq 1\leq\lambda<\infty$,
\begin{equation*}
\lim_{x,m \to\infty} \limsupp \Prob_{p,q} (\tauxy<T_m) = 0\quad\text{while}\quad\frac{q}{p}\in[\lambda',\lambda].
\end{equation*}
\end{lemma}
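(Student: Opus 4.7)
The plan is to adapt the proof of the one-jump lemma from \cite[Appendix B]{CT20} to the diagonal regime, relying on three ingredients: (i) a likelihood-ratio comparison between $\Prob_{p,q}$ and $\Prob_\infty$, (ii) a truncation of the heavy-tailed increments into \emph{small} and \emph{large} parts, and (iii) a separate control on the cumulative fluctuations of the small part and on the number of large jumps occurring strictly before $T_m$.

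First I would use the spatial Markov property to express the Radon--Nikodym derivative as
\begin{equation*}
\left.\frac{d\Prob_{p,q}}{d\Prob_\infty}\right|_{\filtr_n} \;=\; \frac{z_{P_n,Q_n}\,u_c^{P_n+Q_n}}{z_{p,q}\,u_c^{p+q}} \,,
\end{equation*}
and then invoke the diagonal asymptotics of Theorem~\ref{thm:asympt} to bound this ratio by a constant $C(\lambda',\lambda)$ uniformly on the event $\{P_n\geq p/2,\ Q_n\geq q/2\}$. On $\{n<\tauxy\wedge T_m\}$, the definition of $\tauxy$ gives $X_n\geq\mu n-\barrier(n)$ and $Y_n\geq\mu n-\barrier(n)$; since $\barrier(n)=o(n+p)$, we obtain $P_n\geq p/2$ and $Q_n\geq q/2$ for all $n$ once $p,q$ are sufficiently large (with $x$ fixed), so the bound indeed applies throughout the event of interest.

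Working under $\Prob_\infty$, I would then exploit the fact that the increments $(X_1,Y_1)$ have mean $(\mu,\mu)$ and that each coordinate has tail of order $k^{-4/3}$ (read from Table~\ref{tab:pinfty} combined with the critical case of the first asymptotics in Theorem~\ref{thm:asympt}), which places them in the domain of attraction of a $4/3$-stable law and matches the $n^{3/4}(\log n)^{(3/4)(1+\epsilon)}$ scaling of $\barrier$. Each increment is split at height $\eta\barrier(k)$ for small $\eta>0$. The centered small-part random walk satisfies a Doob--Kolmogorov maximal inequality, applied coordinatewise and assembled by a union bound, yielding
\begin{equation*}
\Prob_\infty\!\left(\exists\, n\leq N:\ \max\!\big(|X_n^{\mathrm{sm}}-\mu_{\mathrm{sm}}n|,\,|Y_n^{\mathrm{sm}}-\mu_{\mathrm{sm}}n|\big)>\tfrac{x}{2}\barrier(n)\right)\xrightarrow{x\to\infty}0
\end{equation*}
uniformly in $N$. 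The large-jump contribution is handled by a Borel--Cantelli type estimate showing that the number of increments exceeding $\eta\barrier(k)$ is uniformly bounded up to $T_m$, together with the observation that a jump of size comparable to the remaining perimeter is precisely the jump that realizes $T_m$ once $m$ is taken large. Combining these estimates via the R--N bound of the first step, and sending $x\to\infty$ and then $m\to\infty$, yields the claim.

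The main obstacle, compared to \cite[Appendix~B]{CT20}, is the bivariate nature of the process in the diagonal regime: there $q$ was sent to infinity first, so only the univariate process $X_n$ needed to be controlled, whereas here both $X_n$ and $Y_n$ can exhibit large negative jumps and either of them may trigger $T_m$. The truncation and maximal inequalities must therefore be run jointly on the max-norm deviation $\max(|X_n-\mu n|,|Y_n-\mu n|)$, and the large-jump analysis must accommodate the mixed events $\LL^{\pm}_{q-k}$ and $\RR^{\pm}_{p-k}$ at the bottom of Table~\ref{tab:prob(p,q)diag}, which are absent when $q=\infty$ and contribute directly to $T_m$ for values $k\leq m$; it is the symmetric role of the two coordinates that forces the double limit $x,m\to\infty$ in the statement.
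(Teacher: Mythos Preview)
Your Radon--Nikodym strategy contains a genuine gap. The formula
\[
\left.\frac{d\Prob_{p,q}}{d\Prob_\infty}\right|_{\filtr_n}
\;=\;\frac{z_{P_n,Q_n}\,u_c^{P_n+Q_n}}{z_{p,q}\,u_c^{p+q}}
\]
is indeed correct on $\filtr_n$, and it is bounded by a constant on $\{P_n\ge p/2,\ Q_n\ge q/2\}$. The problem is that the event $\{\tauxy<T_m\}$ lives in $\filtr_{\tauxy}$, not in $\filtr_{\tauxy-1}$: you must evaluate the derivative \emph{after} the jump that triggers $\tauxy$. That jump may take one perimeter coordinate down to any value in $(m,\theta p]$ while still keeping $T_m>\tauxy$, and in that range the derivative blows up like $(p/P_{\tauxy})^{11/3}$, up to $(p/m)^{11/3}$. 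Equivalently, for the one-step transitions $\RR^\pm_{p-k}$, $\LL^\pm_{q-k}$ with $m<k\le\theta p$ (the last two rows of Table~\ref{tab:prob(p,q)diag}), one computes $\Prob_{p,q}(\Step_1=\RR^\pm_{p-k})/\Prob_\infty(\Step_1=\RR^\pm_{p-k})\asymp p^{4/3}k^{-4/3}$, which is unbounded. These are precisely the ``medium--large'' jumps that escape the barrier without realising $T_m$; under $\Prob_{p,q}$ each has probability $\asymp p^{-1}k^{-4/3}$, and summing over $k\in(m,\theta p]$ and over $O(p)$ time steps gives a contribution of order $m^{-1/3}$, which is why the statement requires $m\to\infty$. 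Your transfer to $\Prob_\infty$ simply loses this mass, since under $\Prob_\infty$ the same jumps have probability $\asymp p^{-7/3}$ and contribute nothing.

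The paper avoids this by never using a global change of measure. It works directly under $\Prob_{p,q}$ through one-step estimates (Lemma~\ref{lem:estimates}, in particular part~\ref{item:estimate large jump}, which isolates the $p^{-1}m^{-1/3}$ term), combines a union bound for large jumps with a Chernoff bound for small jumps to control a \emph{constant} barrier (Lemma~\ref{lem:cst barrier}), and then stitches the growing barrier $\barrier$ from a dyadic sequence of constant barriers. If you want to salvage your approach, you would need to stop the Radon--Nikodym comparison at time $\tauxy-1$ and handle the last step under $\Prob_{p,q}$ separately; but that last step is exactly Lemma~\ref{lem:estimates}\ref{item:estimate large jump}, and once you have it you are essentially back to the paper's argument.
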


The proof of Lemma~\ref{lem:one jump diag} is a modification of the proof of the analogous Lemma~10 in \cite{CT20}. The necessary changes are left to Appendix~\ref{sec:bigjumplemma proof}. Next, we prove the main scaling limit result of this article.


\newcommand*{\anom}{\mathcal{E}}
\newcommand*{\nom}{\mathcal{N}}
\begin{proof}[Proof of Theorem~\ref{thm:scaling}]
First, assuming that a scaling limit of $p^{-1}T_m$ exists for every $m\geq 0$, it actually does not depend on $m$. Namely, since $T_0\ge T_m$, the strong Markov property gives
\begin{align}
 \Prob_{p,q}(T_0-T_m >\epsilon p)	\ &=	 \	\EE_{p,q}\m[{ \Prob_{P_{T_m},Q_{T_m}}(T_0 >\epsilon p) }
\notag	\\ &
\le\	\EE_{p,q}\m[{\sum_{p'=0}^m\Prob_{p',Q_{T_m}}(T_0 >\epsilon p)+\sum_{q'=0}^m\Prob_{P_{T_m},q'}(T_0>\epsilon p)}.\label{eq:Tdifference}
\end{align}
Let $M>0$ be some large constant, and fix $p'\leq m$. We write \begin{align*}
\Prob_{p',Q_{T_m}}(T_0 >\epsilon p)&=\Prob_{p',Q_{T_m}}(T_0 >\epsilon p, \ Q_{T_m}>M)+\Prob_{p',Q_{T_m}}(T_0 >\epsilon p| \ Q_{T_m}\leq M)\Prob_{p',Q_{T_m}}(Q_{T_m}\leq M)\\ &\leq\Prob_{p',Q_{T_m}}(T_0 >\epsilon p, \ Q_{T_m}>M)+\sum_{q'=0}^M\Prob_{p',q'}(T_0>\epsilon p).
\end{align*}
By \cite[Proposition~2]{CT20} (actually, by its analog for the peeling with target), $\Prob_{p',q}\cv[]q\Prob_{p'}$. Therefore, the first term can be bounded from above by $\Prob_{p'}(T_0>\epsilon p)+\epsilon'$ for any $\epsilon'>0$,  provided $M$ is large enough. In that case, we obtain
\begin{equation*}
\sum_{p'=0}^m\Prob_{p',Q_{T_m}}(T_0 >\epsilon p)\le \sum_{p'=0}^m\Prob_{p'}(T_0>\epsilon p)+\sum_{p'=0}^m\sum_{q'=0}^M\Prob_{p',q'}(T_0>\epsilon p)+(m+1)\epsilon'.
\end{equation*}It is easy to see that the right hand side converges to zero as $p\to\infty$ and $\epsilon'\to 0$. The second term in Equation~\eqref{eq:Tdifference} is treated similarly, and finally we deduce $\Prob_{p,q}(T_0-T_m >\epsilon p)\cv[]{p,q} 0$.

Let us then proceed to the existence of the scaling limit. First, fix $x>0$, $m\in\natural$ and $\epsilon\in(0,\mu)$. Take $p$ and $q$ large enough such that $\Prob_{p,q}$-almost surely, $\tauxy\le T_m$. Denote $\anom := \{ \tauxy<T_m \}$ and $\nom_n := \{ \tauxy>n  \}$. Clearly $\nseq \nom$ is a decreasing sequence, and one can check that
\begin{equation}\label{eq:anomaly inclusion}
\nom_{n+1}	\ \subset\ \nom_n\setminus\{T_m=n+1\}
				\ \subset\ \nom_{n+1} \cup \anom\,.
\end{equation}

Let $c_m(\lambda):=\lim_{p,q\to\infty} p\cdot\Prob_{p,q}(T_m=1)$, where the limit is taken such that $q/p\to\lambda$. In other words, $q=\lambda p+o(p)$, and from the asymptotics of Theorem~\ref{thm:asympt},

\begin{equation}\label{eq:def c_m}
\Prob_{p,q}(T_m=1)\sim\frac{c_m\left(\frac{q}{p}\right)}{p}
\end{equation}
as $p,q\to\infty$, $q/p\to\lambda$. 
On the event $\nom_n$, we have $P_0+\mu n -\barrier(n) \le P_n\le P_0+\mu n + \barrier(n)$ and $Q_0+\mu n -\barrier(n) \le Q_n\le Q_0+\mu n + \barrier(n)$. This, in particular, gives
\begin{equation}\label{eq:lambdaestimate}
\frac{\lambda p+\mu n-x f_\epsilon(n)+o(p)}{p+\mu n+x f_\epsilon(n)}\le\frac{Q_n}{P_n}\le\frac{\lambda p+\mu n+x f_\epsilon(n)+o(p)}{p+\mu n-x f_\epsilon(n)}.
\end{equation}
Denote $\lambda_n:=Q_n/P_n$. Then combining the previous equation with \eqref{eq:def c_m}, we also obtain that for $P_0=p$ and $Q_0=q$ large enough,
\newcommand*{\cmore}[1][p]{ \frac{c_m(\lambda_n) +\epsilon}{#1+\mu n - \barrier(n)} }
\newcommand*{\cless}[1][p]{ \frac{c_m(\lambda_n) -\epsilon}{#1+\mu n + \barrier(n)} }
\begin{equation*}
\cless \id_{\nom_n} \ \le\ \id_{\nom_n} \Prob_{P_n,Q_n}(T_m=1) \ \le\ \cmore	\id_{\nom_n} \,.
\end{equation*}
By Markov property, $\Prob_{p,q}(\nom_n\setminus\{T_m=n+1\})
= \Prob_{p,q}(\nom_n) - \EE_{p,q}\m[{ \id_{\nom_n} \Prob_{p,q}(T_m=1) }$. Therefore
\begin{align*}
			\m({1-\cmore} \Prob_{p,q}(\nom_n) &\ \le\ \Prob_{p,q}(\nom_n\setminus\{T_m=n+1\})
\\&\ \le\	\m({1-\cless} \Prob_{p,q}(\nom_n)\,.
\end{align*}
Combining these estimates with the two inclusions in \eqref{eq:anomaly inclusion}, we obtain the upper bounds
\begin{equation*}
	\Prob_{p,q}(\nom_{n+1})\ \le\ \m({1-\cless} \Prob_{p,q}(\nom_n) \,,
\end{equation*}
and the lower bounds
\begin{align*}
\Prob_{p,q}(\nom_{n+1}\cup\anom)
  &\ \ge\ \Prob_{p,q}\m({ (\nom_n\setminus\{T_m=n+1\}) \cup\anom }
\\&\ \ge\ \Prob_{p,q}(\nom_n\setminus\{T_m=n+1\}) + \Prob_{p,q}(\anom\setminus\nom_n)
\\&\ \ge\ \m({1-\cmore} \Prob_{p,q}(\nom_n) + \Prob_{p,q}(\anom\setminus \nom_n)
\\&\ \ge\ \m({1-\cmore} \Prob_{p,q}(\nom_n\cup \anom) \,.
\end{align*}
Then, by iterating the two bounds, we get
\begin{equation*}
\Prob_{p,q}(\nom_N) \le \prod_{n=0}^{N-1} \m({1-\cless} \quad\text{and}\quad
\Prob_{p,q}(\nom_N\cup \anom) \ge \prod_{n=0}^{N-1} \m({1-\cmore}
\end{equation*} for any $N\ge 1$.
Since $\nom_n\subset \{T_m>n\}\subset \nom_n \cup\anom$ up to a $\Prob_{p,q}$-negligible set, the above estimates imply that
\begin{equation*}
		\prod_{n=0}^{N-1} \m({1-\cmore} - \Prob_{p,q}(\anom) \ \le\ \Prob_{p,q}(T_m>N)
\ \le\	\prod_{k=0}^{N-1} \m({1-\cless} + \Prob_{p,q}(\anom)\,.
\end{equation*}
From the Taylor series of the logarithm we see that $-x-x^2\le \log(1-x)\le -x$ for all $x\ge 0$. Therefore, for any positive sequence $\nseq x$, we have
\begin{equation*}
\exp\mB({ -\sum_{n=0}^{N-1} x_n -\sum_{n=0}^{N-1} x_n^2}
\ \le\ \prod_{n=0}^{N-1}(1-x_n)\ \le\ \exp\mB({ -\sum_{n=0}^{N-1} x_n }\,.
\end{equation*}

Now, we consider the sum \begin{equation*}\sum_{n=0}^{tp}\frac{c_m(\lambda_n)\pm\epsilon}{p+\mu n\mp x f_\epsilon(n)}.
\end{equation*}
First, by \eqref{eq:lambdaestimate}, we see that $\lambda_n=\frac{\lambda p+\mu n}{p+\mu n}\left(1+o(1)\right)$ where $o(1)$ is uniform over all $n\in[0,tp]$ as $p\to\infty$. Namely,
\begin{equation*}
\abs{1-\frac{\lambda p+\mu n\pm xf_\epsilon(n)+o(p)}{p+\mu n\mp x f_\epsilon(n)}\cdot\frac{p+\mu n}{\lambda p+\mu n}}=\abs{\frac{x f_\epsilon(n)\left(\frac{p+\mu n}{\lambda p+\mu n}+1\right)+o(p)}{p+\mu n\mp x f_\epsilon(n)}},
\end{equation*}
where the right hand side tends to zero uniformly on $n\in[0,tp]$ as $p\to\infty$.
On the other hand, we also have
$\frac{c_m(\lambda_n)\pm \epsilon}{p+\mu n\mp \barrier(n)} = \frac{c_m(\lambda_n)\pm \epsilon}{p+\mu n} (1+o(1))$ uniformly on $[0,tp]$, for any fixed $t>0$. Hence,
\begin{equation*}
\sum_{n=0}^{tp} \frac{c_m(\lambda_n)\pm \epsilon}{p+\mu n\mp \barrier(n)}
\ \cv[]p  \int_0^{t} \frac{c_m\left(\frac{\lambda+\mu s}{1+\mu s}\right)\pm\epsilon}{1+\mu s}\dd s=\pm \frac{\epsilon}{\mu}\log(1+\mu t)+\int_0^{t} \frac{c_m\left(\frac{\lambda+\mu s}{1+\mu s}\right)}{1+\mu s}\dd s\,.
\end{equation*}
Above, we also used the fact that $c_m(\lambda)$ is continuous in $\lambda$, which follows directly from its definition and is also seen below via an explicit expression. We also have
\begin{equation*}
\sum_{n=0}^{tp} \left(\frac{c_m(\lambda_n)+\epsilon}{p+\mu n- \barrier(n)}\right)^2 \cv[]p 0
\end{equation*} for all $t>0$. Combining this with the last three displays, we conclude that
\begin{align}\label{eq:Tmscalingbounds}
&(1+\mu t)^{- \frac{\epsilon}\mu}\exp\left(-\int_0^{t} \frac{c_m\left(\frac{\lambda+\mu s}{1+\mu s}\right)}{1+\mu s}\dd s\right) - \limsupp \Prob_{p,q}(\anom)
	\ \le\	\liminf_{p,q\to\infty} \Prob_{p,q}(T_m>tp) \notag
\\&	\ \le\	\limsup_{p,q\to\infty} \Prob_{p,q}(T_m>tp)
	\ \le\	(1+\mu t)^{\frac{\epsilon}\mu}\exp\left(-\int_0^{t} \frac{c_m\left(\frac{\lambda+\mu s}{1+\mu s}\right)}{1+\mu s}\dd s\right)
			+ \limsupp \Prob_{p,q}(\anom) \,.
\end{align}

Now take the limit $m,x\to\infty$.
First, using the data of Table~\ref{tab:prob(p,q)diag}, we observe that the sequence $(c_m(\lambda))_{m\ge0}$ is increasing with a finite limit: \begin{align}\label{eq:defcinfty}
c_m(\lambda)\, =&\
\lim_{p,q\to\infty} \m({ p\cdot \Prob_{p,q}(P_1\wedge Q_1\leq m)}
\ =\ \lim_{p,q\to\infty} p\cdot \sum_{k=1}^m
   \m({ \Prob_{p,q}(\rp[p-k+1],\rn[p-k])+\Prob_{p,q}(\lp[q-k-1],\lm[q-k]) }
\notag \\
=&\ -\frac43 \frac{t_c}{b \cdot \lambda^{7/3} c(\lambda)}
            \sum_{k=1}^m(1+\nu_c)\m({ \frac{a_0}{u_c}+a_1 } a_k u_c^k
\notag \\ \cv[]m &\
-\frac43 \frac{t_c}{b \cdot \lambda^{7/3} c(\lambda)}
         (1+\nu_c) \m({ \frac{a_0}{u_c}+a_1 } (A(u_c)-a_0)
\ =:\ c_\infty(\lambda).
\end{align}
Furthermore, we notice that $(1+\nu_c)\left(\frac{a_0}{u_c}+a_1\right)(A(u_c)-a_0)=-b\mu$, a computation already done in the proof of \cite[Proposition~11]{CT20}. This gives $c_\infty(\lambda)=\frac{4}{3}\frac{\mu}{c(\lambda)\lambda^{7/3}}.$
Moreover, in the limit $m,x\to\infty$, the error term $\limsup\Prob_{p,q} (\anom)$ tends to zero due to Lemma~\ref{lem:one jump diag}. The middle terms $\liminf_{p,q\to\infty}\Prob_{p,q}(T_m>tp)$ and $\limsup_{p,q\to\infty}\Prob_{p,q}(T_m >tp)$ do not depend on $m$ due to the convergence $\Prob_{p,q}(T_0-T_m>\epsilon p) \cv[]{p,q} 0$ seen at the beginning of the proof. Thus by sending $m\to\infty$ and $\epsilon\to 0$, the monotone convergence theorem finally yields \begin{equation*}
\lim_{p,q\to\infty} \Prob_{p,q}\m({T_m>tp} = \exp\left(-\int_0^t c_\infty\left(\frac{\lambda+\mu s}{1+\mu s}\right)\frac{\dd s}{1+\mu s}\right).
\end{equation*}
Now recall that
\begin{equation*}c(\lambda)=\frac{4}{3}\int_0^\infty(1+s)^{-7/3}(\lambda+s)^{-7/3}ds.
\end{equation*}
We note first that
\begin{equation*}
c\left(\frac{\lambda+x}{1+x}\right)=\frac{4}{3}(1+x)^{11/3}\int_x^\infty(1+s)^{-7/3}(\lambda+s)^{-7/3} \dd s.
\end{equation*}
This yields
\begin{align*}
& \od{}{x} \m({ \int_0^{\mu^{-1}x} c_\infty \m({ \frac{\lambda+\mu s}{1+\mu s} } \frac{\dd s}{1+\mu s} }
=\frac1\mu \cdot c_\infty\m({ \frac{\lambda+x}{1+x} } \cdot\frac{1}{1+x}
\\ =\ &(1+x)^{-7/3}(\lambda+x)^{-7/3}\left(\int_x^\infty(1+s)^{-7/3}(\lambda+s)^{-7/3}ds\right)^{-1}
=-\od{}{x} \log\int_x^\infty(1+s)^{-7/3}(\lambda+s)^{-7/3}\dd s.
\end{align*}
Finally, integrating this equation on each of the sides gives the claim.
\end{proof}

In order to prove the diagonal local convergence in its full generality as Theorem~\ref{thm:cv} suggests, we also show the following generalized bounds:

\begin{prop}\label{prop:scalingTm}
For all $m\in\natural$, the scaling limit of the jump time $T_m$ has the following bounds:
\begin{equation*}
\forall t>0\,,\qquad	\liminf_{p,q\to\infty} \Prob_{p,q}\m({T_m>tp} \ge \exp\left(-\int_0^t\max_{\ell\in[\lambda',\lambda]}c_\infty\left(\frac{\ell+\mu s}{1+\mu s}\right)\cdot\frac{ds}{1+\mu s}\right)
\end{equation*} and
\begin{equation*}
\limsup_{p,q\to\infty} \Prob_{p,q}\m({T_m>tp} \le \exp\left(-\int_0^t\min_{\ell\in[\lambda',\lambda]}c_\infty\left(\frac{\ell+\mu s}{1+\mu s}\right)\cdot\frac{ds}{1+\mu s}\right)
\end{equation*}
where $c_\infty$ is defined as in \eqref{eq:defcinfty} and the limit is taken such that $q/p\in [\lambda',\lambda]$.
\end{prop}

\begin{proof}
We modify the above proof as follows: First, \eqref{eq:lambdaestimate} translates to
\begin{equation*}
\frac{\lambda' p+\mu n-x f_\epsilon(n)}{p+\mu n+x f_\epsilon(n)}\le\frac{Q_n}{P_n}\le\frac{\lambda p+\mu n+x f_\epsilon(n)}{p+\mu n-x f_\epsilon(n)}
\end{equation*}
conditional on $\nom_n$. Then, the identity $Q_n/P_n:=\lambda_n=\frac{\lambda p+\mu n}{p+\mu n}\left(1+o(1)\right)$ is to be replaced by the bounds
\begin{equation*}
\frac{\lambda' p+\mu n}{p+\mu n}\left(1+o(1)\right)\le\lambda_n\le \frac{\lambda p+\mu n}{p+\mu n}\left(1+o(1)\right).
\end{equation*}
Finally, we notice that $\lambda\mapsto c_m(\lambda)$ is a continuous function for every $m\ge 0$ on any compact strictly positive interval, having the limit $c_\infty(\lambda)$ as $m\to\infty$ with the same property. Therefore, we can replace $c_m\left(\frac{\lambda+\mu s}{1+\mu s}\right)$ in \eqref{eq:Tmscalingbounds} by its minimum or maximum over the interval $[\lambda',\lambda]$, respectively, and finally take the limit $m\to\infty$.
\end{proof}

The limit law
\begin{equation}\label{scalinglimitlaw}
\Prob(L>t):=\int_t^\infty(1+x)^{-7/3}(\lambda+x)^{-7/3}dx
\end{equation}
can be interpreted as the law of the quantum length of an interface resulted from the conformal welding of two quantum disks in the Liouville Quantum Gravity of parameter $\gamma=\sqrt{3}$, introduced in the context of the mating of the trees theory in \cite{matingoftrees} and studied in \cite{AHS20}. More precisely, this measure results from a welding of two independent quantum disks of parameter $\gamma=\sqrt{3}$ and weight $2$ along a boundary segment of length $L$. See \cite{matingoftrees,AG19,AHS20} for precise definitions of such quantum disks. In particular, a quantum disk conditioned to have a fixed boundary length is well-defined.

As defined in \cite{AG19}, an $(R,L)$-length quantum disk $(D,x,y)$ is a quantum disk decorated with two marked boundary points $x,y$, which is sampled in the following way: First, a quantum disk $D$ of a fixed boundary length $R+L$ is sampled. Then, conditional on $D$, the boundary point $x$ is sampled from the quantum boundary length measure. Finally, define $y$ to be the boundary point of $D$ such that the counterclockwise boundary arc from $x$ to $y$ has length $R$. By giving the quantum disk an additional weight parameter and setting its value to $2$, the points $x$ and $y$ can in fact be sampled independently from the LQG boundary length measure, as explained in \cite{AHS20}.

For two independent $\sqrt{3}$-quantum disks, there is a natural perimeter measure on $(0,\infty)^2$, given by
\begin{equation}\label{perimeterlaw}
dm(u,v)=u^{-7/3}v^{-7/3}du dv.
\end{equation}
This measure is the Lévy measure of a pair of independent spectrally positive $4/3$-stable Lévy processes, which has a direct connection to the jumps of the boundary length processes of SLE($16/3$). On the other hand, it is known that the typical disks swallowed by the SLE($16/3$) are $\sqrt{3}$-quantum disks; see \cite{MSW20}. This perimeter measure allows us to randomize the boundary arc lengths of the quantum disks as follows.

Due to the convergence $q/p\to\lambda\in(0,\infty)$ (and $p/p\to 1$) in our discrete picture, we consider the measure \eqref{perimeterlaw} conditional on the set $\{(u,v): u=1+L,\ v=\lambda+L,\ L>0\}$, such that the two independent quantum disks have perimeters $(\lambda,L)$ and $(L,1)$, respectively. This gives rise to the law of the segment $L$ as
\begin{equation*}
\Prob(L\in dx)=\mathcal{N}^{-1}(1+x)^{-7/3}(\lambda+x)^{-7/3}dx
\end{equation*}
where $\mathcal{N}$ is a normalizing constant in order to yield a probability distribution. Gluing the two quantum disks along the boundary segment of length $L$ such that the marked boundary points of the two disks coincide to the points $\rho$ and $\rho^\dagger$, respectively, finally yields \eqref{scalinglimitlaw} as the law of the interface length.

The same law of $L$ has been recently derived in \cite[Remark~2.7]{AHS20} as a special case of the general conformal welding of quantum disks. Since the parameters there also match with the expected ones for the universality class of the critical Ising model, this gives some hints that the Ising interfaces should indeed converge towards an SLE$(3)$-curve on a LQG surface, as predicted in the literature. This convergence remains as an important open problem.

\newcommand{\ribbon}{\mathcal{R}}
\newcommand*{\rmap}[1][m]{\mathcal{R}_{#1}}
\newcommand*{\uleft}[1][T_m]{\umap_{#1}}
\newcommand*{\uright}[1][T_m]{\umap^*_{#1}}
\newcommand{\kk}{\mathcal{K}_m}
\newcommand{\pjump}{\mathcal{P}}
\newcommand{\qjump}{\mathcal{Q}}
\newcommand{\pleft}{\mathcal{P}}
\newcommand{\qleft}{\mathcal{Q}}
\newcommand{\pright}{\mathcal{P}^*}
\newcommand{\qright}{\mathcal{Q}^*}

\begin{figure}
\begin{center}
\includegraphics[scale=0.8]{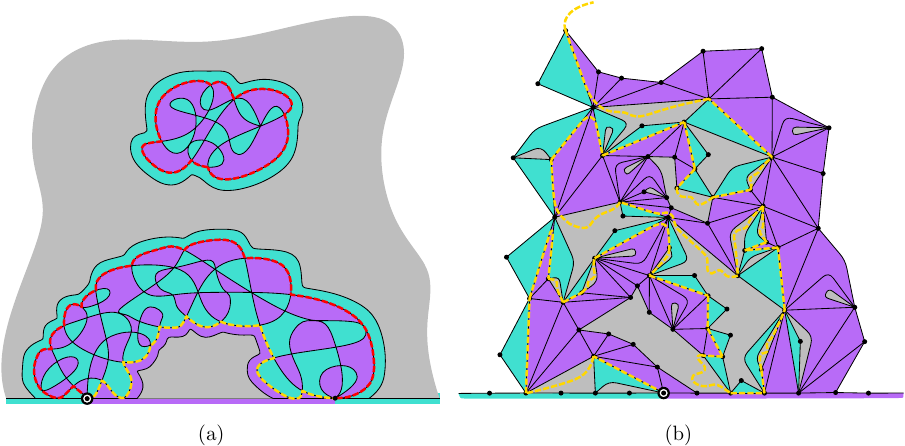}
\end{center}\vspace{-1em}
\caption{Left: a glimpse of a realization of the local limit at $\nu=\nu_c$ after just $q\to\infty$, which is considered in \cite{CT20}. The interface is still finite. In this work, we bypass this intermediate limit by letting $p,q\to\infty$ in a diagonal regime. Right: a realization of the ribbon containing the interface in the local limit at $\nu=\nu_c$.}
\label{fig:ribbon}
\end{figure}

\subsection{The local convergence in the diagonal regime}

We recall first the definition of the local limit $\prob_\infty=\prob_\infty^{\nu_c}$ (see \cite[Section 5.3]{CT20}). The probability measure $\prob_\infty$ is defined as the law of a random triangulation of the half-plane which is obtained as a gluing of three infinite, mutually independent, one-ended triangulations $\law_\infty\umap_\infty$, $\law_\infty\ribbon_\infty$ and $\law_\infty\umap_\infty^*$ along their boundaries, which satisfy the following properties: $\law_\infty\umap_\infty$ has the law $\prob_0$, $\law_\infty\umap_\infty^*$ has the law $\prob_0$ under the inversion of the spins, and $\law_\infty\ribbon_\infty$ is defined as the law of the increasing sequence $(\lim_{n\to\infty}\law_\infty[\emapo_n]_r)_{r\ge 0}$ under $\Prob_\infty$. 
We call $\ribbon_\infty$ the \emph{ribbon}. See \cite{CT20} for a more detailed study and Figure~\ref{fig:ribbon} for an illustration.
The fact that the ball $[\emapo_n]_r$ stabilizes in a finite time, and thus the limit $n\to\infty$ is well-defined, follows from the positive drift of the perimeter processes. Observe that the boundary of $\law_\infty\ribbon_\infty$ consists of three arcs: a finite one consisting of edges of $\partial\emapo_0$ only, and two infinite arcs of spins \< and \+, respectively. The gluing is performed along the infinite boundary arcs such that the spins match with the boundary spins of $\umap_\infty$ and $\umap^*_\infty$, respectively.

Then, fix $m\ge 0$, and define $\ribbon_m$ as the union of the explored map $\emapo_{T_m-1}$ and the triangle explored at $T_m$. Now the triple $(\umap_{T_m},\ribbon_m,\umap_{T_m}^*)$ partitions a triangulation under $\prob_{p,q}$, such that $\umap_{T_m}$ and $\umap_{T_m}^*$ correspond to the two parts separated by the triangle at $T_m$. We will reroot the unexplored maps $\uleft$ and $\uright$ at the vertices $\rho_\umap$ and $\rho_{\umap^*}$, which are the unique vertices shared by $\umap_{T_m}$ and $\ribbon_m$, and $\umap_{T_m}^*$ and $\ribbon_m$, respectively. Now the boundary condition of $\uleft$ is denoted by $(\pleft,(\qleft_1,\qleft_2))$. This notation is in line with \cite[Theorem~4]{CT20}. Similarly, the boundary condition of $\uright$ is $((\pright_1,\pright_2),\qright)$. Observe that the condition
\begin{equation*}
\Step_{T_m} \in \{ \rp[P_{T_m-1}+\kk], \rn[P_{T_m-1} +\kk] \}
\end{equation*}
uniquely defines an integer $\kk$, which represents the position relative to $\rho^\dagger$ of the vertex where the triangle revealed at time $T_m$ touches the boundary. Here, we make the convention that $\rp[p+k]=\lp[q-k-1]$ and $\rn[p+k]=\lm[q-k-1]$ for $k\ge 0$. In particular, $|\kk|\leq m$. See also \cite[Figure 12]{CT20} for a similar setting when $q=\infty$.

\newcommand{\PrE}[4]{\prob#1 \mb({ ([#2]_r,[#3]_r,[#4]_r) \in \mathcal{E} }}

\begin{lemma}[Joint convergence before gluing]\label{lem:loc cv on big jump}
Fix $\epsilon,x,m>0$, and let $\mathcal{J} \equiv \mathcal{J}^\epsilon_{x,m} := \{\tauxy = T_m \ge \epsilon p\}$. Then for any $r\ge 0$,
\begin{equation*}\label{eq:loc cv on big jump}
\begin{aligned}
& \limsup_{p,q\to\infty} \abs{ \PrE\pqy{\rmap}{\uleft}{\uright}
               - \PrE\yy{\rmap[\infty]}{\uleft[\infty]}{\uright[\infty]} }
\\   \le \ &
\limsup_{p,q\to\infty} \prob_{p,q} (\mathcal{J}^c) + \prob_\infty(\tauxy<\infty)
\end{aligned}
\end{equation*}
where $\mathcal{E}$ is any set of triples of balls.
\end{lemma}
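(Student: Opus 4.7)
The plan is to apply the spatial Markov property of the peeling process at the stopping time $T_m$ and then invoke the convergences of the peeling process and of Boltzmann Ising-triangulations already established in the paper. Conditioned on $\filtr_{T_m}=\sigma(\emap_{T_m})$, the ribbon $\rmap$ is measurable and the maps $\uleft,\uright$ are independent Boltzmann Ising-triangulations whose boundary conditions $(\pleft,(\qleft_1,\qleft_2))$ and $((\pright_1,\pright_2),\qright)$ are read off from $\filtr_{T_m}$. On $\mathcal{J}$ one has $T_m\ge\epsilon p$ and the perimeter process stays within the $f_\epsilon$-tube around its drift, so each of these boundary lengths is of order $\mu T_m=\Theta(p)$ and their mutual ratios remain in a fixed compact subset of $(0,\infty)$ as $p,q\to\infty$. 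The analog of $\mathcal{J}$ under $\prob_\infty$ is the event $\{\tauxy=\infty\}$ (since $T_m=\infty$ almost surely under $\Prob_\infty$ by the positive drift of $X_n,Y_n$), and its complement has probability $\prob_\infty(\tauxy<\infty)$.

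Decomposing $\prob_{p,q}$ and $\prob_\infty$ according to these good events and summing over $(b_0,b_1,b_2)\in\mathcal{E}$, one rewrites each probability in the form
\begin{equation*}
\mathbb{E}\!\Bigl[\id_{\mathrm{good}}\!\!\sum_{(b_0,b_1,b_2)\in\mathcal{E}}\id_{[\rmap]_r=b_0}\, F^L(b_1;\filtr_{T_m})\, F^R(b_2;\filtr_{T_m})\Bigr] \;+\; O\bigl(\text{prob.\ of bad event}\bigr),
\end{equation*}
where $F^L,F^R$ are the conditional probabilities of the corresponding ball events in the Boltzmann Ising-triangulations. To compare the two main terms I would combine two ingredients: the diagonal convergence of the peeling process $\Prob_{p,q}\to\Prob_\infty$ from Section~\ref{seq:infinitepeeling}, together with the fact that $[\rmap]_r$ stabilizes after a finite (under $\Prob_\infty$) peeling time, so that $[\rmap]_r=[\emapo_{\theta'}]_r$ for some finite stopping time $\theta'\le T_m$ on the good event for $p,q$ large enough; and the local convergence of the factors $F^L,F^R$ as their boundary perimeters diverge with ratios in a compact set. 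Bounded convergence (integrands lie in $[0,1]$) then yields the convergence of the main terms, and the error bound in the statement follows from the two bad-event probabilities together with the trivial estimate $\prob(\text{bad event})\ge \prob(\text{bad event}\cap\{\cdots\in\mathcal{E}\})$.

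The main obstacle is the uniformity required for the factors $F^L,F^R$: the boundary perimeters of $\uleft,\uright$ are random and diverge simultaneously, with mild dependencies through the peeling history. One therefore needs a local-convergence result for Boltzmann Ising-triangulations that is uniform in the boundary perimeters over the relevant compact range of ratios, rather than simply convergence at fixed deterministic perimeters. To avoid a circular appeal to the diagonal local convergence currently under construction, I would organize the argument so as to rely only on the two-step convergence $\prob_{p,q}\to\prob_p\to\prob_\infty$ from~\cite{CT20}, sending the components of each boundary condition to infinity in a prescribed order (e.g.\ the long $\<$ segment first, then the remaining segments) while the compactness of the ratios supplies the required uniformity. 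Once this uniform local convergence is in hand, the assembly described above is a routine application of bounded convergence.
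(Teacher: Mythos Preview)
Your overall architecture (decompose on the good event, use the spatial Markov property at $T_m$, then combine convergence of the ribbon with local convergence of the two unexplored pieces) is the right one and matches what the paper does via \cite[Lemma~14]{CT20}. However, there is a genuine error in your analysis of the boundary conditions of $\uleft$ and $\uright$ on $\mathcal J$, and this error leads you to set up the wrong convergence problem for the pieces.

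You write that on $\mathcal J$ ``each of these boundary lengths is of order $\mu T_m=\Theta(p)$ and their mutual ratios remain in a fixed compact subset of $(0,\infty)$''. This is false for $\pleft$ and $\qright$. The defining feature of the event $\{T_m=n\}$ is that the triangle revealed at time $n$ touches the boundary at distance $\le m$ from $\rho^\dagger$; consequently the resulting pieces each inherit only a \emph{short} segment of one spin near $\rho^\dagger$. Concretely, the paper computes (and this is really the whole content of its proof) that on $\mathcal J$ one has $\pleft\le m+1$ and $\qright\le m+1$, while $\qleft_1,\qleft_2,\pright_1,\pright_2$ admit explicit deterministic lower bounds tending to $\infty$ with $p$. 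So $\uleft$ is a Boltzmann Ising-triangulation with a \emph{bounded} \texttt{+} segment and two \texttt{-} segments going to infinity, and symmetrically for $\uright$.

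This changes the nature of the input you need for $F^L,F^R$: you do not need any diagonal local convergence with ratios in a compact set (which, as you rightly worry, would be circular). You only need the already-established convergence $\prob_{p',q'}\to\prob_{p'}$ as $q'\to\infty$ for each fixed $p'$, applied uniformly over the finitely many values $p'\le m+1$, together with the fact that both components $\qleft_1,\qleft_2$ (resp.\ $\pright_1,\pright_2$) of the long boundary go to infinity so that the re-rooted ball of radius $r$ eventually sees only the infinite monochromatic boundary. Once you correct the boundary-length picture, the ``uniformity obstacle'' you identify disappears, and the rest of your bounded-convergence assembly goes through exactly as you describe.
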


\begin{proof}
The proof applies the idea of the proof of \cite[Lemma~14]{CT20}. Assuming that known, the only thing one needs to take care of is the fact that the random numbers $\pright_1$, $\pright_2$, $\qleft_1$ and $\qleft_2$ tend to $\infty$ uniformly, and that $\pleft$ and $\qright$ stay bounded, conditional on $\mathcal{J}$. Observe also that the random number $\kk$ is automatically bounded in this setting, so we do not need any condition for $\kk$ on the event $\mathcal{J}$.

Similarly as in \cite{CT20}, we have the lower bounds $\pright_1\geq\mu(\epsilon p-1)-xf_\epsilon(\epsilon p-1)=:\underline{\pright_1}$ and $\pright_2\ge p+\min_{n\ge 0}(\mu n-x f_\epsilon(n))-1-m=:\underline{\pright_2}$  as well as the upper bound $\qright\le m+1$ for the boundary condition of $\uright$.
For completeness and convenience, let us show similar bounds for the boundary condition of $\uleft$. Let $S^\+$ and $S^\<$ be the distances from $\rho$ to $\rho_{\umap^*}$ and $\rho_\umap$ along the boundary, respectively. First, expressing the total perimeter of $\uleft$, the number of edges between $\rho$ and $\rho^\dagger$ clockwise and the number of \+ edges on the boundary of $\uleft$, respectively, we find the equations
\begin{equation*}
\begin{cases}
\qleft_1+\qleft_2+\pleft=Q_{T_m-1}-\kk \\
S^{\<}+\qleft_2+\max\{0,\kk\}=q \\
\pleft=\delta-\min\{0,\kk\}
\end{cases}
\end{equation*}
where $\delta=1$ if $\Step_{T_m} =\rp[P_{T_m-1}+\kk]$, and otherwise $\delta=0$, as well as $S^{\<}$ is the number of $\<$ edges on $\ribbon_m\cap\partial\emapo_0$. The solution of this system of equations is
\begin{equation*}
\begin{cases}
\qleft_1=Y_{T_m-1}+S^{\<}-\delta \\
\qleft_2=q-S^{\<}-\max\{0,\kk\} \\
\pleft=\delta-\min\{0,\kk\}
\end{cases}.
\end{equation*}
We have $S^{\<}\in [0,1-\min_{n\ge 0}(\mu n-x f_\epsilon(n))]$, and the function $n\mapsto\mu n-x f_\epsilon(n)$ is increasing. Therefore, we deduce $\qleft_1\ge\mu(\epsilon p-1)-x f_\epsilon(\epsilon p-1)-2=:\underline{\qleft_1}$ and $\qleft_2\ge q+\min_{n\ge 0}(\mu n-x f_\epsilon(n))-1-m=:\underline{\qleft_2}$, together with $\pleft\le m+1$. The claim follows.
\end{proof}

\newcommand{\pglued}[1]{\prob#1 \mb({ [\mop]_r \in \mathcal{E} }}
\newcommand{\mop}{\tmap \oplus \tmap'}

\begin{proof}[Proof of the convergence $\prob\pqy^{\nu_c} \to \prob\yy^{\nu_c}$.]
The triangulation $\law\pqy \bt$ (respectively, $\law\yy \bt$) can be represented as the gluing the triple $\law\pqy (\rmap,\uleft,\uright)$ (respectively, $\law\yy (\rmap[\infty],\uleft[\infty],\uright[\infty])$) along their boundaries. This is done pairwise between the three components, taking into account that the location of the root vertex changes during this procedure. Given a triangulation $\tmap$ with a simple boundary, and an integer $S$, let us denote by $\overrightarrow \tmap^{S}$ (resp. $\overleftarrow \tmap^{S}$) the map obtained by translating the root vertex of $\tmap$ by a distance $S$ to the right (resp. to the left) along the boundary. Denote by $\rho$ and $\rho'$ the root vertices of two triangulations $\tmap$ and $\tmap'$, respectively, and let $L$ be the number of edges in $\tmap$ and $\tmap'$ which are admissible for the gluing. More precisely, we assume that $L$ is a random variable taking positive integer or infinite values, such that
\begin{equation}\label{eq:gluing length}
\law\pqy L \cv[]{p,q} \infty \text{ in distribution and }\law\yy L = \infty \text{ almost surely.}
\end{equation}
Finally, let $\mop$ be the triangulation obtained by gluing the $L$ boundary edges of $\tmap$ on the right of $\rho$ to the $L$ boundary edges of $\tmap'$ on the left of $\rho'$. The dependence on $L$ is omitted from this notation because the local limit of $\mop$ is not affected by the precise value of $L$, provided that \eqref{eq:gluing length} holds.

Now under $\prob\pqy$, we have
\begin{equation}\label{eq:gluing-of-3}
\bt = \overrightarrow{(\umap\rmap)}^{S^\+ + S^\<} \oplus \uright \qtq{where} \umap\rmap = \uleft \oplus \overleftarrow{(\rmap)}^{S^\<}
\end{equation}
where we recall that $S^\+$ and $S^\<$ are the distances from $\rho$ to $\rho_{\umap^*}$ and $\rho_\umap$ along the boundary, respectively. Similarly, $\law\yy \bt$ can be expressed in terms of $\uleft[\infty]$, $\rmap[\infty]$, $\uright[\infty]$ and $S^\jj$ using gluing and root translation.

On the event $\mathcal{J}$, the perimeter processes $\nseq X$ and $\nseq Y$ stay above $\mu n-xf_\epsilon(n)$ up to the time $\tauxy$. Thus their minima over $[0,\tauxy)$ are reached before the deterministic time $N_{\min} = \sup\Set{n\ge 0}{\mu n-xf_\epsilon(n)\le 0}$, and $S^\+$ and $S^\<$ are measurable functions of the explored map $\emapo_{N_{\min}}$. It follows that $\law\pqy S^\jj$ converges in distribution to $\law\yy S^\jj$ on the event $\mathcal{J}$. Using the relation \eqref{eq:gluing-of-3} together with \cite[Lemmas~15-16]{CT20}, we deduce from Lemma~\ref{lem:loc cv on big jump} that for any $x,m,\epsilon>0$, and for any $r\ge 0$ and any set $\mathcal{E}$ of balls, we have
\begin{equation*}
\limsupp \mb|{\, \prob\pqy ( \btsq_r \in \mathcal{E} )
               - \prob\yy( \btsq_r \in \mathcal{E} ) \, }    \ \le\
\limsupp \prob\pqy (\mathcal{J}^c) + \prob\yy (\tauxy<\infty)    \,.
\end{equation*}

The left hand side does not depend on the parameters $x,m$ and $\epsilon$. Therefore to conclude that $\prob\pqy$ converges locally to $\prob\yy$, it suffices to prove that
$\displaystyle \limsupp \prob\pqy (\mathcal{J}^c) + \prob\yy (\tauxy<\infty)$\, converges to zero when $x,m\to\infty$ and $\epsilon\to 0$. The latter term converges to zero, since if $x\to\infty$, we have $\tauxy \to\infty$ almost surely under $\prob\yy$. For the first term, a union bound gives
\begin{equation*}
\prob\pqy(\mathcal{J}^c) \ \le\ \prob\pqy (\tauxy< T_m) + \prob\pqy(T_m<\epsilon p) \,,
\end{equation*}
where the first term on the right can be bounded using Lemma~\ref{lem:one jump diag}:
\begin{equation*}
\lim_{m,x \to\infty}  \limsupp \prob\pqy(\tauxy<T_m)    \ =\ 0 .
\end{equation*}
For the last term, we use the lower bound of Proposition~\ref{prop:scalingTm}:
\begin{equation*}
\lim_{\epsilon\to 0}\  \limsupp \prob\pqy(T_m<\epsilon p)\ \leq\
1-\lim_{\epsilon\to 0}\exp\left(-\int_0^\epsilon \max_{\ell\in[\lambda',\lambda]}c_\infty\left(\frac{\ell+\mu s}{1+\mu s}\right)\frac{ds}{1+\mu s}\right) \ =\ 0    \,.\qedhere
\end{equation*}
\end{proof}

\appendix

\section{A one-jump lemma for the process $\law_{p,q}\nseq{X_n,Y}$ at $\nu=\nu_c$}\label{sec:bigjumplemma proof}

\newcommand*{\ea}{\asymp}
\newcommand{\cst}{\mathrm{cst}}
\newcommand{\pp}[2]{\mathfrak{p}_{#1,#2}}
\newcommand{\pxx}[1][k]{\mathfrak{p}^x_{#1}}
\newcommand{\pyy}[1][k]{\mathfrak{p}^y_{#1}}
\newcommand{\Py}{\Prob\py{}}
\newcommand{\PY}{\Prob\yy{}}
\newcommand{\Ey}{\EE\py{}}
\newcommand{\EY}{\EE\yy{}}

The proof is mostly a modification of a similar proof \cite[Appendix C]{CT20}. Here, we need to take care that both $X_n$ and $Y_n$ stay close to their asymptotic mean for $n<T_m$ with high probability, as $p,q\to\infty$ with $q/p\in[\lambda',\lambda]$, where $0<\lambda'\leq 1\leq\lambda<\infty$. We follow the exposition and the notation of \cite{CT20}.

For starters, we write
\begin{equation*}
\pp k{k'} = \Prob_\infty ( -(X_1,Y_1) = (k,k') )    \qtq{and}
\pxx = \Prob_\infty (-X_1=k)    \ ,\quad
\pyy = \Prob_\infty (-Y_1=k)  .
\end{equation*}
Then, for $k\leq p-2$ and $k'\leq q-1$, the basic relation for the comparison of the laws of the perimeter processes reads
\begin{equation}\label{eq:Doob}
\Prob_{p,q}(-(X_1,Y_1)=(k,k'))=\frac{z_{p-k,q-k'}}{z_{p,q}u_c^{k+k'}}\pp k{k'}=\frac{z_{p-k,q-k'}u_c^{(p+q)-(k+k')}}{z_{p,q}u_c^{p+q}}\pp k{k'},
\end{equation}
as easily verified using the data of Table~\ref{tab:prob(p,q)diag}. Observe that this condition is reminiscent of the Doob $h$-transform of a random walk, ceased to satisfy it since the condition \eqref{eq:Doob} breaks down for $k>p-2$ or $k'>q-1$. We also introduce the following notation: If $A$ and $B$ are two \emph{positive} functions defined on some set $\Lambda$, we say that
\begin{itemize}
\item $A(y) \la B(y)$ for $y \in \Lambda$, \emph{if}
there exists $C>0$ such that $A(y)\le CB(y)$ for all $y \in\Lambda$;
\item $A(y) \ea B(y)$ for $y \in \Lambda$, \emph{if}
$A(y) \la B(y)$	and $B(y) \la A(y)$.
\end{itemize}

We fix a cutoff $\theta \in (0,1)$ and let $p_\theta := \frac2{1-\theta}$ so that $\theta p \le p-2$ and $\theta q \le q-1$ for all $p,q\ge p_\theta$. The following lemma gives estimates for the jump probabilities of the perimeter processes in a single peeling step:

\newcommand{\Xp}{\{-X_1\le p-2\}}
\newcommand{\Yp}{\{-Y_1\le q-1\}}
\newcommand{\ykXp}[1][=k]{\{-Y_1#1\} \cap \Xp}
\newcommand{\xkYp}[1][=k]{\{-X_1#1\} \cap \Yp}
\newcommand{\zfrac}[1][k]{\frac{z_{p-#1,q-k'} u_c^{p+q-(#1+k')}}{z_{p,q} u_c^{p+q}}}

\newcommand{\Ap}{\mathcal{A}_x}
\newcommand{\zg}[1][h]{\{W \ge #1\} \cap \Ap}
\newcommand{\zl}[1][h]{\{W \le #1\} \cap \Ap}
\newcommand{\idzg}[1][h]{\id_{\zg[#1]}}
\newcommand{\idzl}[1][h]{\id_{\zl[#1]}}

\begin{lemma}\label{lem:estimates}Assume throughout the lemma that $q/p$ lies in a fixed compact interval $I\subset\R_+$ such that $0\notin I$. Then the perimeter increments during the first peeling step satisfy the following probability estimates:
\begin{enumerate}[label=(\roman*)]
\item\label{item:estimate infty}
$\pxx \ea \pyy \ea k^{-7/3}$ for $k\ge 1$.
\item\label{item:estimate X}
$\Prob_{p,q}(\xkYp) \ea k^{-7/3}$ and $\Prob_{p,q}(-X_1=p-k) \ea p^{-1} k^{-4/3}$ for all $p,q\ge p_\theta$ and $1\le k\le \theta p$.
\item\label{item:estimate Y}
$\Prob_{p,q}(\ykXp) \ea k^{-7/3}$ and $\Prob_{p,q}(-Y_1=q-k) \ea p^{-1} k^{-4/3}$ for $p,q\ge p_\theta$ and $1\le k\le \theta q$.
\item\label{item:estimate zfrac}
$\abs{\zfrac -1} \la p^{-1} \abs{k} + p^{-1/3}$ and $\abs{\zfrac -1} \la p^{-1} \abs{k'} + p^{-1/3}$ for any $(k,p)$, $(k',q)$ such that $-2\le k \le \theta p$ and $-2\le k' \le \theta q$.

\item\label{item:estimate large jump}
For $p,q\ge p_\theta$, $x\in [1,\theta (p\wedge q)]$ and $m\ge 1$,
\begin{equation*}
\Prob_{p,q}(-X_1 < p-m, \ -Y_1< q-m \text{ and } (-X_1)\vee (-Y_1)\ge x)
\ \la \   x^{-4/3} + p^{-1} m^{-1/3}   \,.
\end{equation*}

In the following, let $\Ap=\{ (-X_1) \vee (-Y_1) \le x\}$ and $W$ be either $\mu-X_1$ or $\mu-Y_1$.

\item\label{item:estimate S}
$\Prob_{p,q}(\zg) \la h^{-4/3}$, $\EE_{p,q}[W \idzg] \la h^{-1/3}$ and $\EE_{p,q}[W^2 \idzl] \la h^{2/3}$ for $p,q\ge p_\theta$, $x\in [1,\theta (p\wedge q)]$ and $h\in [1,x]$.

\item\label{item:estimate E}
$|\EE_{p,q}[W  \id_{\Ap}]| \la x^{-1/3}$ for $p,q \ge p_\theta$ and $x\in [1,\theta (p\wedge q)]$.

\item\label{item:estimate small jump}
For $p,q\ge p_\theta$, $x\in [1,\theta (p\wedge q)]$ and $\xi \in [2x^{-1},1]$,
\begin{equation*}
\log \m({ \EE_{p,q}[e^{\pm \xi W} \id_{\Ap}] }
\ \la\  x^{-4/3} e^{\xi x}.
\end{equation*}
\end{enumerate}
\end{lemma}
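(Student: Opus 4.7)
The plan is to adapt the proof of the analogous lemma in \cite[Appendix C]{CT20} to the diagonal regime. The central technical change is that every estimate must be uniform over $q/p$ in a compact interval $I\subset (0,\infty)$, so instead of using the iterated asymptotics $z_{p,q}\sim a_p\,q^{-7/3} u_c^{-q}$ followed by $a_p\sim b\,p^{-4/3}u_c^{-p}$, we invoke the diagonal asymptotics of Theorem~\ref{thm:asympt}, namely
\begin{equation*}
u_c^{p+q}z_{p,q}\;=\;\frac{b\,c(q/p)}{\Gamma(-4/3)\Gamma(-1/3)}\,p^{-11/3}\bigl(1+O(p^{-1/3})\bigr),
\end{equation*}
combined with the fact that $c(\lambda)$ is continuous and strictly positive on $(0,\infty)$, hence bounded above and below and uniformly continuous on any slight enlargement of $I$. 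These two facts reduce every multivariate ratio $z_{p-k,q-k'}/z_{p,q}$ with $k\le\theta p,\;k'\le\theta q$ to the product of $(1-k/p)^{-11/3}$ and $c((q-k')/(p-k))/c(q/p)$ times $u_c^{k+k'}(1+O(p^{-1/3}))$, which is exactly the control required for the $\asymp$-style estimates.

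I would handle the parts in the natural order. Part (i) follows immediately from Table~\ref{tab:pinfty} and the univariate two-step asymptotics of Theorem~\ref{thm:asympt} applied to $z_{0,k+1}$, $z_{k+1,0}$, $z_{1,k}$ and $z_{k,1}$, all of which yield $k^{-7/3}u_c^{-k}$ up to constants. Parts (ii)--(iii) split into the \emph{bulk} and the \emph{large jump} contributions. The bulk part $\Prob_{p,q}(-X_1=k,-Y_1\le q-1)$ for $k\le\theta p$ comes from \eqref{eq:Doob} together with (i) and the uniform bound on the ratio of partition functions explained above, yielding $k^{-7/3}$; summing over admissible $k'$ reproduces $\pxx$ via the analogous identity for $\pyy$. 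The large-jump part $\Prob_{p,q}(-X_1=p-k)$ is read off directly from Table~\ref{tab:prob(p,q)diag}, where only the steps $\rp[p-k]$ and $\rn[p-k]$ contribute: their weights are ratios of the form $z_{k+1,q}z_{p-k+1,0}/z_{p,q}$ and $z_{k,q+1}z_{p-k,1}/z_{p,q}$; using the diagonal asymptotics on $z_{k+1,q}$ and $z_{p,q}$ and the two-step asymptotics on $z_{p-k+1,0}$, one obtains $\asymp p^{-1}k^{-4/3}$ uniformly in $k\in[1,\theta p]$. Part (iv) is a Lipschitz-type estimate: the map $p\mapsto p^{-11/3}c((q-k')/(p-k))$ is smooth in $(k,k')$ when $k\le\theta p$, $k'\le\theta q$, its first-order Taylor expansion gives the $p^{-1}|k|$ or $p^{-1}|k'|$ term, and the $p^{-1/3}$ term comes from the absolute error in the diagonal asymptotics itself.

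Part (v) is a union bound: split according to whether the big coordinate is $-X_1$ or $-Y_1$, and whether it lies in the bulk range $[x,\theta(p\wedge q)]$ or in the large-jump range $(\theta(p\wedge q),p-m]$ (resp. $(\theta(p\wedge q),q-m]$), then integrate the tail estimates of (ii)--(iii). Parts (vi)--(viii) are deterministic consequences of the distributional tail estimates in (i)--(v) and the a priori bound $|W|\le x+\mu$ on $\mathcal{A}_x$: (vi) follows by summation by parts against the $h^{-7/3}$ tail derived from (i) and (ii)--(iii); (vii) requires the cancellation $\EE_\infty(W)=0$ (by the very definition of the drift $\mu$ in \eqref{eq:mu}) combined with part (iv) to control the first-order bias introduced by passing from $\Prob_\infty$ to $\Prob_{p,q}$; and (viii) is obtained by Taylor-expanding the exponential and applying (vi) moment by moment.

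The main obstacle is the uniformity in $q/p\in I$, which in the CT20 proof was obscured by the one-dimensional nature of the limit. The delicate point is part (vii): the expected drift under $\Prob_{p,q}$ is not exactly $\mu$, so the bound $|\EE_{p,q}[W\id_{\mathcal{A}_x}]|\la x^{-1/3}$ has to come from a cancellation between (a) the centering at $\mu$, which kills the leading contribution under $\Prob_\infty$, and (b) a first-order correction estimated via part (iv), where one needs to check that the $O(p^{-1/3})$ error in the diagonal asymptotics and the $O(p^{-1}k)$ Lipschitz error combine to give precisely an $x^{-1/3}$ contribution once summed against the $k^{-7/3}$ tail. Everything else in the proof is a routine rewriting of the CT20 arguments with the univariate asymptotics replaced by the diagonal ones, using continuity of $c(\lambda)$ on $I$ to extract uniform constants.
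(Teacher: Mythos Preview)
Your overall strategy is the same as the paper's: adapt the CT20 appendix by replacing univariate asymptotics with the diagonal ones from Theorem~\ref{thm:asympt}, and rely on the continuity/positivity of $c(\lambda)$ on compacts for uniformity. Parts (i), (iv)--(viii) match the paper's argument essentially line for line.

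There is, however, one genuine gap in your treatment of the large-jump halves of (ii)--(iii). You propose to estimate $z_{k+1,q}\,z_{p-k+1,0}/z_{p,q}$ by applying the \emph{diagonal} asymptotics to $z_{k+1,q}$. But the diagonal expansion in Theorem~\ref{thm:asympt} is only stated (and only valid with uniform constants) for $q/(k+1)$ in a fixed compact interval, whereas here $k$ ranges over $[1,\theta p]$, so $q/(k+1)$ is unbounded above. If you drop the $c(\cdot)$ factor you get $(k+1)^{-11/3}p^{4/3}$ instead of the required $p^{-1}k^{-4/3}$; these agree only when $k\asymp p$. The paper sidesteps this by using the \emph{two-step} asymptotics on the factor $z_{k,q}$ (first $z_{k,q}u_c^{\,q}\asymp a_k\,q^{-7/3}$, then $a_k u_c^{\,k}\asymp k^{-4/3}$), together with the diagonal asymptotics only on $z_{p,q}$ and the one-variable tail on $z_{p-k,1}$ and $z_{p-k+2,0}$. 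This gives $\asymp a_k u_c^{\,k}\cdot p^{-7/3}\cdot p^{4/3}\asymp p^{-1}k^{-4/3}$ uniformly in $k\in[1,\theta p]$, which is exactly what is needed. So for (ii)--(iii) you should swap the roles: reserve the diagonal expansion for the factors whose two perimeters are genuinely comparable ($z_{p,q}$ and, when $k$ is of order $p$, also $z_{k,q}$), and use the first line of Theorem~\ref{thm:asympt} for $z_{k,q}$ when $k$ may be small.
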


\newcommand{\varsubset}{\subset}
\renewcommand{\subset}{\subseteq}

\newcommand{\absb}[1]{\mb|{#1}}
\newcommand{\absB}[1]{\mB|{#1}}
\newcommand{\absh}[1]{\mh|{#1}}
\newcommand{\absH}[1]{\mH|{#1}}

\begin{proof}
\begin{enumerate}[label=\textbf{(\roman*)},wide=0pt,listparindent=\parindent,]
\item
Proven in \cite{CT20}.

\item
First, since $\Prob_{p,q}(\{-X_1=1\}\cap\Yp)$ has a finite limit as $p,q\to\infty$ while $q/p\in I$, we have $\Prob_{p,q}(\{-X_1=1\}\cap\Yp)\ea 1$.
Then for $2\leq k\leq\theta p$, we write
\begin{equation*}
\Prob_{p,q}(\xkYp)=\frac{z_{p-k,q}}{z_{p,q}u_c^k}\pp k{0}+\frac{z_{p-k,q-1}}{z_{p,q}u_c^k}\pp k{1}.
\end{equation*}
Since $k\leq\theta p$, the asymptotics of Equation~\eqref{eq:diagasympt} yield $\frac{z_{p-k,q}}{z_{p,q}u_c^k}\ea 1$ and $\frac{z_{p-k,q-1}}{z_{p,q}u_c^k}\ea 1$. The first estimate follows then by \textbf{(i)}. For the second estimate, we note that
\begin{equation*}
\Prob_{p,q}(-X_1=p-k)=t_c\frac{z_{k,q-1}z_{p-k+2,0}}{z_{p,q}}+t_c\nu_c\frac{z_{k,q}z_{p-k,1}}{z_{p,q}}\sim C(p,q)\cdot a_k u_c^k\cdot\left((p-k+2)^{-\frac{7}{3}}+(p-k)^{-\frac{7}{3}}\right)\cdot p^{\frac{4}{3}}
\end{equation*}
where $C(p,q)$ is a bounded constant depending on $p,q$ and bounded away from zero. Since $a_k u_c^k\ea k^{-4/3}$ as well as $(p-k+2)^{-7/3}\ea p^{-7/3}\ea (p-k)^{-7/3}$, the desired result follows.

\item
Since $\Prob_{p,q}(\{-Y_1=1\}\cap\Xp)$ has a finite limit, $\Prob_{p,q}(\{-Y_1=1\}\cap\Xp)\ea 1$. Then, assume $2\leq k\leq\theta q$, in which case
\begin{equation*}
\Prob_{p,q}(\ykXp)=\frac{z_{p,q-k}}{z_{p,q}u_c^k}\pp 0{k}+\frac{z_{p+1,q-k}}{z_{p,q}u_c^k}\pp {-1}{k}
\end{equation*}
Since $k\leq\theta q$, the asymptotics of Equation~\eqref{eq:diagasympt} yield $\frac{z_{p,q-k}}{z_{p,q}u_c^k}\ea 1$ and $\frac{z_{p+1,q-k}}{z_{p,q}u_c^k}\ea 1$. The first estimate follows. Secondly, for $k=2,\dots,\theta q$,
\begin{equation*}
\Prob_{p,q}(-Y_1=q-k)=t_c\frac{z_{p+1,k}z_{1,q-k-1}}{z_{p,q}}+t_c\nu_c\frac{z_{p,k}z_{0,q-k+1}}{z_{p,q}}\sim \tilde{C}(p,q)\cdot a_k u_c^k\cdot\left((q-k+1)^{-\frac{7}{3}}+(q-k-1)^{-\frac{7}{3}}\right)\cdot p^{\frac{4}{3}}
\end{equation*}
where $\tilde{C}(p,q)$ is bounded and bounded away from zero. The result follows since $(q-k+1)^{-7/3}\ea q^{-7/3}\ea (q-k-1)^{-7/3}$ and $q\ea p$.

\item
From the asymptotic expansion $z_{p,q}u_c^{p+q}=\frac{b\cdot c(q/p)}{\Gamma(-4/3)\Gamma(-1/3)}p^{-11/3}\left(1+O(p^{-1/3})\right)$, we see that there exist constants $C=C(\theta)$ and $p_0=p_0(\theta)$ such that for all $p,q\geq p_0$, $-2\leq k\leq\theta p$ and $-2\leq k'\leq\theta q$,
\begin{equation*}
\frac{(p-k)^{-11/3}}{p^{-11/3}}\left(1-C p^{-1/3}\right)\leq\frac{z_{p-k,q-k'}}{z_{p,q}u_c^{k+k'}}\leq\frac{(p-k)^{-11/3}}{p^{-11/3}}\left(1+C p^{-1/3}\right).
\end{equation*}
After writing down the Taylor expansions of each of the sides of the inequality, the rest of the proof of the first estimate goes similarly as the proof of a corresponding claim in \cite{CT20}. The second estimate follows after swapping the roles of $p$ and $q$, and $k$ and $k'$, respectively, and noting that $q^{-1} \abs{k'} + q^{-1/3}\ea p^{-1} \abs{k'} + p^{-1/3}.$

\item
We estimate
\begin{align*}
&\ \Prob_{p,q}(-X_1 < p-m, \ -Y_1<q-m \text{ and } (-X_1)\vee (-Y_1)\ge x) \\ \le&\ \Prob_{p,q}\left(\{-X_1\le p-2\}\cap\{\theta q>-Y_1\geq x\}\right)+\Prob_{p,q}\left(\{-Y_1\le q-1\}\cap\{\theta p>-X_1\geq x\}\right) \\ +&\ \Prob_{p,q}(-X_1\in [\theta p,p-m])+\Prob_{p,q}(-Y_1\in[\theta q,q-m]) \\ \la&\ \sum_{k=x}^{\theta q}k^{-7/3}+\sum_{k=x}^{\theta p}k^{-7/3}+\sum_{k=m}^{(1-\theta) p}p^{-1}k^{-4/3}+\sum_{k=m}^{(1-\theta) q}p^{-1}k^{-4/3}\la x^{-4/3}+p^{-1}m^{-1/3},
\end{align*}
where we used the results \textbf{(ii)}-\textbf{(iii)}.

\item
This goes similarly as the proof of a corresponding claim in \cite{CT20}, after one notices that the conditions $x\leq \theta q$ and $h\geq 1$ imply $\{h-\mu\leq -Y_1\leq x\}\subset\{1\leq -Y_1\leq \theta q\}$.

\item
For $k\leq p-2$ and $k'\leq q-2$, Equation~\eqref{eq:Doob} gives the estimate \begin{equation*}
\abs{\Prob_{p,q}\left((-X_1,-Y_1)=(k,k')\right)-\pp k{k'}}\leq\pp k{k'}\abs{\frac{z_{p-k,q-k'}}{z_{p,q}u_c^{k+k'}}-1}.
\end{equation*}
If $W=\mu-X_1$, the equation above then yields
\begin{align*}
    \abs{\EE_{p,q}[W  \id_{\Ap}]-\EE_\infty[W  \id_{\Ap}]}
=&\ \abs{\sum_{k=-2}^x (\mu+k) \m({
      \Prob_{p,q} \m({ -X_1=k, -Y_1\in [-1,x] }
    - \Prob_\infty\m({ -X_1=k, -Y_1\in [-1,x] }
    } }
\\ =&\
    \abs{\sum_{k=-2}^x \sum_{k'=-1}^x(\mu+k) \m({
      \Prob_{p,q} \m({-X_1=k, -Y_1=k'}
    - \Prob_\infty\m({-X_1=k, -Y_1=k'}
    } }
\\ \le&\
\sum_{k=-2}^x \sum_{k'=-1}^x |\mu+k|\abs{\Prob_{p,q}(-X_1=k, -Y_1=k')-\pp k{k'}}
\\ \le&\
\sum_{k=-2}^x\sum_{k'=-1}^x|\mu+k|\pp k{k'}\abs{\frac{z_{p-k,q-k'}}{z_{p,q}u_c^{k+k'}}-1}
\\ \la&\ \sum_{k=-2}^x|\mu+k|(k+3)^{-7/3}\left(p^{-1}|k|+p^{-1/3}\right)\la p^{-1/3}
\end{align*}
where we used the estimates \textbf{(i)} and \textbf{(iv)}. If $W=\mu-Y_1$, symmetry and the second estimate in \textbf{(iv)}
yield the same asymptotic upper bound. The rest of the proof goes like the proof of an analogous claim in \cite[Appendix C]{CT20}.

\item This is proven, \emph{mutatis mutandis}, in \cite[Appendix C]{CT20}. \qedhere
\end{enumerate}
\end{proof}

\newcommand{\tauxx}[1][x]{\tau_{#1}}

Let $\tauxx = \inf \Set{n\ge 0}{|X_n-\mu n| \vee |Y_n-\mu n| > x}$. Then, using the Markov property, we find the following analog of \cite[Lemma~24]{CT20}.

\begin{lemma}\label{lem:cst barrier}
Fix some $\epsilon>0$ and let $x =\chi \mb({ N (\log N)^{1+\epsilon} }^{3/4}$.
Then for any $0<\lambda_{\min}\leq 1\leq\lambda_{\max}<\infty$, $\theta\in(0,\lambda_{\min})$, $p,q\ge \tilde p_\theta:=p_\theta/(1-\theta)$ such that $\frac{q}{p}\in[\lambda_{\min},\lambda_{\max}]$ as well as $m\ge 1$ and $\chi,N \ge 2$ such that $x\in [1,\frac{\theta}{1+\theta} (p\wedge q)]$, we have
\begin{equation*}
\Prob_{p,q} (\tauxx \le N, \tauxx < T_m) \la \frac1{(\log \chi + \log N)^{1+\epsilon/2}} + \frac Np m^{-1/3}\,.
\end{equation*}
\end{lemma}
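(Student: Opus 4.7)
The plan is to adapt the argument for the univariate analogue \cite[Lemma~24]{CT20} to the bivariate setting by simultaneously controlling the deviations of both perimeter coordinates $X_n$ and $Y_n$. I will introduce the first large-jump stopping time $\sigma := \inf\{k \ge 1 : (-\Delta X_k) \vee (-\Delta Y_k) > x\}$ and decompose
\begin{equation*}
\{\tauxx \le N,\ \tauxx < T_m\} \;\subset\; \underbrace{\{\sigma \le \tauxx \wedge N,\ \sigma < T_m\}}_{\text{Case A}}\ \cup\ \underbrace{\{\sigma > \tauxx,\ \tauxx \le N,\ \tauxx < T_m\}}_{\text{Case B}}.
\end{equation*}
The crucial geometric observation is that on $\{k \le \tauxx\}$ one has $P_{k-1}\wedge Q_{k-1} \ge (p\wedge q) - x$, so the twin hypotheses $p\wedge q \ge \tilde p_\theta$ and $x \le \tfrac{\theta}{1+\theta}(p\wedge q)$ force both $P_{k-1}\wedge Q_{k-1} \ge p_\theta$ and $x \le \theta(P_{k-1}\wedge Q_{k-1})$, so that the estimates of Lemma~\ref{lem:estimates} apply at every such step with constants uniform in $k$.

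For Case~A, I would sum the one-step bound from Lemma~\ref{lem:estimates}(v),
\begin{equation*}
\Prob_{p,q}\m({\sigma = k,\ k \le \tauxx,\ k < T_m \,\big|\, \mathcal{F}_{k-1}} \la x^{-4/3} + p^{-1} m^{-1/3},
\end{equation*}
over $k \le N$ to obtain $\Prob_{p,q}(\text{Case A}) \la N(x^{-4/3} + p^{-1}m^{-1/3})$.

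For Case~B, by the union bound over the four sub-events according to whether $\tauxx$ is triggered by $X^\pm$ or $Y^\pm$, it suffices by symmetry to treat $\{\exists n \le \tauxx : X_n - \mu n < -x\}$. I would introduce the truncated increments $\tilde W^x_k := (\mu - \Delta X_k)\,\id_{\mathcal{A}_x \text{ at step } k}\,\id_{\{k \le \tauxx \wedge T_m\}}$ and their partial sums $\tilde S_n$, so that on Case~B the identity $\tilde S_{\tauxx} = \mu\tauxx - X_{\tauxx} > x$ holds. Applying Doob's maximal inequality to the non-negative martingale $Z_n := e^{\xi \tilde S_n}/\prod_{k \le n} h_k$ with $h_k := \EE[e^{\xi \tilde W^x_k} \mid \mathcal{F}_{k-1}]$ gives, for any $\xi \in [2/x, 1]$,
\begin{equation*}
\Prob_{p,q}\m({\max_{n \le N} \tilde S_n > x} \;\le\; e^{-\xi x} \prod_{k=1}^N h_k \;\le\; \exp\m({-\xi x + CN x^{-4/3}e^{\xi x}},
\end{equation*}
where the uniform bound on each $h_k$ comes from Lemma~\ref{lem:estimates}(viii) together with the tail estimate $\Prob_{p,q}(\mathcal{A}_x^c) \la x^{-4/3}$ derivable from items~(i)--(iii). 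Optimising by setting $\xi x = \log(x^{4/3}/(CN))$ (admissible when this lies in $[2,x]$, otherwise the trivial bound suffices) yields $\Prob_{p,q}(\text{Case B}) \la N x^{-4/3}$.

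Combining both cases and substituting $x = \chi(N(\log N)^{1+\epsilon})^{3/4}$ gives $Nx^{-4/3} = \chi^{-4/3}(\log N)^{-(1+\epsilon)}$, which by a short case analysis on whether $\log\chi \le \log N$ (in which case $(\log N)^{-(1+\epsilon)} \le (\log N)^{-(1+\epsilon/2)} \la (\log\chi + \log N)^{-(1+\epsilon/2)}$) or $\log\chi > \log N$ (in which case the super-logarithmic decay $\chi^{-4/3} \la (\log\chi)^{-(1+\epsilon/2)}$ dominates) is $\la (\log\chi + \log N)^{-(1+\epsilon/2)}$, yielding the announced estimate. The main obstacle is the precise handling of the interplay between $\sigma$, $\tauxx$, and $T_m$ in Case~B: verifying the uniform bound on $h_k$ requires the truncation to simultaneously cut off large jumps (via the $\mathcal{A}_x$ indicator) and avoid the small-boundary regime (via the stopping at $\tauxx \wedge T_m$), the latter being precisely what the explicit regime condition $x \in [1,\tfrac{\theta}{1+\theta}(p\wedge q)]$ in the statement enforces.
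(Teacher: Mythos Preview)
Your proof is correct and follows essentially the same route as the paper: the same decomposition via the first large-jump time $\sigma$ (the paper calls it $J_x$), the same union bound with Lemma~\ref{lem:estimates}\ref{item:estimate large jump} for Case~A, and the same Chernoff-type exponential bound for Case~B decomposed over the four directions. The only differences are in the execution of Case~B: the paper handles the inhomogeneity of the increments by extending the process beyond $\tau_x$ with an auxiliary i.i.d.\ sequence at the worst-case parameter $(p_*,q_*)\in\mathcal D$ and bounding by $(\varphi^{x,\mathbf e}_{p_*,q_*})^N\vee 1$, whereas you compensate directly via the martingale $Z_n$ and Doob; and the paper picks $\xi x=(1+\epsilon/2)\log\log x$ to get $(\log x)^{-(1+\epsilon/2)}$ directly, whereas you optimise $\xi$ to the sharper $Nx^{-4/3}$ and only then reduce to $(\log\chi+\log N)^{-(1+\epsilon/2)}$. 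One small point you glossed over: Lemma~\ref{lem:estimates} also requires the ratio $Q_{k-1}/P_{k-1}$ to stay in a fixed compact interval of $(0,\infty)$, not just $P_{k-1}\wedge Q_{k-1}\ge p_\theta$; the paper checks this explicitly via the set $\mathcal D$ in \eqref{eq:defD}, and it follows from the same two-sided bound $|X_{k-1}-\mu(k-1)|\vee|Y_{k-1}-\mu(k-1)|\le x$ you already use.
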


\begin{proof}

For $n\ge 1$, let $\Delta X_n = X_n-X_{n-1}$ and $\Delta Y_n = Y_n-Y_{n-1}$, and
\begin{equation*}
J_x = \inf\Set{n\ge 1}{ (-\Delta X_n) \vee (-\Delta Y_n) \ge x}.
\end{equation*}  Following the corresponding proof in \cite{CT20}, we bound the probability of the event $\{\tauxx \le N, \tauxx < T_m\}$ both in the cases $\{J_x\le \tauxx \}$ (large jump estimate) and $\{\tauxx < J_x\}$ (small jump estimate).

\medskip
\noindent\textbf{Large jump estimate: union bound.} We have
\begin{equation*}
\Prob_{p,q}( \tauxx \le N, \tauxx < T_m \text{ and } J_x \le \tauxx )\le \sum_{n=1}^N \Prob_{p,q}( n \le \tauxx \text{ and } J_x=n < T_m )    \,.
\end{equation*}
If $n\leq\tauxx$, in particular $P_{n-1}\ge p-x$ and $Q_{n-1}\ge q-x$. Let
\begin{equation}\label{eq:defD}
\mathcal{D}:=\bigg\{(p',q') : p'\ge p-x,\ q'\ge q-x,\ \frac{\lambda_{\min}-\theta}{1+\theta}\leq\frac{q'}{p'}\leq(1+\theta)\lambda_{\max}+\theta \bigg\}.
\end{equation}
Then for $i<\tauxx$, we have the estimates
\begin{equation*}
\frac{Q_i}{P_i}-\lambda_{\max}\leq\frac{q+\mu i+x}{p+\mu i-x}-\lambda_{\max}\leq\frac{(q-\lambda_{\max} p)+\mu(1-\lambda_{\max})i+(1+\lambda_{\max})x}{p-x}\leq \frac{(1+\lambda_{\max})x}{p-x}\leq\theta(1+\lambda_{\max})
\end{equation*}
and
\begin{equation*}
\frac{Q_i}{P_i}-\lambda_{\min}\ge\frac{q+\mu i-x}{p+\mu i+x}-\lambda_{\min}\ge\frac{(q-\lambda_{\min} p)+\mu(1-\lambda_{\min})i-(1+\lambda_{\min})x}{p}\ge-\frac{(1+\lambda_{\min})x}{p}\ge-(1+\lambda_{\min})\frac{\theta}{1+\theta},
\end{equation*}
since by assumption, $\lambda_{\min}p\leq q\leq\lambda_{\max} p$ and $1\leq x\leq \frac{\theta p}{1+\theta}$. In particular, this holds for $i=n-1$, and we conclude that $(P_{n-1},Q_{n-1})\in\mathcal{D}$.

On the other hand, $J_x=n<T_m$ immediately implies $P_n>m$, $Q_n>m$ and $ (-\Delta X_n) \vee (-\Delta Y_n) \ge x$. Thus, the Markov property of $\nseq{P_n,Q}$ gives the upper bounds \begin{align*}
 \Prob_{p,q}( n \le \tauxx \text{ and } J_x=n < T_m ) &\leq \EE_{p,q}\left(\Prob_{P_{n-1}, Q_{n-1}}\left(P_1>m, \ Q_1>m, \ (-X_1) \vee (-Y_1)\ge x\right)\id_{\{P_{n-1}\geq p-x, Q_{n-1}\geq q-x\}}\right) \\ &\leq \sup_{(p',q')\in\mathcal{D}}\Prob_{p',q'}\left(-X_1<p-m, \ -Y_1<q-m, \ (-X_1) \vee (-Y_1)\ge x\right).
\end{align*}
The assumptions $p,q\geq\tilde{p}_\theta=\frac{p_\theta}{1-\theta}$ and $1\leq x\leq\frac{\theta}{1+\theta}(p\wedge q)$ ensure that, for $p'\geq p-x$ and $q'\geq q-x$, the condition $p'\wedge q'\geq p_\theta$ with $1\leq x\leq \theta(p'\wedge q')$ is satisfied. Hence, by Lemma~\ref{lem:estimates} \textbf{(v)},
\begin{equation}\label{eq:large jump estimate}
\Prob_{p,q}(\tauxx \le N, \tauxx < T_, \text{ and } J_x \le \tauxx )\la N\left(x^{-4/3}+p^{-1}m^{-1/3}\right) = \frac{\chi^{-4/3}}{ (\log N)^{1+\epsilon} } + \frac{N}{p} m^{-1/3}.
\end{equation}

\newcommand{\unit}{\mathbf{e}}
\newcommand{\tauu}{\tau^{\unit}_x}

\noindent\textbf{Small jump estimate: Chernoff bound.} For each of the four unit vectors $\unit \in \integer^2$, define
\begin{equation*}
\tauu = \inf\Set{n\ge 0}{ (\mu n-X_n, \mu n-Y_n)\cdot \unit \ge x } \,,
\end{equation*}
so that $\tauxx = \min_{\unit} \tauu$. We start by estimating
\begin{align}
    \Prob_{p,q}( \tauxx \le N, \tauxx < T_m \text{ and } J_x > \tauxx )
\ &\le\ \Prob_{p,q}( \tauxx \le N, \tauxx < J_x ) \notag \\
&\le \sum_{\unit}\Prob_{p,q}\left(\tauxx=\tauu\le N, \tauu < J_x\right) \,.    \label{eq:pre Chernoff}
\end{align}

If $\tauu = n < J_x$, then $(\mu n-X_n, \mu n-Y_n) \cdot \unit = \sum_{i=1}^n (\mu -\Delta X_i, \mu -\Delta Y_i) \cdot \unit \ge x$, and $(-\Delta X_i) \vee (-\Delta Y_i) \le x$ for all $i=1,\ldots,n$. Therefore, applying the Chernoff bound,
\begin{align}
      \Prob_{p,q}(\tauxx=\tauu \le N, \tauu < J_x) &\leq e^{-\xi x} \sum_{n=1}^N \EE_{p,q} \m[{ \idd{\tauxx=\tauu=n} \prod_{i=1}^n
                        e^{\xi (\mu-\Delta X_i, \mu-\Delta Y_i) \cdot \unit}
                        \idd{(-\Delta X_i) \vee (-\Delta Y_i) \le x }          }\notag \\
                        &\leq e^{-\xi x} \sum_{n=1}^N \EE_{p,q} \m[{ \idd{\tauxx=n} \prod_{i=1}^n
                        e^{\xi (\mu-\Delta X_i, \mu-\Delta Y_i) \cdot \unit}
                        \idd{(-\Delta X_i) \vee (-\Delta Y_i) \le x }          }
\label{eq:stopped Chernoff}
\end{align}
for all $\xi\ge0$.
\newcommand{\phix}[1][p,q]{\varphi^{x,\unit}_{#1}(\xi)}
\newcommand{\phixs}[1][p_*,q_*]{\varphi^{x,\unit}_{#1}(\xi)}
\newcommand{\phixp}[1][p',q']{\varphi^{x,\unit}_{#1}(\xi)}
\newcommand{\phixi}[1][P_i,Q_i]{\varphi^{x,\unit}_{#1}(\xi)}

For $p,q\in \natural \cup \{\infty\}$, let $\phix = \EE_{p,q}[ e^{\xi (\mu-X_1, \mu-Y_1) \cdot \unit} \id_{\Ap}]$, where $\Ap = \{(-X_1) \vee (-Y_1) \le x \}$ was already encountered in Lemma~\ref{lem:estimates}. Since the pair $(X_1,Y_1)$ takes only finitely many values on the event $\Ap$ and $\law_{p,q}(X_1,Y_1) \to \law_\infty(X_1,Y_1)$ in distribution, we have $\phix \to \phix[\infty]$ as $p,q \to\infty$ and $q/p\in I$ for any compact interval $I\subset\R_+$ such that $0\notin I$. Recall the set $\mathcal{D}$ defined by \eqref{eq:defD}. Since the peeling process converges in this set when $p',q'\to\infty$, the function $\phixp$ is continuous in its one-point compactification $\mathcal{D}\cup\{(\infty,\infty)\}$. Hence, there exists a pair $(p_*,q_*)=(p_*(x,\unit,\xi), q_*(x,\unit,\xi))\in \mathcal{D}\cup\{(\infty,\infty)\}$ such that
\begin{equation*}
\phixs = \sup_{(p',q')\in \mathcal{D}\cup\{(\infty,\infty)\}} \phixp.
\end{equation*}
Let $\nseq[1]{\Delta X^*_n, \Delta X^*}$ be a sequence of i.i.d.\ random variables independent of $\nseq{X_n,Y}$ and with the same distribution as $\law_{p_*,q_*}(X_1,Y_1)$. Define
\begin{equation*}
(U_i,V_i) = \begin{cases}
			-(\Delta X_i  , \Delta Y_i)   & \text{if }i \le \tauxx
		\\ 	-(\Delta X_i^*, \Delta Y_i^*) & \text{if }i  >  \tauxx.
\end{cases}
\end{equation*}

On the event $\{\tauxx=n\}$, the future $(U_i, V_i)_{i>n}$ of the process is an i.i.d.\ sequence independent of the past such that
$\EE_{p,q}[e^{\xi (\mu+U_i, \mu+V_i) \cdot \unit} \idd{U_i \vee V_i \le x }] = \phixs$. Therefore we can continue the bound \eqref{eq:stopped Chernoff} with
\begin{align}
&\ e^{-\xi x} \sum_{n=1}^N \EE_{p,q} \m[{ \idd{\tauxx=n} \prod_{i=1}^n
    e^{\xi (\mu+U_i, \mu+V_i) \cdot \unit} \idd{U_i \vee V_i \le x }      }
\notag \\ =\ &
e^{-\xi x} \sum_{n=1}^N \mb({\phixs}^{-(N-n)} \EE_{p,q} \m[{
    \idd{\tauxx=n} \prod_{i=1}^N
    e^{\xi (\mu+U_i, \mu+V_i) \cdot \unit} \idd{U_i \vee V_i \le x }      }
\notag \\ \le \ &
e^{-\xi x} \cdot (1\vee \phixs^{-N}) \cdot \EE_{p,q} \m[{ \prod_{i=1}^N
    e^{\xi (\mu+U_i, \mu+V_i) \cdot \unit} \idd{U_i \vee V_i \le x }  }.
\label{eq:completed Chernoff}
\end{align}
Now $\tauxx$ is a stopping time with respect to the natural filtration $\nseq \filtr$ of the process $\nseq{U_n,V}$. Therefore for all $i\ge 0$,
\begin{equation*}
	 \EE_{p,q}\Econd{ e^{\xi (\mu+U_{i+1}, \mu+V_{i+1}) \cdot \unit} \idd{U_{i+1} \vee V_{i+1} \le x} }{\filtr_i}
\ =\		\idd{i< \tauxx} \cdot \phixi + \idd{i\ge\tauxx} \cdot \phixs \le \	\phixs    \,,
\end{equation*}
where we have the last inequality due to the fact that $(P_i,Q_i)\in \mathcal{D}$ on the event $\{i< \tauxx\}$. By expanding the expectation in \eqref{eq:completed Chernoff} with $N$ successive conditioning, we see that it is bounded by $\phixs^N$. Then, combining \eqref{eq:stopped Chernoff} and \eqref{eq:completed Chernoff} yields
\begin{equation*}
\Prob_{p,q}(\tauxx=\tauu \le N, \tauu < J_x)
\ \le\   e^{-\xi x} (\phixs^N \vee 1) \,.
\end{equation*}
By Lemma~\ref{lem:estimates}\ref{item:estimate small jump}, there exists a constant $C$ such that $\phix \le \exp(C x^{-4/3} e^{\xi x})$ for all $p,q\ge p_\theta$, $x\in [1,\theta (p\wedge q)]$, $\xi \in [2 x^{-1},1]$ and unit vector $\unit \in \integer^2$. Note that we already have seen in the derivation of the large jump estimate that the conditions $p'\ge p-x$ and $q'\ge q-x$ imply $p'\wedge q'\geq p_\theta$ and $1\leq x\leq \theta(p'\wedge q')$. Therefore,  we also have $\phixs \le \exp(C x^{-4/3} e^{\xi x})$ by the definition of $\phixs$.
Hence,
\begin{equation*}
\Prob_{p,q}(\tauxx=\tauu \le N, \tauu < J_x)
\ \le\   \exp(-\xi x + C\cdot N x^{-4/3} e^{\xi x}) \,.
\end{equation*}
Plugging this into \eqref{eq:pre Chernoff} and taking $\xi x = c\log \log x$ with $c=1+\epsilon/2$ yields
\begin{equation*}
\Prob_{p,q}( \tauxx \le N, \tauxx < T_m \text{ and } J_x > \tauxx )
\ \le\  4 \exp(-c\log\log x + C N x^{-4/3} (\log x)^c).
\end{equation*}
Thanks to the relation between $x$ and $N$ given in the assumptions, we have $N x^{-4/3} (\log x)^c \ea \chi^{-4/3} \frac{(\log \chi + \log N)^c}{(\log N)^{1+\epsilon}}$, which is bounded by a constant for $\chi,N\ge 2$. It follows that
\begin{equation*}
\Prob_{p,q}( \tauxx \le N, \tauxx < T_m \text{ and } J_x > \tauxx )
\ \la\  \exp(-c\log\log x) \ea (\log \chi + \log N)^{-c}.
\end{equation*}
By adding the large jump estimate \eqref{eq:large jump estimate} to the above small jump estimate, we conclude that $\Prob_{p,q}( \tauxx \le N, \tauxx < T_m ) \la (\log \chi + \log N)^{-c} + N p^{-1} m^{-1/3}$, where we again use the boundedness of $\chi^{-4/3} \frac{(\log \chi + \log N)^c}{(\log N)^{1+\epsilon}}$.
\end{proof}

\begin{proof}[Proof of Lemma~\ref{lem:one jump diag}]
Let $\Delta_n = |X_n-\mu n| \vee |Y_n-\mu n|$. Recall that $\tauxy = \inf \Set{n\ge 0}{\Delta_n > \barrier(n)}$ where $\barrier[](n) = \mb({ (n+2)(\log(n+2))^{1+\epsilon} }^{3/4}$, and we want to prove that
\begin{equation*}
\lim_{x,m\to\infty} \limsupp \Prob_{p,q}(\tauxy<T_m) =0 \quad\text{while}\quad\frac{q}{p}\in[\lambda',\lambda]\,.
\end{equation*}

Consider the sequences $(N_k)_{k\ge 0}$ and $(x_k)_{k\ge 0}$ defined by $N_0=x_0=0$,
\begin{equation*}
\Delta N_k := N_k - N_{k-1} = 2^k    \qtq{and}
\Delta x_k := x_k-x_{k-1}
= \frac{x}3 \mb({ \Delta N_k \m({ \log \Delta N_k }^{1+\epsilon} }^{3/4}.
\end{equation*}
Then we have $N_k = 2^{k+1}-2$ and
\begin{align*}
x_k = \frac{x}3 \sum_{i=1}^k 2^{\frac34 i} \cdot (i \log 2)^{\frac34(1+\epsilon)}
\le \frac{x}3 \cdot \frac{2^{\frac34 (k+1)}}{2^{3/4}-1} (k \log 2)^{\frac34 (1+\epsilon)}
\le x \mb({ 2^k (\log 2^k)^{1+\epsilon} }^{3/4}.
\end{align*}
In other words, $x_k \le \barrier(N_{k-1})$.

\newcommand*{\kex}{K^\epsilon_{x,m}}%
Consider the sequence of horizontal segments $I_k = \Set{(n,x_k)}{n\in (N_{k-1}, N_k]}$. Due to the previous inequality, all of these segments are below the curve $\Delta_n = \barrier(n)$. Let $\kex$ be the index $k$ where $\Delta_n$ goes above $I_k$ for the first time up to $T_m$, that is,\
\begin{equation*}
\kex = \inf \Set{k\ge 1}{\exists n \in (N_{k-1},N_k] \text{ s.t.\ } \Delta_n>x_k \text{ and } n<T_m} \,.
\end{equation*}
Then we have $\{\tauxy < T_m\}\subset \{\kex < \infty\}$, and an union bound would allow us to restrict our consideration to the set on the right hand side of the inclusion.
Observe that, for any $n\ge 1$, the conditions $\Delta_{N_{k-1}}\le x_{k-1}$ and $\Delta_{n+N_{k-1}}>x_k$ imply $\tilde \Delta_n := |X_{n+N_{k-1}} - X_{N_{k-1}} -\mu n| \vee |Y_{n+N_{k-1}} - Y_{N_{k-1}} -\mu n| > \Delta x_k$.
Therefore by Markov property of $\law_{p,q} \nseq{X_n,Y}$,
\begin{equation*}
\Prob_{p,q}(\kex=k) \le
\EE_{p,q}\m[{ \Prob_{P_{N_{k-1}},Q_{N_{k-1}}}\mb({ \exists n \in (0,\Delta N_k] \text{ s.t.\ } \Delta_n >\Delta x_k \text{ and } n<T_m } \id_{\{\Delta_{N_{k-1}} \le x_{k-1} \}} }.
\end{equation*}
On the other hand, $\Delta_{N_{k-1}}\le x_{k-1}$ also implies
\begin{equation*}
\frac{Q_{N_{k-1}}}{P_{N_{k-1}}}-\lambda\leq\frac{(q-\lambda p)+\mu(1-\lambda)N_{k-1}+(1+\lambda)x_{k-1}}{p+\mu N_{k-1}-x_{k-1}}\leq\frac{(1+\lambda)x_{k-1}}{p-x_{k-1}}
\end{equation*}
and
\begin{equation*}
\frac{Q_{N_{k-1}}}{P_{N_{k-1}}}-\lambda'\geq\frac{(q-\lambda' p)+\mu(1-\lambda')N_{k-1}-(1+\lambda')x_{k-1}}{p+\mu N_{k-1}+x_{k-1}}\geq -\frac{(1+\lambda')x_{k-1}}{p+x_{k-1}}.
\end{equation*}
We note that $x_{k-1}\le x f_\epsilon(\Lambda p)$, which is of smaller order than $p$.
Let $\lambda_{\min}$ and $\lambda_{\max}$ be positive constants such that $\lambda_{\min}<\lambda'\leq 1\leq\lambda<\lambda_{\max}$. Then for $p$ large enough, $\frac{Q_{N_{k-1}}}{P_{N_{k-1}}}\in[\lambda_{\min},\lambda_{\max}]$. In this case, we obtain
\begin{align*}
&\ \EE_{p,q}\m[{ \Prob_{P_{N_{k-1}},Q_{N_{k-1}}}\mb({ \exists n \in (0,\Delta N_k] \text{ s.t.\ } \Delta_n >\Delta x_k \text{ and } n<T_m } \id_{\{\Delta_{N_{k-1}} \le x_{k-1} \}} }\\ &\le
\sup_{p'\ge p-x_{k-1}, \ q'\ge q-x_{k-1}, \ \frac{q'}{p'}\in[\lambda_{\min},\lambda_{\max}]} \Prob_{p',q'}(\tauxx[\Delta x_k] \le \Delta N_k, \tauxx[\Delta x_k] < T_m)\,.
\end{align*}

Let $k_0=k_0(p,q)$ be the largest $k$ such that $N_k \le \Lambda (p\wedge q)$, where $\Lambda \ge 1$ is some cut-off value that will be sent to infinity after $p$, $x$ and $m$. Explicitly, $k_0=\floor{\log_2\left(\frac{\Lambda}{2}(p\wedge q)+1\right)}$, and $\Delta N_{k_0}=\frac{N_{k_0}}{2}+1=O(p)$.
Then, for any fixed $x$, $m$ and in the limit $p,q\to\infty$, we have $\Delta x_k \le \Delta x_{k_0}\leq\frac{\theta}{1+\theta} (p\wedge q)$ and $p-x_{k-1} \ge p- \barrier(\Lambda (p\wedge q)) >\tilde p_\theta$ as well as $q-x_{k-1} \ge q- \barrier(\Lambda (p\wedge q)) >\tilde p_\theta$ for all $k\le k_0$.
Therefore we can apply Lemma~\ref{lem:cst barrier} to bound the above supremum, and obtain for large enough $p,q$ and $k_0$ that
\begin{align*}
\Prob_{p,q}(\kex \le k_0) & \la \sum_{k=1}^{k_0} \m({
\frac1{ (\log (x/3) + \log(\Delta N_k))^{1+\epsilon/2}} + \frac{\Delta N_k}p m^{-1/3} }
\\ & = \sum_{k=1}^{k_0} \frac1{ (\log(x/3)+k \log 2)^{1+\epsilon/2} } + \frac{N_{k_0}}p m^{-1/3}
\la \frac1{ (\log x)^{\epsilon/2} } + \Lambda m^{-1/3}\,.
\end{align*}
On the other hand, $k_0<\kex <\infty$ implies $T_m > N_{k_0}$. Therefore 
\begin{align*}
\Prob_{p,q}(k_0 < \kex < \infty) \ &\le\
\Prob_{p,q}(T_0 > N_{k_0},\ k_0 < \kex < \infty) \\ &=\
\EE_{p,q}\left(\Prob_{P_{N_{k_0}-1},Q_{N_{k_0}-1}}(T_0\neq 1)\id_{(T_0>N_{k_0}-1)}\id_{(k_0 < \kex < \infty)}\right).
\end{align*}
Now $T_0>N_{k_0}-1$ implies $P_{N_{k_0}-1}\ge 1$ and $Q_{N_{k_0}-1}\ge 1$, and together with $k_0<\kex$ also $\Delta_{N_{k_0}-1}\le x_{k_0}$. This yields the estimate
\begin{equation*}
\lambda'-\frac{(1+\lambda')x_{k_0}}{p+x_{k_0}}\leq\frac{Q_{N_{k_0}-1}}{P_{N_{k_0}-1}}\leq\lambda+\frac{(1+\lambda)x_{k_0}}{p-x_{k_0}}.
\end{equation*}
Therefore, for $p$ large enough, we have $0<\lambda_{\min}<\lambda'-\frac{(1+\lambda')x_{k_0}}{p+x_{k_0}}<1<\lambda+\frac{(1+\lambda)x_{k_0}}{p-x_{k_0}}<\lambda_{\max}<\infty$. On the other hand, for $p', q'>0$ such that $q'/p'\in[\lambda_{\min},\lambda_{\max}]$, we have $\Prob_{p',q'}(T_0=1)\sim -\nu_c t_c\frac{4}{3}\frac{a_0 a_1}{b c(q'/p')}(p')^{4/3}(q')^{-7/3}$. Thus, there exist a constant $\delta=\delta(\lambda_{\min},\lambda_{\max})>0$ such that
\begin{equation*}
\Prob_{p',q'}(T_0\ne 1)\leq 1-\frac{\delta}{p'}.
\end{equation*}
We also have the trivial estimate $P_n\le p+2n$. In the end, we conclude \begin{align*}
\EE_{p,q}\left(\Prob_{P_{N_{k_0}-1},Q_{N_{k_0}-1}}(T_0\neq 1)\id_{(T_0>N_{k_0}-1)}\id_{(k_0 < \kex < \infty)}\right)&\leq \prod_{n=0}^{N_{k_0}-1}\left(1-\frac{\delta}{p+2n}\right)\leq\exp\left(-\sum_{n=0}^{N_{k_0}-1}\frac{\delta}{p+2n}\right)\\ &\leq
\exp\left(-\int_0^{N_{k_0}/p}\frac{\delta dx}{1+2x}\right)=\left(1+2\frac{N_{k_0}}{p}\right)^{-\frac{\delta}{2}}\\ &\leq 2^{-\frac{\delta}{2}}\left(\frac{N_{k_0}}{p}\right)^{-\frac{\delta}{2}}.
\end{align*}

Since $N_{k_0}\geq 2\left(2^{k_0-1}-1\right)\geq\frac{\Lambda}{2}(p\wedge q)-1$, it follows that $\left(\frac{N_{k_0}}{p}\right)^{-\frac{\delta}{2}}\la\Lambda^{-\frac{\delta}{2}}$.
We conclude that for every fixed $\Lambda>0$, and uniformly for $x>0$ and $m\ge 1$,
\begin{equation*}
\limsupp \Prob_{p,q}(\tauxy < T_m) \le
\limsupp \Prob_{p,q}(\kex < \infty) \la (\log x)^{-\epsilon/2} + \Lambda m^{-1/3} + \Lambda^{-\frac{\delta}{2}}.
\end{equation*}
Taking the limit $m,x\to\infty$ and then $\Lambda \to\infty$ finishes the proof.
\end{proof}

\paragraph{Acknowledgements.}
The authors would like to thank M. Albenque, J. Bouttier, S. Charbonnier, K. Izyurov, M. Khristoforov, A. Kupiainen and L. Ménard for enlightening discussions. They also thank the anonymous reviewers whose many suggestions helped to improve the presentation of the paper.

\paragraph{Funding.}
Both authors have been primarily supported by the Academy of Finland via the Centre of Excellence in Analysis and Dynamics Research (project No. 271983), as well as the ERC Advanced Grant 741487 (QFPROBA). The first author has been supported by Swiss National Science Foundation (SNF) Grant 175505 during the revision of this paper. The second author also acknowledges support from the Icelandic Research Fund (Grant Number: 185233-051) as well as from the LABEX MILYON (ANR-10-LABX-0070) of Université de Lyon, within the program «~Investissements d’Avenir~» (ANR-11-IDEX-0007) operated by the French National Research Agency (ANR).

\paragraph{Competing interests.}
The authors have no relevant financial or non-financial interests to disclose.


\bibliographystyle{abbrv}
\bibliography{database-linxiao_JT_5}

\end{document}